\documentclass[12pt,titlepage, twoside, openright]{book}
\usepackage{Preamble}

\begin{document}

\theoremstyle{definition}
\newtheorem{definition}{Definition}[chapter]
\newtheorem{assumption}[definition]{Assumption}
\newtheorem{notation}[definition]{Notation}
\newtheorem{remark}[definition]{Remark}

\theoremstyle{remark}
\newtheorem{example}{Example}[chapter]

\theoremstyle{plain}
\newtheorem{theorem}{Theorem}[chapter]
\newtheorem{corollary}[theorem]{Corollary}
\newtheorem{proposition}[theorem]{Proposition}
\newtheorem{lemma}[theorem]{Lemma}

\begin{titlepage}
    \newgeometry{margin=3cm}
    \noindent\makebox[\textwidth]{%
    \begin{minipage}[l]{0.48\textwidth}
        \centering
        \includegraphics[width=0.85\linewidth]{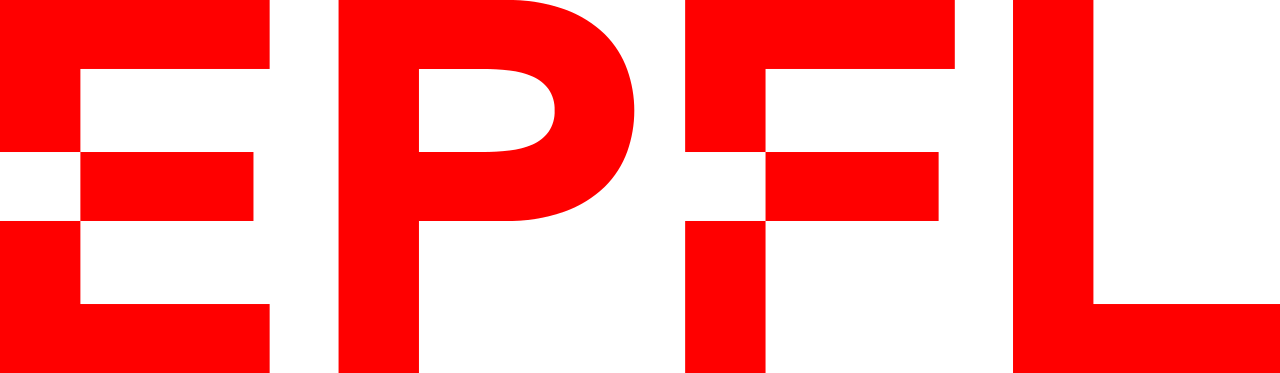}
    \end{minipage}\hfill
    \begin{minipage}[r]{0.48\textwidth}
        \centering
        \includegraphics[width=0.51\linewidth]{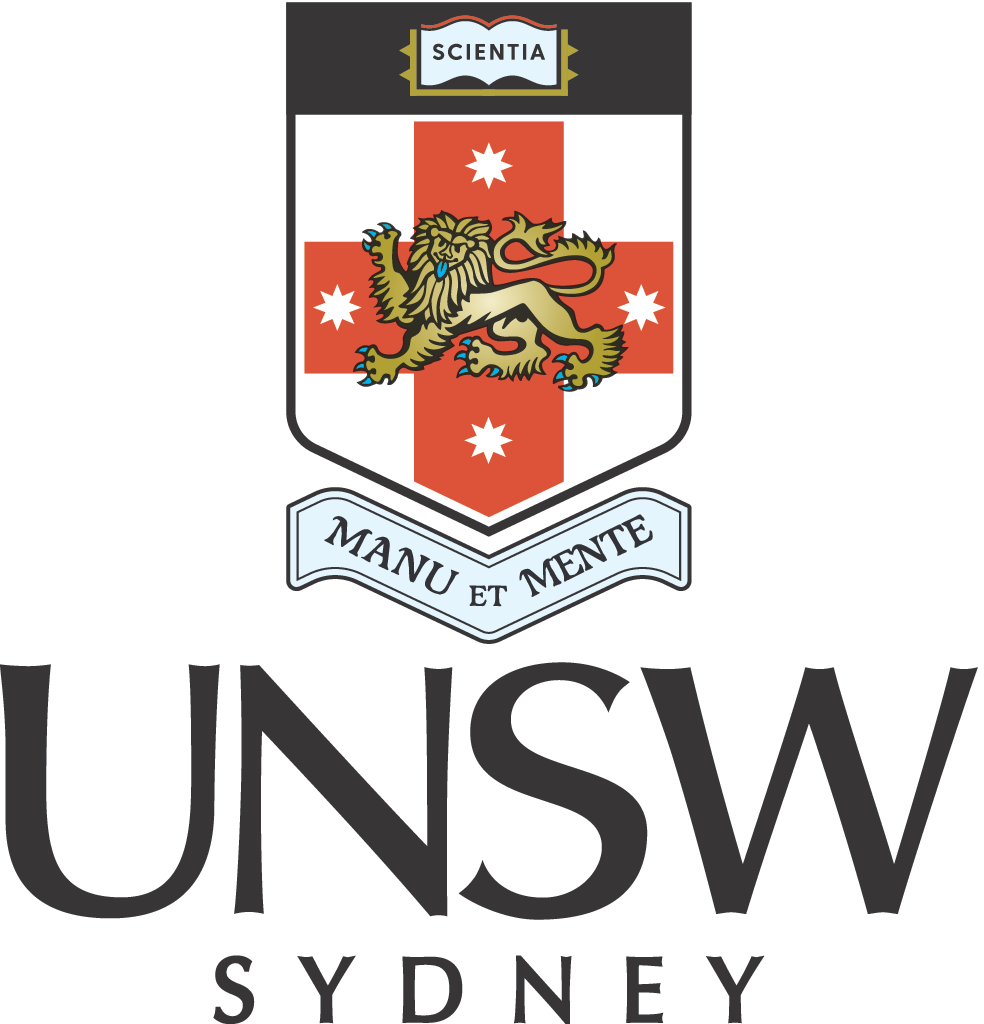}
    \end{minipage}%
    }%
	\centering
    \vspace{\fill}
    \textbf{\color{darkgreen} \scshape \parbox[t]{\textwidth}{\centering\fontsize{35}{45}\selectfont Regenerative Rejection Sampling}}\\ \vspace{\fill}		
	{\large Master's project, September 2025 -- March 2026}\\[0.4cm]
	\rule{\linewidth}{0.2 mm} \\[0.5 cm]

    \begin{multicols}{2}
    \noindent\textbf{Author:} \\
    Tommaso Bozzi \\[1em]

    \columnbreak

    \noindent\textbf{Supervisors:} \\
    Prof. Fabio Nobile,\\
    \textit{\small EPFL, Insitute of Mathematics -- CSQI Chair}\\
    Dr. Zdravko Botev,\\
    \textit{\small UNSW Sydney, School of Mathematics and Statistics}
    \end{multicols}

    \vspace{2cm}
	{\small 6th March 2026}
\end{titlepage}
\restoregeometry

\pagenumbering{Roman}
{
  \pagestyle{numberonly}

\clearpage

\begin{abstract}
This thesis presents Regenerative Rejection Sampling (RRS), a novel approximate sampling algorithm inspired by classical Rejection Sampling and Markov Chain Monte Carlo methods. The method constructs a continuous-time regenerative process whose stationary distribution coincides with a target density known only up to a normalizing constant. Unlike standard Rejection Sampling, RRS does not require the existence of a finite constant that upper-bounds the likelihood ratio. As a result, its total variation convergence rate remains exponential for a larger class of scenarios compared to, for example, the Independent Metropolis-Hastings sampler, which requires a finite bounding constant.

To explain the workings of the method, we first present a detailed review of renewal and regenerative processes, including their limit theorems, stationary versions, and convergence properties under standard conditions. We explain a coupling proof for exponential convergence of regenerative processes, under the assumption of a spread-out cycle length distribution.

We then introduce the RRS algorithm, and derive its convergence rate. Its performance is compared theoretically and empirically with classical MCMC methods. Numerical experiments demonstrate that RRS can exhibit lower autocorrelations and faster effective mixing, both in synthetic examples and in a  Bayesian probit regression model applied to a real medical dataset. Moreover, if the algorithm is run until time $t$, we show that the usual order $ O(1/t)$ results for the bias of the time-average estimators, is improved to a bias of $ O(1/t^2)$ for the estimator constructed from the RRS method, and provide easy-to-estimate non-asymptotic bounds for this bias.
\end{abstract}
\clearpage

\setcounter{page}{1}
  \tableofcontents
  \listoffigures
  \listoftables
  \cleardoublepage
}
\pagestyle{main}

\newpage
\pagenumbering{arabic}
\setcounter{page}{1}
\chapter{Introduction}
\label{chap:1}
One of the basic, but nonetheless important, concepts in Monte Carlo methods \cite{Metropolis:MC} is sampling. It is used to obtain realizations of random variables (or vectors) following a given distribution, i.e. the \emph{samples}, which are then exploited to compute estimators of given Quantities of Interest. The fundamental sampling methods are those tailored to generate uniform random variables on $(0,1)$, called \emph{Uniform Random Number Generators} (URNGs) \cite{Kroese:HandbookMC}. The majority of the URNGs used nowadays, are based on methods that can be implemented on computers. This implies that such algorithms only produce a \emph{deterministic} stream of uniform numbers, that is, however, indistinguishable from a random one according to common statistical tests \cite{Lecuyer:RNGs}. These type of generators are called \emph{Uniform Pseudo-Random Number Generators}.\\
To obtain (pseudo-) random numbers according to a general distribution, one can then apply specific transformations to the uniform samples (e.g. the \emph{Inverse Transform Method} \cite{Kroese:HandbookMC,Lecuyer:RNGs}), or other methods that make use of the uniform (pseudo-) random numbers.

The \emph{Rejection Sampling method} falls under the latter category. It is one of the most common methods to generate exact random numbers from any distribution with probability density function $f(x)$, possibly known only up to a multiplicative constant (i.e. only $f_\propto(x)=\kappa f(x)$ is known) \cite{Martino:Sampling}. The scheme relies on sampling random variables from a different distribution, called the \emph{proposal distribution}, which is assumed to have density $g(x)$. The method only works if there exists a known constant $C>0$, such that $f_\propto(x)\leq Cg(x)$ for all $x$ (or $g$-almost everywhere). Unfortunately, this assumption can be quite restrictive, since, in various cases, calculating the constant $C$ is computationally too expensive, or even impossible (e.g. if such a finite constant does not exist).

One could resort to approximate methods, such as the Independent Metropolis-Hastings (IMH) Markov Chain Monte Carlo (MCMC) method \cite{Hastings:MCMC,Tierney:IndependenceSampler}, which do not require the constant $C$ to be known, but, in turn, generate random numbers that are only approximately distributed according to the given pdf $f(x)$. However, the IMH MCMC method has fast (exponential) convergence only when a finite constant $C>0$, such that $f_\propto(x)\leq Cg(x)$, exists \cite{MergensenTweedie:MHConvergence}.

In this thesis, we take inspiration from both MCMC methods, which construct a Markov Chain that takes $f(x)$ as stationary distribution, and Rejection Sampling, and propose a novel approximate sampling method based on regenerative processes, which we call \emph{Regenerative Rejection Sampling} (RRS). Regenerative processes have been widely used in simulation, for example, in \cite{Glynn:GSMP, Glynn:RegenerativeSimulation, Glynn:SimulationRegenerative, AwadGlynn:SteadyStateEstimators, Asmussen:StationarityDetection,MeketonHeidelberger:BiasReduction,CraneIglehart:SimulatingStableStochSystIII}, and also in the context of Markov Chain Samplers, e.g. in \cite{Mykland:RegenerationMCMC} and \cite[Truncated Multivariate Student Computations via Exponential Tilting]{Botev:AdvancesModeling}. However, to our knowledge, the method that we propose in this report has not been theoretically studied yet, as it was only recently introduced in \cite{Botev:MachineLearning}.

The algorithm is a generalization of Rejection Sampling, as it relies on samples generated from a proposal distribution with density $g(x)$. Instead of a Markov Chain, RRS constructs a regenerative process with stationary distribution $f(x)$. In this thesis, we will show that the method can be applied in many of the cases where the Rejection Sampling algorithm cannot be used (i.e. with intractable or non-existent constant $C$), and that it converges exponentially fast, in total variation, to its stationary distribution for a larger class of instances compared to the IMH MCMC method. Additionally, we will analyze its performance on some examples, and show that it can behave significantly better than alternative and widely-used MCMC methods.

\section{General Outline}
The thesis is outlined as follows:
\begin{itemize}
    \item In Chapters \ref{chap:2} and \ref{chap:3}, we present the theory of renewal and regenerative processes. For the exposition, we closely followed the book by Asmussen \cite{Asmussen:Applied}, together with other additional sources. However, we provided additional simple examples and expanded on Asmussen's arguments, filling in the missing steps, to ease the understanding of the concepts.
    \item Chapter \ref{chap:4} is devoted to the coupling proof of the exponential convergence of regenerative processes. Once again, we closely followed \cite{Asmussen:Applied}, expanding some steps of the proofs and providing simple examples. The last Section of the Chapter (Section \ref{sec:CouplingProofDetailed}) explores the construction of a coupling proof through the lense of \cite{Thorisson:RegProcCoupling}.
    \item In Chapter \ref{chap:5} we first recall the theory on which Rejection Sampling is based, and then, we present the \emph{Regenerative Rejection Sampling} method.
    \item In Chapter \ref{chap:6} we concentrate on the bias and variance properties of the time average estimator constructed from a run of the Regenerative Rejection Sampling method for the purpose of estimating an expected value.
    \item Chapter \ref{chap:7} shows two practical applications of the RRS method: first, by sampling from a toy bi-dimensional distribution, and then by performing a Probit Bayesian Regression on a real medical dataset containing  $55$ patients, $18$ of which have been diagnosed with \emph{Latent membranous lupus nephritis} \cite{AlbertChib:BayesBinary}. We also include a comparison with common MCMC methods.
    \item Lastly, in Chapter \ref{chap:8}, we summarize the results of our work, and provide commentary on them.
\end{itemize}

\chapter{Continuous-time Renewal Processes: Background}
\label{chap:2}
The study of regenerative processes lays its foundations in the analysis of renewal processes. As will be explained in the dedicated chapter, we can embed a renewal process in every regenerative process. Hence, it is natural to start our survey from the theory of renewal process, also called \emph{Renewal Theory}. We will concentrate on continuous-time processes, since the RRS method develops in continuous time, as we will see in Chapter \ref{chap:5}.

\section{Basic definitions and results}
Let us start the analysis with the classical definition of renewal process \cite{Asmussen:Applied}.
\begin{definition}[Renewal Process] 
    Let $0 \leq T_0 < T_1 < T_2 < \cdots$ be the (random) times of occurrences of some event and define $X_n \colon \!\!\! = T_n - T_{n-1}$, $X_0 = T_0$. Then $\{ T_n\}_{n \in \mathbb{N}}$ is called a \emph{Renewal Process} if $X_0,X_1,X_2, \dots$ are independent and $X_1,X_2,X_3,\dots$ (not necessarily $X_0$) have the same distribution.

    The $T_n$'s are the \emph{renewals}, or \emph{epochs}, of the process. The common distribution $F$ of $X_1,X_2, \dots$ is called \emph{interarrival distribution} or \emph{waiting-time distribution}.

    A renewal process is said to be \emph{pure} or \emph{zero-delayed} if $X_0 = T_0 = 0 \: \text{a.s.}$. Otherwise it is called \emph{delayed}, and the \emph{delay distribution} is the distribution of $X_0$.
\end{definition}

Let us make a few remarks. Since we want to avoid having more than one renewal at a time, we always assume that the $X_n$'s, $n\geq 1$, have zero mass at $0$, meaning $F(0)=0$ \cite{Asmussen:Applied}.
Moreover, in the case of a zero-delayed process, we choose to count $T_0=X_0=0$ as a proper renewal of the process.

Another important definition related to each renewal process, is the counting process of the epochs, which we call $\{N(t)\}_{t\geq 0}$.

\begin{definition}[Renewal Counting Process]
    A \emph{Renewal Counting Process} $\{ N(t)\}_{t \geq 0}$, associated to a renewal process $\{T_n\}_{n\in\mathbb{N}}$, is defined as 
    \[
    N(t) = \inf \{n \in \mathbb{N}: T_n > t\}.
    \]
    Notice that $N(0) = 1$ in the case of a zero-delayed renewal process and $N(0)=0$ for a delayed one.
\end{definition}

In the related literature, one can find an alternative definition of the renewal counting process (e.g. in \cite{Grimmett:RandomProcesses}), i.e.
\[
\tilde{N}(t) = \sup\{n \in \mathbb{N}: T_n \leq t \}.
\]
However, we use the definition with the $\inf$ for the proposed one because it is more intuitive, since $N(t)$ represents the number of renewals up to time $t$ (as previously remarked, the starting renewal at $0$ counts as a proper renewal, in the case of a zero-delayed renewal process), and is always well defined. On the other hand, $\tilde{N}(t)$, in the case of a delayed renewal process, for all $t < T_0$ would yield $N(t) = \sup \varnothing$, which requires us to specify a value for such a quantity. For the specific choice of $\tilde{N}(t) = 0$ for all $t \leq T_0$, we get the useful relation $N(t) = \tilde{N}(t) + 1$.

Figure \ref{fig:NtANDTn} shows a realization of a zero-delayed renewal process $\{T_n\}_{n\in\mathbb{N}}$ with a $Gamma(2,1)$ interarrival distribution, together with a comparison of $\{N(t)\}$ and $\{\tilde{N}(t)\}$.
\begin{figure}[htpb]
    \centering
    \includegraphics[width=0.55\linewidth]{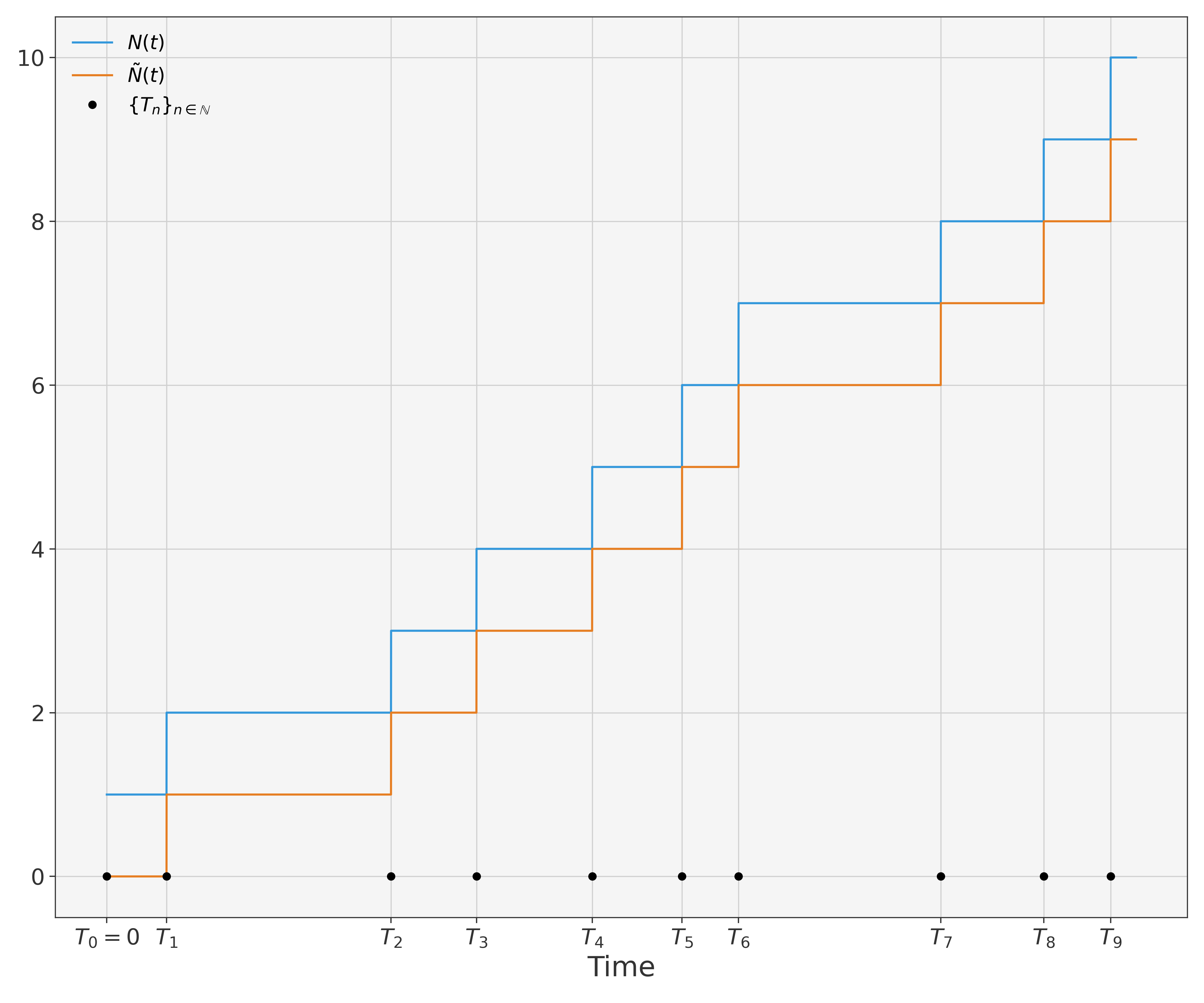}
    \caption{Realization of a renewal- and renewal counting process with $Gamma(2,1)$ interarrival distribution}
    \label{fig:NtANDTn}
\end{figure}

The following remark helps to better comprehend the roles of the various processes \cite{Grimmett:RandomProcesses}.
\begin{remark} 
\label{rmk:lifecycle}
    It is clear from the definition that, for any $t$, $t \in [T_{N(t)-1}, T_{N(t)})$. We can think of this interval as the lifetime of an item. Moreover, 
    \begin{itemize}
        \item The renewal process $\{T_n\}$ takes as values the times at which each item starts its life-cycle;
        \item The renewal counting process $\{N(t)\}$ counts the number of items that have started (and possibly concluded) their life-cycle.
    \end{itemize}
\end{remark}

Any renewal process, as is easily shown from the definitions, is tightly linked to the related renewal counting process. Indeed, we can see that \cite{Asmussen:Applied}
\begin{equation} 
\label{eq:NtTn1}
\{N(t) \leq n\} = \{T_n > t\},
\end{equation}
and together with Remark \ref{rmk:lifecycle} we obtain
\begin{equation} 
\label{eq:NtTn2}
\{ N(t)=n \}=\{T_{n-1}\leq t < T_n\}.
\end{equation}

At this point, we make a simple example:
\begin{example}
\label{ex:PP1}
    The most common type of renewal process/renewal counting process is obtained by defining a zero-delayed process with interarrival distribution exponential of parameter $\lambda$, i.e $F \sim Exp(\lambda)$, with density $f(x) = \lambda e^{-\lambda x}, \: x \in [0,\infty)$. Such a process is nothing more than a modified version of a Poisson Process with rate $\lambda$ ($PP(\lambda)$). The difference arises from the fact that, as explained before, $N(0) = 1$ in our case, while for the common $PP(\lambda)$, $N(0)=0$. The usual definition corresponds to choosing $\tilde{N}(t)$ instead of $N(t)$ as counting process.

    We can show this:
    \[
    \begin{split}
        \mathbb{P}(N(t)>n) = \mathbb{P}(T_n \leq t) &\stackrel{(*)}{=} \int_0^t \lambda e^{-\lambda x} \frac{(\lambda x)^{n-1}}{(n-1)!} \text{d}x\\
        &\stackrel{(**)}{=} 1- \sum_{i=0}^{n-1}\frac{(\lambda t)^i e^{-\lambda t}}{i!}, 
    \end{split}
    \]
    where $(*)$ will be shown in Example \ref{ex:PP2}, and $(**)$ follows by recursively integrating by parts.
    Now, since
    \[
    \mathbb{P}(N(t)>n) = 1- \sum_{i=0}^{n-1}\frac{(\lambda t)^i e^{-\lambda t}}{i!} \Leftrightarrow \mathbb{P}(N(t) \leq n) = \sum_{i=0}^{n-1}\frac{(\lambda t)^i e^{-\lambda t}}{i!},
    \]
    we can conclude that 
    \[
    \mathbb{P}(N(t) = n) = \frac{(\lambda t)^{n-1} e^{-\lambda t}}{(n-1)!}, \quad n=1,2,3,\dots,
    \]
    which is exactly the desired "shifted" Poisson distribution.
\end{example}

Equations \eqref{eq:NtTn1} and \eqref{eq:NtTn2} suggest that we can convert classical results for $\{T_n\}_{n\in\mathbb{N}}$, i.e. a sum of i.i.d. random variables, to $\{N(t)\}_{t\geq 0}$. The first one is a Law of Large Numbers (LLN)-type result \cite{Asmussen:Applied,Grimmett:RandomProcesses}:

\begin{theorem}[LLN for renewal counting process] 
\label{thm:LLNCountingProcess}
    Let $\mu = \mathbb{E}[X_1] = \int_0^\infty x F(\text{\emph{d}}x)$ be the mean of the interarrival distribution. Then, irrespective of the distribution of $X_0$ or whether $\mu < \infty$ or $\mu = \infty$ (but assuming that $1/\infty=0$),
    \begin{equation}
        \frac{N(t)}{t} \longrightarrow \frac{1}{\mu}, \quad t\longrightarrow\infty \ \text{\emph{ almost surely}}.
    \end{equation}
\end{theorem}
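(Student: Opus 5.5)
The plan is to sandwich $t$ between two consecutive renewal epochs and then apply the strong law of large numbers to the i.i.d. sum $T_n = X_0 + X_1 + \cdots + X_n$. Two preliminary facts are needed.

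First, $N(t) < \infty$ for every $t$. Since $F(0)=0$, we have $\mathbb{P}(X_1>0)=1$, so $\mathbb{E}[X_1] > 0$ (possibly $+\infty$). The SLLN, in its extended form for nonnegative variables with infinite mean, gives $T_n/n \to \mu$ almost surely. Because $X_0$ is a.s. finite, $X_0/n \to 0$, and hence $T_n \to \infty$ a.s. By \eqref{eq:NtTn1} this means $N(t)$ is a.s. finite.

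Second, $N(t) \to \infty$ almost surely. This holds because $N(t) \geq n$ as soon as $t \geq T_{n-1}$, and every $T_{n-1}$ is finite.

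By Remark \ref{rmk:lifecycle}, $T_{N(t)-1} \le t < T_{N(t)}$. Dividing by $N(t)$ gives
\[
\frac{T_{N(t)-1}}{N(t)} \le \frac{t}{N(t)} < \frac{T_{N(t)}}{N(t)}.
\]
Since $N(t)\to\infty$ a.s. along every sample path where $T_n/n \to \mu$, the subsequence convergence transfers directly. Writing
\[
\frac{T_{N(t)-1}}{N(t)} = \frac{T_{N(t)-1}}{N(t)-1}\cdot\frac{N(t)-1}{N(t)},
\]
both bounds converge to $\mu$ almost surely. Therefore $t/N(t)\to\mu$, and hence $N(t)/t \to 1/\mu$ with the convention $1/\infty = 0$.

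The main subtlety is the case $\mu=\infty$, where the classical SLLN does not apply directly. I would handle it by truncation. Fix $M$ and set $X_i^{M} = X_i \wedge M$. The ordinary SLLN gives $\liminf_n T_n/n \ge \mathbb{E}[X_1\wedge M]$, and this lower bound tends to $\infty$ as $M\to\infty$ by monotone convergence. Only the lower sandwich bound is needed in this case: $t/N(t) \ge T_{N(t)-1}/N(t) \to \infty$, so $N(t)/t \to 0$.

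Throughout, the arbitrary delay $X_0$ plays no role beyond being a.s. finite, since it is negligible after division by $n$.
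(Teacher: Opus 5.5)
Your proposal is correct and follows the route the paper indicates: the paper defers the proof to Asmussen, noting only that it rests on the LLN for $\{T_n\}$, and you carry this out by sandwiching $T_{N(t)-1}\le t<T_{N(t)}$, dividing by $N(t)\to\infty$, and applying the SLLN (via truncation when $\mu=\infty$). The one detail to make explicit is that the truncation levels $M$ should range over a countable set (e.g.\ $M\in\mathbb{N}$), so that the almost-sure events can be intersected.
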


the counting process also satisfies a Central Limit Theorem (CLT)-type result for the counting process \cite{Asmussen:Applied,Grimmett:RandomProcesses}:
\begin{theorem} 
\label{thm:RenCLT}
If $\sigma = \mathbb{V}ar(X_1)$ satisfies $0 < \sigma < \infty$, we have
\begin{equation}
    \frac{N(t)-t/\mu}{\sqrt{t\sigma^2/\mu^3}} \stackrel{\mathscr{D}}{\longrightarrow} \mathcal{N}(0,1), \quad t\longrightarrow\infty.
\end{equation}
\end{theorem}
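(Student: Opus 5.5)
We derive the renewal CLT from the classical CLT for the i.i.d. sums $T_n$, using the duality \eqref{eq:NtTn1}. Here $\sigma^2$ denotes the variance of $X_1$ (the statement writes $\sigma$, but the normalisation clearly means $\sigma^2$). Since $0<\sigma^2<\infty$, we have $\mu<\infty$. Write $T_n = X_0 + S_n$ with $S_n = X_1+\cdots+X_n$. The classical CLT gives
\[
\frac{S_n - n\mu}{\sigma\sqrt{n}} \stackrel{\mathscr{D}}{\longrightarrow} \mathcal{N}(0,1).
\]
The delay $X_0$ is a.s. finite and fixed, so $X_0/\sqrt{n}\to 0$ a.s. By Slutsky's lemma, the same limit therefore holds for $(T_n - n\mu)/(\sigma\sqrt{n})$. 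This is the only place the delay enters, and it shows the result holds for any delay distribution.

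Fix $x\in\mathbb{R}$. It suffices to show that
\[
\mathbb{P}\Big(N(t) \le t/\mu + x\sqrt{t\sigma^2/\mu^3}\Big)\to\Phi(x),
\]
because $\Phi$ is continuous and so convergence of the distribution function at every $x$ gives convergence in distribution. Set $n(t) = \lfloor t/\mu + x\sqrt{t\sigma^2/\mu^3}\rfloor$. Since $N(t)$ is integer valued, the event above equals $\{N(t)\le n(t)\}$, which by \eqref{eq:NtTn1} equals $\{T_{n(t)} > t\}$. Standardising,
\[
\mathbb{P}(T_{n(t)}>t) = \mathbb{P}\Big(\frac{T_{n(t)} - n(t)\mu}{\sigma\sqrt{n(t)}} > \frac{t - n(t)\mu}{\sigma\sqrt{n(t)}}\Big).
\]
We have $n(t)\to\infty$ and $n(t)\sim t/\mu$. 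Moreover,
\[
t - n(t)\mu = -x\sqrt{t\sigma^2/\mu} + O(1),
\]
so the threshold $(t-n(t)\mu)/(\sigma\sqrt{n(t)})$ converges to $-x$.

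The main technical point is that the threshold depends on $t$ rather than being fixed. This is handled by P\'olya's theorem: convergence in distribution to the continuous limit $\Phi$ is uniform in the argument. Hence
\[
\mathbb{P}(T_{n(t)}>t) \to 1-\Phi(-x) = \Phi(x).
\]
Equivalently, one can sandwich the threshold between $-x\pm\varepsilon$ for large $t$ and then let $\varepsilon\to0$. This completes the proof.
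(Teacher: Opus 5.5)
Your proof is correct and follows the route the paper indicates. The paper only says the result follows from the classical CLT for $T_n$ via the duality $\{N(t)\le n\}=\{T_n>t\}$ of \eqref{eq:NtTn1}, and defers to Asmussen for details. You supply those details soundly: Slutsky absorbs the delay, P\'olya's theorem handles the $t$-dependent threshold, and you correctly read $\sigma$ as the standard deviation.
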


The proofs of these two theorems are quite straightforward, as they rely on the LLN and CLT for the renewal process $\{T_n\}_{n\in\mathbb{N}}$, and can be seen, e.g. in \cite{Asmussen:Applied}.

A further result, which is an important theorem of renewal theory, is called \emph{Elementary Renewal Theorem}, and serves as the starting point for developing a complete theory of renewal processes.

\begin{theorem}[Elementary Renewal Theorem]
    In the same setting of Theorem \ref{thm:LLNCountingProcess}, we have
    \begin{equation}
        \frac{\mathbb{E}[N(t)]}{t} \longrightarrow \frac{1}{\mu}, \quad t\longrightarrow\infty.
    \end{equation}
\end{theorem}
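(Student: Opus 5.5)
The plan is to prove the two inequalities $\liminf_{t\to\infty} \mathbb{E}[N(t)]/t \geq 1/\mu$ and $\limsup_{t\to\infty} \mathbb{E}[N(t)]/t \leq 1/\mu$ separately. The lower bound comes from Theorem \ref{thm:LLNCountingProcess} via Fatou's lemma, and the upper bound from Wald's identity applied to the stopping time $N(t)$ combined with a truncation argument.

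\emph{Lower bound.} By Theorem \ref{thm:LLNCountingProcess}, $N(t)/t \to 1/\mu$ almost surely. Since $N(t)/t \geq 0$, Fatou's lemma gives
\[
\liminf_{t\to\infty} \frac{\mathbb{E}[N(t)]}{t} \geq \mathbb{E}\Bigl[\liminf_{t\to\infty} \frac{N(t)}{t}\Bigr] = \frac{1}{\mu}.
\]
This already proves the case $\mu=\infty$ once we also show $\mathbb{E}[N(t)]/t \to 0$ there; that follows from the upper bound below applied with truncation.

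\emph{Upper bound, zero-delayed case.} First note that $\mathbb{E}[N(t)]<\infty$. Choose $a>0$ with $\mathbb{P}(X_1>a)=p>0$, which is possible because $F(0)=0$. Then $N(t)$ is stochastically dominated by a sum of geometric-type counts, so it has finite mean.

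Next observe that $N(t)=\inf\{n: T_n>t\}$ is a stopping time with respect to the filtration generated by $X_0,X_1,\dots$, because $\{N(t)=n\}$ depends only on $T_0,\dots,T_n$. Wald's identity then gives
\[
\mathbb{E}[T_{N(t)}] = \mathbb{E}[X_0] + \mu\,(\mathbb{E}[N(t)]-1).
\]
The inequality $T_{N(t)}>t$ only yields the lower bound again. For the upper bound we need to control the overshoot $T_{N(t)}-t \leq X_{N(t)}$, which may have infinite mean.

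\emph{Truncation (the main obstacle).} To handle the overshoot, fix $b>0$ and set $X_n^b=\min(X_n,b)$ for $n\geq1$. Let $N^b(t)$ be the counting process of the truncated zero-delayed renewal process. Since the truncated interarrivals are shorter, $N(t)\leq N^b(t)$ pathwise. For the truncated process the overshoot is bounded by $b$, so Wald's identity gives
\[
\mu_b\,(\mathbb{E}[N^b(t)]-1)=\mathbb{E}[T^b_{N^b(t)}]\leq t+b,
\qquad\text{where } \mu_b=\mathbb{E}[X_1\wedge b].
\]
Hence $\limsup_{t\to\infty} \mathbb{E}[N(t)]/t \leq 1/\mu_b$. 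Letting $b\to\infty$, monotone convergence gives $\mu_b\uparrow\mu$, which settles both $\mu<\infty$ and $\mu=\infty$.

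\emph{Delayed case.} Here $N(t)\leq 1+N_0(t)$, where $N_0$ is the zero-delayed counting process built from $X_1,X_2,\dots$, because the renewals after $T_0$ are at most those of a zero-delayed process started at $T_0\ge0$. So the upper bound carries over unchanged. The lower bound is still Fatou, since Theorem \ref{thm:LLNCountingProcess} holds irrespective of the delay.

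The delicate points are checking the stopping-time property under the $\inf$ convention and checking finiteness of $\mathbb{E}[N(t)]$, so that Wald's identity applies.
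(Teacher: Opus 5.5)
The paper gives no proof of this theorem; it only refers to Asmussen and Grimmett--Stirzaker. Your argument is the standard textbook one: Fatou plus Theorem~\ref{thm:LLNCountingProcess} for the $\liminf$, truncation plus Wald's identity for the $\limsup$, and a domination argument for the delay. It is correct, apart from one slip that does not affect the conclusion.

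\textbf{The Wald identity is off by one.} With the paper's convention, $T_{N(t)} = X_0 + X_1 + \dots + X_{N(t)}$ contains $N(t)$ of the interarrival times $X_1, X_2, \dots$. Also $\{N(t)\ge n\}=\{T_{n-1}\le t\}$ is independent of $X_n$. So Wald gives $\mathbb{E}[T_{N(t)}] = \mathbb{E}[X_0] + \mu\,\mathbb{E}[N(t)]$, not $\mathbb{E}[X_0]+\mu(\mathbb{E}[N(t)]-1)$. The extra ``$-1$'' looks like a mix-up between $N(t)$ and $\tilde N(t) = N(t)-1$.

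You can check this with Examples~\ref{ex:PP2}--\ref{ex:PP3}: there $\mathbb{E}[N(t)] = 1+\lambda t$ and $\mathbb{E}[T_{N(t)}] = t+\mathbb{E}[R(t)] = t+1/\lambda$. This is also the form $r(t)=\mu\,\mathbb{E}[N(t)]-t$ that the paper uses in the proof of Theorem~\ref{thm:BiasBound}.

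Your truncated step should therefore read $\mu_b\,\mathbb{E}[N^b(t)] = \mathbb{E}[T^b_{N^b(t)}] \le t+b$. The discrepancy is a constant that vanishes after dividing by $t$, so the limit statement stands.

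\textbf{Two small remarks.}
\begin{itemize}
\item Wald is applied to $N^b(t)$, not $N(t)$, so it is finiteness of $\mathbb{E}[N^b(t)]$ that you need. The same geometric argument with some $a<b$ gives it. Alternatively, for nonnegative increments $\mathbb{E}[\sum_{n\le N} X_n] = \mu\,\mathbb{E}[N]$ holds by Tonelli with no finiteness assumption, and the bound $t+b$ then yields finiteness directly.
\item In the delayed case you actually have $N(t)\le N_0(t)$, because $T_n \ge X_1+\dots+X_n$. The $+1$ is harmless but unnecessary.
\end{itemize}
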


Its proof is more convoluted than the previous ones, for references one can see \cite{Asmussen:Applied,Grimmett:RandomProcesses}.

To conclude the first section, we present the definitions of three random processes of interest that are associated with a renewal process \cite{Asmussen:Applied,Grimmett:RandomProcesses}.

\begin{definition} 
    For a given renewal process $\{ T_n\}_{n \in \mathbb{N}}$, we define
    \begin{itemize}
        \item The \emph{Backward Recurrence Time Process} $\{E(t)\}_{t\geq 0}$, which represents the \emph{Elapsed time} of the current item: $E(t) = t - T_{N(t)-1}$;
        \item The \emph{Forward Recurrence Time Process} $\{R(t)\}_{t\geq 0}$, which represents the \emph{Residual lifetime} of the current item: $R(t) = T_{N(t)} - t$;
        \item The \emph{Total lifetime} of the current item, $\{C(t)\}_{t\geq 0}$: $C(t) = E(t) + R(t) = T_{N(t)} - T_{N(t)-1}$. 
    \end{itemize}

    Note that $R(0) = X_0$ when $X_0>0$ and $R(0) = X_1$ when $X_0=0$.
\end{definition}

Figure \ref{fig:EtANDRt} shows a realization of the two processes $\{E(t)\}_{t\geq0}$ and $\{R(t)\}_{t\geq0}$ for a $Gamma(2,1)$ interarrival distribution.
\begin{figure}[htpb]
    \centering
    \includegraphics[width=0.55\linewidth]{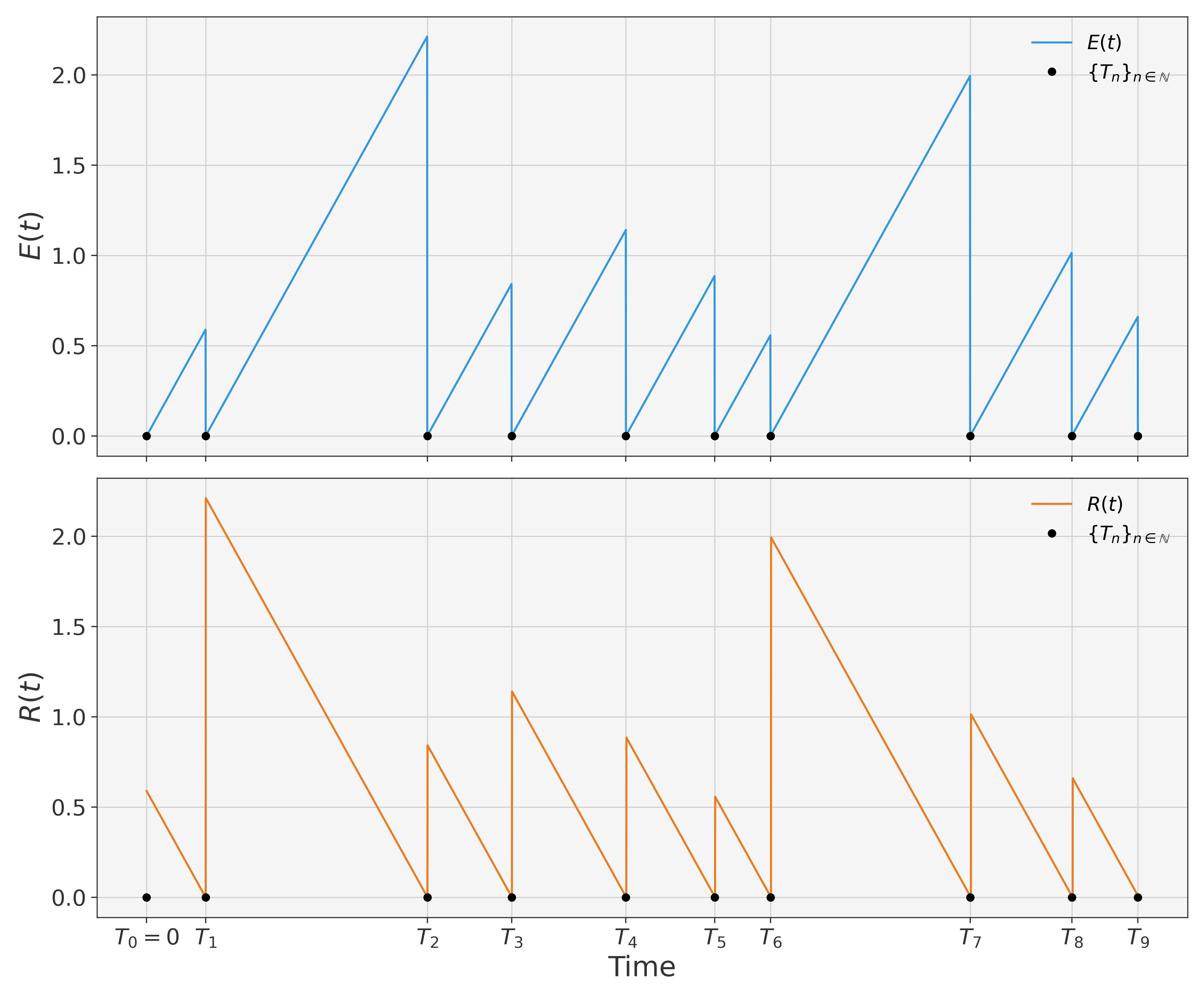}
    \caption{Realization of the forward and backward recurrence time processes with $Gamma(2,1)$ interarrival distribution}
    \label{fig:EtANDRt}
\end{figure}

\begin{remark} 
    The definition of $E(t)$ only makes sense when $t \geq X_0$. However, for any $a\geq 0$ such that $\bar{F}(a) \colon \!\!\! = 1 - F(a) = \mathbb{P}(X_1>a) > 0$, we can assign to $X_0$ the conditional distribution of $X_1$ given that $X_1>a$, i.e.
    \[
    \mathbb{P}(X_0>a) = \mathbb{P}(X_1>x+a|X_1>a)=\frac{\bar{F}(x+a)}{\bar{F}(a)}.
    \]
    Then, by letting $E(t)=t+a$ for every $t<X_0$ we have a version of $\{E(t)\}_{t\geq 0}$ that is well-defined for all $t$, and has $E(0)=a$. In other words, we have defined a renewal process by "starting with a renewal at $-a$" \cite{Asmussen:Applied}.
\end{remark}

\section{Renewal Equations, Renewal Measure and Renewal Function}
A renewal equation is a specific type of convolution equation that has a special role in renewal theory. As we will see in the following pages, many quantities of interest (QoIs) can be proven to be a solution of this type of equation. This enables the study of the QoIs by using the results concerning the solutions of renewal equations.

\begin{definition}[Renewal equation] 
    A \emph{Renewal Equation} is a convolution equation of the type $Z = z + F \ast Z$, i.e.
    \begin{equation}
    \label{Eq:RenEq}
    Z(t) = z(t) + \int_0^t Z(t-u)F(\text{d}u), \quad t\geq 0.
    \end{equation}
    Here $Z(t)$ is an unknown function on $[0,\infty)$, $z(t)$ is a known function on $[0,\infty)$, and $F$ is a known Radon measure on $[0,\infty)$.

    If $F$ is a probability, i.e. $\|F\| \colon\!\!\!= \lim_{t\rightarrow \infty} F(t) = 1$, the renewal equation \eqref{Eq:RenEq} is \emph{proper}.
\end{definition}

Let us see a first example, also shown in \cite{Asmussen:Applied}.
\begin{example} 
\label{ex:TimeProcessesRenEq}
    Consider a zero-delayed renewal process with interarrival distribution $F$ and the related processes $E(t),R(t)$. Let $\xi \geq 0$ be fixed and define $Z_E(t)= \mathbb{P}(E(t) \leq \xi)$, $Z_R(t) = \mathbb{P}(R(t) \leq \xi)$. Then $Z_E, Z_R$ satisfy the renewal equations
    \begin{align}
        Z_E &= z_E + F\ast Z_E, z_E = \mathbb{P}(E(t) \leq \xi, X_1 > t) \stackrel{\text{\emph{(*)}}}{=} \mathbb{I}(t \leq \xi)\bar{F}(t),\\
        Z_R &= z_R + F\ast Z_R, z_R = \mathbb{P}(R(t) \leq \xi, X_1 > t) \stackrel{\text{\emph{(**)}}}{=} F(t+\xi) - F(t).
    \end{align}
\end{example}
\begin{proof}
    Let us start with the process $E(t)$. The proof is carried by the so-called \emph{renewal argument}, which consists in conditioning on the value of $X_1$:
    \[
    \begin{split}
        Z_E(t) = \mathbb{P}(E(t) \leq \xi) &= \mathbb{P}(E(t) \leq \xi, X_1 > t) + \mathbb{P}(E(t) \leq \xi, X_1 \leq t)\\
        &= \mathbb{P}(E(t) \leq \xi, X_1 > t) + \int_0^t\mathbb{P}(E(t) \leq \xi | X_1 = u)F(\text{d}u)\\
        &= z_E(t) + \int_0^t\mathbb{P}(E(t-u) \leq \xi)F(\text{d}u),
    \end{split}
    \]
    because the process starts from scratch at $X_1$, hence $\mathbb{P}(E(t) \leq \xi | X_1 = u) = \mathbb{P}(E(t-u) \leq \xi)$ for $u \leq t$.
    An equivalent argument can be carried out for the process $R(t)$.
    
    Now we only need to show the equalities marked with $(*),(**)$. For $(*)$, notice that, when $X_1 > t$, for a zero-delayed renewal process we have $E(t) = t$ (i.e. $T_{N(t)-1} = T_0 = 0$). Moreover, $\mathbb{P}(X_1>t) = 1 - F(t) = \bar{F}(t)$. Hence,
    \[
    \mathbb{P}(E(t) \leq \xi, X_1 > t) = \mathbb{P}(t\leq\xi, X_1>t) = \mathbb{I}(t \leq \xi)\bar{F}(t).
    \]
    For $(**)$, we can work similarly. When $X_1 >t$, $R(t) = X_1 - t$. Hence,
    \[
    \begin{split}
        \mathbb{P}(R(t) \leq \xi ,X_1 > t) &= \mathbb{P}(X_1 - t \leq \xi ,X_1 >t)\\
        &= \mathbb{P}(X_1 \leq t+\xi,X_1 >t)\\
        &= \mathbb{P}(t < X_1 \leq t+\xi)\\
        &= F(t+\xi) - F(t),
    \end{split}
    \]
    which concludes the proof.
\end{proof}

Such example is crucial for our study, not only because it shows the useful calculations for the renewal equations of the laws of $E(t)$ and $R(t)$, but because it introduces the \emph{renewal argument}. This technique is widely used in renewal theory to compute renewal equations for specific QoIs, such as the law of a given process. 

Two definitions that are closely related to renewal equations are those of \emph{renewal measure} and \emph{renewal function} \cite{Asmussen:Applied}.
\begin{definition}[Renewal Measure and Renewal Function] 
    For a given interarrival distribution $F$, we define the \emph{renewal measure} (on $[0,\infty)$) by $U(\text{d}x) = \sum_{n=0}^\infty F^{*n}(\text{d}x)$, where $F^{*n}$ represents the $n$-th convolution of $F$ with itself (also called $n$-th convolution power of $F$). As a convention, we set $F^{*0}(\text{d}x)= \delta_0(\text{d}x)$ to be the Dirac measure concentrated at $0$ (i.e. $U$ has an atom of mass $1$ at $0$).

    Moreover, we define the \emph{renewal function} as $U(t) = U([0,t]) = 1 + \sum_{n=1}^\infty F^{*n}(t)$, with a slight abuse of notation.

    For a pure renewal process, $F^{*n}$ is the distribution of $T_n = X_1 + \dots + X_n$.
\end{definition}

\begin{definition}[Renewal density] 
    If $U$ is absolutely continuous on $(0,\infty)$ with respect to the Lebesgue measure, we call the density $u(x) = \text{\emph{d}}U/\text{\emph{d}}x$ \emph{renewal density}.
\end{definition}

From the definition of the renewal density, $U$ must be absolutely continuous on $(0,\infty)$. Since $U(\text{d}x)=\delta_0(\text{d}x) + \sum_{n=1}^\infty F^{*n}(\text{d}x)$, this is equivalent to $\sum_{n=1}^\infty F^{*n}(\text{d}x)$ being absolutely continuous on $(0,\infty)$ (or $[0,\infty)$). Trivially, we cannot expect $U$ to be absolutely continuous on $[0,\infty)$ because of the atom at $0$.

The renewal density is easily characterized by the following proposition \cite{Asmussen:Applied}.
\begin{proposition} 
\label{prop:RenDensityCont}
    The renewal density $u$ exists if and only if $F$ has a density $f$. Then $u=\sum_{n=1}^\infty f^{*n}$ or, equivalently, $u$ is the solution of the renewal equation $u = f + F\ast u$.

    If the renewal density exists, we can write $U(\text{\emph{d}}x) = \delta_0(\text{\emph{d}}x) + u(x)\text{\emph{d}}x$.
\end{proposition}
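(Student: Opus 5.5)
The plan is to prove the two directions separately, working with the measure $V(\mathrm{d}x) \colon\!\!\!= \sum_{n=1}^\infty F^{*n}(\mathrm{d}x) = U(\mathrm{d}x) - \delta_0(\mathrm{d}x)$. As remarked after the definition of renewal density, existence of $u$ is equivalent to $V$ being absolutely continuous on $[0,\infty)$. Everything therefore reduces to showing that $V$ is absolutely continuous if and only if $F$ is, and then identifying the density.

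\emph{Necessity.} Suppose $V$ has a density $u$. Since $F = F^{*1} \leq V$ as measures, any Lebesgue-null set $A$ satisfies $F(A) \leq V(A) = 0$. Hence $F \ll \mathrm{Leb}$, and by Radon--Nikodym $F$ has a density $f$.

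\emph{Sufficiency and identification.} Suppose $F(\mathrm{d}x) = f(x)\mathrm{d}x$. By induction, $F^{*n}$ has density $f^{*n}$: the convolution of an absolutely continuous measure with any measure is absolutely continuous, with density $f^{*(n-1)} * f$ by Fubini. Tonelli then lets us exchange sum and integral for nonnegative terms, so for every Borel $A$
\[
V(A) = \sum_{n\geq 1} \int_A f^{*n}(x)\,\mathrm{d}x = \int_A \sum_{n \geq 1} f^{*n}(x)\,\mathrm{d}x .
\]
Thus $u = \sum_{n\ge1} f^{*n}$, and $U(\mathrm{d}x) = \delta_0(\mathrm{d}x) + u(x)\mathrm{d}x$.

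The main obstacle is checking that $u$ is a genuine (locally integrable, a.e. finite) density, which requires $U(t) < \infty$ for each $t$. This holds because $F(0)=0$. Choose $\delta>0$ with $F(\delta)<1$; a standard argument bounds the number of renewals in $[0,t]$ by a geometric-type variable, giving $F^{*n}(t)\le C\theta^n$ for some $\theta<1$, so $\sum_n F^{*n}(t)<\infty$. In the absolutely continuous case $F(0)=0$ is automatic. Local finiteness of $U$ then gives $\int_0^t u < \infty$, so $u<\infty$ a.e. and $u \in L^1_{\mathrm{loc}}$.

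\emph{Renewal equation.} Write $u = f + \sum_{n\ge 2} f^{*n} = f + f * \sum_{n \geq 1} f^{*n}$, again exchanging sum and convolution by Tonelli. This gives $u = f + f*u = f + F*u$. For "the solution'', I would add a uniqueness argument. If $w$ is another locally bounded (or locally integrable) solution, the difference $d = w - u$ satisfies $d = F*d$, and iterating gives $d = F^{*n}*d$. Then $|d(t)| \leq \sup_{[0,t]}|d| \cdot F^{*n}(t) \to 0$, since $F^{*n}(t)\to 0$ by the summability above. Hence $d\equiv 0$, so $u$ is the unique locally bounded solution, understood almost everywhere.
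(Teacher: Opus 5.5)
The paper does not prove this proposition; it defers to \cite{Asmussen:Applied}. Your argument therefore has to stand on its own, and its core does, along the standard route:
\begin{itemize}
\item necessity via $F=F^{*1}\le V$;
\item sufficiency by Tonelli, giving $V(A)=\int_A\sum_{n\ge1}f^{*n}$;
\item local finiteness from $U(t)<\infty$, which you could simply quote from Theorem \ref{thm:RenFunction}(i);
\item the renewal equation $u=f+F\ast u$, obtained by peeling off one factor $f$.
\end{itemize}

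The step that fails as written is your uniqueness add-on. The estimate $|d(t)|\le \sup_{[0,t]}|d|\,F^{*n}(t)$ needs $d=w-u$ to be locally bounded, but $u$ itself need not be. Since $u\ge f$ and $f$ can be unbounded, $u$ can be unbounded too. For example, the $Gamma(1/2,1)$ density behaves like $(\pi x)^{-1/2}$ near $0$.

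In that case $u$ is not in your class of locally bounded solutions. In fact no locally bounded solution exists, since $F\ast w$ is locally bounded whenever $w$ is. So the conclusion ``$u$ is the unique locally bounded solution'' is false in general. The parenthetical ``(or locally integrable)'' is also not covered, because a sup-norm estimate says nothing about unbounded differences.

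One repair is to run the same iteration in $L^1([0,t])$. By Fubini,
\[
\int_0^t \bigl|(F^{*n}\ast d)(x)\bigr|\,\mathrm{d}x \;\le\; F^{*n}(t)\int_0^t |d(s)|\,\mathrm{d}s .
\]
Combined with $d=F^{*n}\ast d$ a.e.\ and $F^{*n}(t)\to 0$, this forces $d=0$ a.e.\ on $[0,t]$. It gives uniqueness among locally integrable solutions, a class that does contain $u$ because $\int_0^t u = U(t)-1<\infty$.

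Alternatively, restrict to locally bounded $f$. Then $u(t)=\int_{[0,t]} f(t-y)\,U(\mathrm{d}y)\le U(t)\sup_{[0,t]}f$ is locally bounded, and Theorem \ref{thm:RenFunction}(ii) applies directly. This is exactly how the paper later uses the proposition, with $f$ d.R.i.\ and hence bounded.
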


We can now provide a practical example of these quantities.
\begin{example}
\label{ex:PP2}
    Let us consider the situation of Example \ref{ex:PP1}, i.e. $F$ is absolutely continuous with density $f(x) = \lambda e^{-\lambda x}, \: x \in [0,\infty)$. By the definition of renewal measure and Proposition \ref{prop:RenDensityCont}, $U(\text{d}x) = \delta_0(\text{d}x) + u(x)\text{d}x$, where $u(x) = \sum_{n=1}^\infty f^{*n}(x)$.

    First of all, it is possible to show that $f^{*n}(x) = \lambda e^{-\lambda x} \frac{(\lambda x)^{n-1}}{(n-1)!}$ for all $n\geq1$. Indeed, $f^{*1} = f = \lambda e^{-\lambda x}$.
    Additionally, assume $f^{*n}(x) = \lambda e^{-\lambda x} \frac{(\lambda x)^{n-1}}{(n-1)!}$. Then,
    \[
    \begin{split}
        f^{*(n+1)}(x) = f^{*n}\ast f(x) &= \int_0^x \lambda e^{-\lambda (x-y)} \lambda e^{-\lambda y} \frac{(\lambda y)^{n-1}}{(n-1)!} \text{d}y\\
        &=\lambda e^{-\lambda x} \frac{\lambda^n}{(n-1)!} \int_0^x y^{n-1}\text{d}y\\
        &= \lambda e^{-\lambda x} \frac{(\lambda x)^n}{n!}.
    \end{split}
    \]
    Now we can compute $u(x)$:
    \[
    u(x) = \sum_{n=1}^\infty f^{*n}(x) = \lambda e^{-\lambda x} \sum_{n=1}^\infty \frac{(\lambda x)^{n-1}}{(n-1)!} = \lambda e^{-\lambda x} \sum_{n=0}^\infty \frac{(\lambda x)^{n}}{n!} = \lambda e^{-\lambda x} e^{\lambda x} = \lambda.
    \]

    Hence, the renewal measure is $U(\text{d}x) =\delta_0(\text{d}x) + \lambda \text{d}x$, and the renewal function is $U(t) = U([0,t]) = 1+\lambda t$.
\end{example}

As was previously mentioned, the renewal function and renewal measure are tightly linked to renewal equations \cite[Theorem 2.4, Chap. V]{Asmussen:Applied}:
\begin{theorem} 
\label{thm:RenFunction}
    \begin{itemize}
        \item[\emph{(i)}] The renewal function $U(t)$ is finite for all $t <\infty$.
        \item[\emph{(ii)}] If the function $z(t)$ in the renewal equation \eqref{Eq:RenEq} is bounded on finite intervals, then $Z = U \ast z$ is well defined, a solution to \eqref{Eq:RenEq}, and the unique solution to \eqref{Eq:RenEq} that is bounded on finite intervals.
        \item[\emph{(iii)}] If $\|F\|=1$, then $U(t) = \mathbb{E}[N(t)]$, for a zero-delayed renewal process with interarrival distribution $F$.

        More generally, for any renewal process with interarrival distribution $F$, the expected number of renewals in $(t, t+a]$ is 
        \begin{equation}
        \label{eq:expNumbRen}
        \mathbb{E}[N(t+a)-N(t)] = \int_0^a U(a-\xi)G_t(\text{\emph{d}}\xi) = G_t \ast U (a) = U \ast G_t(a),
        \end{equation}
        where $G_t(a) = \mathbb{P}(R(t) \leq a)$. Further, the expression \eqref{eq:expNumbRen} cannot exceed $U(a)$.
    \end{itemize}
\end{theorem}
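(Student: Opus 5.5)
For (i), I will use the standard exponential bound on the convolution powers. Since $F(0)=0$ and $F$ is a probability (for the improper case, $F^{*n}(t)\le \|F\|^n$ already handles the defective case $\|F\|<1$), right-continuity gives $\delta>0$ with $F(\delta)<1$. Then I fix $s>0$ and compare with the Laplace transform $\hat F(s)=\int_0^\infty e^{-sx}F(\text{d}x)<1$, which holds because $F$ has no mass at $0$. Markov's inequality gives
\[
F^{*n}(t)=\mathbb{P}(T_n\le t)\le e^{st}\,\mathbb{E}[e^{-sT_n}]=e^{st}\hat F(s)^n .
\]
Summing over $n$ yields $U(t)\le e^{st}/(1-\hat F(s))<\infty$. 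For a defective $F$ the same bound holds, replacing $\mathbb{P}$ by the measure $F^{*n}$.

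For (ii), I first note that $U$ is a Radon measure by (i). If $|z|\le M_t$ on $[0,t]$, then $|U\ast z(t)|\le M_t U(t)$, so $Z=U\ast z$ is well defined and bounded on finite intervals. To check that it solves \eqref{Eq:RenEq}, I use $U=\delta_0+F\ast U$, which follows from the series definition by shifting the index. Associativity of convolution (Fubini, justified by local boundedness) then gives $U\ast z=z+F\ast(U\ast z)$. For uniqueness, let $Z_1,Z_2$ be two locally bounded solutions and set $D=Z_1-Z_2$. Then $D=F\ast D$, and iterating gives $D=F^{*n}\ast D$. Hence $|D(t)|\le \sup_{[0,t]}|D|\,F^{*n}(t)$, and $F^{*n}(t)\to0$ because $F^{*n}(t)$ is the general term of the convergent series in (i). So $D\equiv0$.

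For (iii), in the zero-delayed case I will use \eqref{eq:NtTn1} to write $N(t)=\sum_{n\ge0}\mathbb{I}(T_n\le t)$, noting that $T_0=0$ is counted. Taking expectations gives $\mathbb{E}[N(t)]=\sum_n F^{*n}(t)=U(t)$. For a general process, I condition on $R(t)$. The first renewal in $(t,t+a]$ occurs at time $t+R(t)$. After that epoch the process is zero-delayed with interarrival distribution $F$, independent of the past, by the independence of the $X_n$ and the fact that $T_{N(t)}$ is a renewal epoch. Therefore, conditionally on $R(t)=\xi\le a$, the number of renewals in $(t,t+a]$ is the count of a pure process in $[0,a-\xi]$, whose expectation is $U(a-\xi)$ by the first part. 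Integrating against $G_t$ gives \eqref{eq:expNumbRen}. Finally, $U(a-\xi)\le U(a)$ by monotonicity and $G_t([0,a])\le1$, so the expression is at most $U(a)$.

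The delicate step is this last conditioning argument. $N(t)$ is a random index, so the post-$T_{N(t)}$ increments $X_{N(t)+1},X_{N(t)+2},\dots$ must be shown to be i.i.d. $F$ and independent of $R(t)$. I would justify this by decomposing on $\{N(t)=n\}=\{T_{n-1}\le t<T_n\}$, using \eqref{eq:NtTn2}. This event, together with $R(t)=T_n-t$, depends only on $X_0,\dots,X_n$, so on each such event the later increments are independent of it. Summing over $n$ then gives the required independence.
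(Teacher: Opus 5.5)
Your proof is correct and is essentially the standard argument of \cite[Theorem 2.4, Chap. V]{Asmussen:Applied}, which the paper cites without giving a proof of its own: the bound $F^{*n}(t)\le e^{st}\hat F(s)^n$ for (i), the identity $U=\delta_0+F\ast U$ with the iteration $D=F^{*n}\ast D$ for (ii), and $N(t)=\sum_{n\ge0}\mathbb{I}(T_n\le t)$ plus conditioning on $R(t)$ at the stopping index $N(t)$ for (iii). Two minor remarks: the $\delta$ with $F(\delta)<1$ is never used, and for (i) with a finite excessive $F$ ($1<\|F\|<\infty$) you would need to take $s$ large rather than arbitrary, using $\hat F(s)\to F(\{0\})=0$ as $s\to\infty$.
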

Here, we do not provide the proof of this result, but one can look at \cite{Asmussen:Applied} for details.

One also easily sees that $U(t)$ itself is a solution of a renewal equation \cite{Grimmett:RandomProcesses}:
\begin{lemma} 
\label{lemma:RenFunctRenEq}
    For a zero-delayed renewal process with interarrival distribution $F$, such that $\|F\|=1$, we have that $U(t)=\mathbb{E}[N(t)]$ is a solution of the renewal equation
    \begin{equation}
    \label{eq:renEqU}
    U(t) = 1 + \int_0^t U(t-x)F(\text{\emph{d}}x).
    \end{equation}
\end{lemma}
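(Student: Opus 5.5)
The plan is to prove the identity by the renewal argument of Example \ref{ex:TimeProcessesRenEq}, namely conditioning on $X_1$, and to cross-check it with a direct series computation from the definition of the renewal measure. By Theorem \ref{thm:RenFunction}(iii) we already know that $U(t)=\mathbb{E}[N(t)]$ when $\|F\|=1$. It therefore suffices to show that $\mathbb{E}[N(t)]$ satisfies \eqref{eq:renEqU}.

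\textbf{Step 1 (renewal argument).} We split on the events $\{X_1>t\}$ and $\{X_1\le t\}$. On $\{X_1>t\}$, the zero-delayed process has only the renewal $T_0=0$ in $[0,t]$, so $N(t)=1$. On $\{X_1=u\}$ with $u\le t$, the process restarts at time $u$. Hence
\[
N(t)=1+N'(t-u),
\]
where $N'$ is the counting process of the shifted zero-delayed process $\{T_{n+1}-X_1\}_{n\in\mathbb{N}}$. Here the leading $1$ counts $T_0$, and $N'$ counts $T_1$ as its own initial renewal. Since $N'$ is independent of $X_1$ and has the same law as $N$, this gives
\[
\mathbb{E}[N(t)\mid X_1=u]=1+\mathbb{E}[N(t-u)].
\]

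\textbf{Step 2 (assemble).} We now combine the two cases:
\[
\mathbb{E}[N(t)] = \bar F(t) + \int_0^t \bigl(1+U(t-u)\bigr)F(\text{d}u) = \bar F(t)+F(t)+\int_0^t U(t-u)F(\text{d}u),
\]
and $\bar F(t)+F(t)=1$ because $\|F\|=1$. This is exactly \eqref{eq:renEqU}. Conditioning in a nonnegative integrand is justified by Tonelli's theorem. Finiteness of all the terms comes from Theorem \ref{thm:RenFunction}(i).

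\textbf{Alternative (series check).} Directly from the definition,
\[
U=\delta_0+\sum_{n\ge1}F^{*n}=\delta_0+F\ast\Bigl(\sum_{n\ge0}F^{*n}\Bigr)=\delta_0+F\ast U.
\]
Evaluating on $[0,t]$ gives \eqref{eq:renEqU}. Exchanging the sum and the convolution is legitimate because all the measures are nonnegative. This version is valid even without $\|F\|=1$; the probability assumption is needed only to identify $U(t)$ with $\mathbb{E}[N(t)]$.

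The only delicate point I expect is the bookkeeping in Step 1. Because the paper's convention counts $T_0=0$ as a renewal, the restart at $X_1$ contributes exactly one extra renewal, not two. This is what produces the forcing term $z\equiv1$ rather than $\bar F$.
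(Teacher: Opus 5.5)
Your proof is correct and is essentially the paper's own argument: condition on $X_1$, use $N(t)=1$ on $\{X_1>t\}$ and $\mathbb{E}[N(t)\mid X_1=u]=1+\mathbb{E}[N(t-u)]$ for $u\le t$, then collect terms using $\bar F(t)+F(t)=1$. Your series check $U=\delta_0+F\ast U$ matches the alternative the paper gives in the remark after the lemma, where it notes that the result also follows directly from Theorem \ref{thm:RenFunction}.
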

\begin{proof}
    By using the Tower Law we have
    \[
    U(t) = \mathbb{E}[N(t)] = \mathbb{E}\left[\mathbb{E}[N(t)|X_1]\right] = \int_0^\infty \mathbb{E}[N(t)|X_1=x]F(\text{d}x).
    \]
    Alternatively, one can directly apply the renewal argument and conclude the same.
    Now, for $t < x$, $N(t) \equiv 1$ because only one renewal happens (at 0).
    On the other hand, for $t \geq x$,
    \[
    \mathbb{E}[N(t)|X_1=x] = 1 + \mathbb{E}[N(t-x)],
    \]
    because the process restarts from scratch (identically distributed and independent of the past) after the first renewal.
    Hence we have
    \[
    \begin{split}
       \mathbb{E}[N(t)] &= \int_0^\infty \mathbb{E}[N(t)|X_1=x]F(\text{d}x)\\
       &= \int_t^\infty 1 F(\text{d}x) + \int_0^t1 F(\text{d}x) + \int_0^t \mathbb{E}[N(t-x)] F(\text{d}x)\\
       &= 1 + \int_0^t \mathbb{E}[N(t-x)] F(\text{d}x),
    \end{split}
    \]
    which concludes the proof.
\end{proof}

\begin{remark}
    Lemma \ref{lemma:RenFunctRenEq} could have been deducted directly from Theorem \ref{thm:RenFunction} and, in addition, the Theorem gives us more information on $U(t)$ as a solution of such renewal equation.
    
    Since $U(t)$ is finite for all $t<\infty$, it is also bounded on finite intervals. Furthermore, $U = U \ast 1$, which means that $U$ is a solution of the renewal equation
    \[
    U = 1 + F \ast U.
    \]
    In addition, it is the \emph{unique} solution that is bounded on finite intervals.
\end{remark}

At this point, we must observe more in detail the two different definitions of the renewal counting process. For a reader not to feel confused, we think it is best to summarize some dissimilarities between the two approaches, and how to move from one to the other easily.

By looking at the two different definitions of the renewal process (i.e. the definitions of $\tilde{N}(t)$ and $N(t)$), we can deduce that
\[
U(t) = \mathbb{E}[N(t)] = \mathbb{E}[\tilde{N}(t) + 1] = \tilde{U}(t) + 1.
\]
By an inspection of this relation, we notice that the two definitions yield distinct (but consistent) results.
    
Indeed, we can find two immediate differences between the two approaches. 
First, if we use the definition of the process $\{\tilde{N}(t)\}_{t \geq 0}$, we obtain a different renewal equation, compared to \eqref{eq:renEqU} \cite{Grimmett:RandomProcesses}:
\[
    \mathbb{E}[\tilde{N}(t)] = F(t) + \int_0^t \mathbb{E}[\tilde{N}(t-x)] F(\text{d}x).
\]
Second, by Theorem \ref{thm:RenFunction}, we know that, under suitable conditions, the solution $Z$ of the general renewal equation \eqref{Eq:RenEq} is $Z = U\ast z$. However, this is also equal to $Z = U\ast z = (\tilde{U} + 1) \ast z = z + \tilde{U} \ast z$ \cite{Grimmett:RandomProcesses}.

We conclude this section with an example.
\begin{example}
\label{ex:PP3}
    Following the setting of Examples \ref{ex:PP1} and \ref{ex:PP2}, we compute the distribution of the backward and forward recurrence time processes, using renewal theory arguments.

    We already showed in Example \ref{ex:TimeProcessesRenEq} that $Z_E(t) = \mathbb{P}(E(t)\leq x)$ and $Z_R(t) = \mathbb{P}(R(t) \leq x)$ (for $x \in [0,\infty)$) satisfy two renewal equations with $z_E^x(t) = \mathbb{I}(t\leq x)\bar{F}(t)$ and $z_R^x(t) = F(t+x)-F(t)$, respectively. Since both $z_E^x$ and $z_R^x$ are bounded (hence bounded on finite intervals), by Theorem \ref{thm:RenFunction} we know that $Z_E(t) = U \ast z_E^x(t)$ and $Z_R(t) = U\ast z_R^x(t)$ are the unique solutions to the related renewal equations that are bounded on finite intervals. Let us proceed with the calculation for the case $F\sim Exp(\lambda)$.

    We start with $Z_E(t)$:
    \[
    \begin{split}
        Z_E(t) = U\ast z_E^x(t) &= \int_0^t z_E^x(t-y)U(\text{d}y)\\
        &= \int_0^t \mathbb{I}(t-y \leq x)\bar{F}(t-y) U(\text{d}y)\\
        &= \int_0^t \mathbb{I}(t-y \leq x)\bar{F}(t-y) \delta_0(\text{d}y) + \int_0^t \mathbb{I}(t-y \leq x)\bar{F}(t-y) \lambda\text{d}y\\
        &= \mathbb{I}(t\leq x)\bar{F}(t) + \int_0^t \mathbb{I}(s \leq x)\bar{F}(s)\lambda \text{d}s\\
        &= \mathbb{I}(t\leq x)\bar{F}(t) + \int_0^{t\wedge x} \bar{F}(s)\lambda \text{d}s\\
        &= \mathbb{I}(t\leq x)\bar{F}(t) + \int_0^{t\wedge x} \lambda e^{-\lambda s} \text{d}s\\
        &= \mathbb{I}(t\leq x)(e^{-\lambda t}) + 1- e^{-\lambda(t \wedge x)}\\
        &= \begin{cases}
            1 &\quad \text{if }t\leq x\\
            1-e^{-\lambda x} &\quad \text{if } t>x
        \end{cases}
    \end{split}
    \]
    However, this distribution also has an atom at $t=x$. As we know from the definition of the process $\{E(t)\}_{t\geq0}$:
    \[
    \mathbb{P}(E(t) = t) = \mathbb{P}(X_1>t) = \bar{F}(t) = e^{-\lambda t},
    \]
    Hence, the atom has weight $e^{-\lambda t}$. As $t\rightarrow\infty$, the atom disappears.
    
    We now continue with $Z_R(t)$:
    \[
    \begin{split}
        Z_R(t) = U \ast z_R^x(t) &= \int_0^t z_R^x(t-y)U(\text{d}y)\\
        &= \int_0^t (F(t-y+x)-F(t-y))U(\text{d}y)\\
        &= \int_0^t (F(t-y+x)-F(t-y))\delta_0(\text{d}y)\\ &\qquad+ \int_0^t (F(t-y+x)-F(t-y))\lambda\text{d}y\\
        &= F(t+x)-F(t) + \int_0^t (F(s+x)-F(s))\lambda\text{d}s\\
        &= 1-e^{-\lambda(t+x)}- 1 + e^{-\lambda t} + \int_0^t \lambda(1-e^{-\lambda(s+x)}-1+e^{-\lambda s})\text{d}s\\
        &= e^{-\lambda t}(1-e^{-\lambda x})+ (1-e^{-\lambda x})\int_0^t \lambda e^{-\lambda s} \text{d}s\\
        &= e^{-\lambda t}(1-e^{-\lambda x})+ (1-e^{-\lambda t})(1-e^{-\lambda x})\\
        &= 1-e^{-\lambda x}.
    \end{split}
    \]
    We recognize this function to be the cdf of an $Exp(\lambda)$ distribution, which does not depend on $t$.
\end{example}

\section{Stationarity}
In this section we analyze the conditions under which a renewal process is stationary, and the form of its stationary distribution.

We define a renewal process to be stationary if \cite{Asmussen:Applied}
\[
    \{N(t+s)-N(t)\}_{s\geq 0} \stackrel{\mathscr{D}}{=} \{N(s)-N(0)\}_{s\geq 0}.
\]
In other words, for any $t>0$, if we shift the origin to $t$, we leave the distributions of the epochs unchanged, or equivalently, the renewal counting process has stationary increments.

\begin{remark}
\label{rmk:StationarityRtEquivStationarityTn}
    An equivalent condition to the renewal process being stationary, is that the related forward time recurrence process $\{R(t)\}_{t\geq 0}$ is stationary, meaning that the distribution of $R(t)$ does not depend on $t$ \cite{Asmussen:Applied}.
\end{remark}

The stationary distribution of the forward recurrence time process has a precise form \cite{Asmussen:Applied,Grimmett:RandomProcesses}:
\begin{lemma} 
\label{lemma:stationaryDist}
    The density of the stationary distribution $F_0$ for $\{R(t)\}_{t\geq 0}$ and $\{E(t)\}_{t\geq 0}$ is given by
    \begin{equation}
    f_0(x) = \frac{\bar{F}(x)}{\mu},
    \end{equation}
    where $\mu = \int_0^\infty xF(\text{\emph{d}}x)$.
\end{lemma}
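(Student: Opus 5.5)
The plan is to show that if the delay $X_0$ has density $f_0(x)=\bar F(x)/\mu$ (with $\mu<\infty$), then $R(t)$ has law $F_0$ for every $t$. By Remark \ref{rmk:StationarityRtEquivStationarityTn}, this is exactly what it means for $F_0$ to be the stationary distribution of $\{R(t)\}$. The corresponding statement for $\{E(t)\}$ will then be obtained by a time-reversal or limiting argument. Throughout I assume $\mu<\infty$, since otherwise $f_0$ does not integrate to $1$. As a first check, $\int_0^\infty \bar F(x)\,\text{d}x=\mu$, so $f_0$ is a genuine probability density.

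\textbf{Step 1: renewal equation for the delayed process.} Fix $\xi\ge 0$ and condition on the delay $X_0$. On $\{X_0>t\}$ we have $R(t)=X_0-t$. On $\{X_0=u\le t\}$ the process restarts as a zero-delayed one. This gives
\[
\mathbb{P}(R(t)\le\xi)=F_0(t+\xi)-F_0(t)+\int_0^t Z_R(t-u)\,F_0(\text{d}u),
\]
where $Z_R$ is the zero-delayed quantity from Example \ref{ex:TimeProcessesRenEq}. By Theorem \ref{thm:RenFunction}(ii), $Z_R=U\ast z_R$ with $z_R(s)=F(s+\xi)-F(s)$. Hence
\[
\mathbb{P}(R(t)\le\xi)=F_0(t+\xi)-F_0(t)+(F_0\ast U\ast z_R)(t).
\]

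\textbf{Step 2: the key identity $F_0\ast U=\text{Leb}$.} I would show that $F_0\ast U(\text{d}x)=\mu^{-1}\text{d}x$ on $[0,\infty)$. Write $F_0\ast U=F_0+F_0\ast F\ast U$. It is easiest to check the identity $(F_0\ast F)(t)=F_0(t)-(t-\mu F_0(t))/\mu$ directly: integrating $\int_0^t \bar F(t-s)/\mu\,\text{d}F$-type expressions gives $\int_0^t F_0(t-s)F(\text{d}s)=t/\mu-F_0(t)$, which follows from $\int_0^t \bar F=\mu F_0(t)$ and Fubini. Thus $W(t)=t/\mu$ solves $W=F_0+F\ast W$, since $F_0+F\ast W=F_0(t)+\int_0^t F(s)\,\text{d}s/\mu$, and $\int_0^t F=t-\mu F_0(t)$. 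By uniqueness of bounded-on-finite-intervals solutions in Theorem \ref{thm:RenFunction}(ii), $F_0\ast U(t)=U\ast F_0(t)=t/\mu$. Alternatively one could use Laplace transforms, where $\hat F_0(s)=(1-\hat F(s))/(\mu s)$ and $\hat U=1/(1-\hat F)$, so their product is $1/(\mu s)$. The transform route is cleaner, and I would mention it as a check.

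\textbf{Step 3: conclude.} Substituting into Step 1 gives
\[
\mathbb{P}(R(t)\le\xi)=F_0(t+\xi)-F_0(t)+\frac1\mu\int_0^t\big(F(s+\xi)-F(s)\big)\,\text{d}s.
\]
The integral equals $\frac1\mu\int_0^t(\bar F(s)-\bar F(s+\xi))\,\text{d}s=F_0(t)-F_0(t+\xi)+F_0(\xi)$. So the total is $F_0(\xi)$, independent of $t$, which proves stationarity of $R$ with density $f_0$. Uniqueness of the stationary law can be argued from the fact that any stationary law for $R(t)$ must be the limiting law. That limiting law equals $F_0$ by the same computation, using the integral form $\frac1\mu\int_0^\infty z_R$. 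For $E(t)$, I would either repeat the argument with $z_E(s)=\mathbb{I}(s\le\xi)\bar F(s)$ and a stationary "start at $-a$" version, or note that under stationarity $E(t)$ and $R(t)$ share the same marginal by time reversal. Here $\frac1\mu\int_0^\infty z_E=\frac1\mu\int_0^\xi\bar F=F_0(\xi)$ again.

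The main obstacle is Step 2: justifying the convolution identity $F_0\ast U=\text{Leb}/\mu$ rigorously, in particular the Fubini manipulation and the use of uniqueness. The $E(t)$ part is also delicate, because $E(t)$ is undefined before $X_0$ unless the "start at $-a$" device is used.
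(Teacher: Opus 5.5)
Your argument for $R$ is correct, but it takes a different route from the paper. The paper only sketches the proof. It defers the form of the stationary law of $R$ to Asmussen's level-crossing argument. It then transfers that law to $E$ via the event identity $\{R(t)\le\xi\}=\{E(t+\xi)<\xi\}$ (both events say there is a renewal in $(t,t+\xi]$), so under stationarity $\mathbb{P}(E(s)<\xi)=F_0(\xi)$ for every $s\ge\xi$.

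You instead \emph{verify} invariance directly. With delay $F_0$ you write the renewal equation for $\mathbb{P}(R(t)\le\xi)$ and reduce everything to $F_0\ast U=\mathrm{Leb}/\mu$, justified by the uniqueness clause of Theorem~\ref{thm:RenFunction}(ii) or by Laplace transforms. This is self-contained within the chapter's tools. Along the way it proves the ``if'' direction of Corollary~\ref{cor:StatDelay} and gives $\mathbb{E}[N(t)]=t/\mu$ for the stationary version. A level-crossing argument, by contrast, identifies $\bar F/\mu$ as the only possible stationary density (assuming one exists) rather than checking that it works.

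For $E$, the paper's one-line identity is simpler than either of your options, and it is the rigorous content of your time-reversal remark. Your first option does go through, but it needs the joint initial law. Take $a=E(0)\sim F_0$ and $X_0\mid a\sim(X_1-a\mid X_1>a)$. Then $\mathbb{P}(E(0)>x,R(0)>y)=\bar{F}_0(x+y)$, so $X_0\sim F_0$. The boundary term is $\mathbb{P}(a+t\le\xi,\,X_0>t)=(F_0(\xi)-F_0(t))^+$ and the convolution term is $\frac1\mu\int_0^t z_E(s)\,\mathrm{d}s=F_0(t\wedge\xi)$; these sum to $F_0(\xi)$. Note that the value $\frac1\mu\int_0^\infty z_E=F_0(\xi)$ you cite only identifies the \emph{limit} of the law of $E(t)$, not its invariance.

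There are also two small slips.
\begin{itemize}
\item The side identities $(F_0\ast F)(t)=F_0(t)-(t-\mu F_0(t))/\mu$ and $\int_0^t F_0(t-s)F(\mathrm{d}s)=t/\mu-F_0(t)$ are false as written, since the left-hand sides are at most $1$. What you need, and then use correctly, is $\int_0^t\frac{t-s}{\mu}F(\mathrm{d}s)=\frac1\mu\int_0^tF(s)\,\mathrm{d}s=\frac t\mu-F_0(t)$.
\item Your uniqueness-via-limits remark needs $F$ nonlattice (Theorem~\ref{thm:Ren3}), which the lemma does not assume. The paper treats uniqueness in a separate proposition, so it is not needed here.
\end{itemize}
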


\begin{proof}[Sketch of Proof]
    The form of the stationary distribution for $\{R(t)\}_{t\geq 0}$ can be found with a level-crossing argument, which is not reported here. See \cite{Asmussen:Applied} for more details.

    Now,
    \begin{equation}
    \{R(t) \leq \xi \} = \{\text{renewal in } (t,t+\xi]\} = \{E(t+\xi) < \xi\},
    \end{equation}
    and we can conclude that the stationary distribution of $\{E(t)\}_{t\geq 0}$ has to be same as the one of $\{R(t)\}_{t\geq 0}$.
\end{proof}

This result also shows that, for continuous time renewal processes, $R(t)$ and $E(t)$ share the same stationary distribution.

Let us continue with the usual Poisson Process example and compute the stationary distribution of the two processes.
\begin{example}
\label{ex:PP4}
In the same setting as Examples \ref{ex:PP1}, \ref{ex:PP2} and \ref{ex:PP3}, we can explicitly compute the form of the stationary distribution of the two processes $\{E(t)\}_{t\geq 0}$ and $\{R(t)\}_{t \geq 0}$. Indeed, its density is
\[
    f_0(x) = \frac{\bar{F}(x)}{\mu} = \lambda(1-F(x)) = \lambda e^{-\lambda x}, \quad x \in [0,\infty).
\]
Hence, the stationary distribution of the two processes is again $Exp(\lambda)$.

By checking the distributions computed in Example \ref{ex:PP3}, it is clear that the process $\{R(t)\}_{t \geq 0}$ is stationary (which implies stationarity of the renewal process itself), while it is not the case for the process $\{E(t)\}_{t \geq 0}$. However, from the form of the distribution of $E(t)$, it is obvious that, as $t\rightarrow\infty$, it converges to the stationary $Exp(\lambda)$ distribution.
\end{example}

Now, let us present a series of results that provide interesting insights on stationarity, taken from \cite[Chap. V.3]{Asmussen:Applied}:
\begin{lemma} 
\label{lemma:stationarity}
    \begin{itemize}
        \item[\emph{(i)}] If $\{E(t)\}$ is stationary, then so is $\{R(t)\}$.
        \item[\emph{(ii)}] If $E(t)$ has distribution $F_0$, then so has $R(t)$, and $\mathbb{P}(E(t)>x, R(t)>y) = \bar{F_0}(x+y)$
    \end{itemize}
\end{lemma}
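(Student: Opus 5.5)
The idea is to read off the joint law of $(E(t),R(t))$ from the marginal law of $E$ at a shifted time. The key identity, already used in the sketch of Lemma \ref{lemma:stationaryDist}, says that "no renewal in $(t,t+y]$ and the last renewal before $t$ lies before $t-x$" is the same event as "the age at time $t+y$ exceeds $x+y$". Precisely, for $x,y\geq 0$ (with the convention of starting a renewal at $-a$ from the preceding Remark, so that $E$ is defined for all $t$) I would establish
\[
\{E(t)>x,\ R(t)>y\} = \{E(t+y) > x+y\}.
\]
Suppose first that $R(t)>y$. Then there is no renewal in $(t,t+y]$, so $T_{N(t+y)-1}=T_{N(t)-1}$. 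Hence $E(t+y)=E(t)+y$, and $E(t)>x$ is equivalent to $E(t+y)>x+y$. Conversely, if $E(t+y)>x+y$, then the last renewal before or at $t+y$ happened strictly before $t-x\le t$. So there is no renewal in $(t,t+y]$, which gives $R(t)>y$, and $E(t)=E(t+y)-y>x$. This is a pathwise set identity, so no distributional assumptions are needed.

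For (ii), I would read the hypothesis "$E(t)$ has distribution $F_0$" as holding for every $t$, i.e. $E$ has the stationary marginal $F_0$ at all times; this is the reading needed for the shifted-time argument. Taking probabilities in the identity gives
\[
\mathbb{P}(E(t)>x,R(t)>y)=\mathbb{P}(E(t+y)>x+y)=\bar F_0(x+y).
\]
Setting $x=0$ then yields $\mathbb{P}(R(t)>y)=\bar F_0(y)$, so $R(t)$ has distribution $F_0$. Here one should be careful that $\mathbb{P}(E(t)>0)=1$ (or use $x\downarrow 0$ with right-continuity), since $F_0$ has a density by Lemma \ref{lemma:stationaryDist} and hence puts no mass at $0$.

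For (i), if $\{E(t)\}$ is stationary then $\mathbb{P}(E(s)>z)$ does not depend on $s$; call it $\bar G(z)$. The identity with $x=0$ gives $\mathbb{P}(R(t)>y)=\mathbb{P}(E(t+y)>y)=\bar G(y)$, which is independent of $t$. So the marginal of $R(t)$ is time-invariant, which is the stationarity notion of Remark \ref{rmk:StationarityRtEquivStationarityTn}. If stationarity of the whole process is wanted, this suffices: stationarity of $R$'s marginals yields stationarity of the renewal process by that remark, and hence of $R$ as a process.

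The main obstacle is handling the boundary conventions: the event $E(t)>0$ versus $\geq 0$, ties at renewal epochs (using $N(t)=\inf\{n:T_n>t\}$, so renewals at exactly $t$ count as past), and $t<X_0$, where $E$ requires the "renewal at $-a$" convention. I would fix that convention first and check each inclusion with strict and non-strict inequalities chosen consistently.
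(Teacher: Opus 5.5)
The paper gives no proof of this lemma (it is quoted from Asmussen, Chap.~V.3), so I judge your argument on its own. Your pathwise identity $\{E(t)>x,\,R(t)>y\}=\{E(t+y)>x+y\}$ is correct. It is the two-variable version of $\{R(t)\le\xi\}=\{E(t+\xi)<\xi\}$ from the paper's sketch of Lemma~\ref{lemma:stationaryDist}, and it does give (i) and a version of (ii).

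There is one small repair needed in (i), which you anticipated. Setting $x=0$ gives $\mathbb{P}(E(t)>0,R(t)>y)$, not $\mathbb{P}(R(t)>y)$, and for the unknown stationary law $G$ you have not shown $\mathbb{P}(E(t)=0)=0$. Use the non-strict form instead: $\{R(t)>y\}=\{E(t+y)\ge y\}$, since there is no renewal in $(t,t+y]$ iff the last epoch $\le t+y$ is $\le t$. This gives $\mathbb{P}(R(t)>y)=\mathbb{P}(E(0)\ge y)$ directly.

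The real gap is in (ii). You replace the hypothesis ``$E(t)$ has distribution $F_0$'' by ``$E(s)\sim F_0$ for all $s$''. Your argument cannot do without this, because it moves the question to time $t+y$, where nothing is assumed. But the statement is true pointwise in $t$, which is its natural reading; under your reading, (ii) only covers the already-stationary situation.

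The missing ingredient is the conditional law of the residual life given the age:
\begin{itemize}
\item On $\{N(t)=n+1\}=\{T_n\le t<T_n+X_{n+1}\}$ you have $E(t)=t-T_n$ and $R(t)=X_{n+1}-E(t)$, with $X_{n+1}\sim F$ independent of $T_n$.
\item For $t<T_0$, the ``renewal at $-a$'' convention makes $X_0$ exactly the residual of an $F$-variable conditioned to exceed $a$.
\end{itemize}
Summing over these cases gives, for every $t$,
\[
\mathbb{P}(E(t)>x,\,R(t)>y)=\mathbb{E}\Big[\mathbb{I}(E(t)>x)\,\frac{\bar F(E(t)+y)}{\bar F(E(t))}\Big].
\]
If $E(t)$ has density $\bar F(e)/\mu$ at this single $t$, the right-hand side equals $\mu^{-1}\int_x^\infty\bar F(e+y)\,\mathrm{d}e=\bar F_0(x+y)$. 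Taking $x=0$, and using $\mathbb{P}(E(t)=0)=0$, gives $R(t)\sim F_0$. The same formula also proves (i) immediately, since the law of $R(t)$ is the image of the law of $E(t)$ under a fixed kernel.
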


\begin{lemma} 
    Let $F_1$ be the distribution that has density $x/\mu$ with respect to $F$, let $C$ be a random variable with distribution $F_1$, and let $U$ be a $\mathcal{U}(0,1)$ random variable independent of $C$. Define $E=CU$ and $R=C(1-U)$. Then $\mathbb{P}(E>x,R>y) = \bar{F_0}(x+y).$
\end{lemma}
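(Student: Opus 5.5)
The plan is to compute $\mathbb{P}(E>x,R>y)$ directly by conditioning on $C$ and then to recognise the result as $\bar{F_0}(x+y)$, where $\bar{F_0}(z)=\int_z^\infty \bar{F}(s)\,\text{d}s/\mu$ by Lemma \ref{lemma:stationaryDist}. Fix $x,y\geq 0$. Since $E=CU$ and $R=C(1-U)$, the event $\{E>x,R>y\}$ is the event $\{x/C < U < 1-y/C\}$. Because $U$ is uniform on $(0,1)$ and independent of $C$, conditionally on $C=c$ this event has probability
\[
\mathbb{P}(x/c<U<1-y/c)=\Bigl(1-\frac{x+y}{c}\Bigr)^+=\frac{(c-x-y)^+}{c}.
\]
Only the sum $x+y$ appears, which already explains why the answer depends on $x+y$ alone.

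Next I integrate this against $F_1(\text{d}c)=(c/\mu)F(\text{d}c)$. The factor $c$ in the density cancels the $1/c$, which gives
\[
\mathbb{P}(E>x,R>y)=\frac{1}{\mu}\int_0^\infty (c-z)^+\,F(\text{d}c),\qquad z=x+y.
\]
To finish, I write $(c-z)^+=\int_z^\infty \mathbb{I}(s<c)\,\text{d}s$ and apply Tonelli, which is allowed because the integrand is nonnegative. This turns the integral into $\int_z^\infty \bar{F}(s)\,\text{d}s$, so the probability equals $\frac{1}{\mu}\int_z^\infty\bar F(s)\,\text{d}s=\bar{F_0}(x+y)$.

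The one delicate point is making sure $F_1$ is a genuine probability distribution. Its total mass is $\int c\,F(\text{d}c)/\mu=1$, which requires $\mu<\infty$. I will state this as implicit in the lemma, since $F_0$ is only defined in that case. I will also note that, since $F(0)=0$, we have $C>0$ almost surely, so the division by $C$ is legitimate. The rest is routine.
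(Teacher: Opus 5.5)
Your proof is correct. The paper states this lemma without proof, citing Asmussen, and your argument is the standard computation found there: condition on $C$ to get $(1-(x+y)/C)^+$, let the size-biasing factor cancel the $1/C$, and use Tonelli to turn $\mu^{-1}\int (c-z)^+F(\text{d}c)$ into $\mu^{-1}\int_z^\infty \bar F(s)\,\text{d}s=\bar{F_0}(z)$. One small simplification: $C>0$ almost surely holds automatically, because $F_1(\{0\})=\int_{\{0\}}(x/\mu)\,F(\text{d}x)=0$, so you do not need to invoke $F(0)=0$.
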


\begin{theorem} 
    Let $C$ be a random variable with distribution $F_1$, and let $U\sim \mathcal{U}(0,1)$ independently of $C$. Then the version of the Markov process $\{(E(t), R(t), C(t))\}$ obtained from the initial values $E(0) = CU, R(0) = C(1-U), C(0) = C$ is strictly stationary. Moreover, the point process whose set of renewals is 
    \[
    \{t\geq 0 \colon E(t-) \neq E(t)\} = \{t\geq 0 \colon R(t-) \neq R(t)\}
    \]
    is a stationary renewal process with interarrival distribution $F$.
\end{theorem}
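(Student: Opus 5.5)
The plan is to reduce strict stationarity of the triple $\{(E(t),R(t),C(t))\}$ to stationarity of the renewal point process, and to obtain the latter from Remark \ref{rmk:StationarityRtEquivStationarityTn} together with the two preceding lemmas.

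First, I would identify the initial conditions with a delayed renewal process. With $E(0)=CU$, $R(0)=C(1-U)$, $C(0)=C$, the dynamics are deterministic between jumps: $E$ increases at unit speed and $R$ decreases at unit speed. When $R$ hits $0$, a fresh lifetime is drawn from $F$, independently of the past, and $E$ is reset to $0$. Hence the jump epochs $T_0=R(0)=C(1-U)$, $T_n=T_{n-1}+X_n$ with $X_n\stackrel{iid}{\sim}F$ form a delayed renewal process. The delay is $X_0=R(0)$, and by the last lemma $\mathbb{P}(R(0)>y)=\mathbb{P}(E>0,R>y)=\bar F_0(y)$, so the delay has distribution $F_0$. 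The two descriptions of the renewal set agree because $E$ and $R$ jump exactly at the epochs $T_n$.

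Second, I would show that $R(t)\sim F_0$ for every $t$. By the renewal argument, conditioning on the delay (as in Example \ref{ex:TimeProcessesRenEq}),
\[
\mathbb{P}(R(t)\le\xi)=\mathbb{P}(t<X_0\le t+\xi)+\int_0^t Z_R(t-s)\,F_0(\mathrm{d}s),
\]
where $Z_R=U\ast z_R$ is the zero-delayed solution given by Theorem \ref{thm:RenFunction}. I would verify that $F_0(\xi)$ solves this equation. Since $F_0$ is the stationary law of Lemma \ref{lemma:stationaryDist}, this can be checked by Laplace transforms, using $\hat f_0(s)=(1-\hat F(s))/(\mu s)$ and $\hat U=1/(1-\hat F)$. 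Theorem \ref{thm:RenFunction}(ii) gives uniqueness among solutions bounded on finite intervals. Remark \ref{rmk:StationarityRtEquivStationarityTn} then yields that the renewal process is stationary.

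Third, I would lift this to the triple. For fixed $t$, the future $\{(E(t+s),R(t+s),C(t+s))\}_{s\ge0}$ is a measurable function of $(E(t),R(t))$ and the subsequent i.i.d.\ lifetimes, which are independent of $(E(t),R(t))$; note that $C(t)=E(t)+R(t)$. It therefore suffices to show that $(E(t),R(t))\stackrel{\mathscr D}{=}(E(0),R(0))=(CU,C(1-U))$ for all $t$. By the last lemma, the right-hand side has joint tail $\bar F_0(x+y)$. For the left-hand side I would use $\{E(t)>x,R(t)>y\}=\{\text{no renewal in }(t-x,t+y]\}$, which holds for $t\ge x$; for $t<x$, the process is interpreted as started with a renewal at a negative time, using the backward extension $E(s)=s+E(0)$. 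Stationarity of the point process, shifted to $t-x$, gives
\[
\mathbb{P}(\text{no renewal in }(t-x,t+y])=\mathbb{P}(R(t-x)>x+y)=\bar F_0(x+y).
\]
This agrees with Lemma \ref{lemma:stationarity}(ii).

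The main obstacle is the second step. The renewal-equation verification needs care with $F_0$ entering both as the delay law and as the target solution. It also has to be checked that the process is genuinely Markov, with independent fresh lifetimes after each jump. The case $t<x$ in the joint-tail argument needs the backward extension $E(s)=s+E(0)$, which must be stated explicitly.
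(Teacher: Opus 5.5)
The paper states this theorem without proof (it is quoted from Asmussen, Chap.~V.3), so your argument has to be judged on its own. Most of it is correct:
\begin{itemize}
\item The Markov property and $C(t)=E(t)+R(t)$ reduce the problem to showing $(E(t),R(t))\stackrel{\mathscr{D}}{=}(CU,C(1-U))$ for every $t$.
\item The delay law is indeed $F_0$: take $x=0$ in the lemma on $(CU,C(1-U))$.
\item The case $t\ge x$ of the joint tail only needs $R(t-x)\sim F_0$, which your step two supplies.
\end{itemize}
In step two, the displayed formula is an explicit expression for the delayed quantity, not an equation in it. So the uniqueness part of Theorem \ref{thm:RenFunction}(ii) plays no role. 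All you need is $F_0\ast U(\mathrm{d}x)=\mathrm{d}x/\mu$, after which the right-hand side collapses to $F_0(\xi)$. That whole step is Corollary \ref{cor:StatDelay}, which you can simply cite.

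The step that does not go through as written is the case $t<x$ of the joint tail. There $t-x<0$, but you have only established stationarity of the point process on $[0,\infty)$. Adding the single point $-E(0)$ does not help. That augmented point process has no points before $-E(0)$ and is not stationary on $[t-x,\infty)$: for $s<-E(0)$ the residual time is $-E(0)-s$, which is not $F_0$-distributed. So ``stationarity shifted to $t-x$'' is not available.

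The argument is also too weak in principle. Stationarity of the forward point process depends only on the marginal law of $R(0)$. Take $E(0)=0$ and $R(0)\sim F_0$: the point process is still stationary, yet $E(0)$ is not $F_0$-distributed. So the case $t<x$ is exactly where the joint law of the initial values must enter.

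The fix is one line. For $t<x$, the event $E(t)>x$ forces no renewal in $(0,t]$, hence $E(t)=t+E(0)$ and
\[
\{E(t)>x,\ R(t)>y\}=\{E(0)>x-t,\ R(0)>t+y\}.
\]
By the lemma on $(CU,C(1-U))$, this event has probability $\bar F_0\bigl((x-t)+(t+y)\bigr)=\bar F_0(x+y)$. With this replacement your proof is complete.
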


The three statements imply that the stationary distribution for $\{E(t)\}, \{ R(t) \}$ is $F_0$, while the stationary distribution for $\{ C(t) \}$ is $F_1$, where $F_1(\text{d}x) = \frac{x}{\mu}F(\text{d}x)$. Moreover, the last lemma tells us that we can reconstruct a stationary renewal process from a stationary three-dimensional process $\{(E(t),R(t),C(t))\}$.

In addition, as stated in Lemma \ref{lemma:stationarity} (and also because of Remark \ref{rmk:StationarityRtEquivStationarityTn}), the stationarity of a renewal process directly depends on the stationarity of $\{E(t)\}$.

\begin{remark}
\label{rmk:InspectionParadox}
    The previous results say that the stationary distribution of the current life process is $F_1$, with $F_1(\text{\emph{d}}x)=\frac{x}{\mu}F(\text{\emph{d}}x)$, which is clearly different from $F$. In particular we have that the expected value of $F_1$ is
    \[
    \mu_{F_1} = \int x F_1(\text{d}x) = \int x^2 F(\text{d}x)/\mu,
    \]
    which is greater than or equal to $\mu$ (unless $F$ is the distribution of a constant random variable).

    This is known as \emph{inspection paradox}, and it refers to the fact that when one inspect a renewal process, they are more likely to be in a long interarrival interval than a short one. In other words, it is a \emph{length-biased} observation, because, since longer intervals occupy more time on the real line, it is more likely for one of them to contain a randomly chosen time $t$.
\end{remark}

In the previous discussion, we did not specify whether the stationary distribution $F_0$ is unique. The following proposition shows that it is \cite{Asmussen:Applied}.
\begin{proposition} 
    Let $G$ be a distribution on $(0,\infty)$ such that one of the following is true:
    \begin{itemize}
        \item[\emph{(i)}] $G$ is stationary for $\{E(t)\}$;
        \item[\emph{(ii)}] $G$ is stationary for $\{R(t)\}$;
        \item[\emph{(iii)}] a renewal process with delay distribution $G$ is stationary.
    \end{itemize}
    Then $G=F_0$.
\end{proposition}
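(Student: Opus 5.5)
We reduce all three hypotheses to (ii) and then handle (ii) with a renewal-equation argument based on Theorem \ref{thm:RenFunction}.

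\textbf{Reducing (i) and (iii) to (ii).} If (i) holds, Lemma \ref{lemma:stationarity}(ii) shows that when $E(t)$ has the stationary law $G$ for every $t$, then $R(t)$ has the same law for every $t$. So $G$ is stationary for $\{R(t)\}$. If (iii) holds, note that $R(0)=X_0$ has law $G$ (the delay distribution). By Remark \ref{rmk:StationarityRtEquivStationarityTn}, stationarity of the renewal process means the law of $R(t)$ does not depend on $t$. Hence $R(t)\sim G$ for all $t$, which is again (ii).

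\textbf{Case (ii): setup.} Suppose the delay $X_0$ has law $G$ and $R(t)\sim G$ for all $t$. Fix $\xi\ge 0$ and condition on $X_0$, using the renewal argument of Example \ref{ex:TimeProcessesRenEq}.
\begin{itemize}
\item[(a)] If $X_0>t$, then $R(t)=X_0-t$.
\item[(b)] If $X_0=s\le t$, the process after $s$ is a zero-delayed renewal process. Then $\mathbb{P}(R(t)\le\xi\mid X_0=s)=Z_R(t-s)$, where $Z_R=U\ast z_R$ and $z_R(t)=F(t+\xi)-F(t)$.
\end{itemize}
This gives the identity
\[
G(\xi)=\mathbb{P}(R(t)\le \xi)=G(t+\xi)-G(t)+\int_0^t (U\ast z_R)(t-s)\,G(\mathrm{d}s),\qquad t\ge 0 .
\]
We will not try to solve this identity directly. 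Instead we integrate over $t$ (equivalently, take Laplace transforms in $t$), which turns it into an algebraic equation for $\hat G(\beta)=\int e^{-\beta x}G(\mathrm{d}x)$.

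\textbf{Case (ii): transform computation.} Write $\hat F(\beta)$ for the transform of $F$. The renewal measure satisfies $\hat U=1/(1-\hat F)$, and $\hat z_R$ can be computed explicitly in terms of $\hat F$. Take the $\xi$-transform of both sides of the identity as well, i.e. use the double transform $\iint e^{-\alpha t-\beta \xi}\,(\cdot)\,\mathrm{d}t\,\mathrm{d}\xi$. This yields a linear equation in $\hat G$. Its unique solution should be
\[
\hat G(\beta)=\frac{1-\hat F(\beta)}{\beta\mu}=\hat F_0(\beta).
\]
By uniqueness of Laplace transforms, $G=F_0$.

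\textbf{Alternative for the case $\mu<\infty$.} A simpler route uses the elementary renewal theorem. Stationarity of $R(t)$ forces $\mathbb{E}[N(t+a)-N(t)]$ to be linear in $a$ and independent of $t$, say $ca$. By the elementary renewal theorem, $c=1/\mu$. Now formula \eqref{eq:expNumbRen} at $t=0$ gives
\[
U\ast G(a)=\frac{a}{\mu}\qquad\text{for all } a\ge 0 .
\]
Convolve both sides with $(\delta_0-F)$, which is the convolution inverse of $U$. This gives
\[
G(a)=\frac{1}{\mu}\int_0^a \bar F(x)\,\mathrm{d}x=F_0(a),
\]
and this deconvolution step is justified by the uniqueness part of Theorem \ref{thm:RenFunction}(ii), since $G$ is bounded on finite intervals.

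\textbf{Main obstacle.} The hardest step is the case $\mu=\infty$. Then $F_0$ is not a probability distribution, so we must show that no stationary $G$ exists at all. Under stationarity the rate $c$ above must satisfy $c>0$, because $G$ is a proper distribution and renewals occur. But the elementary renewal theorem then forces $c=1/\mu=0$, a contradiction.
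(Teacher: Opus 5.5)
The paper states this proposition without proof (it defers to Asmussen), so your argument is judged on its own. Your renewal-count route is correct and is the right one. With $G_t=G$ for all $t$, \eqref{eq:expNumbRen} makes $m(a)=\mathbb{E}[N(t+a)-N(t)]=U\ast G(a)$ independent of $t$. Additivity, monotonicity and the bound $m(a)\le U(a)<\infty$ then give $m(a)=ca$. Since $U\ast G=G+F\ast(U\ast G)$, you can read off $G(a)=c\int_0^a\bar F(x)\,\mathrm{d}x$. This argument needs no nonlattice assumption, which matters because the proposition does not make one; an argument via Theorem \ref{thm:Ren3} would.

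The elementary renewal theorem is actually superfluous. Letting $a\to\infty$ and using that $G$ is a probability measure gives $c\mu=1$. This fixes $c=1/\mu$ when $\mu<\infty$ and is impossible when $\mu=\infty$. So your \emph{main obstacle} is settled by the normalization of $G$ alone. Your ERT argument for it is also valid, since $ca=U\ast G(a)\ge G(a)$ forces $c>0$.

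The step that does not hold as written is (i)$\Rightarrow$(ii). Lemma \ref{lemma:stationarity}(ii) is a statement about the specific law $F_0$ and says nothing about an arbitrary stationary law $G$ of $\{E(t)\}$. Part (i) of that lemma only gives that $\{R(t)\}$ is stationary with \emph{some} marginal, not necessarily $G$. What you need is the identity $\{R(t)\le\xi\}=\{E(t+\xi)<\xi\}$ from the sketch proof of Lemma \ref{lemma:stationaryDist}. If $E(s)\sim G$ for all $s$, then $\mathbb{P}(R(t)\le\xi)=G(\xi-)$ for every $t$, and right-continuity in $\xi$ upgrades this to $G(\xi)$. In particular $X_0=R(0)\sim G$, and you are in case (ii). With this fix the proof is complete.

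Your transform route is left at ``should be'', and it is not quite what you describe. The double transform does not give a linear equation with a unique solution. It gives the relation $\beta\hat G(\beta)/(1-\hat F(\beta))=\alpha\hat G(\alpha)/(1-\hat F(\alpha))$ for all $\alpha,\beta>0$, i.e.\ $\hat G(\beta)=c\,(1-\hat F(\beta))/\beta$ with $c$ undetermined. Again it is $\hat G(0+)=1$, together with $(1-\hat F(\beta))/\beta=\int_0^\infty e^{-\beta x}\bar F(x)\,\mathrm{d}x\uparrow\mu$, that pins down $c=1/\mu$ or gives a contradiction when $\mu=\infty$. The route is correct once completed, but the renewal-count argument reaches the same point with far less work, so you can drop it.
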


As a corollary of this proposition, we get a characterization of stationarity for a \emph{delayed} renewal process \cite{Asmussen:Applied}.
\begin{corollary}
\label{cor:StatDelay}
    A delayed renewal process is stationary if and only if the distribution of the initial delay $R(0)=X_0=T_0$ is $F_0$.
\end{corollary}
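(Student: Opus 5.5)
The corollary has two directions, and I will handle them separately using Remark \ref{rmk:StationarityRtEquivStationarityTn} and the uniqueness proposition just stated. First note that for a delayed renewal process $R(0)=T_0=X_0$ by definition. So the delay distribution is exactly the law of $R(0)$.

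\emph{Necessity.} Suppose the delayed renewal process is stationary, i.e. $\{N(t+s)-N(t)\}_{s\geq 0}\stackrel{\mathscr{D}}{=}\{N(s)-N(0)\}_{s\geq 0}$ for every $t$. Let $G$ be the law of $X_0$. Then $G$ is a distribution on $(0,\infty)$ such that a renewal process with delay distribution $G$ is stationary, which is case (iii) of the preceding proposition. That proposition gives $G=F_0$ directly.

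\emph{Sufficiency.} Suppose $X_0\sim F_0$. By Remark \ref{rmk:StationarityRtEquivStationarityTn}, it suffices to show that the law of $R(t)$ does not depend on $t$. I would do this with the renewal argument of Example \ref{ex:TimeProcessesRenEq}, now conditioning on the delay $X_0$:
\[
\mathbb{P}(R(t)\leq\xi)=\mathbb{P}(t<X_0\leq t+\xi)+\int_0^t Z_R^0(t-u)\,F_0(\text{d}u).
\]
Here $Z_R^0(s)=\mathbb{P}(R^0(s)\le\xi)$ is the corresponding quantity for the zero-delayed process. By Theorem \ref{thm:RenFunction}, $Z_R^0=U\ast z_R$ with $z_R(s)=F(s+\xi)-F(s)$. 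Hence
\[
\mathbb{P}(R(t)\leq\xi)=\bigl(F_0(t+\xi)-F_0(t)\bigr)+F_0\ast U\ast z_R(t).
\]
The key identity is $F_0\ast U(t)=t/\mu$, i.e. $F_0\ast U$ is Lebesgue measure scaled by $1/\mu$. To prove it, I would use $U=\delta_0+F\ast U$ and $f_0=\bar F/\mu$, so that $F_0=(\delta_0-F)\ast \mathrm{Leb}/\mu$ as measures on $[0,\infty)$. This gives $F_0\ast U=\mathrm{Leb}/\mu\ast(\delta_0-F)\ast U=\mathrm{Leb}/\mu$. Substituting, the right-hand side becomes
\[
F_0(t+\xi)-F_0(t)+\frac1\mu\int_0^t\bigl(F(s+\xi)-F(s)\bigr)\text{d}s.
\]
Differentiating in $t$ gives $\frac1\mu\bigl(\bar F(t+\xi)-\bar F(t)\bigr)+\frac1\mu\bigl(F(t+\xi)-F(t)\bigr)=0$. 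The expression is therefore constant in $t$, equal to its value $F_0(\xi)$ at $t=0$.

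An alternative route for sufficiency avoids most of the computation. If $X_0\sim F_0$, take $G=F_0$ in the preceding theorem on the stationary Markov process $(E,R,C)$: starting from $C\sim F_1$ and $U\sim\mathcal U(0,1)$, the marginal law of $R(0)=C(1-U)$ is $F_0$. Since the forward recurrence process $\{R(t)\}$ is itself Markov, with the law of its future determined by $R(0)$ and fresh i.i.d. interarrivals, the renewal process generated from $R(0)\sim F_0$ has the same law as that stationary version. Hence it is stationary.

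The main obstacle is the convolution identity $F_0\ast U=\mathrm{Leb}/\mu$. This needs care with the atom of $U$ at $0$ and with justifying associativity and commutativity of the convolutions of Radon measures; nonnegativity of all terms together with Theorem \ref{thm:RenFunction}(i) ($U$ finite on bounded intervals) suffices. In the Markov-coupling route, the corresponding delicate point is arguing that the law of $R(0)$ alone determines the law of the whole renewal process.
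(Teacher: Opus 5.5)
Your argument is correct. The necessity direction is what the paper intends, since it presents the corollary as a consequence of the uniqueness proposition, case (iii). The only thing to note is that getting a distribution on $(0,\infty)$ uses $X_0>0$ a.s., which is tacit in the statement's identification $R(0)=X_0$.

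For sufficiency the paper writes no argument. Following Asmussen, it comes from the preceding structural results. The stationary version of $(E(t),R(t),C(t))$ started from $C\sim F_1$ has delay $R(0)=C(1-U)\sim F_0$, by the lemma on $E=CU$, $R=C(1-U)$ with $x=0$. The law of a delayed renewal process is fixed by the delay law and $F$. More briefly, one can cite Lemma~\ref{lemma:stationaryDist}: the Markov process $R$ started in its stationary law stays there. That is your ``alternative route''.

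Your main route is genuinely different. You verify analytically that $\mathbb{P}(R(t)\le\xi)=F_0(\xi)$ for all $t$, using the renewal argument on $X_0$, Theorem~\ref{thm:RenFunction}(ii) and the identity $F_0\ast U=\mathrm{Leb}/\mu$. You then invoke Remark~\ref{rmk:StationarityRtEquivStationarityTn}. The paper's route is shorter but leans on results quoted without proof: the level-crossing derivation of $F_0$ and the existence theorem. Yours uses only the renewal-equation machinery of Chapter~\ref{chap:2}, together with the reduction in that Remark. It actually proves the invariance half of Lemma~\ref{lemma:stationaryDist}, and as a by-product gives $\mathbb{E}[N(t)]=(F_0\ast U)(t)=t/\mu$ for the stationary process.

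Two small technical points. First, since $\delta_0-F$ is a signed measure, ``nonnegativity of all terms'' is not quite the right justification of the convolution identity. Instead, convolve the positive identity $F_0+\mu^{-1}F\ast\mathrm{Leb}=\mu^{-1}\mathrm{Leb}$ with $U$, use $\mathrm{Leb}\ast U=\mathrm{Leb}+\mathrm{Leb}\ast F\ast U$, and cancel the locally finite term $\mu^{-1}\mathrm{Leb}\ast F\ast U$. Second, the differentiation step needs an a.e.\ derivative plus a Lipschitz argument, but it can be avoided, since rearranging the integrals gives directly
\[
F_0(t+\xi)-F_0(t)+\frac{1}{\mu}\int_0^t\bigl(F(s+\xi)-F(s)\bigr)\,\text{d}s=\frac{1}{\mu}\Bigl(\int_t^{t+\xi}\bigl(\bar{F}(s)+F(s)\bigr)\,\text{d}s-\int_0^\xi F(s)\,\text{d}s\Bigr)=F_0(\xi).
\]
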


\section{The Renewal Theorem}
The renewal theorem is one of the most important results, not only of renewal theory, but of all probability theory in general, due to its applicability in numerous other areas. It has several equivalent version, that we will state in this section. We do not provide a proof of the equivalence, nor of the theorem itself, as it is out of the scope of this thesis. For references one can see \cite{Asmussen:Applied,Grimmett:RandomProcesses}.

Before we can actually present the theorem, we need to introduce a new type of integrability. 
\begin{definition}[Direct Riemann integrability] 
    Let $z$ be a nonnegative function on $[0,\infty)$, and let $h>0$. Define
    \[
    \bar{z}_h(x) = \sup_{y \in I_n^h} z(y), \underline{z}_h = \inf_{y \in I_n^h} z(y), x \in I_n^h = (nh, (n+1)h].
    \]
    We call $z$ \emph{directly Riemann integrable} (d.R.i.) if $\int\bar{z}_h = \int_0^\infty \bar{z}_h(x)\text{d}x$ is finite for some $h$, and $\int\bar{z}_h - \int\underline{z}_h \rightarrow 0$ as $h\rightarrow0$.

    For functions with compact support, this is equivalent to Riemann integrability.

    If $z$ attains also negative values, it is d.R.i. if $z^+, z^-$ are so.
\end{definition}

The next proposition gives sufficient and necessary conditions for a function to be directly Riemann integrable \cite{Asmussen:Applied}.
\begin{proposition} 
\label{prop:DRI}
    Suppose $z\geq 0$. Then if $z$ is d.R.i., it is also Lebesgue integrable and $\int \bar{z}_h, \int\underline{z}_h$ have the common limit $\int z$ as $h\rightarrow 0$. A necessary condition for $z$ being d.R.i. is 
    \begin{enumerate}
        \item \label{condition:dRi} $z$ is bounded and continuous almost everywhere w.r.t. Lebesgue measure.
    \end{enumerate}
    Sufficient conditions are:
    \begin{enumerate}
        \setcounter{enumi}{1}
        \item $\int\bar{z}_h < \infty $ for some $h$ and \eqref{condition:dRi} holds;
        \item $z$ has bounded support and \eqref{condition:dRi} holds;
        \item $z \leq z^*$ with $z^*$ d.R.i. and \eqref{condition:dRi} holds for $z$;
        \item $z$ is nonincreasing and Lebesgue integrable.
    \end{enumerate}
\end{proposition}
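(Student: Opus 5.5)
The plan is to work directly from the definition of direct Riemann integrability, and to treat the four sufficient conditions as a chain in which each one reduces to the one before it.

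\emph{Lebesgue integrability and the common limit.} For each $h$ we have the pointwise sandwich $\underline{z}_h\le z\le\bar{z}_h$. If $\int\bar{z}_h<\infty$ for one $h_0$, then $\int\bar{z}_{h}<\infty$ for every $h\le h_0$ as well. Indeed, each interval $I_n^{h}$ lies in the union of at most two intervals $I_m^{h_0}$, so $\int\bar z_h\le 2\int\bar z_{h_0}$ up to a harmless constant factor. It follows that $z$, which is measurable since it is a pointwise limit along $h=2^{-k}$ of step functions almost everywhere, is Lebesgue integrable. Moreover
\[
\int\underline{z}_h\le\int z\le\int\bar{z}_h ,
\]
and the gap $\int\bar z_h-\int\underline z_h$ tends to $0$, so both quantities converge to $\int z$.

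\emph{Necessity of condition 1.} Boundedness follows because $\int\bar z_h<\infty$ forces every $\bar z_h$ value to be finite, so $z$ is bounded on each $I_n^h$. Fix $N$. On $[0,N]$ the gaps $\bar z_h-\underline z_h$ of the partition integrals go to $0$, which is exactly Riemann integrability of $z|_{[0,N]}$. By Lebesgue's criterion for Riemann integrability, $z$ is then continuous almost everywhere on $[0,N]$. Since $N$ is arbitrary, $z$ is continuous almost everywhere. Boundedness on $[0,\infty)$ follows since $\bar z_h(x)\to$ summable, hence $\sup_n \bar z_h|_{I_n}<\infty$.

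\emph{Sufficiency of condition 2.} This is the main step. Write
\[
\int\bar z_h-\int\underline z_h=\int(\bar z_h-\underline z_h)\,\mathrm{d}x .
\]
At every continuity point $x$ of $z$, $\bar z_h(x)-\underline z_h(x)\to0$ as $h\to0$, so the integrand tends to $0$ almost everywhere. For the dominating function, use the same comparison as above: $\bar z_h\le$ a shifted version of $\bar z_{h_0}$, say $\bar z_{h_0}(x)+\bar z_{h_0}(x+h_0)$, for $h\le h_0$. This is integrable, so dominated convergence gives the gap $\to0$. The delicate point is that the partition for $h$ is not nested in the one for $h_0$, and this two-neighbour bound is what handles it.

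\emph{Conditions 3, 4 and 5.} Condition 3 reduces to 2: if $z$ is bounded by $M$ and supported in $[0,A]$, then $\int\bar z_h\le M(A+h)<\infty$. Condition 4 also reduces to 2, since $\bar z_h\le\bar{z}^*_h$ and so $\int \bar z_h<\infty$. For condition 5, a nonincreasing $z$ is bounded by $z(0)$ and has at most countably many discontinuities, so condition 1 holds. Monotonicity gives $\bar z_h$ on $I_n^h$ equal to $z(nh)$, whence
\[
\int\bar z_h\le h\,z(0)+\int_0^\infty z<\infty ,
\]
and condition 2 applies. (Directly, the sum telescopes to give a gap of at most $h\,z(0)$.)
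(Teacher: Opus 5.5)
The paper gives no proof of this proposition; it is quoted from Asmussen, so there is no in-paper argument to compare with. Your route is the standard one and is sound: the sandwich $\underline z_h\le z\le\bar z_h$ for the first claim, Lebesgue's criterion for necessity, dominated convergence for condition 2, and reducing conditions 3--5 to condition 2.

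One step is false as written, though it is a slip rather than a missing idea. The majorant $\bar z_{h_0}(x)+\bar z_{h_0}(x+h_0)$ covers only the $h_0$-cell of $x$ and the one to its right, but the $h$-cell containing $x$ can stick out to the left. Take $h_0=1$, $h=0.7$, $x=1.1$ and $z=\mathbb{I}_{(0.7,1]}$. The $h$-cell of $x$ is $(0.7,1.4]$, so $\bar z_{0.7}(1.1)=1$, while $\bar z_1(1.1)+\bar z_1(2.1)=0$. The fix is immediate. The $h$-cell of $x$ lies in $(x-h_0,x+h_0]$, which is covered by the $h_0$-cells of $x-h_0$, $x$ and $x+h_0$. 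Hence
\[
\bar z_h(x)\le \bar z_{h_0}(x-h_0)+\bar z_{h_0}(x)+\bar z_{h_0}(x+h_0),\qquad h\le h_0,
\]
with the first term read as $0$ for $x\le h_0$. This is an integrable majorant with integral at most $3\int\bar z_{h_0}$, and dominated convergence then goes through exactly as you intended.

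There are three smaller points.

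\emph{Condition 5.} The cells are open on the left, so $\sup_{I_n^h}z=z(nh+)\le z(nh)$, not $=z(nh)$. Only the inequality is used, so nothing else changes.

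\emph{Necessity of continuity.} The sums built on the half-open cells $(nh,(n+1)h]$ ignore the value $z(nh)$, so a vanishing gap is not literally the Darboux criterion. A direct argument avoids this. Let $\lambda$ be Lebesgue measure and $D_\varepsilon$ the closed set where the oscillation of $z$ is at least $\varepsilon$. Every point of $D_\varepsilon$ off the countable grid $\{nh\}$ lies in the interior of a cell on which $\bar z_h-\underline z_h\ge\varepsilon$. Hence $\varepsilon\,\lambda(D_\varepsilon)\le\int(\bar z_h-\underline z_h)\to0$, and the discontinuity set $\bigcup_k D_{1/k}$ is null.

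\emph{Measurability.} Your claim that the dyadic step functions converge to $z$ almost everywhere needs a reason. One option: the dyadic grids are nested, so $\bar z_{2^{-k}}$ decreases and $\underline z_{2^{-k}}$ increases in $k$, and Fatou applied to the vanishing gap forces both limits to equal $z$ a.e. The other option is to prove a.e. continuity first and use that an a.e.-continuous function is Lebesgue measurable.
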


At this point, we have introduced all the necessary definitions to state the different forms of the renewal theorem. We begin by writing the necessary assumptions for the theorems to hold.
\begin{assumption}
    For all the equivalent versions of the renewal theorem to hold, the following assumptions are required:
    \begin{itemize}
        \item The interarrival distribution $F$ is proper $(\|F\| = 1)$.
        \item The interarrival distribution $F$ is nonlattice, i.e. it is \emph{not} concentrated on a set of the form $\{\delta,2\delta,\dots\}$, for some $\delta\geq 0$.
        \item Write $\mu = \int xF(\text{d}x)$ and $F_0(t) = \mu^{-1}\int_0^t\bar{F}(y)\text{d}y$.
    \end{itemize}
\end{assumption}

It is now possible to state the four different versions of the renewal theorem \cite[Chap. V.4]{Asmussen:Applied}.
\begin{theorem}[Blackwell's Renewal Theorem] 
    \label{thm:BlackwellRen}
    Let $U$ be the renewal function. Then, for all $a$
    \[
    U(t+a)-U(t) \longrightarrow\frac{a}{\mu} \quad ,t\rightarrow\infty.
    \]
    More generally, in any renewal process with interarrival distribution $F$, the expected number of renewals in $(t,t+a]$, tends to $a/\mu$ as $t\rightarrow\infty$.
\end{theorem}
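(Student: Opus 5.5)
I would prove Blackwell's Renewal Theorem with a coupling argument. This fits the direction the thesis takes later (Chapter \ref{chap:4}) and uses only the stationarity results stated above.

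\textbf{Step 1: the stationary comparison process.} Assume first $\mu<\infty$. By Corollary \ref{cor:StatDelay}, a delayed renewal process $\{T_n'\}$ with interarrival distribution $F$ and delay distribution $F_0$ is stationary. Its counting process therefore has stationary increments, so $\mathbb{E}[N'(t+a)-N'(t)]$ does not depend on $t$. From the Elementary Renewal Theorem this common value must be $a/\mu$. Indeed, $\mathbb{E}[N'(na)] - \mathbb{E}[N'(0)]$ equals $n$ times the increment, and dividing by $na$ and letting $n\to\infty$ forces the increment to equal $a/\mu$.

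\textbf{Step 2: coupling.} Next I take the zero-delayed process $\{T_n\}$, independent of $\{T_n'\}$. The goal is to build both on one probability space so that, almost surely, there exist random indices $\sigma,\sigma'$ with $T_\sigma = T'_{\sigma'}$, at least approximately. After that common epoch I continue both processes with the same interarrivals. They then share all renewals after some a.s. finite time $\tau$.

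Exact coupling needs spread-out $F$. Under only the nonlattice hypothesis, I would use $\varepsilon$-coupling instead: for each $\varepsilon>0$, find indices with $|T_\sigma - T'_{\sigma'}|<\varepsilon$, and then shift. To obtain these indices, I would consider the random walk $S_k = T_k - T'_k$ with i.i.d. symmetric-ish increments $X_k - X_k'$. This increment distribution is nonlattice with mean zero, so it is recurrent by Chung–Fuchs. Hence it visits $(-\varepsilon,\varepsilon)$ a.s.

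\textbf{Step 3: conclusion.} On $\{\tau\le t\}$, the renewals of the two processes in $(t,t+a]$ match. With $\varepsilon$-coupling they match up to boundary effects in windows of width $\varepsilon$, giving
\[
N(t+a-\varepsilon)-N(t+\varepsilon)\le N'(t+a)-N'(t)\le N(t+a+\varepsilon)-N(t-\varepsilon)
\]
on $\{\tau\le t\}$. To pass to expectations, I control the event $\{\tau>t\}$ with the bound of Theorem \ref{thm:RenFunction}(iii): the expected number of renewals in any interval of length $a$ is at most $U(a)$. Conditioning on $\mathcal{F}_{\tau}$-type information and using Cauchy–Schwarz or a direct bound gives $\mathbb{E}[(N(t+a)-N(t))\mathbb{I}(\tau>t)]\to0$. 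Then $\limsup$ and $\liminf$ of $U(t+a)-U(t)$ lie within $(a\pm2\varepsilon)/\mu$, and letting $\varepsilon\to0$ finishes the proof.

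\textbf{Remaining pieces.} The general delayed case follows from \eqref{eq:expNumbRen} by dominated convergence over $G_t$. For $\mu=\infty$, I would truncate $F$ to $X\wedge M$. The truncated process has at least as many renewals, which yields $\limsup \le a/\mu_M \to 0$.

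\textbf{Main obstacle.} I expect the hardest part to be Step 2 in the merely nonlattice case. Recurrence of the difference walk to $(-\varepsilon,\varepsilon)$ requires Chung–Fuchs together with a nonlattice argument. The uniform integrability needed to control $\{\tau>t\}$ is also delicate.
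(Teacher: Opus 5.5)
You correctly flagged Step 2 as the crux, and as written it fails. The claim that the increment $X_k-X_k'$ is nonlattice does not follow from $F$ being nonlattice. The characteristic function of $X-X'$ is $|\varphi_F|^2$, which equals $1$ at some $\theta\neq0$ as soon as $F$ is concentrated on a \emph{shifted} lattice $c+\delta\mathbb{Z}$. Being nonlattice only rules out concentration on $\delta\mathbb{Z}$ itself. For example, $F=\frac12(\delta_1+\delta_{1+\sqrt2})$ is nonlattice with finite mean, yet $X_k-X_k'\in\{0,\pm\sqrt2\}$, so $T_k-T_k'\in -X_0'+\sqrt2\,\mathbb{Z}$ for all $k$. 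Since $X_0'\sim F_0$ has a density, the probability that this walk ever enters $(-\varepsilon,\varepsilon)$ tends to $0$ as $\varepsilon\downarrow0$, so the equal-index $\varepsilon$-coupling does not exist.

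The repair is to genuinely allow $\sigma\neq\sigma'$, as you first suggested. At each step toss an independent fair coin and advance either $T$ (step $+X$) or $T'$ (step $-X'$). These steps are i.i.d., symmetric, and integrable since $\mu<\infty$. Their support $\mathrm{supp}\,F\cup(-\mathrm{supp}\,F)$ generates a dense subgroup of $\mathbb{R}$ precisely because $F$ is nonlattice. By Chung--Fuchs every real number is then a recurrent value, so the walk started at $-X_0'$ enters $(-\varepsilon,\varepsilon)$ almost surely. The hitting step is a stopping time, so the unused interarrivals after $(\sigma,\sigma')$ are i.i.d.\ $F$ and independent of the past, and splicing preserves the law of $T'$. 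With this change Steps 1 and 3 go through.

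The tail term is also less delicate than you fear. The event $\{\tau>t\}$ is determined by the coins and the renewals of both processes in $[0,t]$. By independence and the bound in Theorem \ref{thm:RenFunction}(iii), each process contributes at most $U(a)\mathbb{P}(\tau>t)$ expected renewals in $(t,t+a]$ on that event. No Cauchy--Schwarz, and no second moments, are needed.

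Your reduction of the case $\mu=\infty$ is a second gap. Truncation gives $T_n^M\le T_n$, hence $N^M(t)\ge N(t)$ and $U^M(t)\ge U(t)$. But this compares cumulative counts only; counts in a window $(t,t+a]$ are not comparable, because truncation moves epochs earlier by path-dependent amounts. Take $M=5$ and $X_1=10$, $X_2=1$, $X_3=X_4=100$. The window $(9.5,11.5]$ contains the two original renewals $10,11$ but only one truncated renewal, at $11$. So $\limsup_t(U(t+a)-U(t))\le a/\mu_M$ is unjustified. The coupling cannot be recycled either, because for $\mu=\infty$ the coin-flip walk is not integrable and may be transient.

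This case needs its own argument, for instance Feller's. A vague limit point $\zeta$ of the measures $U(t_n+\cdot)$ is uniformly bounded on intervals of fixed length and satisfies $\zeta=F*\zeta$. By Choquet--Deny, which is where nonlatticeness enters, $\zeta$ is $c$ times Lebesgue measure. Fatou applied to $1=\int_{[0,t_n]}\bar F(t_n-s)\,U(\text{d}s)$ then yields $c\mu\le1$, hence $c=0$ and $U(t_n+a)-U(t_n)\to0$.

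A minor point on the delayed case: apply dominated convergence over the fixed law of $X_0$, with dominating constant $U(a)$, not over $G_t$, which moves with $t$. Alternatively, run the coupling directly with a delayed $T$.
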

\begin{theorem} 
    \label{thm:Ren2}
    Let $\{E(t)\}_{t\geq 0}$ be the backward recurrence time process for a (delayed or not) renewal process with interarrival distribution $F$. Then $\mathbb{P}(E(t)\leq\xi)\rightarrow F_0(\xi)$ for all $\xi$. In particular, if $\mu<\infty$, $E(t)\stackrel{\mathscr{D}}{\rightarrow}F_0$.
\end{theorem}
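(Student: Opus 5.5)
We deduce Theorem \ref{thm:Ren2} from Blackwell's Renewal Theorem \ref{thm:BlackwellRen}, using the identity that links the backward recurrence time to renewals in a window. For $t > \xi$, the event $\{E(t) \leq \xi\}$ occurs exactly when at least one renewal falls in $[t-\xi, t]$. So
\[
\mathbb{P}(E(t)\leq\xi) = \mathbb{P}(\text{some renewal in } [t-\xi,t]).
\]
This is essentially the identity $\{R(s)\le \xi\}=\{E(s+\xi)<\xi\}$ used in the sketch of Lemma \ref{lemma:stationaryDist}. Blackwell's theorem only controls expected numbers of renewals, not probabilities, so we need an intermediate step.

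First we treat the zero-delayed case through the renewal equation. By Example \ref{ex:TimeProcessesRenEq}, $Z_E(t)=\mathbb{P}(E(t)\le\xi)$ solves $Z_E = z_E + F\ast Z_E$ with $z_E(t)=\mathbb{I}(t\le\xi)\bar F(t)$. This $z_E$ is bounded, so by Theorem \ref{thm:RenFunction}(ii) we get
\[
Z_E(t)=U\ast z_E(t)=\int_{(t-\xi)^+}^{t}\bar F(t-y)\,U(\mathrm{d}y).
\]
Partition $[t-\xi,t]$ into subintervals of length $h$. On each subinterval, bound $\bar F(t-y)$ above and below by its sup and inf, which correspond to $\bar z_h$ and $\underline z_h$ of $z_E$. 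The $U$-mass of each subinterval is $U(s+h)-U(s)$, and by Theorem \ref{thm:BlackwellRen} it tends to $h/\mu$ as $t\to\infty$. There are finitely many pieces, namely $\xi/h$ of them, so
\[
\limsup_{t\to\infty} Z_E(t)\le \frac{1}{\mu}\int \bar z_h \quad\text{and}\quad \liminf_{t\to\infty} Z_E(t)\ge \frac{1}{\mu}\int \underline z_h .
\]
The function $z_E$ has bounded support and is monotone on $[0,\xi]$, so it is d.R.i. by Proposition \ref{prop:DRI}. Letting $h\to0$ then gives $Z_E(t)\to\mu^{-1}\int_0^\xi\bar F(y)\,\mathrm{d}y=F_0(\xi)$.

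The step I expect to be the main obstacle is the general delayed case. Here I would condition on the delay $X_0=s$. For $t\ge s$ the process after $s$ is zero-delayed, so
\[
\mathbb{P}(E(t)\le\xi)=\int_{[0,t]}Z_E(t-s)\,\mathbb{P}(X_0\in \mathrm{d}s)+\mathbb{P}(X_0>t,\ E(t)\le\xi).
\]
The last term is bounded by $\mathbb{P}(X_0>t)$, which tends to $0$ because $X_0<\infty$ a.s. The integral converges to $F_0(\xi)$ by dominated convergence, since $0\le Z_E\le 1$ and $Z_E(t-s)\to F_0(\xi)$ pointwise in $s$.

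Two details need care. The first is that $\mu=\infty$ is allowed: then $a/\mu=0$ and the same sandwich gives the limit $0=F_0(\xi)$ under the convention $1/\infty=0$. The second is the final claim: when $\mu<\infty$, $F_0$ is a proper distribution, so pointwise convergence of the cdfs at every $\xi$ is exactly $E(t)\stackrel{\mathscr{D}}{\rightarrow}F_0$. Here $F_0$ is continuous, so no continuity points are excluded.
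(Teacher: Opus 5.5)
Your argument is correct. The paper does not prove Theorem \ref{thm:Ren2}: it states the four versions of the renewal theorem and defers their proofs and equivalence to Asmussen and Grimmett.

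The only hint of an intended argument is Example \ref{ex:PP5}. There the renewal equation of Example \ref{ex:TimeProcessesRenEq} is fed directly into the Key Renewal Theorem \ref{thm:keyRen}. Since $z_E$ is nonincreasing and integrable, it is d.R.i. by the monotone criterion of Proposition \ref{prop:DRI}, and the zero-delayed case follows in one line.

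You use the same renewal equation, but replace the appeal to Theorem \ref{thm:keyRen} by a finite sandwich built from Blackwell's Theorem \ref{thm:BlackwellRen}. This is exactly the standard Blackwell $\Rightarrow$ key renewal argument. Because $z_E$ has bounded support, only finitely many windows appear and no tail control of $\int \bar z_h$ is needed. The direct route is shorter; yours has the merit of resting only on Blackwell's theorem. You also treat two cases explicitly that the paper never writes out:
\begin{itemize}
\item the delayed case, by conditioning on $X_0$ and dominated convergence, as in the reduction at the end of Corollary \ref{cor:ExpConvRegProc};
\item the case $\mu=\infty$.
\end{itemize}

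One piece of bookkeeping needs care. When $F$ has atoms, $U$ can have atoms, so the endpoints $y=t$ and $y=t-\xi$ of the window should be split off, and the half-open $y$-pieces must be matched to the $x$-intervals $I_n^h=(nh,(n+1)h]$. These endpoint atoms are harmless: $U(\{s\})\le U(s)-U(s-h)\to h/\mu$ for every $h>0$, so they vanish asymptotically. This is tidying rather than a gap.
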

\begin{theorem} 
    \label{thm:Ren3}
    Let $\{R(t)\}_{t\geq 0}$ be the forward recurrence time process for a (delayed or not) renewal process with interarrival distribution $F$. Then $\mathbb{P}(R(t)\leq\xi)\rightarrow F_0(\xi)$ for all $\xi$. In particular, if $\mu<\infty$, $R(t)\stackrel{\mathscr{D}}{\rightarrow}F_0$.
\end{theorem}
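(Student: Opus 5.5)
We deduce Theorem \ref{thm:Ren3} from Blackwell's Renewal Theorem \ref{thm:BlackwellRen}, passing through the renewal equation of Example \ref{ex:TimeProcessesRenEq}. We start with the zero-delayed case. There, $Z_R(t)=\mathbb{P}(R(t)\leq\xi)$ solves $Z_R = z_R + F\ast Z_R$ with $z_R(t)=F(t+\xi)-F(t)$. This $z_R$ is bounded by $1$, so Theorem \ref{thm:RenFunction}(ii) gives the representation
\[
\mathbb{P}(R(t)\leq \xi)=\int_0^t \big(F(t-y+\xi)-F(t-y)\big)\,U(\text{d}y).
\]

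The core step is to show that this convolution converges to $\mu^{-1}\int_0^\infty z_R(s)\,\text{d}s$. We will prove this as a key renewal-type statement for $z_R$. The function $z_R$ is nonnegative and bounded. By Fubini,
\[
\int_0^\infty z_R(s)\,\text{d}s=\int_0^\xi \bar F(y)\,\text{d}y<\infty .
\]
It is not monotone, however, so we will instead verify the sufficient condition (2) of Proposition \ref{prop:DRI}. First, $z_R$ is continuous almost everywhere, because $F$ has only countably many jumps. Second, on $I_n^h$ we have $\sup z_R\leq F((n+1)h+\xi)-F(nh)$. Summing over $n$ telescopes into a finite multiple of $\xi+h$, so $\int \bar{z}_h<\infty$. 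Hence $z_R$ is d.R.i.

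For a d.R.i. $z$ we then prove $U\ast z(t)\to\mu^{-1}\int z$ directly from Blackwell. We sandwich $z$ between the step functions $\underline{z}_h$ and $\bar{z}_h$. For each step function, $U\ast(\text{step})(t)$ is a sum of terms $z_n\big(U(t-nh)-U(t-(n+1)h)\big)$. By Theorem \ref{thm:BlackwellRen}, each such term tends to $z_n h/\mu$. The tail of the sum is controlled uniformly in $t$ by the bound $U(t+h)-U(t)\leq U(h)$ from Theorem \ref{thm:RenFunction}(iii), together with $\sum_n \bar z_n h<\infty$; dominated convergence over the index $n$ then justifies passing to the limit termwise. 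Letting $h\to0$ and using Proposition \ref{prop:DRI} gives the limit $\mu^{-1}\int z_R = \mu^{-1}\int_0^\xi\bar F = F_0(\xi)$. When $\mu=\infty$, the same argument (upper bound only) yields the limit $0 = F_0(\xi)$, with the convention $1/\infty = 0$.

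For a delayed process we condition on $X_0$. This gives
\[
\mathbb{P}(R(t)\leq\xi)=\mathbb{P}(t<X_0\leq t+\xi)+\int_0^t \mathbb{P}(R^{0}(t-u)\leq\xi)\,G(\text{d}u),
\]
where $R^0$ is the zero-delayed forward recurrence time process and $G$ is the law of $X_0$. The first term tends to $0$. The integrand is bounded by $1$ and converges pointwise to $F_0(\xi)$, so the integral tends to $F_0(\xi)$ by dominated convergence.

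Finally, if $\mu<\infty$, then $F_0$ is a proper distribution function, and pointwise convergence of the CDFs gives $R(t)\stackrel{\mathscr{D}}{\rightarrow}F_0$.

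The main obstacle is the key renewal step: the uniform tail control that allows exchanging the limit with the infinite step-function sum. We will handle it exactly via $U(t+h)-U(t)\leq U(h)$ and the finiteness of $\int\bar z_h$.
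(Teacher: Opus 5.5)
The thesis does not actually prove this statement. It lists Theorems~\ref{thm:BlackwellRen}--\ref{thm:keyRen} as equivalent forms of the renewal theorem and refers to Asmussen and Grimmett for both the equivalence and the proof.

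Your argument is correct, and it makes one direction of that equivalence explicit.
\begin{itemize}
\item You first derive the Key Renewal Theorem from Blackwell's theorem by the classical step-function sandwich. The uniform bound $U(t+h)-U(t)\le U(h)$ from Theorem~\ref{thm:RenFunction}(iii) justifies taking limits termwise in the series.
\item You then apply it to the renewal equation for $\mathbb{P}(R(t)\le\xi)$ from Example~\ref{ex:TimeProcessesRenEq}.
\item Finally, you handle the delay by dominated convergence.
\end{itemize}

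This is the same renewal-equation-plus-key-renewal pattern the thesis uses in Example~\ref{ex:PP5} and in the proof of Theorem~\ref{thm:RegProcLimitDistribution}. The one substantive difference is how you verify direct Riemann integrability. In those places the thesis relies on monotonicity: directly in Example~\ref{ex:PP5}, and via $z\le\bar F$ in Theorem~\ref{thm:RegProcLimitDistribution}. That route needs $\bar F$ integrable, i.e.\ $\mu<\infty$. Your direct estimate $\sup_{I_n^h}z_R\le F((n+1)h+\xi)-F(nh)$ gives $\int\bar z_h\le\xi+2h$ without that. This is what lets you also cover $\mu=\infty$, where the conclusion is $\mathbb{P}(R(t)\le\xi)\to 0$.

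Two small points to tidy up when writing it out. First, the intervals $I_n^h=(nh,(n+1)h]$ miss $s=0$, so the convolution carries an extra term $z_R(0)\,U(\{t\})$. Second, you silently switch between $U([t-(n+1)h,t-nh))$ and $U(t-nh)-U(t-(n+1)h)$. Both are harmless, because Blackwell gives $U(\{t\})\le U(t+\epsilon)-U(t-\epsilon)\to 2\epsilon/\mu$ for every $\epsilon>0$.

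What each approach buys: the cited references establish the renewal theorem itself. Your proof is a reduction to Blackwell's theorem, which the thesis also states without proof. In exchange it is short, uses only tools already set up in the thesis, and treats the finite- and infinite-mean cases uniformly.
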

\begin{theorem}[Key Renewal Theorem] 
    \label{thm:keyRen}
    Suppose that the function $z$ in the renewal equation $Z=z+F\ast Z$ is d.R.i. Then
    \[
    Z(t) = U \ast z\,(t) \longrightarrow \frac{1}{\mu}\int_0^\infty z(x)\text{\emph{d}}x, \quad t\rightarrow\infty.
    \]
\end{theorem}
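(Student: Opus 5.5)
The plan is to derive the Key Renewal Theorem from Blackwell's Renewal Theorem (Theorem \ref{thm:BlackwellRen}), which we take as available. The route is a discretization of $z$ on a mesh of width $h$, squeezed between the upper and lower step functions $\bar z_h$ and $\underline z_h$ from the definition of direct Riemann integrability. By splitting $z=z^+-z^-$ we may assume $z\geq 0$. By Theorem \ref{thm:RenFunction}(ii), $z$ is bounded (necessary condition \eqref{condition:dRi} of Proposition \ref{prop:DRI}), so the unique solution bounded on finite intervals is $Z=U\ast z$. It therefore suffices to show $U\ast z(t)\to \mu^{-1}\int z$.

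First, fix $h>0$ and write $z\le \bar z_h=\sum_{n\ge0}\bar z_n \mathbb{I}_{I_n^h}$ with $\bar z_n=\sup_{I_n^h}z$. Monotonicity of convolution with the positive measure $U$ gives
\[
U\ast z(t)\le \sum_{n\ge 0}\bar z_n\, U\bigl([t-(n+1)h,\,t-nh)\bigr),
\]
with intervals truncated to $[0,t]$. By Blackwell's theorem each term satisfies $U(t-nh)-U(t-(n+1)h)\to h/\mu$ as $t\to\infty$.

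The main obstacle is exchanging the limit with the infinite sum. For this I would use the uniform bound in Theorem \ref{thm:RenFunction}(iii): the mass $U$ puts on any interval of length $h$ is at most $U(h)$. This holds by the renewal argument, since the expected number of renewals in $(s,s+h]$ cannot exceed $U(h)$. Boundary intervals touching $0$ are also bounded by $U(h)$. Hence the $n$-th term is dominated by $\bar z_n U(h)$, and $\sum_n \bar z_n<\infty$ because $\int\bar z_h=h\sum_n\bar z_n<\infty$ for some $h$ (and then for all smaller $h$, by a standard comparison of meshes). Dominated convergence for series then gives
\[
\limsup_{t\to\infty} U\ast z(t)\le \frac{1}{\mu}\sum_n h\bar z_n=\frac{1}{\mu}\int\bar z_h.
\]
The same argument with $\underline z_h$ gives $\liminf_{t\to\infty} U\ast z(t)\ge \mu^{-1}\int\underline z_h$. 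Some care is needed with the half-open intervals and the atom of $U$ at $0$; the atom contributes only $z(t)$, which tends to $0$ because $z$ is d.R.i.

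Finally, letting $h\to0$ and using Proposition \ref{prop:DRI}, both $\int\bar z_h$ and $\int\underline z_h$ converge to $\int z$, so the $\limsup$ and $\liminf$ coincide and the limit is $\mu^{-1}\int_0^\infty z(x)\,\text{d}x$. If $\mu=\infty$, the same bounds give the limit $0$, consistent with the convention $1/\infty=0$.
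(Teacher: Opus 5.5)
The paper does not prove this theorem. It states the four versions of the renewal theorem, asserts that they are equivalent, and defers to \cite{Asmussen:Applied,Grimmett:RandomProcesses}, so there is no in-paper argument to compare against.

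Your argument is correct. It is the standard derivation of the Key Renewal Theorem from Theorem~\ref{thm:BlackwellRen}, i.e.\ one direction of the equivalence the paper asserts without proof:
sandwich $z$ between $\underline{z}_h$ and $\bar{z}_h$, apply Blackwell's theorem term by term, dominate using the uniform interval bound of Theorem~\ref{thm:RenFunction}(iii) together with $\sum_n\bar{z}_n<\infty$, and let $h\to0$ via Proposition~\ref{prop:DRI}.

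The boundary issues you flagged do close, and it is worth recording the one fact that handles them. With the paper's mesh $I_n^h=(nh,(n+1)h]$, the $y$-intervals are left-closed, $[t-(n+1)h,t-nh)$, whereas Blackwell's theorem is stated for left-open ones. You therefore need $U(\{s\})\to0$ as $s\to\infty$. This follows from Blackwell itself: $U(\{s\})\le U((s-a,s])\to a/\mu$ for every $a>0$. The same fact disposes of the term $z(0)\,U(\{t\})$ coming from the point $t-y=0$, which no $I_n^h$ covers.

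The only key-renewal argument the paper actually carries out is Key Renewal Theorem 2.0, via Stone's decomposition. There the bounded density $u_1$ with $u_1\to1/\mu$ lets $z$ be merely bounded, integrable and vanishing at infinity, at the price of assuming $F$ spread-out. Your route needs only nonlattice $F$. In it, the summability $\sum_n\bar{z}_n<\infty$ does the job that the density bound on $u_1$ does in the Stone argument. That is exactly why your argument requires direct Riemann integrability rather than those weaker conditions on $z$.
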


In the case $\mu < \infty$, Theorems \ref{thm:BlackwellRen},\ref{thm:Ren2} and \ref{thm:Ren3} say that the renewal process is asymptotically stationary as $t\rightarrow\infty$. On the other hand, if $\mu=\infty$, Theorems \ref{thm:Ren2} and \ref{thm:Ren3} state that the mass in the distributions of $E(t)$ and $R(t)$ drifts off to $\infty$. In other words, if the interarrival mean is infinite, the process exhibits a null-recurrence type of behavior.

The Key Renewal Theorem \ref{thm:keyRen}, gives an asymptotic form of the solution of a renewal equation. This result is very useful for studying the convergence of renewal processes, as will be shown in the next Chapters.

The last result of this Chapter is a Renewal Theorem for the renewal density. It can be proved using the Key Renewal Theorem \cite[Exercise 4.2, Chap. V]{Asmussen:Applied}.
\begin{proposition}[Renewal Theorem for renewal density] 
    If $F$ has a d.R.i. density $f$ so that the renewal density $u$ exists, then $u(x)\rightarrow 1/\mu$.
\end{proposition}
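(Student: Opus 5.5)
The plan is to write the renewal density as the solution of a renewal equation and then apply the Key Renewal Theorem \ref{thm:keyRen}. By Proposition \ref{prop:RenDensityCont}, since $F$ has density $f$, the renewal density exists. It is $u=\sum_{n\geq 1} f^{*n}$, and it solves
\[
u = f + F\ast u .
\]
This is a renewal equation of the form \eqref{Eq:RenEq} with $Z=u$ and $z=f$.

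The first step is to check the hypotheses of Theorem \ref{thm:RenFunction}(ii). A d.R.i. function is bounded by the necessary condition in Proposition \ref{prop:DRI}, so $f$ is bounded on finite intervals. Theorem \ref{thm:RenFunction}(ii) then gives that $U\ast f$ is the unique solution that is bounded on finite intervals.

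The second step is to identify this solution with $u$. Writing $U=\delta_0+\sum_{n\ge1}F^{*n}$, we get
\[
U\ast f=f+\sum_{n\geq 1}F^{*n}\ast f=\sum_{n\geq1}f^{*n}=u .
\]
This identity holds directly by monotone convergence, so no uniqueness argument is actually needed.

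The third step applies the Key Renewal Theorem \ref{thm:keyRen} with $z=f$, which is d.R.i. by assumption. This gives
\[
u(t)=U\ast f(t)\longrightarrow \frac{1}{\mu}\int_0^\infty f(x)\,\text{d}x=\frac1\mu ,
\]
because $\|F\|=1$. The standing assumptions of the section (proper and nonlattice $F$) are in force. Nonlatticeness is in fact automatic, since an absolutely continuous $F$ cannot be concentrated on a lattice. If $\mu=\infty$, the limit is $0=1/\mu$ by the convention of Theorem \ref{thm:LLNCountingProcess}.

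The main subtlety is that $u$ is only defined as a density, that is, up to Lebesgue-null sets. Pointwise convergence therefore requires fixing the version $u:=U\ast f$, which is defined pointwise for every $t$; the statement is to be read for this version. With that convention, all of the work is carried by Theorem \ref{thm:keyRen}.
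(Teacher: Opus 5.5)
Your proof is correct and follows the same route as the paper: write $u$ as the solution $U\ast f$ of $u=f+F\ast u$, then apply the Key Renewal Theorem with $z=f$. Your direct identification $U\ast f=\sum_{n\geq 1}f^{*n}=u$ is slightly cleaner than the paper's appeal to uniqueness. That uniqueness step tacitly requires checking that $u$ is bounded on finite intervals, which the paper does not verify (it holds since $u\leq \|f\|_\infty U$).
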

\begin{proof}
    We know from Proposition \ref{prop:RenDensityCont} that $F$ having a density $f$ is equivalent to the existence of the renewal density $u$. Moreover, we know that $u$ is the solution to the following renewal equation: $u=f+F\ast u$. Since $f$ is d.R.i., it is also bounded (by Proposition \ref{prop:DRI}), hence the unique solution to the renewal equation above is $u=(U\ast f)$.

    To conclude, we can apply the Key Renewal Theorem (since $f$ is d.R.i.) and show that
    \[
    u(t) = U \ast f(t) \longrightarrow \frac{1}{\mu} \int_0^\infty f(x)\text{d}x = \frac{1}{\mu}.
    \]
\end{proof}

To conclude the chapter, we provide a practical application of the Theorems.
\begin{example}
\label{ex:PP5}
Let us consider the same setting as Examples \ref{ex:PP1}, \ref{ex:PP2}, \ref{ex:PP3}, and \ref{ex:PP4}. Clearly, the interarrival distribution is both proper and nonlattice, hence we expect the renewal theorems to hold.
\begin{itemize}
    \item Blackwell's Renewal Theorem \ref{thm:BlackwellRen}: As shown in Example \ref{ex:PP2}, the renewal function is equal to $U(t) = 1+\lambda t$. Hence $U(t+a)-U(t) = 1+\lambda(t+a)-1-\lambda t = \lambda a = a/\mu$. Hence, trivially, $U(t+a)-U(t) \rightarrow a/\mu$ as $t\rightarrow\infty$.
    \item Theorem \ref{thm:Ren2}: First of all, in this setting $\mu<\infty$. We showed in Examples \ref{ex:PP3} and \ref{ex:PP4} that $E(t)$ converges in distribution to $F_0$.
    \item Theorem \ref{thm:Ren3}: As seen in Examples \ref{ex:PP3} and \ref{ex:PP4}, the process $\{R(t)\}_{t\geq 0}$ is stationary, hence it trivially converges to $F_0$.
    \item Key Renewal Theorem \ref{thm:keyRen}: For this specific setting, we showed that $\mathbb{P}(E(t)\leq x)$ satisfies a renewal equation with $z_E(t) = \mathbb{I}(t\leq x)\bar{F}(t)$ and that $\mathbb{P}(R(t)\leq x)$ satisfies a renewal equation with $z_R(t) = F(t+x) - F(t)$. We also know that the two QoIs converge to $1-e^{-\lambda x}$ as $t\rightarrow\infty$. We can check this by using the Key Renewal Theorem.

    Let us start with $E(t)$. $z_E(t)$ is d.R.i. because it has a bounded support, and it is bounded and continuous. Hence
    \[
    \mathbb{P}(E(t)\leq x) \longrightarrow \frac{1}{\mu}\int_0^\infty \mathbb{I}(t\leq x) \bar{F}(t)\text{d}t = \int_0^x\lambda e^{-\lambda t}\text{d}t = 1-e^{-\lambda x},
    \]
    as expected.

    Now we prove the same for $R(t)$. First of all, $z_R(t) = F(t+x)-F(t)=e^{-\lambda t}(1-e^{-\lambda x})$ is d.R.i. because it is non-increasing and Lebesgue integrable in $t$. Hence,
    \[
    \mathbb{P}(R(t)\leq x) \longrightarrow(1-e^{-\lambda x}) \int_0^\infty \lambda e^{-\lambda t}\text{d}t = 1-e^{-\lambda x},
    \]
    as expected. Note that this last convergence is trivial, since $\mathbb{P}(R(t)\leq x) = 1 - e^{-\lambda x}$ for all $t$.
\end{itemize}
\end{example}

\chapter{Continuous-time Regenerative Processes: Background}
\label{chap:3}
The following Chapter is centered on the theoretical background on \emph{regenerative processes}. The theory develops straightforwardly from that of renewal processes, which implies that we can use the same tools to study both types of processes.

\section{Basic Definitions and Results}
Intuitively, a regenerative process is a stochastic process that can be divided into independent and identically distributed cycles. This is formalized by the following definition \cite{SigmanWolff:RegenerativeProcesses}:
\begin{definition}[Regenerative Process, Classical Definition] 
\label{def:RegProcClassic}
    A stochastic process $Y=\{Y(t)\}_{t\geq0}$ is called \emph{regenerative} if there is a random variable $X_1>0$ such that
    \begin{itemize}
        \item[$(i)$] $\{Y(t+X_1)\}_{t\geq0}$ is independent of the  history of $Y$ up until time $X_1$, and of $X_1$;
        \item[$(ii)$] $\{Y(t+X_1)\}_{t\geq0}$ is stochastically equivalent to $\{Y(t)\}_{t\geq0}$, meaning that for all $t\geq0$,
         $\{Y(t+X_1)\}_{t\geq0}$ and  $\{Y(t)\}_{t\geq0}$ have the same finite-dimensional distributions. 
    \end{itemize}
    We call $X_1$ a \emph{regeneration point} and say that the process \emph{regenerates} or \emph{starts over} at this point.
\end{definition}

The recursive cycle structure is not explicitly stated in this definition, but can be easily deduced from it \cite{SigmanWolff:RegenerativeProcesses}.
\begin{remark} 
    Assumption $(i)$ means that $\{Y(t+X_1)\}_{t\geq0}$ is independent of $X_1$ and of the past history of $\{Y(t)\}$ prior to $X_1$.
    Assumptions $(i)$ and $(ii)$ together show that $\{Y(t+X_1)\}$ is regenerative (because $\{Y(t)\}$ is) with regeneration epoch $X_2$, which is independent and identically distributed to $X_1$ (notice that $X_1 +X_2$ is a regeneration point for $Y$).

    Proceeding in this way, we can obtain a sequence of random variables $\{X_n\}$ which are independent and identically distributed.
    We can use such sequence to divide the process into \emph{cycles} or \emph{tours} $\{Y(t)\}_{0\leq t <X_1}$, $\{Y(t)\}_{X_1\leq t < X_1+X_2},\dots$, that are independent and identically distributed stochastic processes.
\end{remark}

As we already anticipated in the previous chapter, we can embed a renewal process into each regenerative process \cite{SigmanWolff:RegenerativeProcesses}.
\begin{definition} 
    The random variables $\{X_n\}$ are called \emph{cycle lengths} and the random process $T_n\colon \!\!\! = X_1+\dots+X_n$ is a renewal process and is called \emph{embedded renewal process} for $Y$.
\end{definition}

Sometimes, it may happen that a process has property $(i)$ and then is regenerative from $X_1$ on. This is equivalent to the first cycle having a different distribution than the rest. This type of regenerative processes are called, like it happens for renewal processes, \emph{delayed regenerative processes}. From a time-average point of view, the first cycle-length is not important, provided it is a proper random variable \cite{SigmanWolff:RegenerativeProcesses}.

Definition \ref{def:RegProcClassic} can effectively describe the majority of regenerative processes, but in some cases we need to accommodate for some kind of dependence between cycles. This yields the following, more general, definition \cite{Asmussen:Applied}.
\begin{definition}[Regenerative Process 2.0] 
    Assume that the stochastic process $\{Y(t)\}_{t\in\mathbb{T}}$ has state space $E$ and continuous or discrete time parameter $t\in \mathbb{T}$, i.e. $\mathbb{T}=[0,\infty)$ or $\mathbb{T}=\mathbb{N}$ respectively. We say that $\{Y(t)\}$ is \emph{regenerative (pure or delayed)} if there exists a renewal process (pure or delayed) $\{T_n\} = \{X_1+\dots+X_n\}$ such that, for each $n\geq 0$, the post-$T_n$ process
    \[
    \theta_{T_n}(Y) = (X_{n+1},X_{n+2},\dots,\{Y(T_n+t)\}_{t\in\mathbb{T}})
    \]
    is independent of $T_0,\dots,T_n$ (or equivalently $X_0,\dots,X_n$), and its distribution does not depend on $n$.
    We call $\{T_n\}$ the \emph{embedded renewal process} and we refer to the $T_n$'s as \emph{regeneration points}.
\end{definition}

A desirable quality of regenerative processes is that the regenerative condition is preserved under measurable mappings \cite{Asmussen:Applied}.
\begin{proposition} 
    If $\{Y(t)\}_{t\in \mathbb{T}}$ is regenerative and $\varphi\colon E \rightarrow F$ is any measurable mapping, then $\{\varphi(Y(t))\}_{t\in\mathbb{T}}$ is regenerative with the same embedded renewal process.
\end{proposition}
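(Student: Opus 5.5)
The plan is to check the definition of a regenerative process (Regenerative Process 2.0) directly for $\{\varphi(Y(t))\}_{t\in\mathbb{T}}$, using the same renewal process $\{T_n\}$ that serves $Y$. So fix $n\geq 0$ and consider the post-$T_n$ process of the transformed process,
\[
\theta_{T_n}(\varphi(Y)) = \bigl(X_{n+1},X_{n+2},\dots,\{\varphi(Y(T_n+t))\}_{t\in\mathbb{T}}\bigr).
\]
The key observation is that this is a fixed deterministic function of $\theta_{T_n}(Y)$. Define $\Phi$ on sequence-path pairs by applying the identity to the cycle-length coordinates and applying $\varphi$ pointwise in time to the path coordinate. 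Then
\[
\theta_{T_n}(\varphi(Y)) = \Phi\bigl(\theta_{T_n}(Y)\bigr),
\]
and $\Phi$ does not depend on $n$.

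The one point needing care is measurability of $\Phi$. Equip the path spaces $E^{\mathbb{T}}$ and $F^{\mathbb{T}}$ with their product (cylinder) $\sigma$-algebras, and the sequence space with its product $\sigma$-algebra as well. It suffices to check preimages of generating cylinder sets. For a path cylinder,
\[
\{\,y : (\varphi(y(t_1)),\dots,\varphi(y(t_k)))\in B_1\times\dots\times B_k\,\}
= \{\,y : y(t_i)\in\varphi^{-1}(B_i),\ i=1,\dots,k\,\},
\]
which is again a cylinder set because $\varphi$ is measurable. On the cycle-length coordinates $\Phi$ is the identity, so it is measurable there trivially. Hence $\Phi$ is measurable with respect to the product $\sigma$-algebras. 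This measurability step is the only mild obstacle. If one worked instead on a path space such as the Skorokhod space, one would take the path $\sigma$-algebra generated by the coordinate maps, and the same cylinder argument applies.

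With measurability in hand, the two required properties follow directly.
\begin{itemize}
\item \emph{Independence.} Since $\theta_{T_n}(Y)$ is independent of $(T_0,\dots,T_n)$, so is any measurable function of it, in particular $\Phi(\theta_{T_n}(Y))$.
\item \emph{Distribution free of $n$.} The law of $\Phi(\theta_{T_n}(Y))$ is the pushforward of the law of $\theta_{T_n}(Y)$ under $\Phi$. The law of $\theta_{T_n}(Y)$ does not depend on $n$, and $\Phi$ does not depend on $n$, so neither does the pushforward.
\end{itemize}
These are exactly the two conditions in the definition. Therefore $\{\varphi(Y(t))\}_{t\in\mathbb{T}}$ is regenerative, with the same (pure or delayed) embedded renewal process $\{T_n\}$ as $Y$.
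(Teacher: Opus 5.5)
Your argument is correct: writing $\theta_{T_n}(\varphi(Y))=\Phi(\theta_{T_n}(Y))$ for a single measurable map $\Phi$ that does not depend on $n$ gives both conditions of the definition directly from the corresponding properties of $Y$, namely independence from $T_0,\dots,T_n$ and a law that does not depend on $n$; your cylinder-set check settles the measurability of $\Phi$. The paper states this proposition without proof and defers to Asmussen, where it is treated as immediate, and your reasoning is the standard argument behind it.
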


In the rest of the thesis, when needed, we consistently use a specific notation:
\begin{notation}
    To any given delayed regenerative process, corresponds a zero-delayed one with a unique probability law (e.g. $\{Y(T_0+t)\}_{t\in\mathbb{T}}$). Let us denote by $\mathbb{P}_0,\mathbb{E}_0$ the law corresponding to the zero-delayed case and write $X=X_1$ for the length of the first cycle and $\mu=\mathbb{E}_0[X]$.
\end{notation}

Regenerative processes are widely used, both in theory and applications, because of their power: the existence of a limiting distribution is guaranteed by mild conditions that are usually easy to verify. For continuous-time regenerative processes, it suffices that the cycle length distribution is nonlattice and has finite mean $\mu<\infty$, and that the sample paths satisfy some regularity condition \cite{Asmussen:Applied}.
\begin{theorem} 
\label{thm:RegProcLimitDistribution}
    Let $\{Y(t)\}_{t\in \mathbb{T}}$ be a (possibly delayed) regenerative process with metric state space, right-continuous paths and non-lattice cycle length distribution $F$ with finite mean $\mu$. Then the limiting distribution $\mathbb{P}_e$ (say) of $Y(t)$ exists and is given by
    \begin{equation}
    \label{eq:RegProcLimitingDist}
    \mathbb{E}_e[f(Y(t))] = \frac{1}{\mu} \mathbb{E}_0\left[ \int_0^X f(Y(s)) \text{\emph{d}}s \right].
    \end{equation}
\end{theorem}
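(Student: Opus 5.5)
The plan is to reduce the statement to the Key Renewal Theorem~\ref{thm:keyRen}, using the renewal argument of Example~\ref{ex:TimeProcessesRenEq}. By a monotone class argument, it suffices to prove convergence of $\mathbb{E}[f(Y(t))]$ for bounded continuous $f\colon E\to\mathbb{R}$. Since $E$ is metric, weak convergence of the laws of $Y(t)$ is then exactly the claim. We may also take $f\geq 0$, shifting by a constant otherwise. The limit defines a probability measure $\mathbb{P}_e$ by linearity and monotone convergence.

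We first treat the zero-delayed case. Set $Z(t)=\mathbb{E}_0[f(Y(t))]$ and condition on the first cycle length $X=X_1$. On $\{X>t\}$ we keep the term as it is. On $\{X=u\leq t\}$, the regenerative property says that the post-$X_1$ process is independent of $X_1$ and distributed as under $\mathbb{P}_0$. Hence $\mathbb{E}_0[f(Y(t));X\le t]=\int_0^t Z(t-u)F(\text{d}u)$, and $Z$ solves the renewal equation
\[
Z = z + F\ast Z,\qquad z(t)=\mathbb{E}_0\left[f(Y(t));X>t\right].
\]
The function $z$ is bounded by $\|f\|_\infty$, so Theorem~\ref{thm:RenFunction}(ii) gives $Z=U\ast z$. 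If $z$ is d.R.i., Theorem~\ref{thm:keyRen} yields
\[
Z(t)\to\frac1\mu\int_0^\infty z(s)\,\text{d}s=\frac1\mu\mathbb{E}_0\left[\int_0^X f(Y(s))\,\text{d}s\right],
\]
where the last equality is Fubini. This is \eqref{eq:RegProcLimitingDist}.

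The main obstacle is verifying that $z$ is d.R.i., and I would use Proposition~\ref{prop:DRI}(4). First, $0\le z(t)\le \|f\|_\infty\bar F(t)$, and $\bar F$ is nonincreasing and integrable because $\mu<\infty$, so it is d.R.i. by condition 5. Second, $z$ is bounded. It remains to check continuity a.e. Right-continuity of paths and continuity of $f$ imply that $t\mapsto f(Y(t))\mathbb{I}(X>t)$ is right-continuous. It is left-continuous at $t$ whenever $Y$ is continuous at $t$ and $t\neq X$. Bounded convergence then gives continuity of $z$ at every $t$ with $\mathbb{P}_0(X=t)=0$ and $\mathbb{P}_0(Y \text{ jumps at } t)=0$. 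Atoms of $F$ are countable, which handles the first condition. The second condition is the delicate point, since right-continuous paths may have many jumps.

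If the a.e.\ continuity argument proves awkward, I would fall back on an upper/lower Riemann sum estimate for $z$ directly. For fixed $h$ I would bound $\int\bar z_h-\int\underline z_h$ by $\int \mathbb{E}_0[\sup_{|s-t|\le h}|g(s)-g(t)|]\,\text{d}t$, where $g(t)=f(Y(t))\mathbb{I}(X>t)$. For each path this tends to $0$ by right-continuity, via Riemann integrability of regulated-type functions over bounded intervals, and dominated convergence with the dominating function $2\|f\|_\infty(X+h)$ then gives the limit. Alternatively, following Asmussen, I would first prove the result for $f$ uniformly continuous and conclude by the Portmanteau theorem.

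Finally, the delayed case. Condition on $T_0$: since $\{Y(T_0+t)\}$ has law $\mathbb{P}_0$ and is independent of $T_0$,
\[
\mathbb{E}[f(Y(t))]=\mathbb{E}\left[f(Y(t));T_0>t\right]+\int_0^t Z(t-s)\,\mathbb{P}(T_0\in\text{d}s).
\]
The first term is at most $\|f\|_\infty\mathbb{P}(T_0>t)\to0$ because $T_0$ is proper. The second term converges to the same limit by dominated convergence, using $|Z|\le\|f\|_\infty$.
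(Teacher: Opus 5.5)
Your proposal follows the paper's proof. It uses the same renewal equation $Z=z+F\ast Z$ with $z(t)=\mathbb{E}_0[f(Y(t));X>t]$, the Key Renewal Theorem, and d.R.i.\ via the bound $z\le \|f\|_\infty\bar F$ and condition 4 of Proposition~\ref{prop:DRI}. Your treatment of the delay (conditioning on $T_0$, then dominated convergence) is, if anything, cleaner than the paper's.

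However, you leave the one nontrivial step, a.e.\ continuity of $z$, unresolved, and the missing idea is elementary. You already observed that $s\mapsto f(Y(s))\mathbb{I}(X>s)$ is right-continuous on every path. By bounded convergence this makes $z$ itself right-continuous. The paper then uses only one further fact: a real function that is right-continuous everywhere has at most countably many discontinuities. To see this, fix $\epsilon>0$ and a point $x$ where $z$ has oscillation at least $\epsilon$. Choose $\delta_x>0$ with $|z(y)-z(x)|<\epsilon/4$ on $[x,x+\delta_x)$. Then no point of $(x,x+\delta_x)$ has oscillation $\geq\epsilon$, so these open intervals are pairwise disjoint as $x$ ranges over such points, hence countably many. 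So $z$ is continuous a.e., condition 4 applies, and there is no need to control jumps of $Y$ at fixed times.

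Your worry that ``right-continuous paths may have many jumps'' is therefore unfounded. The same lemma, which holds verbatim for metric-space-valued paths, shows each path has only countably many discontinuities. Via Fubini this gives $\mathbb{P}_0(Y \text{ jumps at } t)=0$ for Lebesgue-a.e.\ $t$, which would also complete your pathwise route. Your Riemann-sum fallback needs exactly this fact as well. Right-continuous paths need not have left limits, so appealing to ``regulated-type'' functions is not a valid justification; the countability lemma is. The Portmanteau alternative with uniformly continuous $f$ does not by itself address the d.R.i.\ issue, since uniform continuity of $f$ says nothing about continuity of $z$ in time.
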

\begin{proof}
    It is possible to show that
    \[
    A \longmapsto \frac{1}{\mu} \mathbb{E}_0\left[ \int_0^X \mathbb{I}(Y(s)\in A)\text{d}s \right]
    \]
    defines a probability measure on the Borel $\sigma$-algebra of the state space $E$ \cite{Asmussen:Applied}. Hence, it is sufficient to prove that
    \[
    \mathbb{E}[f(Y(t))] \longrightarrow \mathbb{E}_e[f(Y(t))]
    \]
    whenever $f$ is continuous and $0\leq f\leq 1$.

    Let $Z(t) = \mathbb{E}_0[f(Y(t))], \:z(t) = \mathbb{E}_0[f(Y(t)),X>t], \:F_0^*(x) = \mathbb{P}(X_0\leq x)$. Now apply the renewal argument, by conditioning on $X_0$:
    \[
    \begin{split}
        \mathbb{E}[f(Y(t))] &= \mathbb{E}[f(Y(t)), X_0>t] + \int_0^t \mathbb{E}[f(Y(t))|X_0 = s]F_0^*(\text{d}s)\\
        Z(t) &= z(t) + \int_0^t Z(t-s)F(\text{d}s).
    \end{split}
    \]
    At this point, an application of the Key Renewal Theorem \ref{thm:keyRen} yields that $Z(t) \rightarrow \frac{1}{\mu} \int_0^\infty z(s) \text{d}s = \frac{1}{\mu} \int_0^\infty \mathbb{E}_0[f(Y(s)), s<X]\text{d}s$.

    Thus, we need to show that $z(t)$ is d.R.i. Notice that $z$ is right-continuous, hence continuous almost everywhere. Moreover, $z(t)\leq z^*(t) = \mathbb{P}(X>t)=\bar{F}(t)$, which is d.R.i. because it is nonincreasing and Lebesgue integrable (c.f. Proposition \ref{prop:DRI}). Using again Proposition \ref{prop:DRI}, $z(t)$ is d.R.i., which concludes the proof.
\end{proof}


Theorem \ref{thm:RegProcLimitDistribution} can be strengthened to include total variation convergence, which is what will be used in the following chapters to study the rate of convergence of a regenerative process \cite{Asmussen:Applied}.
\begin{corollary} 
\label{cor:TotalVariationEt}
    If $E(t)$ converges in total variation to $F_0$ (i.e. the distribution with density $\bar{F}(x)/\mu$), then also $Y(t)$ converges in total variation to:
    \[
    \mathbb{E}_e[f(Y(t))] = \frac{1}{\mu} \mathbb{E}_0\left[ \int_0^X f(Y(s)) \text{\emph{d}}s \right].
    \]
\end{corollary}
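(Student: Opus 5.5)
We argue by conditioning on the backward recurrence time, so that the total variation distance of $Y(t)$ from $\mathbb{P}_e$ is bounded by the total variation distance of $E(t)$ from $F_0$. Throughout, $f$ is measurable with $0\le f\le 1$, and we need a bound on $|\mathbb{E}[f(Y(t))]-\mathbb{E}_e[f(Y(t))]|$ that is uniform in $f$ and tends to zero.

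\textbf{Step 1: distribution of $Y(t)$ given $E(t)$.} On $\{t\ge T_0\}$ the process at time $t$ sits in the cycle that started at $T_{N(t)-1}=t-E(t)$. Define
\[
\varphi(a)=\mathbb{E}_0\left[f(Y(a))\mid X>a\right],\qquad \bar F(a)>0 .
\]
By the regenerative structure, the post-$T_{N(t)-1}$ process is independent of $T_0,\dots,T_{N(t)-1}$. The only information about the current cycle carried by $E(t)=a$ is that its length exceeds $a$. Hence
\[
\mathbb{E}\left[f(Y(t))\mid E(t)=a,\ t\ge T_0\right]=\varphi(a).
\]
To make this rigorous, decompose over $n$ with $\{N(t)=n\}=\{T_{n-1}\le t<T_n\}$. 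Condition on $T_{n-1}=s$ and use independence of $\theta_{T_{n-1}}(Y)$ from $T_0,\dots,T_{n-1}$; each term then equals $\mathbb{E}_0[f(Y(t-s));X>t-s]$, which gives $\varphi(t-s)$ after normalising. The event $\{t<T_0\}$ has probability $\mathbb{P}(X_0>t)\to0$ and is handled separately, using the convention $E(t)=t+a$ of the earlier Remark if needed.

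\textbf{Step 2: the stationary law in the same form.} The same function $\varphi$ reproduces the limit:
\[
\frac1\mu\,\mathbb{E}_0\!\left[\int_0^X f(Y(s))\,ds\right]=\frac1\mu\int_0^\infty \mathbb{E}_0[f(Y(s));X>s]\,ds=\int_0^\infty \varphi(s)\frac{\bar F(s)}{\mu}\,ds=\int\varphi\,dF_0 ,
\]
where the first equality is Fubini. Thus $\mathbb{E}_e[f]=\int\varphi\,dF_0$.

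\textbf{Step 3: comparison.} Combining the two steps,
\[
\bigl|\mathbb{E}[f(Y(t))]-\mathbb{E}_e[f]\bigr|\le\Bigl|\int\varphi\,d\mathbb{P}_{E(t)}-\int\varphi\,dF_0\Bigr|+\mathbb{P}(X_0>t)\le\|\mathbb{P}_{E(t)}-F_0\|_{TV}+\mathbb{P}(X_0>t).
\]
The last inequality holds because $0\le\varphi\le1$. The right-hand side does not depend on $f$, so taking the supremum over $f$ gives total variation convergence.

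\textbf{Main obstacle.} The delicate step is Step 1: justifying rigorously the conditional identity given $E(t)$ from the abstract definition of a regenerative process (Regenerative Process 2.0). I would handle it through the sum over $n$ and the renewal measure, as sketched in Step 1, rather than through an abstract conditioning argument. The residual delay term is harmless, since $X_0$ is a proper random variable and so $\mathbb{P}(X_0>t)\to0$.
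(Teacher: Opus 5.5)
Your argument is correct and complete. The paper gives no proof of this corollary: it only says that Theorem~\ref{thm:RegProcLimitDistribution} ``can be strengthened'' and cites Asmussen. So there is no competing argument to compare step by step. What you have written is the standard conditioning-on-the-age proof, and it fills that gap.

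It helps to see how your proof relates to the paper's proof of Theorem~\ref{thm:RegProcLimitDistribution}. Your identity $\mathbb{E}_0[f(Y(t))]=\mathbb{E}_0[\varphi(E(t))]$ is the same renewal decomposition $\mathbb{E}_0[f(Y(t))]=U\ast z(t)$ with $z=\varphi\bar{F}$, because $\mathbb{P}_0(E(t)\in B)=\int_{[0,t]}\mathbb{I}_B(t-s)\bar{F}(t-s)\,U(\text{d}s)$.

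The paper feeds $z$ into the key renewal theorem one continuous $f$ at a time. That gives no uniformity in $f$, hence only weak convergence. You instead push all the dependence on $f$ into the $[0,1]$-valued function $\varphi$. The single hypothesis on $E(t)$ then controls every $f$ at once. It even gives the quantitative bound $\|\mathbb{P}(Y(t)\in\cdot)-\mathbb{P}_e(\cdot)\|\le\|\mathbb{P}(E(t)\in\cdot)-F_0(\cdot)\|+\mathbb{P}(X_0>t)$, which transfers convergence rates directly.

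Two details should be stated explicitly.
\begin{itemize}
\item You need $Y$ to be jointly measurable in $(t,\omega)$. This is supplied by the right-continuous paths of Theorem~\ref{thm:RegProcLimitDistribution}, or assumed as in Theorem~\ref{thm:RegProcTVConv}. It is what makes $\varphi$ measurable, lets Fubini apply in Step~2, and lets the post-$T_{n-1}$ process be evaluated at the random time $t-T_{n-1}$.
\item $\varphi$ should be extended by, say, $0$ on $\{\bar{F}=0\}$. This is harmless, because neither $F_0$ nor the law of $E(t)$ on $\{t\ge T_0\}$ charges that set.
\end{itemize}
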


We conclude the Chapter with a practical example, related to renewal processes \cite{Asmussen:Applied}.
\begin{example}[Renewal Process] 
    Consider a renewal process with nonlattice interarrival distribution $F$. If $\mu<\infty$, the stationary distributions of the recurrence times $E(t),R(t)$ and of the current life $C(t)$ have already been found.

    Let us show that their particular form comes from formula \eqref{eq:RegProcLimitingDist}: for $0\leq t < X$ we have $E(t) = t, R(t) = X-t, C(t) = X$. In particular,
    \[
    \begin{split}
        \mathbb{P}_e(E(t)\leq \xi) &= \frac{1}{\mu} \mathbb{E}_0\left[ \int_0^X \mathbb{I}(E(t) \leq \xi)\text{d}t \right] = \frac{1}{\mu}\mathbb{E}_0\left[ \int_0^X \mathbb{I}(t \leq \xi)\text{d}t \right]\\
        &\stackrel{(*)}{=} \frac{1}{\mu} \mathbb{E}_0\left[ \int_0^X \mathbb{I}(X-t \leq \xi)\text{d}t \right] = \mathbb{P}_e(R(t)\leq \xi),
    \end{split}
    \]
    and the common value is
    \[
    \begin{split}
        \frac{1}{\mu} \mathbb{E}_0\left[ \int_0^\infty \mathbb{I}(t\leq \xi, t<X)\text{d}t \right] &= \frac{1}{\mu} \int_0^\xi \mathbb{P}_0(t<X) \text{d}t\\
        &= \frac{1}{\mu} \int_0^\xi \bar{F}(t)\text{d}t\\
        &= F_0(\xi).
    \end{split}
    \]
    Note that equality $(*)$ holds because of a simple change of variables inside the integral (time-reversal with respect to $X$): $u = X-t, \text{d}u=-\text{d}t$.
    
    Finally,
    \[
    \begin{split}
        \mathbb{P}_e(C(t)\leq \xi) &= \frac{1}{\mu} \mathbb{E}_0\left[ \int_0^X \mathbb{I}(C(t)\leq \xi) \text{d}t \right] = \frac{1}{\mu} \mathbb{E}_0\left[ \int_0^X \mathbb{I}(X\leq\xi)\text{d}t \right]\\
        &= \frac{1}{\mu} \mathbb{E}_0[X,X\leq\xi] = \frac{1}{\mu} \int_0^\xi x F(\text{d}x).
    \end{split}
    \]

    With the same reasoning we can also prove the limiting distribution of the relative position of the current item, $E(t)/C(t)$. Indeed, for a given $\xi \in [0,1]$:
    \[
    \begin{split}
    \mathbb{P}_e\left(\frac{E(t)}{C(t)}\leq \xi\right) &= \frac{1}{\mu} \mathbb{E}_0 \left[ \int_0^X \mathbb{I}\left(\frac{t}{X} \leq \xi \right) \text{d}t \right]\\
    &= \frac{1}{\mu} \int_0^\infty \int_0^\infty \mathbb{I}(t \leq x) \mathbb{I}\left(\frac{t}{x} \leq \xi\right) \text{d}t\, F(\text{d}x)\\
    &= \frac{1}{\mu} \int_0^\infty \int_0^\infty \mathbb{I}(x \geq t) \mathbb{I}\left(x \geq \frac{t}{\xi}\right) F(\text{d}x)\text{d}t\\
    &\stackrel{(*)}{=} \frac{1}{\mu} \int_0^\infty \bar{F}\left(\frac{t}{\xi}\right) \text{d}t\\
    &= \frac{\xi}{\mu} \int_0^\infty \bar{F}(s)\text{d}s\\
    &= \xi,
    \end{split}
    \]
    where $(*)$ holds because $\xi \leq 1$.
\end{example}

\chapter[Exponential Convergence of Regenerative Processes]{Exponential Convergence of Regenerative Processes: The Coupling Proof}
\label{chap:4}
This Chapter presents in detail the proof of the exponential rate of convergence of renewal and regenerative processes, by first introducing the necessary background theory needed to comprehend it.

\section{Spread-out Distributions}
We have seen in the previous chapter that, under very mild assumptions (cf. Theorem \ref{thm:RegProcLimitDistribution}), a regenerative process converges in law to a limiting distribution, given by formula \eqref{eq:RegProcLimitingDist}. However, by slightly strengthening the assumptions, we obtain convergence in total variation, without considering the backward recurrence time process, as in Corollary \ref{cor:TotalVariationEt}. To this end, we first introduce the concept of \emph{spread-out distributions}.

As a reminder,  a \emph{component of a distribution} $F$ on $\mathbb{R}$ is a nonnegative measure $G$ such that $0\neq G\leq F$. This is a concept that is tightly linked to the Lebesgue-Radon-Nikodym decomposition of a regular Borel measure $\nu$ \cite{Shum:MeasureTheoreticProbability,Cohn:MeasureTheory,Rudin:RealComplexAnalysis}, for which we can decompose $\nu$ as $\nu = \nu_{ac} + \nu_{sc} + \nu_d$, where $\nu_{ac}$ is an absolutely continuous component, $\nu_{sc}$ is a continuous singular component, and $\nu_d$ is a discrete component. Since all probability distributions on $\mathbb{R}$ are regular Borel measures, this type of decomposition always exists in our probabilistic setting. However, the components do not need to be non-trivial, e.g. a discrete probability measure, or a degenerate distribution $F=\delta_0$.

\begin{definition} 
\label{def:SpreadOut}
    We say that a distribution on $\mathbb{R}$ is \emph{spread-out} if there exists an $n \in \mathbb{N}$ such that $F^{*n}$ has a component $G$ that is \emph{non-trivially} absolutely continuous (i.e. has density $g$ with respect to Lebesgue measure).
\end{definition}

\begin{remark}
    Any absolutely continuous distribution is trivially spread-out, but there are also some examples of singular distributions that are spread-out. One example can be found in Appendix \ref{chap:NonTrivialSpreadOut}.
\end{remark}

If we require our distributions to be spread-out instead of non-lattice, we can guarantee stronger convergence results. And in applications, the cases in which $F$ is nonlattice and spread-out are essentially the same. Hence, requiring this stronger condition is not prohibitive \cite{Asmussen:Applied}.

Spread-out distributions have a useful characteristic \cite[Chap. VII]{Asmussen:Applied}.
\begin{lemma} 
\label{lemma:uniformComponent}
    If $F$ is spread-out, then $F^{*m}$ has a uniform component on $(a,a+b)$ for some $m$ and $a,b>0$.
\end{lemma}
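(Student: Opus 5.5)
By Definition \ref{def:SpreadOut}, there is an $n$ such that $F^{*n}$ has a nontrivial absolutely continuous component $G$ with density $g \geq 0$, $0<\int g<\infty$. We may assume $g$ is bounded with compact support in $(0,\infty)$. To arrange this, replace $g$ by $\min(g,K)\mathbb{I}_{[c,d]}$ for suitable $0<c<d$ and $K$; this is still a component of $F^{*n}$ and is nonzero for suitable choices, by monotone convergence. Note that $F^{*n}$ lives on $(0,\infty)$ since $F(0)=0$. The plan is to show that convolving $g$ with itself produces a function that is bounded below by a positive constant on some interval. The key observation is that components behave well under convolution: if $G\leq F^{*n}$, then $G^{*2}\leq F^{*2n}$, by positivity of convolution of measures.

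The main step is to show that $h=g\ast g$ is continuous and not identically zero. Continuity holds because $g\in L^1\cap L^\infty$ (indeed $g\in L^2$). The convolution of two $L^2$ functions is uniformly continuous, since
\[
|h(x+\delta)-h(x)|\leq \|g\|_2\,\|g(\cdot+\delta)-g\|_2,
\]
and the right-hand side tends to $0$ by continuity of translation in $L^2$. Moreover $\int h=(\int g)^2>0$, so $h$ is not identically zero. Hence there is a point $x_0$ with $h(x_0)>0$. By continuity, $h\geq \varepsilon>0$ on some interval $(a,a+b)$ containing points near $x_0$. We also have $a>0$, because $h$ is supported in $[2c,2d]\subset(0,\infty)$. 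We take $m=2n$.

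Finally, we extract the uniform component. The measure $G^{*2}$ has density $h\geq \varepsilon\,\mathbb{I}_{(a,a+b)}$, so
\[
F^{*m}\geq G^{*2}\geq \varepsilon b\cdot \mathcal{U}(a,a+b),
\]
where $\mathcal{U}(a,a+b)$ is the uniform law. This is exactly a (scaled) uniform component on $(a,a+b)$ in the sense of components $0\neq G\leq F$.

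The only delicate point is the continuity of $g\ast g$. A single density $g$ may be nowhere bounded below on any interval (for example, the indicator of a fat Cantor set), which is why one convolution is needed. The truncation that makes $g$ bounded with compact support is what makes the $L^2$ translation argument apply cleanly. If it were preferred to avoid $L^2$, one could instead approximate $g$ in $L^1$ by a continuous compactly supported function and use $L^1$–$L^\infty$ bounds, but I expect the $L^2$ route to be the simplest.
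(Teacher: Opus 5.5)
Your proof is correct and follows essentially the same route as the paper: you truncate the absolutely continuous component to a bounded, compactly supported density $g$, show that $g\ast g$ is continuous and not identically zero, and read off a uniform component of $F^{*2n}$. The only difference is cosmetic. You get continuity of $g\ast g$ from continuity of translation in $L^2$ and Cauchy--Schwarz, whereas the paper approximates $g$ in $L^1$ by continuous bounded $g_k$ and uses $\|g\ast g-g_k\ast g\|_\infty\leq\|g\|_\infty\|g-g_k\|_1$ (the alternative you mention at the end). Your justification of the truncation and of $a>0$ fills in details the paper leaves implicit.
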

\begin{proof}
    Since $F$ is spread-out, there exists $n$ such that $F^{*n}$ has a non-trivial absolutely continuous component with density $g$. We can assume that $g$ is bounded with compact support.

    Let us choose continuous bounded functions $g_k\in L_1$ with $\|g-g_k\|_1=\int|g-g_k| \rightarrow 0$. Then $g_k\ast g(x) = \int g_k(x-y)g(y)\text{d}y$ is continuous by dominated convergence. Moreover, $\|g^{*2} - g_k\ast g\| \leq \|g\|_\infty \|g-g_k\|_1 \rightarrow 0$. Thus $g^{*2}$ is continuous as the uniform limit of continuous functions. Hence, there exists $a,b,\delta>0$ such that $g^{*2}(x)\geq\delta$ for $x\in(a,a+b)$ (we will talk about this decomposition more in detail, see Example \ref{ex:Exp1}).

    We conclude the proof by taking $m=2n$.
\end{proof}

One of the basic tools that are used when dealing with spread-out \emph{interarrival} distributions is \emph{Stone's decomposition} \cite{Stone:Decomposition,Asmussen:Applied}.
\begin{theorem}[Stone's decomposition] 
\label{thm:StoneDec}
    If the interarrival distribution $F$ of a renewal process is spread-out, then we can write the renewal measure as $U=U_1+U_2$, where $U_1,U_2$ are nonnegative measures on $[0,\infty)$, $U_1$ has a bounded continuous density $u_1(x) = \text{\emph{d}}U_1(x)/\text{\emph{d}}x$ satisfying $u_1(x) \rightarrow 1/\mu$ as $x\rightarrow\infty$, and $U_2$ is bounded, i.e. $\|U_2\|<\infty$.
\end{theorem}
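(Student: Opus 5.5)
The plan is to start from Lemma \ref{lemma:uniformComponent}: since $F$ is spread-out, some convolution power $F^{*m}$ has a uniform component on $(a,a+b)$. Write $F^{*m} = G + H$, where $G$ has a bounded continuous density $g$ supported in a compact interval, and $\|G\| = \epsilon > 0$. To keep continuity, replace the uniform density by a continuous triangular-type density, which is possible after taking $g^{*2}$ as in the proof of that lemma. Then $\|H\| = 1-\epsilon < 1$.

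Next, decompose the renewal measure along residues modulo $m$:
\[
U = \sum_{k=0}^{m-1} F^{*k} \ast V, \qquad V = \sum_{j\geq 0} (F^{*m})^{*j} = \sum_{j\geq0}(G+H)^{*j}.
\]
Expanding $(G+H)^{*j}$ binomially and collecting terms gives $V = W + \sum_{j\geq0} H^{*j}$. Here $W$ consists of the terms containing at least one factor $G$, so $W = G \ast V \ast \sum_i H^{*i}$, using the identity $V = \sum_j H^{*j} + G\ast\sum_j H^{*j}\ast V$ obtained from $V=\delta_0+(G+H)\ast V$. The purely singular part $\sum_j H^{*j}$ has total mass $\sum_j (1-\epsilon)^j = 1/\epsilon < \infty$. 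Convolving it with $F^{*k}$ and summing over $k<m$ yields a bounded measure, which goes into $U_2$. The remaining part $U_1 = \sum_{k<m} F^{*k}\ast G\ast \sum_i H^{*i}\ast V$ is a convolution of a measure with the continuous bounded density $g$. It therefore has a density $u_1 = g \ast \nu$, where $\nu = \sum_{k<m}F^{*k}\ast\sum_i H^{*i}\ast V$ is a measure satisfying $\nu \le U\ast\sum_i H^{*i}$, coordinatewise in intervals.

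Boundedness and continuity of $u_1$ then follow. Since $g$ has compact support in an interval of length $L$, we get $u_1(x) \le \|g\|_\infty\, \nu((x-L,x])$. Theorem \ref{thm:RenFunction}(iii) gives a uniform bound on the $U$-mass of intervals of length $L$ (namely $U(L)$), and convolving with the finite measure $\sum_i H^{*i}$ only multiplies this by $1/\epsilon$. Continuity follows by dominated convergence, using uniform continuity of $g$ together with this local boundedness of $\nu$.

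The main obstacle is the limit $u_1(x)\to 1/\mu$. I would argue as follows: $u_1 = U \ast z$ with $z = g\ast \sum_{k<m}\ldots$. More cleanly, note that $U_1 = U \ast \rho$, where $\rho = G\ast\sum_i H^{*i}\ast(\text{finite correction})$, or apply the Key Renewal Theorem \ref{thm:keyRen} directly to $u_1(x) = \int z(x-y)\,U(\mathrm{d}y)$, where
\[
z = g \ast \sum_{i} H^{*i}\ast \sum_{k<m}F^{*k}\ast(\text{terms reducing } V \text{ to } U).
\]
Concretely, $V$ itself relates to $U$ via $U=\sum_{k<m}F^{*k}\ast V$, so I expect to write $u_1 = U\ast z$ with $z = g\ast\sum_i H^{*i}$ up to a harmless bounded shift. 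This $z$ is continuous and bounded, and it is dominated by a d.R.i. function: the tail of $\sum_i H^{*i}$ decays because the masses decay geometrically, though care is needed if $H$ has heavy tails. Verifying the d.R.i. condition (Proposition \ref{prop:DRI}) is the delicate step. If it is problematic, I would instead truncate $\sum_i H^{*i}$ to finitely many terms plus a tail of arbitrarily small mass, whose contribution to $u_1$ is uniformly small by the interval bound above. Each truncated piece is compactly supported, continuous and hence d.R.i. By the Key Renewal Theorem it converges to $\int z/\mu$. The total integral is $\|G\|\cdot\frac1\epsilon = 1$ after accounting for the factor structure, giving the limit $1/\mu$ since $F$ is spread-out and hence nonlattice.
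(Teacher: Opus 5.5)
Your proof is correct and is essentially the construction the paper sketches (following Asmussen): you take $F^{*m}=G+H$ with a continuous compactly supported density $g$ for $G$, put the finite measure $\sum_{k<m}F^{*k}\ast\sum_i H^{*i}$ into $U_2$, and set $U_1=G\ast\sum_i H^{*i}\ast U$ with density $u_1=\sum_i H^{*i}\ast(U\ast g)$, which for $m=1$ is exactly the paper's $U_2=\sum_n H^{*n}$, $u_1=U_2\ast(U\ast g)$. The d.R.i.\ worry and the truncation in your last paragraph are unnecessary: $U\ast g\to\|G\|/\mu$ by the Key Renewal Theorem (since $g$ is continuous with compact support), and dominated convergence against the finite measure $\sum_i H^{*i}$ then gives $u_1(x)\to\epsilon^{-1}\cdot\epsilon/\mu=1/\mu$ with no tail assumption on $H$ (equivalently, $z=g\ast\sum_i H^{*i}$ is d.R.i.\ because $g$ has compact support and $\sum_i H^{*i}$ is finite).
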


We do not provide the proof of this result (it can be seen on \cite{Stone:Decomposition,Asmussen:Applied}), but we still want to provide some intuition on how the two measures $U_1,U_2$ are defined. If $G$ is the uniform component of Lemma \ref{lemma:uniformComponent}, in the case $m=1$, $U_2=\sum_{n=0}^\infty H^{*n}$, where $H$ is the residual component of $F$, $H=F-G$. Then, $U_1$ is defined as $U_1 = G\ast U_2 \ast U$, which has density $u_1 = U_2\ast(U\ast g)$. 

Let us provide a simple example, related to the Poisson Process examples of Chapter \ref{chap:2}.
\begin{example}
\label{ex:Exp1}
Assume the interarrival distribution of a renewal process is $F\sim Exp(\lambda)$, i.e. absolutely continuous with density $f(x)=\lambda e^{-\lambda x},\:x\in[0,\infty)$. Clearly, since $F$ is already non-trivially absolutely continuous, it is also spread-out, with $n=1$ in definition \ref{def:SpreadOut}. This is true for every absolutely continuous distribution, not just the exponential.

Since $F$ is spread-out, by Lemma \ref{lemma:uniformComponent}, we would expect to find $m\in\mathbb{N}\setminus\{0\}$ such that $F^{*m}$ has a uniform component on $(a,a+b)$, for $a,b>0$. Indeed, for all $a\geq 0$ and $b>0$,
\[
\alpha \colon \!\!\!= \inf_{y\in(a,a+b)}f(y) = \lambda e^{-\lambda (a+b)} \leq f(x),\quad \forall\,x\in(a,a+b).
\]
This means that the density can be decomposed as follows. Define $\varepsilon\colon \!\!\!= \alpha b\in(0,1]$ and let for all $x\in[0,\infty)$
\[
f(x) = \frac{\varepsilon}{b}\mathbb{I}_{(a,a+b)}(x) + (1-\varepsilon)h(x),
\]
where $h$ is the \emph{residual} density defined by
\[
h(x) = \frac{f(x)-\alpha\mathbb{I}_{(a,a+b)}(x)}{1-\varepsilon}.
\]
Hence, $F$ is decomposed into a uniform component on $(a,a+b)$ and a \emph{residual} component, which implies that $m=1$. This decomposition is valid for every absolutely continuous (hence spread-out) distribution with continuous density. If the density is \emph{not} continuous, but the distribution is still absolutely continuous (so that $n=1$ in the spread-outness definition), we might need to apply the convolution to obtain the uniform component (meaning $m\geq1$).

Lastly, we  compute Stone's decomposition of the renewal measure $U$ for $F\sim Exp(\lambda)$. As shown in Example \ref{ex:PP2}, $U(\text{\emph{d}}x) = \delta_0(\text{\emph{d}}x)+\lambda \text{\emph{d}}x$. It is then clear that $U_1(\text{\emph{d}}x) = \lambda\text{\emph{d}}x$ has absolutely continuous density $u_1(x) = \lambda$ which converges to $1/\mu=\lambda$ as $x\rightarrow\infty$, and $U_2 = \delta_0$ is bounded.
\end{example}

At this point, we present the main consequences of having a spread-out interarrival distribution. The first is a modification of the Key Renewal Theorem \ref{thm:keyRen}, where the stronger assumption on $F$ permits us to weaken the d.R.i. assumption on $z$ \cite{Asmussen:Applied}.

\begin{theorem}[Key Renewal Theorem 2.0] 
    Let $z$ be bounded and Lebesgue integrable with $z(x)\rightarrow0$ as $x\rightarrow\infty$. If $F$ is spread-out, then
    \[
    U\ast z(x) \longrightarrow\frac{1}{\mu}\int_0^\infty z(t) \text{\emph{d}}t.
    \]
\end{theorem}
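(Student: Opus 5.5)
The plan is to use Stone's decomposition (Theorem \ref{thm:StoneDec}) to write
\[
U\ast z(x) = U_1\ast z(x) + U_2\ast z(x) = \int_0^x z(x-y)u_1(y)\,\text{d}y + \int_0^x z(x-y)\,U_2(\text{d}y),
\]
and then treat the two terms separately. The spread-out assumption on $F$ is exactly what makes Stone's decomposition available. It is also what lets us replace direct Riemann integrability of $z$ by plain boundedness, integrability and vanishing at infinity: the absolutely continuous part $U_1$ can be handled by dominated convergence rather than by a Riemann-sum argument.

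\textbf{The bounded part.} For the term involving $U_2$, we use that $\|U_2\|<\infty$ and that $z$ is bounded with $z(x)\to 0$. Write
\[
U_2\ast z(x)=\int_{[0,\infty)} z(x-y)\mathbb{I}(y\leq x)\,U_2(\text{d}y).
\]
For each fixed $y$ the integrand tends to $0$ as $x\to\infty$, since $x-y\to\infty$. It is dominated by $\|z\|_\infty$, which is integrable with respect to the finite measure $U_2$. Dominated convergence therefore gives $U_2\ast z(x)\to 0$.

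\textbf{The absolutely continuous part.} Substituting $s=x-y$, we get
\[
U_1\ast z(x)=\int_0^\infty z(s)\,u_1(x-s)\mathbb{I}(s\leq x)\,\text{d}s.
\]
For fixed $s$, $u_1(x-s)\to 1/\mu$ as $x\to\infty$ by Theorem \ref{thm:StoneDec}. Since $u_1$ is bounded, the integrand is dominated by $\|u_1\|_\infty |z(s)|$, which is Lebesgue integrable because $z\in L^1$. Dominated convergence then yields $U_1\ast z(x)\to \frac{1}{\mu}\int_0^\infty z(s)\,\text{d}s$. Adding the two limits gives the claim.

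\textbf{Remaining checks.} The steps above are essentially routine once Stone's decomposition is granted. The remaining care concerns the case $\mu=\infty$, where $1/\mu=0$; the same argument applies verbatim, since $u_1\to 0$. We also need to make sure that $U\ast z$ is well defined, which holds because $z$ is bounded, $U$ is finite on bounded intervals (Theorem \ref{thm:RenFunction}), and each decomposed term is finite. The genuine difficulty, namely producing a bounded continuous density $u_1$ with $u_1\to 1/\mu$, is entirely contained in Theorem \ref{thm:StoneDec}, which we take as given.
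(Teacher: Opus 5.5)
Your proposal is correct and follows the same route as the paper: Stone's decomposition $U=U_1+U_2$, then dominated convergence on each piece, with $u_1\to 1/\mu$ handling the absolutely continuous part and $\|U_2\|<\infty$ together with $z(x)\to 0$ making the bounded part vanish. You also spell out the dominating functions ($\|u_1\|_\infty|z|$ and $\|z\|_\infty$ against the finite measure $U_2$), which the paper's one-line proof leaves implicit.
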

\begin{proof}
    By the dominated convergence theorem,
    \[
    \begin{split}
        Z(x) = U\ast z(x) &= U_1\ast z(x) + U_2\ast z(x)\\
        &= \int_0^x z(y)u_1(x-y)\text{d}y + \int_0^x z(x-y)U_2(\text{d}y)\\
        &\rightarrow \int_0^\infty z(y) \frac{1}{\mu}\text{d}y + \int_0^\infty 0\cdot U_2(\text{d}y).
    \end{split}
    \]
\end{proof}

As can be seen from the proof, the previous theorem is a direct consequence of Stone's decomposition, which remarks the impact that such a tool has when dealing with spread-out distributions.

The next result guarantees total variation convergence of a regenerative process, provided the cycle length distribution is spread-out \cite{Asmussen:Applied}.
\begin{theorem}[TV convergence of regenerative processes] 
\label{thm:RegProcTVConv}
    Consider a regenerative process $\{Y(t)\}_{t\geq0}$ with cycle length distribution $F$ spread-out with finite mean $\mu<\infty$. Suppose that $Y(t,\omega)$ is measurable jointly in $(t,\omega)$. Then, no matter the initial conditions, the limiting distribution $\mathbb{P}_e$ of $Y(t)$ exists in the sense of total variation convergence and is given by
    \[
    \mathbb{E}_e[f(Y(t))] = \frac{1}{\mu} \mathbb{E}_0\left[ \int_0^X f(Y(s)) \text{\emph{d}}s \right].
    \]
\end{theorem}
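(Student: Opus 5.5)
The plan is to mirror the proof of Theorem \ref{thm:RegProcLimitDistribution}, replacing the Key Renewal Theorem \ref{thm:keyRen} by the Key Renewal Theorem 2.0, and to make every estimate uniform over test functions $0\le f\le 1$, so that the convergence holds in total variation. The first step is to reduce to the zero-delayed case. Condition on the delay $X_0$ with distribution $F_0^*$ (proper, since $X_0<\infty$ a.s.):
\[
\mathbb{E}[f(Y(t))] = \mathbb{E}[f(Y(t)),X_0>t] + \int_0^t Z_f(t-s)\,F_0^*(\text{d}s),
\]
where $Z_f(t)=\mathbb{E}_0[f(Y(t))]$. The first term is bounded by $\mathbb{P}(X_0>t)\to0$, uniformly in $f$. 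If we show $\sup_{0\le f\le1}|Z_f(t)-\mathbb{E}_e f|\to0$, then dominated convergence applied to the integral gives the general case, again uniformly in $f$.

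For the zero-delayed case, the renewal argument gives $Z_f=z_f+F\ast Z_f$ with $z_f(t)=\mathbb{E}_0[f(Y(t)),X>t]$. The joint measurability of $Y(t,\omega)$ ensures that $z_f$ is Borel measurable and that Fubini applies. Then $0\le z_f\le \bar F$, so $z_f$ is bounded, Lebesgue integrable because $\mu<\infty$, and tends to $0$. No d.R.i. property, and hence no right-continuity of paths, is needed. This is exactly why the spread-out assumption matters. By Theorem \ref{thm:RenFunction}, $Z_f=U\ast z_f$.

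The uniformity is then obtained from Stone's decomposition (Theorem \ref{thm:StoneDec}), $U=U_1+U_2$:
\[
Z_f(t)=\int_0^t z_f(t-y)U_2(\text{d}y)+\int_0^t z_f(y)u_1(t-y)\,\text{d}y .
\]
The first term is at most $\int \bar F(t-y)\mathbb{I}(y\le t)U_2(\text{d}y)$. This bound does not depend on $f$, and it tends to $0$ by dominated convergence, since $\|U_2\|<\infty$. For the second term,
\[
\Bigl|\int_0^t z_f(y)u_1(t-y)\,\text{d}y-\frac1\mu\int_0^\infty z_f\Bigr|\le \int_0^\infty \bar F(y)\Bigl|u_1(t-y)\mathbb{I}(y\le t)-\frac1\mu\Bigr|\text{d}y .
\]
This bound is again independent of $f$. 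It tends to $0$ by dominated convergence, because $u_1$ is bounded, $u_1(x)\to1/\mu$, and $\bar F\in L^1$.

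Finally, Fubini gives $\frac1\mu\int_0^\infty z_f=\frac1\mu\mathbb{E}_0\bigl[\int_0^X f(Y(s))\,\text{d}s\bigr]$. Taking $f=\mathbb{I}_A$ shows that this defines the probability measure $\mathbb{P}_e$, and taking the supremum over $A$ gives total variation convergence. The main obstacle is the uniformity over $f$. The approach above handles it by dominating every $z_f$ by the single function $\bar F$ inside the Stone decomposition. The only delicate technical point I anticipate is the measurability needed for Fubini, and this is what the joint measurability hypothesis supplies.
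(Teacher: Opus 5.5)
The paper does not prove this theorem itself; it quotes it from Asmussen. Your argument is correct. It follows exactly the scheme the paper uses for Corollary \ref{cor:ExpConvRegProc}:
\begin{itemize}
\item write $\mathbb{P}_0(Y(t)\in A)=U\ast z_A(t)$;
\item dominate every $z_A$ by the single function $\bar F$ to get uniformity in $A$;
\item handle the delay by conditioning on $X_0$.
\end{itemize}
The only difference is that you replace the rate statement of Theorem \ref{thm:exponentialRateRt} by the qualitative Stone-decomposition estimate from the proof of the Key Renewal Theorem 2.0. On the decisive point your version is more explicit than the paper's. Your two bounds $\int_{[0,t]}\bar F(t-y)\,U_2(\text{d}y)$ and $\int_0^\infty\bar F(y)\,|u_1(t-y)\mathbb{I}(y\le t)-1/\mu|\,\text{d}y$ visibly do not depend on $f$, whereas the paper leaves the ``uniformly in $A$'' implicit. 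You also correctly identify why spread-outness lets you drop the d.R.i.\ requirement, and with it the right-continuity and continuous-$f$ restrictions of Theorem \ref{thm:RegProcLimitDistribution}.

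Two small tidy-ups in the delayed step.
\begin{itemize}
\item Applying dominated convergence to $s\mapsto\sup_f|Z_f(t-s)-\mathbb{E}_e f|$ raises a measurability question, since it is a supremum over an uncountable family. Instead, dominate by the $f$-free sum $B(u)$ of your two Stone bounds at time $u$. This $B$ is measurable by Tonelli, bounded by $\|U_2\|+\mu\|u_1\|_\infty+1$, and tends to $0$. Alternatively, split the integral at $t-t_0$ and use $F_0^*((t-t_0,t])\to0$.
\item Keep track of the harmless term $\mathbb{E}_e f\,\mathbb{P}(X_0>t)$, which arises when you write $\mathbb{E}_e f=\int_{[0,t]}\mathbb{E}_e f\,F_0^*(\text{d}s)+\mathbb{E}_e f\,\mathbb{P}(X_0>t)$.
\end{itemize}

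The other route the paper's Chapter \ref{chap:4} points toward is coupling: Theorem \ref{thm:CouplingTheorem} plus the coupling inequality. That route would also give rates via moments of the coupling time. However, it requires coupling the whole cycle structure, not just the renewal epochs. Your analytic route needs only Stone's decomposition and the joint-measurability hypothesis.
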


Interestingly, $F$ being spread-out is also a necessary condition for total variation convergence \cite{Asmussen:Applied}.
\begin{lemma} 
\label{lemma:spreadOutEquivF0}
Let $\{R(t)\}_{t\geq0}$ be the forward recurrence time process of a renewal process with interarrival distribution $F$ with finite mean $\mu$. Define $G_t(x) = \mathbb{P}(R(t)\leq x)$, and $F_0$ to be the distribution with density $\bar{F}(x)/\mu$ w.r.t. Lebesgue measure. Then $G_t \rightarrow F_0$ in total variation for any distribution of the initial delay if and only if $F$ is spread-out.
\end{lemma}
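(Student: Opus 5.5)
Two directions need proving. For sufficiency, assume $F$ is spread-out. I will not build a coupling here. Instead I will use the renewal-equation machinery of Chapter \ref{chap:2} together with Stone's decomposition (Theorem \ref{thm:StoneDec}).

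First take the zero-delayed case. Fix a Borel set $A\subseteq[0,\infty)$ and condition on $X_1$ (the renewal argument of Example \ref{ex:TimeProcessesRenEq}). This gives
\[
\mathbb{P}_0(R(t)\in A)=U\ast z_A(t), \qquad z_A(t)=\mathbb{P}(X_1-t\in A,\,X_1>t).
\]
Splitting $U=U_1+U_2$, the $U_2$ part is bounded by $\int \bar F(t-y)\,U_2(\text{d}y)$. This tends to $0$ uniformly in $A$ by dominated convergence, since $\|U_2\|<\infty$ and $\bar F(s)\to0$.

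For the $U_1$ part I write
\[
\int_0^t z_A(s)\,u_1(t-s)\,\text{d}s \quad\text{and compare it with}\quad \frac1\mu\int_0^\infty z_A(s)\,\text{d}s=F_0(A),
\]
where the last identity is Fubini, exactly as in the computation of $\mathbb{P}_e(R\le\xi)$ at the end of Chapter \ref{chap:3}. The difference is at most
\[
\int_0^\infty \bar F(s)\,\bigl|u_1(t-s)\mathbb{I}(s\le t)-1/\mu\bigr|\,\text{d}s .
\]
This tends to $0$ by dominated convergence, because $u_1$ is bounded, $u_1(x)\to1/\mu$, and $\bar F$ is integrable since $\mu<\infty$. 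The bound is uniform in $A$, so $\|G_t-F_0\|\to0$.

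For a general delay I condition on $X_0$. This gives $G_t(A)=\mathbb{P}(X_0-t\in A,X_0>t)+\int_0^t \mathbb{P}_0(R(t-s)\in A)\,F_0^*(\text{d}s)$, where $F_0^*$ is the delay distribution. The first term is at most $\mathbb{P}(X_0>t)\to0$. The second converges to $F_0(A)$ uniformly in $A$ by dominated convergence applied to the zero-delayed sup-bound.

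For necessity, assume $G_t\to F_0$ in total variation for every initial delay, and suppose for contradiction that $F$ is not spread-out. Then no $F^{*n}$ has an absolutely continuous component, so each $F^{*n}$ is singular with respect to Lebesgue measure. Take the zero-delayed process. The law of $R(t)$ is a mixture, over $n$, of the laws of $T_{n}-t$ on the events $\{T_{n-1}\le t<T_n\}$. More precisely, $G_t(\cdot)=\sum_n \mathbb{P}(T_n-t\in\cdot,\,T_{n-1}\le t<T_n)$. Each summand is dominated by the translate of $F^{*n}$, which is singular, so there is a Lebesgue-null set $N_t$ carrying all of $G_t$.

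Since $F_0$ has density $\bar F/\mu$, it is absolutely continuous, so $F_0(N_t)=0$. Hence $\|G_t-F_0\|\ge G_t(N_t)-F_0(N_t)=1$ for every $t$, which contradicts the assumed convergence. This argument uses only the zero-delayed process, so the hypothesis "for any initial delay" is more than is needed.

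The main obstacle is the necessity step: checking that "not spread-out" really means every $F^{*n}$ is purely singular. This follows from the Lebesgue decomposition, since a nonzero absolutely continuous part would itself be a component with a density. The other point to check is that the countable union of the null supports remains null, which holds by countable additivity. The remaining delicate point in sufficiency is the uniformity in $A$, and it is handled by bounding $z_A$ by $\bar F$ independently of $A$.
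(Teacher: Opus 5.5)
The paper does not prove this lemma; it only cites \cite{Asmussen:Applied}, so there is no in-text argument to match. Judged on its own, your proof is correct in both directions.

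Your sufficiency half is the paper's proof of the Key Renewal Theorem 2.0 (Stone's decomposition, Theorem \ref{thm:StoneDec}, plus dominated convergence), applied to $z_A(t)=\mathbb{P}(X_1-t\in A,\,X_1>t)$. The step that upgrades pointwise convergence to total variation is exactly the one you isolate. You have $0\le z_A\le\bar F$, and $\bar F$ is integrable (since $\mu<\infty$) and independent of $A$. Hence both error terms are dominated by a single $A$-free bound $\beta(t)\to0$. In the delayed case, run dominated convergence on this explicit, measurable, bounded $\beta(t-s)$ rather than on $\sup_A|\mathbb{P}_0(R(t-s)\in A)-F_0(A)|$; this avoids any measurability question about the supremum.

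The route Chapter 4 would suggest is coupling, via Theorem \ref{thm:CouplingTheorem} and \eqref{eq:couplingConvDist}. Your argument bypasses the coupling construction and Lemma \ref{lemma:UniformComponentRt}. The price is that it rests on Stone's decomposition, whose proof the paper also omits.

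Your necessity half proves more than the lemma states. If $F$ is not spread-out, then with zero delay $G_t$ is carried by a Lebesgue-null set, so $\|G_t-F_0\|=1$ for every $t$. The ($\Rightarrow$) direction of Theorem \ref{thm:CouplingTheorem} actually needs this stronger form. There, a coupling only yields convergence for the zero-delayed process coupled to the stationary one, not for every initial delay.

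Some restriction of this kind is unavoidable. Take $F=\delta_1$ with delay distributed as $F_0=\mathcal{U}(0,1)$; then $G_t\equiv F_0$ for all $t$. So convergence for one arbitrary delay cannot force spread-outness, and it is the zero-delayed case that carries the argument.
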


\section{Coupling}
The term coupling is used in the literature in two different ways: in a broad and in a narrow (more classical) sense \cite{Asmussen:Applied}. In this thesis we will mainly focus on the latter, but, for the sake of completeness, we will also define the former.
\begin{definition}[Coupling, broad sense] 
    A \emph{coupling of two probability distributions} $\mathbb{P}',\mathbb{P}''$ on $(\Omega', \mathcal{F'})$ and $(\Omega'', \mathcal{F''})$, respectively, is defined as a probability distribution $\mathbb{P}$ on $(\Omega, \mathcal{F})=(\Omega'\times \Omega'', \mathcal{F'}\otimes \mathcal{F}'')$ having marginals $\mathbb{P}'$ and $\mathbb{P}''$.

    We say \emph{a coupling of} $X',X''$, where $X',X''$ are random variables, to denote a pair $(\tilde{X}',\tilde{X}'')$ of random variables defined on a common probability space such that $\tilde{X}'\stackrel{\mathscr{D}}{=} X'$ and $\tilde{X}''\stackrel{\mathscr{D}}{=} X''$. For ease of notation, one can omit the tilde, which simply means the random variables have been redefined on a common probability space without changing the marginals.
\end{definition}

\begin{definition}[Coupling, narrow sense] 
    In the narrow sense, coupling refers to two stochastic processes $\{Y'(t)\}_{t\in\mathbb{T}}, \{Y''(t)\}_{t\in\mathbb{T}}$ with the same state space $E$ and an associated random time $T\in\mathbb{T}$ such that
    \begin{equation}
    \label{eq:couplingTime}
        Y'(t)=Y''(t),\quad \text{\emph{for all }} t\geq T.
    \end{equation}
\end{definition}

Let us analyze more in detail how coupling is useful to the theory of convergence of regenerative processes. We start with a proposition related to coupling in broad sense \cite{Lindvall:Coupling,Asmussen:Applied}.
\begin{proposition} 
    Let $X',X''$ be random variables taking values in the same state space $E$ and defined on a common probability space. Then,
    \begin{equation}
        \|\mathbb{P}(X'\in\cdot)-\mathbb{P}(X''\in\cdot)\| \leq \mathbb{P}(X' \neq X''),
    \end{equation}
    where $\|\mathbb{P}(X'\in\cdot)-\mathbb{P}(X''\in\cdot)\| = \sup_{A}|\mathbb{P}(X'\in A)-\mathbb{P}(X''\in A)|$ represents the total variation distance.
\end{proposition}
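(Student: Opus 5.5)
The plan is the classical coupling inequality argument. Fix an arbitrary measurable set $A\subseteq E$ and split both probabilities according to whether the two random variables coincide or not. Since $X'$ and $X''$ live on a common probability space, we can write
\[
\mathbb{P}(X'\in A) = \mathbb{P}(X'\in A,\, X'=X'') + \mathbb{P}(X'\in A,\, X'\neq X''),
\]
and similarly
\[
\mathbb{P}(X''\in A) = \mathbb{P}(X''\in A,\, X'=X'') + \mathbb{P}(X''\in A,\, X'\neq X'').
\]
The key observation is that on the event $\{X'=X''\}$ the events $\{X'\in A\}$ and $\{X''\in A\}$ coincide. Hence the first terms in the two decompositions are equal and cancel when we subtract.

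The difference therefore reduces to
\[
\mathbb{P}(X'\in A)-\mathbb{P}(X''\in A) = \mathbb{P}(X'\in A,\, X'\neq X'') - \mathbb{P}(X''\in A,\, X'\neq X'').
\]
This is a difference of two numbers, each lying in $[0,\mathbb{P}(X'\neq X'')]$. Its absolute value is therefore at most $\mathbb{P}(X'\neq X'')$. The bound does not depend on $A$, so taking the supremum over all measurable $A$ gives the claimed inequality for the total variation distance as defined in the statement.

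I expect the only mild obstacle to be measurability: the event $\{X'=X''\}$ must be measurable. This holds whenever the diagonal of $E\times E$ is measurable, for instance when $E$ is a separable metric space with its Borel $\sigma$-algebra, which covers all state spaces used in the thesis. I would note this assumption explicitly; alternatively, one can read $\mathbb{P}(X'\neq X'')$ as an outer probability, which leaves the argument unchanged. No earlier results of the paper are needed, as the statement is purely measure-theoretic.
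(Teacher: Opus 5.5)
Your proof is correct and matches the paper's argument: you split each probability on $\{X'=X''\}$ and its complement, cancel the matching terms, bound the remaining difference by $\mathbb{P}(X'\neq X'')$, and take the supremum over $A$. Your remark on the measurability of the diagonal (or reading $\mathbb{P}(X'\neq X'')$ as an outer probability) is a point of rigor the paper leaves implicit.
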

\begin{proof}
    We have, for $A\subset E$,
    \[
    \begin{split}
        |\mathbb{P}(X'\in A) - \mathbb{P}(X''\in A)| &= |\mathbb{P}(X'\in A, X'=X'') + \mathbb{P}(X'\in A, X'\neq X'')\\
        &\quad - \mathbb{P}(X''\in A,X'=X'') - \mathbb{P}(X''\in A, X'\neq X'')|\\
        &= |\mathbb{P}(X'\in A,X'\neq X'') - \mathbb{P}(X''\in A, X'\neq X'')|\\
        &\leq \mathbb{P}(X'\neq X'').
    \end{split}
    \]
    We conclude the proof by taking the supremum over $A$.
\end{proof}

From this proposition one can derive a crucial inequality, called \emph{coupling inequality}. We show that there are actually two versions of this result, one slightly stronger than the other \cite{Lindvall:Coupling,Asmussen:Applied}.
\begin{corollary}[Coupling Inequality 1] 
    Let $Y'=\{Y'(t)\}_{t\in\mathbb{T}}$ and $Y''=\{Y''(t)\}_{t\in\mathbb{T}}$ be two stochastic processes defined on the same probability space. If there is a random time $T$ such that \eqref{eq:couplingTime} holds, then
    \begin{equation}
        \label{eq:couplingIneq1}
        \|\mathbb{P}(Y'(t) \in \cdot)-\mathbb{P}(Y''(t)\in \cdot)\| \leq \mathbb{P}(T>t).
    \end{equation}
\end{corollary}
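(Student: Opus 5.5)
The plan is to reduce the statement to the preceding proposition, which bounds the total variation distance between the laws of two random variables defined on a common probability space by the probability that they differ. Fix $t\in\mathbb{T}$ and apply that proposition to the pair $X'=Y'(t)$ and $X''=Y''(t)$. These are random elements of the common state space $E$, defined on the same probability space because $Y'$ and $Y''$ are. The proposition then gives
\[
\|\mathbb{P}(Y'(t)\in\cdot)-\mathbb{P}(Y''(t)\in\cdot)\| \leq \mathbb{P}(Y'(t)\neq Y''(t)).
\]

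The second step is an event inclusion coming from the coupling property \eqref{eq:couplingTime}. On the event $\{T\leq t\}$ we have $t\geq T$, so $Y'(t)=Y''(t)$. Hence $\{Y'(t)\neq Y''(t)\}\subseteq\{T>t\}$, and monotonicity of $\mathbb{P}$ yields $\mathbb{P}(Y'(t)\neq Y''(t))\leq\mathbb{P}(T>t)$. Combining this with the first display gives \eqref{eq:couplingIneq1}, and since $t$ was arbitrary the bound holds for every $t$.

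The only point needing care, and the one I regard as the main obstacle, is measurability. The sets $\{Y'(t)\neq Y''(t)\}$ and $\{T>t\}$ must be events. For $\{T>t\}$ this holds because $T$ is a random time, hence measurable. For the diagonal event, I would assume (as is implicit in the paper's setting) that $E$ is a metric, or at least separable, space, so that $\{(x,y):x\neq y\}$ is measurable in $E\times E$. If one prefers to avoid this assumption, the argument still goes through with outer probabilities, or more simply by working directly on $\{T\leq t\}$. For any measurable $A\subseteq E$, split
\[
\mathbb{P}(Y'(t)\in A)-\mathbb{P}(Y''(t)\in A)=\mathbb{P}(Y'(t)\in A,T>t)-\mathbb{P}(Y''(t)\in A,T>t),
\]
since the contributions on $\{T\leq t\}$ cancel because $Y'(t)=Y''(t)$ there. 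The absolute value of the right-hand side is at most $\mathbb{P}(T>t)$, and taking the supremum over $A$ concludes the proof.
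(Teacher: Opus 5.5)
Your argument is correct and is exactly the paper's proof: apply the preceding proposition to $Y'(t)$ and $Y''(t)$, then use the inclusion $\{Y'(t)\neq Y''(t)\}\subset\{T>t\}$. Your alternative direct argument is also valid. It cancels, for each measurable $A$, the identical events $\{Y'(t)\in A,\,T\le t\}$ and $\{Y''(t)\in A,\,T\le t\}$, so it never needs $\{Y'(t)\neq Y''(t)\}$ to be an event; for that event to be measurable one really wants a separable metric state space, not merely a metric one.
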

\begin{proof}
    The result trivially holds by applying the previous proposition and by noting that
    \[
    \{Y'(t)\neq Y''(t)\} \subset \{T>t\}.
    \]
\end{proof}

\begin{corollary}[Coupling Inequality 2] 
    Let $Y'=\{Y'(t)\}_{t\in\mathbb{T}}$ and $Y''=\{Y''(t)\}_{t\in\mathbb{T}}$ be two stochastic processes defined on the same probability space, and let $\theta_t$ be the shift, i.e. $(\theta_tY')_s=Y'_{t+s}$. Assume there is a random time $T$ such that \eqref{eq:couplingTime} holds, then
    \begin{equation}
        \label{eq:couplingIneq2}
        \|\mathbb{P}(\theta_tY' \in \cdot) - \mathbb{P}(\theta_tY'' \in \cdot)\| \leq \mathbb{P}(T>t),
    \end{equation}
    where $\theta_tY'$ and $\theta_tY''$ denote the \emph{whole} shifted processes.
\end{corollary}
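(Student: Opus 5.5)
The plan is to reduce Coupling Inequality 2 to the Proposition above, with the whole shifted paths playing the role of the random variables $X'$ and $X''$. For fixed $t\in\mathbb{T}$, set $X'=\theta_tY'$ and $X''=\theta_tY''$. These are random elements of the path space $E^{\mathbb{T}}$, equipped with the product $\sigma$-algebra generated by the coordinate maps, and they live on the common probability space on which $Y'$, $Y''$ and $T$ are defined. Measurability is immediate: each coordinate $(\theta_tY')_s=Y'(t+s)$ is a random variable, so $\theta_tY'$ is measurable with respect to the product $\sigma$-algebra, and the same holds for $\theta_tY''$.

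The Proposition was proved for an arbitrary measurable state space $E$. Its proof only splits the event $\{X'\in A\}$ according to whether $X'=X''$, which works for any measurable $A$. We therefore apply it with state space $E^{\mathbb{T}}$ and obtain
\[
\|\mathbb{P}(\theta_tY'\in\cdot)-\mathbb{P}(\theta_tY''\in\cdot)\|\leq \mathbb{P}(\theta_tY'\neq\theta_tY'').
\]

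It remains to show the event inclusion $\{\theta_tY'\neq\theta_tY''\}\subset\{T>t\}$. On $\{T\leq t\}$, property \eqref{eq:couplingTime} gives $Y'(t+s)=Y''(t+s)$ for every $s\geq 0$, because $t+s\geq T$. Hence the two shifted paths agree coordinatewise, which means $\theta_tY'=\theta_tY''$ as elements of $E^{\mathbb{T}}$. Taking complements gives the inclusion, and monotonicity of $\mathbb{P}$ yields \eqref{eq:couplingIneq2}.

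The only delicate point is that $\{\theta_tY'\neq\theta_tY''\}$ may fail to be an event, because in continuous time it involves uncountably many coordinates. The fix is to note that the proof of the Proposition never needs this set to be measurable. For each measurable $A\subset E^{\mathbb{T}}$ we can write
\[
|\mathbb{P}(\theta_tY'\in A)-\mathbb{P}(\theta_tY''\in A)|=\bigl|\mathbb{E}\bigl[(\mathbb{I}(\theta_tY'\in A)-\mathbb{I}(\theta_tY''\in A))\,\mathbb{I}(T>t)\bigr]\bigr|\leq\mathbb{P}(T>t).
\]
The middle equality holds because the integrand vanishes on $\{T\leq t\}$, where the two paths coincide. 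This uses only measurability of $\{T>t\}$, which holds since $T$ is a random time. Taking the supremum over $A$ gives \eqref{eq:couplingIneq2} directly. Finally, Coupling Inequality 1 is recovered as a special case: the map from a path to its value at time $0$ is measurable, and total variation distance cannot increase under a measurable map.
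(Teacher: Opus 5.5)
Your proposal is correct and follows the paper's argument: the paper also applies the Proposition to the whole shifted paths $\theta_tY'$ and $\theta_tY''$, noting that $T$ is still a coupling time for the shifted processes, so that $\{\theta_tY'\neq\theta_tY''\}\subset\{T>t\}$. Your extra step bounds $|\mathbb{P}(\theta_tY'\in A)-\mathbb{P}(\theta_tY''\in A)|$ directly through $\mathbb{I}(T>t)$, which avoids the possible non-measurability of $\{\theta_tY'\neq\theta_tY''\}$ in continuous time, a point the paper does not address.
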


The proof of the second inequality \eqref{eq:couplingIneq2} is equivalent to the one for the first. Indeed, if $T$ is a coupling time for $\{Y'(t)\}_{t\in\mathbb{T}}$ and $\{Y''(t)\}_{t\in\mathbb{T}}$, it is also a coupling time for the shifted processes \cite{Lindvall:Coupling}.

The stronger inequality is the one regarding the whole shifted processes. We can show that, if it holds, then also \eqref{eq:couplingIneq1} does. Indeed,
\[
\begin{split}
\|\mathbb{P}(Y'(t)\in \cdot)-\mathbb{P}(Y''(t)\in \cdot)\| &= \|\mathbb{P}((\theta_tY')_0\in \cdot)-\mathbb{P}((\theta_tY'')_0\in \cdot)\|\\
&\stackrel{(*)}{\leq} \|\mathbb{P}(\theta_tY'\in \cdot)-\mathbb{P}(\theta_tY''\in \cdot)\|\\
&\leq \mathbb{P}(T>t),
\end{split}
\]
where $(*)$ holds because, by applying a measurable mapping, we cannot increase the total variation distance \cite{Lindvall:Coupling}.

The coupling inequalities have two main applications: they can be used to show convergence in distribution of $Y(t)$ as $t\rightarrow\infty$ (see Remark \ref{rmk:couplingConvDist}) and also to obtain estimates for the rate of convergence (see Remark \ref{rmk:couplingRate}).

\begin{remark}
\label{rmk:couplingConvDist}
For the sake of the example, let us consider a renewal setting. Hence, the goal is to show convergence in distribution of a renewal process $\{T_n\}$ with interarrival distribution $F$. We let $\{T'_n\}$ be stationary, i.e. started by letting the delay $X_0$ have the stationary distribution $F_0$ (c.f. Corollary \ref{cor:StatDelay}), and $\{T''_n\}$ be zero-delayed. Also consider the associated forward recurrence time processes, $\{R'(t)\}_{t\geq0}$ and $\{R''(t)\}_{t\geq0}$. If a coupling with $T<\infty$ can be constructed, we have $\mathbb{P}(T>t)\rightarrow0$ and, in an obvious notation
\begin{equation}
    \label{eq:couplingConvDist}
    \|\mathbb{P}(R(t)\in \cdot)-F_0(\cdot)\| \leq \|\mathbb{P}(R''(t)\in\cdot)-\mathbb{P}(R'(t)\in \cdot)\|\leq \mathbb{P}(T>t) \longrightarrow0,
\end{equation}
as $t\rightarrow\infty$.

We shall say that the process \emph{admits coupling} if there exist two stochastic processes $\{R'(t)\},\{R''(t)\}$ defined on the same probability space such that the two processes have the same interarrival distribution and \eqref{eq:couplingTime} holds for some finite random time $T<\infty$.
\end{remark}

\begin{remark}
\label{rmk:couplingRate}
On top of the discussion in Remark \ref{rmk:couplingConvDist}, if one shows that $T$ can be chosen with $\mathbb{E}[\varphi(T)]<\infty$ for some nonnegative function $\varphi$ increasing to $\infty$ (usually $\varphi(t)=t^p$ or $\varphi(t)=e^{\varepsilon t}$), then
\begin{equation}
    \label{eq:couplingRate}
    \|\mathbb{P}(R(t)\in \cdot)- F_0(\cdot)\| \stackrel{(*)}{\leq} \mathbb{P}(T>t) \stackrel{(**)}{\leq} \frac{1}{\varphi(t)}\mathbb{E}[\varphi(T)] = O\left(\frac{1}{\varphi(t)}\right),
\end{equation}
where $(*)$ represents the calculations done in Remark \ref{rmk:couplingConvDist}, and $(**)$ holds because
\[
\varphi(T) \geq \varphi(T)\,\mathbb{I}(T>t) \geq \varphi(t)\,\mathbb{I}(T>t) \implies\mathbb{E}[\varphi(T)] \geq \varphi(t)\, \mathbb{P}(T>t).
\]
The convergence rates obtained with this method are not necessarily the best possible \cite{Asmussen:Applied}.
\end{remark}

\section{Exponential Convergence of Regenerative Processes}
This section is devoted to the proof of the theorem regarding the geometric convergence (in total variation) of regenerative processes to their limiting distribution. We start with a lemma that is derived from Lemma \ref{lemma:uniformComponent} \cite{Asmussen:Applied}.
\begin{lemma} 
\label{lemma:UniformComponentRt}
    For a zero-delayed spread-out renewal process, there exist $A,b$ such that the distributions of $R(t)$, for $t\geq A$,have a common uniform component on $(0,b)$. That is, for some $\delta\in (0,1)$ and all $t\geq A$,
    \[
    \mathbb{P}(u<R(t)\leq b) \geq \delta\frac{v-u}{b},\quad 0<u<v<b.
    \]
\end{lemma}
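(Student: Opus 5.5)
We will combine Lemma \ref{lemma:uniformComponent} with Blackwell's Renewal Theorem \ref{thm:BlackwellRen}, through a renewal-type decomposition of the event $\{u<R(t)\leq v\}$. By Lemma \ref{lemma:uniformComponent} there are $m$ and $a,b_0>0$ such that $F^{*m}\geq \varepsilon\,\mathcal{U}(a,a+b_0)$ as measures, for some $\varepsilon>0$. That is, $F^{*m}(\text{d}x)\geq (\varepsilon/b_0)\mathbb{I}_{(a,a+b_0)}(x)\,\text{d}x$. We set $b=b_0/2$ and consider, for each renewal index $n$, the event that the $m$ subsequent interarrivals together land in a prescribed window.

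The key identity is the following. For $0<u<v<b$, $R(t)\in(u,v]$ holds whenever some renewal $T_n$ satisfies $T_n\leq t<T_{n+m}$, with $T_{n+m}\in(t+u,t+v]$ and no renewal among $T_{n+1},\dots,T_{n+m-1}$ in $(t,t+u]$. To avoid the last nuisance, we restrict instead to the single last renewal $T_{N(t)-1}=t-E(t)$ and use $m=1$ directly, if possible. Concretely, we first aim for the cleaner route. Condition on the renewal $T_{N(t)-1}\in\text{d}s$, which has law $U(\text{d}s)\bar F(t-s)$-type weights. Then
\[
\mathbb{P}(u<R(t)\leq v)=\int_{[0,t]} \bigl(F(t-s+v)-F(t-s+u)\bigr)\,U(\text{d}s),
\]
which is exactly $U\ast z_R$ evaluated at differences. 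If $F$ itself ($m=1$) has density at least $\varepsilon/b_0$ on $(a,a+b_0)$, then for $t-s\in(a,a+b)$ the integrand is at least $\varepsilon(v-u)/b_0$. Hence
\[
\mathbb{P}(u<R(t)\leq v)\geq \frac{\varepsilon(v-u)}{b_0}\,\bigl(U(t-a)-U(t-a-b)\bigr),
\]
and Blackwell's theorem (the spread-out $F$ is nonlattice) gives $U(t-a)-U(t-a-b)\to b/\mu>0$. So for $t\geq A$ this increment exceeds $b/(2\mu)$, and we get the claim with $\delta=\varepsilon b/(2\mu b_0)\cdot b$ after normalising to the form $\delta(v-u)/b$. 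We then shrink $\delta$ to lie in $(0,1)$.

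For general $m$, we replace the single-step argument by an $m$-step one. We write $\{u<R(t)\leq v\}\supseteq\bigcup_n\{T_n\in(t-a-b,t-a],\ T_{n+m}-T_n\in(t-T_n+u,t-T_n+v],\ \text{no renewal of } T_{n+1..n+m-1}\text{ in }(T_n,t+u]\}$. The obstacle is the intermediate renewals, together with disjointness over $n$. To handle them, we would refine Lemma \ref{lemma:uniformComponent}'s construction so that the uniform component comes from paths where $X_{n+1}+\dots+X_{n+m-1}\leq t-T_n$. Alternatively, and preferably, we choose the component of $F^{*m}$ supported on paths with all partial sums outside $(t,t+b]$. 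Failing that, we use the standard trick of working with the $m$-skeleton renewal process $T'_k=T_{km+j}$, whose interarrival $F^{*m}$ has a genuine uniform component. We then obtain its residual-time component and note that $R(t)\leq R'(t)$ only with equality on a positive-probability sub-event controlled by the single-step estimate above. This reduction to $m=1$ is the step I expect to require the most care; the Blackwell lower bound on $U$-increments is the routine part.
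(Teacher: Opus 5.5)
The paper states this lemma without proof (it is quoted from Asmussen), so I assess your argument on its own.

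Your $m=1$ case is correct. The identity $\mathbb{P}(u<R(t)\le v)=\int_{[0,t]}\bigl(F(t-s+v)-F(t-s+u)\bigr)\,U(\text{d}s)$ holds. For $t-s\in[a,a+b)$ the window $(t-s+u,t-s+v]$ lies in $(a,a+b_0)$. Blackwell's theorem then gives $U(t-a)-U(t-a-b)\ge b/(2\mu)$ for large $t$. You are using $\mu<\infty$ here and should say so: for $\mu=\infty$ the lemma is false, since $R(t)\to\infty$ in probability.

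The gap is the case $m\ge2$, which Lemma \ref{lemma:uniformComponent} does not let you exclude. For example, $F$ with density proportional to the indicator of a fat Cantor set in $(1,2)$, or a shifted singular spread-out law as in Appendix \ref{chap:NonTrivialSpreadOut}, has no uniform component itself. For this case you only list strategies, and none works as stated. A uniform component of $F^{*m}$ carries no information about where the intermediate partial sums $X_{n+1}+\dots+X_{n+j}$, $j<m$, fall, so it cannot be ``restricted'' to paths avoiding $(t,t+u]$. For the skeleton $T'_k=T_{km+j}$ you only have $R(t)\le R'(t)$. A lower bound on $\mathbb{P}(u<R'(t)\le v)$ therefore does not transfer to $R(t)$. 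The single-step estimate you would invoke to control $\{R(t)=R'(t)\}$ needs $F$ itself to have a uniform component, which is precisely what fails.

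The missing idea is to let the renewal measure, not $F$, carry the Lebesgue minorization in your identity. It suffices that $U(\text{d}s)\ge\kappa\,\text{d}s$ on some $[A_0,\infty)$ for some $\kappa>0$. Stone's decomposition (Theorem \ref{thm:StoneDec}) gives this with $\kappa=1/(2\mu)$, since $U\ge U_1$ and $u_1\to1/\mu$. Without Stone, note that $U\ge F^{*m}\ast U^{(m)}$ with $U^{(m)}=\sum_{k\ge0}F^{*km}$. This dominates a measure with density $\frac{\varepsilon}{b_0}U^{(m)}\bigl((x-a-b_0,x-a)\bigr)$, which tends to $\varepsilon/(m\mu)$ by Blackwell's theorem for the nonlattice law $F^{*m}$, so $\kappa=\varepsilon/(2m\mu)$ works. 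Then fix $b>0$ and $K$ with $p:=\mathbb{P}(b\le X\le K)>0$, which is possible since $F(0)=0$. For $t\ge A_0+K$ and $0<u<v<b$,
\[
\mathbb{P}(u<R(t)\le v)\;\ge\;\kappa\int_0^K\bigl(F(y+v)-F(y+u)\bigr)\,\text{d}y\;\ge\;\kappa\,p\,(v-u),
\]
because for $X\in[b,K]$ the set $\{y\in[0,K]:y+u<X\le y+v\}=[X-v,X-u)$ has length $v-u$. This needs only a mass condition on $F$ and covers every $m$ at once. Your $m=1$ computation is the mirror image, with the density on $F$ and only mass on $U$.
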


Let us provide a (trivial) example of this.
\begin{example}
    Recall the setting of Examples \ref{ex:PP1}, \ref{ex:PP2}, \ref{ex:PP3}, \ref{ex:PP4}, and \ref{ex:PP5}. We consider a zero-delayed renewal process with interarrival distribution $F\sim Exp(\lambda)$, which is spread-out, and has finite mean $\mu=1/\lambda$.

    As was shown in the related examples, $R(t)$ has $Exp(\lambda)$ distribution for all $t\in[0,\infty)$. We also showed in Example \ref{ex:Exp1} that every absolutely continuous distribution with continuous density $f$ admits has a uniform component on $(a,a+b)$ for given $a,b>0$. In the specific case of the $Exp(\lambda)$ distribution, $a$ can be taken equal to 0.

    Since $R(t)$ is stationary, with $Exp(\lambda)$ distribution, the laws of $R(t)$ have a common uniform component on $(0,b)$ for all $t\geq 0$ and $b>0$.
\end{example}

\begin{remark}
\label{rmk:AlsoForDelayed}
    The previous Lemma \ref{lemma:UniformComponentRt} is stated for zero-delayed processes. However, it only depends on the properties of the interarrival distribution $F$. Moreover, it is not a result regarding the initial delay, since the statement only concerns big enough times ($t\geq A$). Thus, the Lemma holds also for delayed processes, provided their interarrival distribution is spread-out (since after the first renewal, their law only depends on $F$).
\end{remark}

In the following, we adopt the convention that a renewal process admits coupling if $\{R(t)\}_{t\geq0}$ does so. The next theorem can be found in \cite[Theorem VII.2.7]{Asmussen:Applied}.
\begin{theorem}[Coupling of renewal process] 
\label{thm:CouplingTheorem}
    A nonlattice renewal process with finite interarrival mean $\mu<\infty$ admits coupling if and only if $F$ is spread-out.
\end{theorem}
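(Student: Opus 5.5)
The two directions are of very different difficulty. I treat the necessity direction first, since it is almost immediate from results already stated.

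\emph{Necessity.} Suppose the renewal process admits coupling, with coupling time $T<\infty$ a.s. Take one copy stationary (delay distribution $F_0$, cf. Corollary \ref{cor:StatDelay}) and the other with an arbitrary initial delay. By Coupling Inequality 1, as in Remark \ref{rmk:couplingConvDist}, the law $G_t$ of $R(t)$ satisfies $\|G_t-F_0\|\le \mathbb{P}(T>t)\to 0$. The hypothesis gives this for every initial delay, so Lemma \ref{lemma:spreadOutEquivF0} forces $F$ to be spread-out.

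\emph{Sufficiency.} Assume $F$ is spread-out with $\mu<\infty$. I build the two processes $R'$ (arbitrary delay) and $R''$ (stationary delay) on one probability space as follows.

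\begin{enumerate}
\item By Lemma \ref{lemma:UniformComponentRt} and Remark \ref{rmk:AlsoForDelayed}, there are $A,b>0$ and $\delta\in(0,1)$ such that, for every $t\ge A$ and every delay distribution, the law of $R(t)$ dominates $\delta$ times the uniform distribution on $(0,b)$.
\item Run the two processes independently up to a time $S_1\ge A$ (initially $S_1=A$). Conditionally on the past at $S_1$, each forward recurrence time has a $\delta\,\mathcal{U}(0,b)$ component.
\item Use the standard splitting construction: with probability $\delta$ each, draw the residual time from the uniform component, otherwise from the normalized remainder. The two coins are independent, so both select the uniform part with probability $\delta^2$, and in that case I set $R'(S_1)=R''(S_1)=V$ with a single $V\sim\mathcal{U}(0,b)$. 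The marginals are preserved, since for each process the mixture of the uniform and remainder parts is exactly its correct conditional law.
\item After the common renewal at $S_1+V$, feed both processes the same interarrival sequence. Then $R'(t)=R''(t)$ for all $t\ge S_1+V$.
\item If the attempt fails, wait until both processes have experienced a renewal after $S_1$, say by time $\tau$. Set $S_2=\tau+A$ and retry. The uniform-component statement applies afresh, because after a renewal each process is a delayed renewal process with interarrival distribution $F$ (Remark \ref{rmk:AlsoForDelayed}).
\end{enumerate}

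Each attempt succeeds with probability at least $\delta^2$ independently of the past, so the number of attempts is stochastically dominated by a geometric random variable with parameter $\delta^2$.

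\emph{Main obstacle.} The delicate step is showing that each attempt ends in finite time and that the independent-trial argument is legitimate. Each failed attempt lasts at most $A$ plus the residual times $R'(S_k)$ and $R''(S_k)$. These are a.s.\ finite because $\mu<\infty$: the stationary residual law $F_0$ is proper, and before regeneration the residuals are bounded by interarrival times. Hence $T\le S_N+b$ is a finite sum of a geometric number of a.s.\ finite pieces, so $T<\infty$ a.s.

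The validity of the conditional uniform component at the random times $S_k$ follows from the strong regeneration property at the renewal times preceding $S_k$, combined with Lemma \ref{lemma:UniformComponentRt} applied from those renewal epochs. I would check this carefully, since it is what makes the successive trials genuinely independent.
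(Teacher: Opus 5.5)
Your necessity argument and the skeleton of your sufficiency argument are the paper's: attempts made at least $A$ after a renewal of each process, a common $\delta\,\mathcal{U}(0,b)$ piece of the two residual lives, a geometric number of attempts, then splicing. Using two independent coins (success probability $\delta^2$) instead of the paper's single shared coin $U_k$ (success probability $\delta$) is harmless.

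\textbf{The gap (steps 2, 3 and 5).} You run both processes up to $S_k$ and then split the law of the residual life \emph{conditionally on the past at $S_k$}. Lemma \ref{lemma:UniformComponentRt} gives no such conditional statement. Given the path up to $S_k$, the residual has law $\mathbb{P}(X-a\in\cdot\mid X>a)$, where $a=E(S_k)$ is the current age. This law need not charge a neighbourhood of $0$. For example, take $F=\mathcal{U}(1,2)$ and $a<1$. Then the law is uniform on $(1-a,2-a)$, so it dominates no multiple of $\mathcal{U}(0,b)$. Moreover, a zero-delayed process has age below $1$ at any time $s>1$ with positive probability, so this is not a null event. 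The uniform component in the lemma is a property of the law of $R(s)$ \emph{averaged over the unobserved renewals} in a window of length $s\geq A$ after a renewal. The split is therefore legitimate only while that window is still unrevealed. Appealing to regeneration at earlier renewal epochs, as in your last paragraph, does not help once the path up to $S_k$ has already been simulated.

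\textbf{The missing idea: the order of construction.} In the paper, each process is revealed only up to the renewal epochs $t_k+R(t_k)$ and $t_k+R'(t_k)$, which are known as soon as the residuals at $t_k$ are drawn. Next, $(R(t_{k+1}),R'(t_{k+1}))$ is drawn from the split of the laws of the residual life, at times $s_k,s'_k\geq A$, of a zero-delayed process. Only then are the renewals in $[t_{k+1}-s_k,t_{k+1}]$ (resp.\ $[t_{k+1}-s'_k,t_{k+1}]$) generated, from the conditional law of a zero-delayed renewal process given its overshoot at time $s_k$ (resp.\ $s'_k$). This disintegration is what keeps each marginal law correct.

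\textbf{The first attempt.} Your step 1 claim that the uniform component holds ``for every $t\geq A$ and every delay distribution'' is false: a deterministic delay $c>A$ gives $R(A)=c-A$. So the first attempt cannot be made at $S_1=A$. It must be made $A$ time units after both first renewals, as in the paper's $t_1=L_0+A$. The paper's Remark \ref{rmk:AlsoForDelayed} is loosely worded on this point, but its proof only applies the lemma at times measured from a renewal epoch.

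With these two repairs, your finiteness argument for $T$ goes through as written.
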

\begin{proof}
    The interesting implication to prove is $(\impliedby)$. Let us start from the other one.

    Assume that the renewal process admits coupling. By applying the coupling inequality \eqref{eq:couplingIneq2} to $\{R(t)\}_{t\geq0}$, since a coupling exists, we get total variation convergence of the forward recurrence time process to its stationary distribution. Hence, by Lemma \ref{lemma:spreadOutEquivF0}, $F$ is spread-out.

    For the opposite implication, let us assume that $F$ is spread-out. Following the reasoning in Remark \ref{rmk:couplingConvDist}, we will construct two processes: one zero-delayed, $\{T_n\}_{n\in\mathbb{N}}$, and one stationary, $\{T'_n\}_{n\in\mathbb{N}}$. We define them on the same probability space via a recursive construction. After step $k$ ($k=0,1,2,\dots$), the processes and their residual life processes $\{R(t)\}_{t\geq0}$, $\{R'(t)\}_{t\geq0}$ will have been constructed in a certain random interval $[0,t_k]$. We define them as follows:

    Let $t_0=0, R(t_0)=0$ (zero-delayed process) and $R'(t_0)\sim F_0$ (stationary process). At each step $k$, assign
    \[
    \begin{split}
        L_k &= \max\{R(t_k),R'(t_k)\},\quad t_{k+1} = t_k+L_k+A\geq A,\\
        s_k &= L_k+A-R(t_k), \quad s'_k = L_k+A-R'(t_k),
    \end{split}
    \]
    where $A$ is defined in Lemma \ref{lemma:UniformComponentRt}. Hence, all the $t_k$'s, for $k\geq1$ are greater than or equal to $A$, and also $s_k,s'_k\geq A$. By definition of forward recurrence time process, at times $t_k+R(t_k), t_k+R'(t_k)$ we observe renewals for $\{R(t)\}$ and $\{R'(t)\}$, respectively. Additionally, $t_{k+1}-s_k = t_k+R(t_k)$ and $t_{k+1}-s'_k=t_k + R'(t_k)$. Since the processes restart after each renewal, we have $R(t_{k+1}) \stackrel{\mathscr{D}}{=} R(s_k)$ and $R'(t_{k+1}) \stackrel{\mathscr{D}}{=} R'(s'_k)$.

    The goal of this construction is to define $t_k,s_k,s'_k,\:k\geq1$ greater than or equal to $A$ (so that we can apply Lemma \ref{lemma:UniformComponentRt}), and to obtain that $R(t_{k+1}) \stackrel{\mathscr{D}}{=} R(s_k)$ and $R'(t_{k+1}) \stackrel{\mathscr{D}}{=} R'(s'_k)$, so that it is possible to compute the value of $R(t_{k+1}),R'(t_{k+1})$.

    After the first renewal of $\{T'_n\}_{n\in\mathbb{N}}$, the stationary process' law only depends on $F$, which is the same interarrival distribution as the zero-delayed process. Hence, by Remark \ref{rmk:AlsoForDelayed}, we can apply Lemma \ref{lemma:UniformComponentRt} to $\{R(t)\}_{t\geq0},\{R'(t)\}_{t\geq0}$ and obtain that the laws of $R(t)$ and $R'(t)$ have the same uniform component for $t\geq A$, where $A$ is big enough so that the Lemma can be applied to the delayed process too. In other words, we can choose $U_k,V_k,H_k,H'_k$ (independent and independently of all the past $U_l,V_l,H_l,H'_l$), such that $\mathbb{P}(U_k=1) = 1-\mathbb{P}(U_k=0)=\delta$, that $V_k$ is uniform on $(0,b)$, and that
    \begin{equation}
    \label{eq:RtDecomposition}
    R(t_{k+1})= U_kV_k + (1-U_k)H_k,\quad R'(t_{k+1})=U_kV_k + (1-U_k)H'_k.
    \end{equation}

    By the above discussion, the processes defined in Equation \eqref{eq:RtDecomposition} have, by definition, the distributions of $R(s_k),R'(s'_k)$, respectively. Hence, the renewals for $\{T_n\}_{n\in\mathbb{N}}$ in $[t_{k+1}-s_k,t_{k+1}]$ are taken according to the conditional distribution of the renewal process given that its overshoot at time $s_k$ (i.e. $R(s_k)$) has the value of the constructed $R(t_{k+1})$. Similarly for $\{T'_n\}_{n\in\mathbb{N}}$.

    This construction is stopped at step $\sigma=\inf\{k:U_k=1\}$. Then, $R(\sigma+1)=R'(\sigma+1)$, and at time $T=t_{\sigma+1}+L_{\sigma+1}$ the two renewal processes will have a common renewal. Hence, by defining a new renewal process $\{T''_n\}_{n\in\mathbb{N}}$ as the renewal process with the same renewals of $\{T_n\}_{n\in\mathbb{N}}$ before $T$, and with the same renewals as $\{T'_n\}_{n\in\mathbb{N}}$ after $T$, we obtain the desired coupling.
\end{proof}

We analyze this construction from a different perspective in Section \ref{sec:CouplingProofDetailed}. 

 The coupling strategy is crucial for understanding the rest of the discussion, as everything will be based on the previous proof, specifically on the coupling time $T$. Indeed, we have the following Lemma \cite{Asmussen:Applied}.
\begin{lemma} 
\label{lemma:ExpMomentsT}
    If $\int_0^\infty e^{\eta x}F(\text{\emph{d}}x) <\infty$ for some $\eta>0$, then also $\mathbb{E}[e^{\varepsilon T}]<\infty$ for some $\varepsilon>0$.
\end{lemma}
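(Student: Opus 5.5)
The coupling time from the proof of Theorem \ref{thm:CouplingTheorem} is
\[
T = t_{\sigma+1}+L_{\sigma+1} \le t_{\sigma+2}, \qquad t_{k+1} = t_k + L_k + A .
\]
So it suffices to show that $t_{\sigma+2} = \sum_{k=0}^{\sigma+1}(L_k+A)$ has an exponential moment. Here $\sigma = \inf\{k : U_k = 1\}$ is geometric with parameter $\delta$, independent of the coins' future. The plan is to bound the conditional exponential moments of each increment $L_k$ uniformly in $k$, and then sum over the geometric number of steps.

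\textbf{Step 1: overshoots of the two processes.} The key object is the overshoot $R(s)$ of a renewal process at a time $s$, started zero-delayed or from an arbitrary delay. Since
\[
\int_0^\infty e^{\eta x}F(\text{d}x)<\infty,
\]
I would show $\sup_{s\ge 0}\mathbb{E}[e^{\eta' R(s)}] \le K < \infty$ for some $0<\eta'\le\eta$. This follows from the renewal equation for $Z(s)=\mathbb{E}_0[e^{\eta' R(s)}]$, whose inhomogeneous term is
\[
z(s) = \mathbb{E}\bigl[e^{\eta'(X-s)};X>s\bigr] \le e^{-(\eta-\eta')s}\,\mathbb{E}[e^{\eta X}] .
\]
This $z$ is bounded and integrable, so $Z = U\ast z$ is bounded, using $U(t+1)-U(t)\le U(1)$ from Theorem \ref{thm:RenFunction}(iii). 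For the stationary process we additionally need $\mathbb{E}[e^{\eta' R'(0)}]<\infty$, which holds because $F_0$ has density $\bar F(x)/\mu \le e^{-\eta x}\mathbb{E}[e^{\eta X}]/\mu$.

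\textbf{Step 2: the increments $L_k$.} By construction, $R(t_{k+1})$ has the law of $R(s_k)$ given the past, and similarly for $R'$. So
\[
\mathbb{E}\bigl[e^{\eta' L_{k+1}}\mid \mathcal{F}_k\bigr] \le \mathbb{E}\bigl[e^{\eta' R(t_{k+1})}+e^{\eta' R'(t_{k+1})}\mid\mathcal{F}_k\bigr] \le 2K .
\]
There is a subtlety: we must also handle the conditioning on $U_k=0$, since the residual parts $H_k,H_k'$ are the overshoot laws with the uniform component removed. Their densities are at most $(1-\delta)^{-1}$ times the original laws, so
\[
\mathbb{E}\bigl[e^{\eta' L_{k+1}}\mid\mathcal{F}_k, U_k=0\bigr]\le 2K/(1-\delta)=:K' .
\]

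\textbf{Step 3: summing over the geometric horizon.} Choose $\varepsilon\le\eta'$ small. Conditioning successively and using Hölder or Jensen to pass from $\eta'$ to $\varepsilon$, each step contributes
\[
\mathbb{E}\bigl[e^{\varepsilon(L_k+A)}\mathbb{I}(U_k=0)\mid\mathcal{F}_{k-1}\bigr] \le (1-\delta)\,e^{\varepsilon A}\,(K')^{\varepsilon/\eta'} .
\]
As $\varepsilon\downarrow 0$ this tends to $1-\delta<1$, so for small $\varepsilon$ we get a factor $\rho<1$. Hence
\[
\mathbb{E}[e^{\varepsilon t_{\sigma+2}}] \le C\sum_n \rho^n < \infty ,
\]
where $C$ absorbs the initial term $L_0 = R'(0)$ and the final two steps, which are bounded by the same kind of constant. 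It follows that $\mathbb{E}[e^{\varepsilon T}]<\infty$.

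The main obstacle is Step 2: making the uniform-in-$k$ conditional moment bound rigorous despite the conditioning in the coupling construction. The residual laws $H_k,H'_k$ and the conditioning on the value of $R(t_{k+1})$ must not destroy the exponential tail. The domination of the residual density by $(1-\delta)^{-1}$ times the original law is what resolves this.
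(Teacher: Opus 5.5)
Your argument is correct, and Steps 1--2 are the paper's: your $K$ is its $c_1=\sup_t\mathbb{E}[e^{\eta R(t)}]$, obtained from the same renewal equation, and $2Ke^{\eta' A}$ is its $c$. The difference is in Step 3.

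You condition on the coins, so you have to control $L_k$ under the residual laws (via $(1-\delta)H\le G$) and then lower the exponent by Jensen. The paper never conditions on $U_k$. It iterates only the bound for the full overshoot law, which gives the unconditional estimate $\mathbb{E}[e^{\eta S_n}]\le c_2c^n$ with $S_n=t_{n+1}$. It then decouples from $\sigma$ by Hölder,
\[
\mathbb{E}\bigl[e^{\varepsilon S_{n+1}};\sigma=n\bigr]\le\bigl(\mathbb{E}[e^{\eta S_{n+1}}]\bigr)^{1/p}\,\mathbb{P}(\sigma=n)^{1/q},\qquad \varepsilon=\eta/p,
\]
with $p$ large enough that $c^{1/p}(1-\delta)^{1/q}<1$. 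Hölder needs neither independence of $S_{n+1}$ and $\{\sigma=n\}$ nor any knowledge of the residual laws, so your ``main obstacle'' never arises there.

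The shortcut also costs nothing. Writing $1/q=1-\varepsilon/\eta$, the paper's condition becomes $(1-\delta)e^{\varepsilon A}\bigl(2c_1/(1-\delta)\bigr)^{\varepsilon/\eta}<1$. This is exactly your $\rho<1$ with $\eta'=\eta$ and $K=c_1$.

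Your route is more hands-on and makes the role of the residual law visible. If you keep it, align the filtrations. In Step 3, the $\sigma$-algebra you condition on must contain $U_{k-1}$, for two reasons: the earlier indicators must be measurable when you peel off the next factor, and $L_k$ must indeed have the residual law on $\{U_{k-1}=0\}$. Your Step 2 $\mathcal{F}_k$, by contrast, excludes $U_k$.

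Two small remarks. First, boundedness of $U\ast z$ does not follow from $z$ being merely bounded and integrable. What you actually use is the exponential domination of $z$ together with $U(t+1)-U(t)\le U(1)$. Second, you may take $\eta'=\eta$ throughout. Indeed $z(s)\le e^{-\eta s}\mathbb{E}[e^{\eta X}]$, and by Fubini $\mathbb{E}[e^{\eta R'(0)}]=\mu^{-1}\int_0^\infty \eta^{-1}(e^{\eta y}-1)\,F(\text{d}y)<\infty$.
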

\begin{proof}
    First, note that $\mathbb{P}(\sigma=n)=\delta(1-\delta)^n$.
    
    Since $z(t) = \mathbb{E}[e^{\eta R(t)};t<X] \leq e^{-\eta t}\mathbb{E}[e^{\eta X}]$ is d.R.i. (point 4. of Proposition \ref{prop:DRI}), the classical renewal argument yields the convergence of $Z(t)=\mathbb{E}[e^{\eta R(t)}]$ to a finite limit. In particular, $Z(t)\leq c_1<\infty$ for all $t$. In addition, we have
    \[
    \begin{split}
        \mathbb{E}[e^{\eta (A+L_{k+1})}|R(t_k),R'(t_k)] &\leq \mathbb{E}[e^{\eta (A+R(t_{k+1}))} + e^{\eta (A+R'(t_{k+1}))}|R(t_k),R'(t_k)]\\
        &= \mathbb{E}[e^{\eta (A+R(s_k))} + e^{\eta (A+R'(s'_k))}]\\
        &\leq c,
    \end{split}
    \]
    where $c=2c_1e^{\eta A}$. Similarly, $c_2=\mathbb{E}[e^{\eta (A+L_0)}]<\infty$, and by letting $S_n=\sum_0^n (A+L_k)$, it follows by induction that  $\mathbb{E}[e^{\eta S_n}]\leq c_2 c^n$. 

    For some large $p$ and some $q$ close to $1$, such that $1/p+1/q=1$, it holds that $c^{1/p}(1-\delta)^{1/q}<1$. We can then define $\varepsilon=\eta/p$ and use H{\"o}lder inequality $(*)$ to show
    \[
    \begin{split}
        \mathbb{E}[e^{\varepsilon T}] \leq \mathbb{E}[e^{\varepsilon S_{\sigma+1}}] &= \sum_{n=1}^\infty \mathbb{E}[e^{\varepsilon S_{n+1}};\sigma=n]\\
        &\stackrel{(*)}{\leq} \sum_{n=1}^\infty \left( \mathbb{E}[e^{\eta S_{n+1}}] \right)^{1/p}\left( \mathbb{P}(\sigma=n) \right)^{1/q}\\
        &\leq c_3 \sum_{n=1}^\infty c^{(n+1)/p}(1-\delta)^{n/q} <\infty,
    \end{split}
    \]
    where $c_3 = c_2^{1/p}\delta^{1/q}$.
\end{proof}

This proposition gives us an idea on how to compute the rate $\varepsilon$, but it is not clear how we can do it in practice. For some special case, it is probably possible to compute the actual value of $\varepsilon$. In any case, the only explicit information that we have on $\varepsilon$, is that $0<\varepsilon<\eta$.

With Lemma \ref{lemma:ExpMomentsT}, we have all the requirements to prove the exponential convergence of renewal processes and of regenerative processes \cite{Asmussen:Applied}.
\begin{theorem} 
\label{thm:exponentialRateRt}
    If $F$ is spread-out, and $\int_0^\infty e^{\eta x}F(\text{\emph{d}}x)<\infty$ for some $\eta >0$, then for some $\varepsilon>0$ we have:
    \begin{enumerate}
        \item $\|\mathbb{P}(R(t)\in \cdot)-F_0(\cdot)\| = O(e^{-\varepsilon t})$.
        \item In Stone's decomposition, $U_2([x,\infty))=O(e^{-\varepsilon x})$ and $u_1(x) = 1/\mu + O(e^{-\varepsilon x})$.
        \item \label{bullet:ExponentialSolutionRenEq}If $z$ is measurable with $z(x) = O(e^{-\xi x})$ for some $\xi>\varepsilon$, then
        \[
        U\ast z(x) = \frac{1}{\mu} \int_0^\infty z(t) \text{\emph{d}}t + O(e^{-\varepsilon x}).
        \]
    \end{enumerate}
\end{theorem}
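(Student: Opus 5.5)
Part 1 follows directly from the coupling machinery already in place. By Theorem \ref{thm:CouplingTheorem}, a spread-out $F$ gives a zero-delayed process $\{T_n\}$ and a stationary process $\{T'_n\}$ (delay distributed as $F_0$) on a common space, with coupling time $T$ built in its proof. Since $\int e^{\eta x}F(\text{d}x)<\infty$, Lemma \ref{lemma:ExpMomentsT} gives $\mathbb{E}[e^{\varepsilon T}]<\infty$ for some $\varepsilon>0$. Remark \ref{rmk:couplingRate} with $\varphi(t)=e^{\varepsilon t}$, together with the coupling inequality \eqref{eq:couplingIneq1} and the stationarity of $R'(t)\sim F_0$, then yields
\[
\|\mathbb{P}(R(t)\in\cdot)-F_0(\cdot)\|\leq \mathbb{P}(T>t)\leq e^{-\varepsilon t}\,\mathbb{E}[e^{\varepsilon T}].
\]
For a delayed process we couple from the first renewal onward. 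This needs exponential moments of the delay, which I would add as a tacit assumption or restrict to the zero-delayed case.

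For part 2 I would use the construction sketched after Theorem \ref{thm:StoneDec}. Take $m=1$ after passing to $F^{*m}$, or handle $m>1$ by grouping renewals. Then $U_2=\sum_n H^{*n}$, where $H=F-G$ has mass $\|H\|=1-\delta<1$. Since $G$ has compact support, for small $\varepsilon$ we have $\int e^{\varepsilon x}H(\text{d}x)<1$ by continuity in $\varepsilon$, using the exponential moment of $F$. Hence $\int e^{\varepsilon x}U_2(\text{d}x)<\infty$, and Markov's inequality gives $U_2([x,\infty))=O(e^{-\varepsilon x})$. For $u_1$, I would link $U\ast g$ to the forward-recurrence law. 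The renewal density of the process near $x$ equals $\int g$ against $\mathbb{P}(R(\cdot)\in\cdot)$, so part 1 gives $U\ast g(x)=1/\mu+O(e^{-\varepsilon x})$. Convolving with $U_2$ and splitting the integral at $x/2$, using the exponential tail of $U_2$ and $\|U_2\|\cdot(\text{rate }1/\mu)$ normalisation, gives $u_1(x)=1/\mu+O(e^{-\varepsilon x})$ after shrinking $\varepsilon$. The identity $\|U_2\|\cdot\|G\|/\mu$-type normalisation should reproduce $1/\mu$ in the limit, and I would check this constant against the Key Renewal Theorem 2.0.

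Part 3 follows from part 2 with $\xi>\varepsilon$. Write $U\ast z=U_1\ast z+U_2\ast z$. For the first term,
\[
\int_0^x z(y)u_1(x-y)\,\text{d}y-\frac{1}{\mu}\int_0^\infty z=\int_0^x z(y)\bigl(u_1(x-y)-\tfrac{1}{\mu}\bigr)\text{d}y-\frac{1}{\mu}\int_x^\infty z .
\]
The tail term is $O(e^{-\xi x})$. The integral is bounded by $C\int_0^x e^{-\xi y}e^{-\varepsilon(x-y)}\text{d}y=O(e^{-\varepsilon x})$, since $\xi>\varepsilon$. For the second term, $U_2\ast z(x)=\int_0^x z(x-y)U_2(\text{d}y)$. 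Split at $x/2$: the part with $y\le x/2$ is $O(e^{-\xi x/2})\|U_2\|$, and the part with $y>x/2$ is $\|z\|_\infty U_2([x/2,\infty))$. Both parts are exponentially small but at rate about $\varepsilon/2$, so I would rename $\varepsilon$ at the end or use the finer bound $\int e^{\varepsilon y}U_2(\text{d}y)<\infty$ to get rate $\varepsilon$ directly.

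The main obstacle is part 2, specifically the $u_1$ estimate. It requires relating $U\ast g$ to the forward-recurrence distribution carefully and controlling the $m>1$ case. I would first try the clean $m=1$ argument above and then reduce the general case to it by considering the renewal process sampled every $m$ steps.
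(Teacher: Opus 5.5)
Your proposal is correct. Parts 1 and 3 follow the paper's route. Part 1 is exactly the paper's combination of Lemma \ref{lemma:ExpMomentsT} with \eqref{eq:couplingRate}. Part 3 uses the same split $U\ast z=U_1\ast z+U_2\ast z$ and the same estimate of the $u_1$-term.

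Where you go beyond the paper is part 2. The paper does not prove it; it only says it follows by re-checking Stone's decomposition and cites Asmussen. Your sketch is the standard argument and it works. This includes the $m>1$ reduction via $U=\bigl(\sum_{j<m}F^{*j}\bigr)\ast U^{(m)}$, where the prefactor's total mass $m$ compensates the mean $m\mu$ of $F^{*m}$.

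Your hedged $u_1$ step can be made precise as follows.
\begin{itemize}
\item Take $g=\frac{\delta}{b}\mathbb{I}_{[a,a+b)}$, so $\|G\|=\delta$ and $\|H\|=1-\delta$.
\item For $x\geq a+b$, Theorem \ref{thm:RenFunction}(iii) gives $U\ast g(x)=\frac{\delta}{b}\mathbb{E}[N(x-a)-N(x-a-b)]=\frac{\delta}{b}\mathbb{E}[\psi(R(x-a-b))]$, with $\psi(y)=U(b-y)$ (zero for $y>b$). Hence $0\leq\psi\leq U(b)$.
\item Applying the same identity to the stationary process gives $\int\psi\,\text{d}F_0=b/\mu$.
\item Part 1 then yields $U\ast g(x)=\delta/\mu+O(e^{-\varepsilon x})$; for $x<a+b$, $U\ast g$ is simply bounded.
\item Since $\|U_2\|=\sum_n(1-\delta)^n=1/\delta$,
\[
u_1(x)-\frac{1}{\mu}=\int_{[0,x]}\Bigl(U\ast g(x-y)-\frac{\delta}{\mu}\Bigr)U_2(\text{d}y)-\frac{\delta}{\mu}U_2((x,\infty)).
\]
Both terms are bounded in absolute value by a constant times $e^{-\varepsilon x}\int e^{\varepsilon y}U_2(\text{d}y)$.
\end{itemize}
So once you have $\int e^{\varepsilon y}U_2(\text{d}y)<\infty$, no splitting at $x/2$ and no shrinking is needed. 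That moment bound needs only $H\leq F$ and dominated convergence, not compact support of $G$.

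In part 3, only the second of your two options for $U_2\ast z$ proves the statement as quantified. Splitting at $x/2$ gives rate $\varepsilon/2$, and renaming $\varepsilon$ does not repair this. The hypothesis $\xi>\varepsilon$ and the conclusion involve the same $\varepsilon$. After passing to $\varepsilon'=\varepsilon/2$ you must handle every $\xi>\varepsilon'$, and for $\xi\in(\varepsilon',2\varepsilon')$ the split yields only rate $\xi/2<\varepsilon'$. No choice of split point fixes this for $\xi$ close to $\varepsilon'$.

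The moment route works directly. Since $e^{-\xi(x-y)}\leq e^{-\varepsilon(x-y)}$ for $y\leq x$, you get $|U_2\ast z(x)|\leq Ce^{-\varepsilon x}\int e^{\varepsilon y}U_2(\text{d}y)$. Alternatively, the tail bound of part 2 alone suffices. Writing $e^{\xi y}=1+\int_0^y\xi e^{\xi s}\,\text{d}s$ gives $\int_{[0,x]}e^{\xi y}U_2(\text{d}y)\leq\|U_2\|+\frac{C\xi}{\xi-\varepsilon}e^{(\xi-\varepsilon)x}$. This is where $\xi>\varepsilon$ is genuinely used.

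Your caution here is warranted. The paper's line $\int_0^xO(e^{-\varepsilon(x-y)})U_2(\text{d}y)=e^{-\varepsilon x}\int_0^xO(e^{-\varepsilon y})U_2(\text{d}y)$ has a sign slip: the factor is $e^{+\varepsilon y}$. One of the two facts above is exactly what is needed to close that step.
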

\begin{proof}
    Statement \textit{1.} is a direct consequence of Equation \eqref{eq:couplingRate} and Lemma \ref{lemma:ExpMomentsT}.

    Statement \textit{2.} can be proved by re-checking the proof of Stone's decomposition, but it will not be shown here (see \cite{Asmussen:Applied} for reference).

    Statement \textit{3.} is proven as follows: first note that, if $z(x) = O(e^{-\xi x})$, then it holds $z(x) = O(e^{-\varepsilon x})$, for all $\varepsilon<\xi$. Then, we have
    \[
    \begin{split}
        U\ast z(x) &= \int_0^x z(x-y)U_2(\text{d}y) + \int_0^x z(x-y)u_1(y)\text{d}y\\
        &= \int_0^xO(e^{-\varepsilon(x-y)})U_2(\text{d}y) + \int_0^x z(y) u_1(x-y) \text{d}y\\
        &= e^{-\varepsilon x} \int_0^x O(e^{-\varepsilon y})U_2(\text{d}y) + \int_0^x z(y) \left( \frac{1}{\mu} + O(e^{-\varepsilon(x-y)}) \right) \text{d}y\\
        &= O(e^{-\varepsilon x}) + \int_0^x \frac{z(y)}{\mu}\text{d}y + \int_0^x z(y)O(e^{-\varepsilon (x-y)})\text{d}y\\
        &= O(e^{-\varepsilon x}) + \int_0^\infty \frac{z(y)}{\mu}\text{d}y - \int_x^\infty \frac{z(y)}{\mu}\text{d}y + e^{-\varepsilon x}\int_0^x O(e^{-(\xi-\varepsilon) y})\text{d}y\\
        &= O(e^{-\varepsilon x}) + \int_0^\infty \frac{z(y)}{\mu}\text{d}y - \int_x^\infty \frac{z(y)}{\mu}\text{d}y + O(e^{-\varepsilon x}).
    \end{split}
    \]
    Now, since $O(e^{-\varepsilon x})+O(e^{-\varepsilon x})=O(e^{-\varepsilon x})$, we can conclude
    \[
    \begin{split}
        U\ast z(x) &= O(e^{-\varepsilon x}) + \int_0^\infty \frac{z(y)}{\mu}\text{d}y - \int_x^\infty \frac{z(y)}{\mu}\text{d}y\\
        &\stackrel{(*)}{=} O(e^{-\varepsilon x}) + \int_0^\infty \frac{z(y)}{\mu}\text{d}y + O(e^{-\varepsilon x})\\
        &= O(e^{-\varepsilon x}) + \int_0^\infty \frac{z(y)}{\mu}\text{d}y,
    \end{split}
    \]
    where $(*)$ holds because (by using the definition of $O(e^{-\xi x})$ with constant $C$)
    \[
    \left| \int_x^\infty z(y)\text{d}y\, \right| \leq \int_x^\infty |z(y)|\,\text{d}y \leq C\int_x^\infty e^{-\xi y} \text{d}y = \frac{C}{\xi} e^{-\xi x} < \frac{C}{\xi} e^{-\varepsilon x},
    \]
    which implies $\int_x^\infty z(y) \text{d}y = O(e^{-\varepsilon x})$.
\end{proof}

This important theorem guarantees that, if the interarrival distribution of a renewal process is light-tailed (i.e. has exponential moments), then the associated residual life process converges exponentially to its stationary distribution in total variation. Additionally, we also have that every solution $U\ast z$ of a renewal equation converges exponentially to its limit $1/\mu\int_0^\infty z(x)\text{d}x$. This last result is exactly what we use to prove the exponential convergence of regenerative processes \cite{Asmussen:Applied}.
\begin{corollary}[Exponential Convergence of Regenerative Processes] 
\label{cor:ExpConvRegProc}
    Define a regenerative process $\{Y(t)\}_{t\geq0}$ such that its cycle length distribution $F$ is spread-out and with finite mean $\mu<\infty$, and assume that $\int_0^\infty e^{\eta x}F(\text{\emph{d}}x)<\infty$ (and in the delayed case $\mathbb{E}[e^{\eta X_0}]<\infty$) for some $\eta>0$. Then
    \begin{equation}
        \|\mathbb{P}(Y(t)\in\cdot) - \mathbb{P}_e(Y(t)\in\cdot)\| = O(e^{-\varepsilon t}),
    \end{equation}
    for some $\varepsilon>0$.
\end{corollary}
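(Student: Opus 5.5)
The plan is to apply the coupling of Theorem \ref{thm:CouplingTheorem} to the regenerative process itself rather than to a renewal equation with a fixed test function. Working with the whole process gives a bound uniform over all bounded measurable $f$, which is exactly the total variation statement.

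\emph{Step 1: two versions of $Y$.} Let $Y$ be the given (possibly delayed) process, with embedded renewal process $\{T_n\}$. Build a stationary version $Y'$ with law $\mathbb{P}_e$. Take the embedded renewal process $\{T'_n\}$ with delay distribution $F_0$; by Corollary \ref{cor:StatDelay} it is stationary. On the first partial cycle, let $Y'$ follow a cycle drawn from the length-biased law $F_1(\text{d}x)=\frac{x}{\mu}F(\text{d}x)$, entered at a uniform position, as in the stationary construction of $(E,R,C)$. After that, attach i.i.d.\ zero-delayed cycles. That this gives $\mathbb{P}(Y'(t)\in\cdot)=\mathbb{P}_e$ for all $t$ follows from formula \eqref{eq:RegProcLimitingDist} by the same change of variables used in the renewal-process example.

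\emph{Step 2: couple the renewal processes.} Run the construction in the proof of Theorem \ref{thm:CouplingTheorem}, with a delayed instead of a zero-delayed process on one side; Remark \ref{rmk:AlsoForDelayed} allows this. This puts $\{T_n\}$ and $\{T'_n\}$ on one probability space with a common renewal epoch $T$, after which the two renewal sequences coincide. After $T$, glue in the \emph{same} sequence of i.i.d.\ zero-delayed cycles for both processes. Before $T$, fill in the cycles of each process conditionally on their lengths, independently of each other. The regenerative property (independence of the post-$T_n$ process from $T_0,\dots,T_n$, with a law not depending on $n$) shows that each marginal is unchanged. This conditioning on cycle lengths is the delicate point of the whole argument. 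It is legitimate because $T$ is determined by the cycle-length sequences together with independent auxiliary randomisation, so the cycle contents given the lengths are unaffected. In Section \ref{sec:CouplingProofDetailed} style terms this is a Thorisson-type splicing, and I would argue it carefully there. The outcome is $Y(t)=Y'(t)$ for all $t\ge T$. Coupling Inequality 1 then gives
\[
\|\mathbb{P}(Y(t)\in\cdot)-\mathbb{P}_e(Y(t)\in\cdot)\|\le \mathbb{P}(T>t).
\]

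\emph{Step 3: exponential tail of $T$.} Follow the proof of Lemma \ref{lemma:ExpMomentsT}. The only changes are in the initial constant $c_2=\mathbb{E}[e^{\eta(A+L_0)}]$, where now $L_0=\max\{X_0,R'(0)\}$.
\begin{itemize}
\item Finiteness of $c_2$ needs $\mathbb{E}[e^{\eta X_0}]<\infty$, which is the extra hypothesis in the delayed case.
\item It also needs an exponential moment of $R'(0)\sim F_0$. This holds after possibly shrinking $\eta$, since $\int e^{\eta x}\bar F(x)\,\text{d}x<\infty$ whenever $F$ has an exponential moment of slightly larger order.
\item The uniform bound $\sup_t\mathbb{E}[e^{\eta R(t)}]<\infty$ comes from the same renewal-argument/d.R.i.\ reasoning. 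It applies to the delayed processes because the delay term $\mathbb{E}[e^{\eta R(t)};X_0>t]\le\mathbb{E}[e^{\eta X_0}]$ is bounded.
\end{itemize}
The H\"older step then gives $\mathbb{E}[e^{\varepsilon T}]<\infty$, so $\mathbb{P}(T>t)\le e^{-\varepsilon t}\mathbb{E}[e^{\varepsilon T}]$ by \eqref{eq:couplingRate}, which proves the claim.

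As an alternative or sanity check, one could instead write $\mathbb{E}[f(Y(t))]$ via the renewal argument and apply Theorem \ref{thm:exponentialRateRt}, point \ref{bullet:ExponentialSolutionRenEq}. There $z(t)=\mathbb{E}_0[f(Y(t));X>t]$ satisfies $|z(t)|\le \bar F(t)=O(e^{-\eta t})$, uniformly in $|f|\le1$, and the delay term is handled by convolving with the law of $X_0$. However, making the $O(\cdot)$ constant uniform in $f$ requires tracking constants through Stone's decomposition, which is why I prefer the coupling route.
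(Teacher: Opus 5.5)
Your proof is correct, but it takes a different route from the paper. The paper uses exactly the argument you relegate to a sanity check. For each set $A$, it writes $\mathbb{P}_0(Y(t)\in A)=U\ast z_A(t)$ with $z_A(t)=\mathbb{P}_0(Y(t)\in A,\,t<X)\le\bar F(t)=O(e^{-\eta t})$. Point \ref{bullet:ExponentialSolutionRenEq} of Theorem \ref{thm:exponentialRateRt} then gives $\mathbb{P}_e(Y(t)\in A)+O(e^{-\varepsilon t})$, and the delay is removed by conditioning on $X_0$.

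The uniformity you worry about is cheap there. The $O(e^{-\varepsilon x})$ constant in that point depends on $z$ only through the constant $C$ in $|z(x)|\le Ce^{-\xi x}$; the Stone-decomposition quantities do not involve $z$. Since $z_A\le\bar F$, one $C$ works for all $A$.

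The real trade-off is elsewhere:
\begin{itemize}
\item The paper's route is three lines, needs only the weak regenerative property of Definition 2.0, and needs no stationary version. But it rests on point 2 of Theorem \ref{thm:exponentialRateRt} (the exponential Stone decomposition), which the paper only cites.
\item Your route lifts the renewal coupling to the process, so the only quantitative input is Lemma \ref{lemma:ExpMomentsT}. The bound $\mathbb{P}(T>t)$ is uniform over sets by construction. For i.i.d.\ cycles, Coupling Inequality 2 even gives the statement for the whole shifted process $\theta_tY$. The price is the splicing, which needs regular conditional laws of cycles given their lengths, plus the construction of a stationary version.
\end{itemize}

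\textbf{Step 2 needs tightening.} The property you invoke (post-$T_n$ process independent of $T_0,\dots,T_n$) allows dependent cycles. In that case, filling in cycles independently given their lengths and gluing identical post-$T$ cycles does \emph{not} preserve the law of each process. That claim needs independent cycles as in Definition \ref{def:RegProcClassic}, which does cover RRS. In general you should claim only that the one-dimensional laws are preserved, which is all Coupling Inequality 1 uses. This is true because
\[
\mathbb{P}(Y(t)\in A)=\mathbb{P}(Y(t)\in A,\,X_0>t)+\sum_{n\ge0}\int_{[0,t]}\mathbb{P}_0(Y(t-s)\in A,\,X>t-s)\,\mathbb{P}(T_n\in\text{d}s)
\]
depends only on the delay-cycle law, the law of a single zero-delayed cycle, and the renewal law.

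Equivalently, on $\{T_0\le t\}$ the conditional law of $Y(t)$ given the age $E(t)=x$ is $k_A(x)=\mathbb{P}_0(Y(x)\in A\mid X>x)$. So it suffices to couple the ages, which agree after the common renewal $T$; no cycle contents need to be spliced at all.

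The same kernel $k_A$ also governs the length-biased delay cycle of $Y'$. This is the clean way to justify Step 1 for every $t$ rather than only $t=0$. Since the stationary age is $F_0$-distributed at all times, $\mathbb{P}(Y'(t)\in A)=\int k_A\,\text{d}F_0=\mathbb{P}_e(Y(t)\in A)$.

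\textbf{Step 3 is fine, with one simplification.} No shrinking of $\eta$ is needed, since $\int_0^\infty e^{\eta x}\bar F(x)\,\text{d}x=\bigl(\int e^{\eta x}F(\text{d}x)-1\bigr)/\eta$.
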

\begin{proof}
    Let $z_A(t) = \mathbb{P}_0(Y(t)\in A,t<X)$, then we have $z_A(t)\leq \mathbb{P}_0(X>t)=O(e^{-\eta t})$. By Statement \textit{3.} of Theorem \ref{thm:exponentialRateRt}, we get
    \[
    \mathbb{P}_0(Y(t)\in A) = U\ast z_A(t) = O(e^{-\varepsilon t}) + \mathbb{P}_e(Y(t)\in A),
    \]
    uniformly in $A$, which gives us the desired total variation convergence.

    The delayed case can be seamlessly reduced to the zero-delayed one by conditioning upon $X_0$.
\end{proof}

At this point, we have finally showed which are the conditions under which a regenerative process converges exponentially to its limiting distribution: the cycle length distribution $F$ must have exponential moments (must be light-tailed).
Such conditions are quite strong, and we cannot expect them to be satisfied in the majority of the situations. However, it can be proven by analogous coupling arguments (and via the same strategy in the proof of Corollary \ref{cor:ExpConvRegProc}), that regenerative processes can converge polynomially to their limiting distribution, under weaker assumptions (see \cite{Lindvall:Coupling,Nummelin:RateOfConvergenceOrey,Lindvall:ContRenProcCoupling} for more details on this and related results).
\begin{theorem}[Polynomial Convergence of Regenerative Processes]
    Define a regenerative process $\{Y(t)\}_{t\geq0}$ such that its cycle length distribution is spread-out and has finite mean $\mu<\infty$. Assume that $\int_0^\infty x^{p+1} F(\text{\emph{d}}x)<\infty$ (and in the delayed case $\mathbb{E}[X_0^{p}]<\infty$) for some $p>0$. Then
    \begin{equation}
        \|\mathbb{P}(Y(t)\in\cdot) - \mathbb{P}_e(Y(t)\in\cdot)\| = o(t^{-p}).
    \end{equation}
\end{theorem}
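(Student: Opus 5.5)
Following the exponential case, I would mimic the chain Lemma \ref{lemma:ExpMomentsT} $\to$ Theorem \ref{thm:exponentialRateRt} $\to$ Corollary \ref{cor:ExpConvRegProc}, replacing $e^{\varepsilon t}$ by $\varphi(t)=t^p$.

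\textbf{Step 1: moments of the coupling time.} Using the coupling construction of Theorem \ref{thm:CouplingTheorem} between the given process and a stationary copy started from $F_0$, I would show $\mathbb{E}[T^p]<\infty$ and in fact $t^p\,\mathbb{P}(T>t)\to0$. The stationary delay has density $\bar F(x)/\mu$, so $\int x^{p+1}F(\text{d}x)<\infty$ gives $\mathbb{E}_{F_0}[X_0^{p}]<\infty$. This is why one extra moment is needed for the cycle length while the delay only needs $p$ moments. Next I would bound the $p$-th moments of the forward recurrence time $R(t)$ uniformly in $t$. By the renewal argument, $Z(t)=\mathbb{E}[R(t)^p]$ solves a renewal equation with $z(t)=\mathbb{E}[(X-t)^p;X>t]$. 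This $z$ is nonincreasing with $\int_0^\infty z=\mathbb{E}[X^{p+1}]/(p+1)<\infty$, so it is d.R.i. (Proposition \ref{prop:DRI}) and $Z$ is bounded by the Key Renewal Theorem \ref{thm:keyRen}.

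\textbf{Step 2: control of the increments.} Write $T\le S_{\sigma+1}=\sum_{k=0}^{\sigma+1}(A+L_k)$. The conditional moments $\mathbb{E}[(A+L_{k+1})^p\mid\cdot]$ are bounded by a constant $c$, and $L_0$ has a finite $p$-th moment. Minkowski's inequality (for $p\ge1$; use subadditivity of $x\mapsto x^p$ for $p<1$) gives $\mathbb{E}[S_n^p]\le C n^{p}$. Since $\sigma$ is geometric, Hölder's inequality exactly as in Lemma \ref{lemma:ExpMomentsT} gives
\[
\mathbb{E}[T^p]\le \sum_n (\mathbb{E}[S_{n+1}^{pr}])^{1/r}\mathbb{P}(\sigma=n)^{1/s}.
\]
This needs moments of order $pr>p$, which are not available. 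So I would instead split on the event $\{\sigma=n\}$ directly. The point is that $U_k$ is drawn independently of the magnitudes $H_k,H'_k$ and of the past, and $\sigma$ depends only on the $U_k$. Hence
\[
\mathbb{E}[S_{\sigma+1}^p]\le\sum_n \mathbb{P}(\sigma=n)\,C(n+2)^p<\infty.
\]
To upgrade from $\mathbb{P}(T>t)=O(t^{-p})$ to $o(t^{-p})$, I would note that $\mathbb{E}[T^p]<\infty$ implies $t^p\mathbb{P}(T>t)\le\mathbb{E}[T^p;T>t]\to0$ by dominated convergence.

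\textbf{Step 3: transfer to $Y$.} The natural route is the coupling inequality \eqref{eq:couplingIneq2} applied to whole regenerative paths. Extend the coupling of the renewal processes to the regenerative processes by letting both copies use identical cycles after the common renewal $T$. Then $\|\mathbb{P}(Y(t)\in\cdot)-\mathbb{P}_e\|\le\mathbb{P}(T>t)=o(t^{-p})$. In the delayed case the initial delay enters only through $L_0$, which needs $\mathbb{E}[X_0^p]<\infty$.

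\textbf{Main obstacle.} The hard part is Step 2, namely the conditional independence of $\sigma$ from the overshoots. Conditioning on $U_k=0$ biases $H_k$ toward the residual distribution, whose moments are at most $1/(1-\delta)$ times those of $R(s_k)$, so constants still suffice. I would verify this bound carefully rather than invoking Hölder's inequality.
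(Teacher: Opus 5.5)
Your argument is essentially correct. It departs from the paper mainly in the last step; note that the paper itself gives no proof of this statement. It only says the result follows from ``analogous coupling arguments'' plus ``the same strategy'' as Corollary~\ref{cor:ExpConvRegProc}, and defers to Lindvall and Nummelin.

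\textbf{Steps 1--2.} These are the polynomial analogue of Lemma~\ref{lemma:ExpMomentsT} that the paper alludes to, and you handle them correctly. The H{\"o}lder step cannot be transplanted, and you see why. Conditioning on $\{\sigma=n\}$ is the right replacement, because $U_k$ is independent of the past and of $H_k,H_k'$. The conditional $p$-th moments of $A+L_{k+1}$ then stay bounded via the $1/(1-\delta)$ factor. This gives $\mathbb{E}[T^p]<\infty$ and hence $\mathbb{P}(T>t)=o(t^{-p})$. For $p<1$, subadditivity gives growth $C(n+2)$ rather than $C(n+2)^p$; this is harmless against the geometric tail of $\sigma$.

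\textbf{Step 3 versus the paper's route.} The paper's sketched route would feed the moment bound into polynomial versions of items 2--3 of Theorem~\ref{thm:exponentialRateRt}. That means rates in Stone's decomposition, then $U\ast z_A$ with $z_A\le\bar F$ and $\int_t^\infty\bar F=o(t^{-p})$, uniformly in $A$. You instead apply the coupling inequality directly. Your route is more self-contained: it never needs the Stone-decomposition rates, which the paper does not prove even in the exponential case. It also gives the explicit bound $\mathbb{P}(T>t)$.

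\textbf{What the paper's route buys, and a caveat for yours.} The renewal-equation route uses only $\mathbb{P}_0(Y(t)\in A)=U\ast z_A(t)$. So it works verbatim under the paper's ``Regenerative Process 2.0'' definition, where cycles may be dependent and ``use identical cycles after $T$'' is not available.

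Even with i.i.d.\ cycles, your stationary copy $Y'$ must begin with the residual part of a length-biased cycle. It cannot begin with a cycle conditioned to have length $R'(0)$. For RRS with $w(x)=x$, as in Example~\ref{ex:RRSGammaExp2}, the latter would give the delay state an $Exp(1)$ law rather than the required $Gamma(2,1)$ law.

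\textbf{A repair.} Both issues disappear if you transfer through $E(t)$ rather than through paths. Set $g_A(e)=\mathbb{P}_0(Y(e)\in A\mid X>e)$. Then:
\begin{itemize}
\item $\mathbb{P}(Y(t)\in A,\,T_0\le t)=\mathbb{E}[g_A(E(t));\,T_0\le t]$;
\item $\mathbb{P}_e(A)=\mathbb{E}[g_A(E'(t))]$, where $E'(t)\sim F_0$ is the stationary copy's backward recurrence time;
\item $E(t)=E'(t)$ for $t\ge T$, and $T\ge T_0$.
\end{itemize}
Together these give $\|\mathbb{P}(Y(t)\in\cdot)-\mathbb{P}_e\|\le 2\,\mathbb{P}(T>t)=o(t^{-p})$.
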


Since the recurring Poisson Process example that we used until now is too trivial to showcase the results on exponential convergence (the forward recurrence time process $\{R(t)\}_{t\geq 0}$ is stationary), we conclude the section by presenting a less trivial example, but where it is still possible to compute everything by hand.
\begin{example}
    Suppose the interarrival times of a zero-delayed renewal process follow a $Gamma(2,\lambda)$ distribution, which is spread-out and has finte mean $\mu=2/\lambda$. Recall that the pdf of such distribution is $f(x)=\lambda^2xe^{-\lambda x}, \:x\in[0,\infty)$, and the cdf is $F(x)=1-e^{-\lambda x}(1+\lambda x)$. Since the pdf of the $Gamma(2,\lambda)$ distribution is continuous, we can use the strategy proposed in Example \ref{ex:Exp1} to obtain that $F$ has a uniform component on $(a,a+b)$, for $a\geq0,b>0$.
    
    By Lemma \ref{lemma:stationaryDist}, the stationary distribution of the process $R(t)$ has density
    \[
    f_0(x)=\frac{\bar{F}(x)}{\mu} = \frac{\lambda}{2}e^{-\lambda x}(1+\lambda x),
    \]
    which is a mixture of an $Exp(\lambda)$ and a $Gamma(2,\lambda)$ distributions (with weights 0.5,0.5). Its cdf is
    \[
    F_0(x) = 1-e^{-\lambda x}\left(1+\frac{\lambda x}{2}\right).
    \]
    We analyze this example in detail.

    By proceeding as in Example \ref{ex:PP2}, we show that the renewal measure is given by
    \[
    U(\text{d}x) = \delta_0(\text{d}x) + \sum_{n=1}^\infty f^{*n}(x)\text{d}x,
    \]
    where $f^{*n}=\frac{\lambda^{2n}}{(2n-1)!}x^{2n-1}e^{-\lambda x}$ (shown via induction). Since we have that
    \[
    \begin{split}
    \sum_{n=1}^\infty \frac{\lambda^{2n}}{(2n-1)!}x^{2n-1}e^{-\lambda x} &= e^{-\lambda x}\lambda\sinh(\lambda x)\\
    &= e^{-\lambda x}\lambda\frac{e^{\lambda x}-e^{-\lambda x}}{2}\\
    &=\frac{\lambda}{2}(1-e^{-2\lambda x}),
    \end{split}
    \]
    the renewal measure is equal to
    \[
    U(\text{d}x) = \delta_0(\text{d}x) + \frac{\lambda}{2}(1-e^{-2\lambda x})\text{d}x.
    \]
    This is straightforwardly decomposed in a bounded measure $U_2=\delta_0$, and a measure $U_1$ with density $u_1(x) = \frac{\lambda}{2}(1-e^{-2\lambda x})$, such that $u_1(x) \rightarrow 1/\mu = \lambda/2$ as $x\rightarrow\infty$ (c.f. Stone's decomposition, Theorem \ref{thm:StoneDec}).

    Now, we compute the law of $R(t)$ by following the procedure outlined in Example \ref{ex:PP3}. $\mathbb{P}(R(t)\leq x)$ satisfies a renewal equation with $z_R^x(t) = F(t+x)-F(t)$, which is bounded. Hence,
    \[
    \begin{split}
    F_R^t(x)=\mathbb{P}(R(t)\leq x) = U\ast z_R^x(t) &= \int_0^t (F(t-y+x)-F(t-y))U(\text{d}y)\\
    &= \int_0^t (F(t-y+x)-F(t-y))\delta_0(\text{d}y) + \\
    &\quad + \int_0^t (F(t-y+x)-F(t-y))\frac{\lambda}{2}(1-e^{-2\lambda y})\text{d}y\\
    &\stackrel{(*)}{=} 1-e^{-\lambda x}\left(1+\frac{\lambda x}{2}\right) - \left( \frac{\lambda x}{2}e^{-\lambda x} \right)e^{-2\lambda t},
    \end{split}
    \]
    where $(*)$ summarizes long and tedious (but easy) calculations. Such law has continuous density
    \[
    f_R^t(x) = \frac{\lambda x}{2}e^{-\lambda x}((1-e^{-2\lambda t}) + \lambda x(1+e^{-2\lambda t})),
    \]
    and we can see that
    \[
    F_R^t(x) \longrightarrow F_0(x), \quad f_R^t(x)\longrightarrow f_0(x), \quad \text{as } t\rightarrow\infty.
    \]

    At this point, let us see how Lemma \ref{lemma:UniformComponentRt} applies to this particular example: we want to show that there exist $A>0,b>0,\delta>0$ such that
    \[
    \forall\,t\geq A,\forall\,x\in(0,b),\quad f_R^t(x)\geq \delta.
    \]
    We have that:
    \begin{itemize}
        \item For $t\geq A>0$: $1-e^{-2\lambda t}\geq 1-e^{-2\lambda A}$ and $1+e^{-2\lambda t}\geq 1$.
        \item For $x\in(0,b)$: $e^{-\lambda x}\geq e^{-\lambda b}$. 
    \end{itemize}
    Hence, for $t\geq A,x\in(0,b)$:
    \[
    f_R^t(x)\geq \frac{\lambda}{2}e^{-\lambda b}((1-e^{-2\lambda A})+\lambda x)\geq \frac{\lambda}{2}e^{-\lambda b}(1-e^{-2\lambda A}) =:\alpha(A,b).
    \]
    By following the strategy of Example \ref{ex:Exp1}, we define $\delta(A,b)\colon\!\!\!= \alpha(A,b)\cdot b$ and we can decompose $f_R^t(x)$ as
    \[
    f_R^t(x) = \frac{\delta(A,b)}{b}\mathbb{I}_{(0,b)}(x) + (1-\delta(A,b))h(x),
    \]
    where $h(x) = \frac{f_R^t(x)-\alpha(A,b)\mathbb{I}_{(0,b)}(x)}{1-\delta(A,b)}$.

    Hence, in this specific case, the laws of the $R(t)$'s have a common $\:\mathcal{U}(0,b)$ component for all $t\geq A$, where $b>0$ and $A$ is such that $\alpha(A,b)>0\Leftrightarrow A>0$.

    Since the process $\{R(t)\}_{t\geq 0}$ is not in stationarity, the coupling construction in the proof of Theorem \ref{thm:CouplingTheorem} is non-trivial. Since the distribution $Gamma(2,\lambda)$ is spread-out, the related renewal process admits coupling. The coupling time $T$ is defined as $T=t_{\sigma+1}+L_{\sigma+1}$ (by using the notation of the proof), where $\sigma\sim Geom(\delta(A,b))$. 

    Let us now analyze the exponential moments of $T$, following the proof of Lemma \ref{lemma:ExpMomentsT}:
    we know that
    \[
    \mathbb{E}[e^{\eta X}] = \lambda^2\int_0^\infty xe^{-(\lambda-\eta)x}\text{d}x<\infty
    \]
    for all $\eta<\lambda$.
    Moreover,
    \[
    \begin{split}
        \mathbb{E}[e^{\eta R(t)}] &= \int_0^\infty e^{\eta x}\frac{\lambda}{2}e^{-\lambda x}((1-e^{-2\lambda t})+\lambda x(1+e^{-2\lambda t}))\text{d}x\\
        &= \frac{\lambda}{2(\lambda-\eta)}(1-e^{-2\lambda t})+ \frac{\lambda^2}{2(\lambda-\eta)^2}(1+e^{-2\lambda t})\\
        &\leq \frac{\lambda}{2(\lambda-\eta)}\left(1+ \frac{2\lambda}{\lambda-\eta}\right) =: c_1.
    \end{split}
    \]
    Furthermore, define $c\colon\!\!\!=2c_1 e^{\eta A}$, and by following the proof of Lemma \ref{lemma:ExpMomentsT}, we conclude that $\mathbb{E}[e^{\varepsilon T}]<\infty$ for $\varepsilon=\eta/p$. $p$ is such that $1/p+1/q=1$ and 
    \[
    c^{1/p}(1-\delta(A,b))^{1/q}<1,
    \]
    which is equivalent to
    \[
    \frac{\lambda^{1/p}}{(\lambda-\eta)^{1/p}}\left(1+\frac{2\lambda}{\lambda-\eta}\right)^{1/p}e^{\eta A/p}\cdot\left(1-\frac{\lambda b}{2}e^{-\lambda b}(1-e^{-2\lambda A})\right)^{1/q}<1.
    \]
    An inspection of this inequality implies that, by choosing reasonable values for $A$ and $b$, $p$ can be chosen smaller than $100$. Such reasonable values for $A,b$ can be, e.g. $A=2\mu=4/\lambda$, which is hopefully big enough for moving past the initial delayed phase (see Remark \ref{rmk:AlsoForDelayed}) and $b=1/\lambda$.

    As a conclusion, notice that, following the proof, we found $\varepsilon<\eta<\lambda$. By Theorem \ref{thm:exponentialRateRt}, this means that $\|\mathbb{P}(R(t)\in\cdot)-F_0(\cdot)\| = O(e^{-\varepsilon t})$. However, as was specified in Remark \ref{rmk:couplingRate}, this might not be the best rate possible. Indeed, by applying \cite[Lemma 2.1]{Tsybakov:NonparametricEstimation} $(*)$
    \[
    \begin{split}
        \|\mathbb{P}(R(t)\in\cdot)-F_0(\cdot)\| &\stackrel{(*)}{=} \frac{1}{2}\int_0^\infty |f_R^t(x)-f_0(x)|\text{d}x\\
        &= \frac{\lambda}{4}e^{-2\lambda t}\int_0^\infty e^{-\lambda x}|\lambda x-1|\text{d}x\\
        &=\frac{\lambda^2}{2e}e^{-2\lambda t}\\
        &=O(e^{-2\lambda t}).
    \end{split}
    \]
    It is clear that the convergence is actually at the higher rate $2\lambda>\lambda$.
\end{example}

\section{A Deeper Analysis of the Coupling Proof}
\label{sec:CouplingProofDetailed}

We conclude the Chapter with an in-depth analysis of the proof of Theorem \ref{thm:CouplingTheorem}, as it is based on an interesting construction. In particular, we are mainly interested in one of the two directions: if a renewal process has a spread-out interarrival distribution, then it admits coupling. As a convention, a renewal process admits coupling if the associated forward recurrence time process does. Hence, the proof is centered around the process $\{R(t)\}_{t\geq 0}$. Let us recall the general construction (c.f. proof of Theorem \ref{thm:CouplingTheorem}).

We define two renewal processes (and the associated forward recurrence time processes) with the same spread-out interarrival distribution: $\{T_n\}_{n\in\mathbb{N}}$ ($\{R(t)\}_{t\geq 0}$) is zero-delayed and $\{T'_n\}_{n\in\mathbb{N}}$ ($\{R'(t)\}_{t\geq 0}$) is stationary. To this end, we construct an increasing sequence of times $\{t_k\}_k$, and define the values of $R(t_k)$ and $R'(t_k)$ for all $k$. The renewal processes are then assigned the conditional distribution of each renewal process given the previously defined value of the associated forward recurrence time process.

The sequence $\{t_k\}_k$ is chosen in such a way that, for all $k\geq 1$, the laws $F_{R(t_k)}$ and $F_{R'(t_k)}$ of $R(t_k)$ and $R'(t_k)$ depend both on $R(t_{k-1})$ and on $R'(t_{k-1})$, and satisfy the following condition, for given $\delta\in(0,1),\,b>0$:
\begin{align}
    F_{R(t_k)}(\cdot) &\geq \delta \, Unif_{(0,b)}(\cdot)\label{eq:Minorization1}\\
    F_{R'(t_k)}(\cdot) &\geq \delta \, Unif_{(0,b)}(\cdot)\label{eq:Minorization2},
\end{align}
where $Unif_{(0,b)}$ represents the uniform measure on $(0,b)$. These conditions are a special case of a so called \emph{minorization condition}, used in \cite{Nummelin:Splitting} to construct a \emph{split chain} (this technique is explained in detail in Appendix \ref{chap:NummelinSplitting}). It could be tempting to try and apply the same reasoning to our situation. However, the split chain technique is used on general state space recurrent Markov Chains to create a sequence of regeneration times, such that after each one of them the chain restarts from scratch independently of all the past, with a given distribution. In our case, the identification of regeneration times is not of interest, but instead we would like to determine the first time $R(t_k)=R'(t_k)$ occurs, i.e. the coupling time. This condition is clearly different from the regenerative one, as it does not require future steps to be independent of the past. Fortunately, we can still exploit conditions \eqref{eq:Minorization1}, \eqref{eq:Minorization2} to identify the coupling time.

First, let us define an inhomogeneous Markov Chain $\{(X_k,X'_k)\}_k$ on the measurable space $\left([0,\infty)^2, \mathcal{B}\left([0,\infty)^2\right)\right)$:
\[
(X_k,X'_k)\colon \!\!\! = (R(t_k),R'(t_k)),\quad k\geq 1.
\]
Call its transition kernel $P_k(x,A)$, for $x\in [0,\infty)^2,\,A\in\mathcal{B}\left([0,\infty)^2\right)$. 

Thanks to the minorization conditions on the marginal laws, it is possible to construct a similar one for the joint chain. There are many choices that can be made to preserve the conditions on the marginals, and here we follow Asmussen's construction \cite{Asmussen:Applied} (one can see, for example, \cite{Thorisson:RegProcCoupling} for a different argument, which is, however, based on a slightly wider definition of coupling). In the rest we will denote the uniform measure on $(0,b)$ by $\mathcal{U}_{(0,b)}$. We have:
\begin{equation}
    P_k(x,\cdot) \geq \mathcal{U}_{(0,b)}^{diag}(\cdot)\label{eq:CouplingMinorization},
\end{equation}
where
\begin{equation}
    \mathcal{U}_{(0,b)}^{diag}(A) = \mathcal{U}_{(0,b)}\left(\left\{ x \in [0,\infty): (x,x)\in A\right\}\right),\quad A \in \mathcal{B}\left([0,\infty)^2\right).
\end{equation}

Condition \eqref{eq:CouplingMinorization} preserves the conditions on the marginal, and enables us to write the transition kernel as 
\[
P_k(x,A) = \delta\,\mathcal{U}_{(0,b)}^{diag}(A) + (1-\delta)H_k(x,A),
\]
where $H_k(x,A) = (P_k(x,A) - \delta\,\mathcal{U}_{(0,b)}^{diag}(A))/(1-\delta)$. This means that, at each step, there is a probability $\delta$ of having $X_k=X'_k$ and distributed according a $\mathcal{U}_{(0,b)}$ distribution.

Formally, take a sequence $\{\xi_k\}_k$ of i.i.d. $Ber(\delta)$ random variables, and enlarge the Markov Chain by defining $\{(X_k,X'_k,\xi_k)\}_k$ on the enlarged state space $\left( [0,\infty)^2\times\{0,1\}, \mathcal{B}\left( [0,\infty)^2\times\{0,1\} \right) \right)$. The enlarged chain has the following transition kernel:
\[
\hat{P}_k(x\times i, A\times j) = \begin{cases}
    \delta \,\mathcal{U}_{(0,b)}^{diag}(A),\quad &j=1\\
    (1-\delta) H_k(x,A), \quad &j=0
\end{cases}
\]
for $x\in[0,\infty)^2,\: A\in \mathcal{B}\left([0,\infty)^2\right),\:i,j\in\{0,1\}$.

It is clear that the coupling time $T$ is the first $t_k$ for which $\xi_k=1$, i.e. $T=t_\sigma$, where $\sigma=\inf\{k:\xi_k=1\}$. Since the $\xi_k$'s are i.i.d. Bernoulli, $\sigma$ will follow a geometric distribution of parameter $\delta$.

\chapter{Regenerative Rejection Sampling}
\label{chap:5}
This chapter is devoted to the presentation of a novel method, recently introduced in \cite{Botev:MachineLearning}, to sample from a given probability distribution, called \emph{Regenerative Rejection Sampling}. This method expands on the concept of \emph{Rejection Sampling}, and for this reason, we start our study by reviewing this method.

\section{Rejection Sampling}
The \emph{Rejection Sampling method}, introduced by John Von Neumann in 1951 \cite{VonNeumann:RS}, is one of the most common Monte Carlo algorithms for sampling from a given probability distribution. The method is \emph{universal}, meaning that one could, in principle, sample from any target density $f(x)$, with the aid of a proposal density $g(x)$ \cite{Martino:Sampling}. However, the method needs some assumptions to work:
\begin{itemize}
    \item The target density is known up to a multiplicative constant, i.e. only $f_\propto(x) = M f(x)$ is known, where $M=\int f_\propto(x)\text{d}x$.
    \item The likelihood ratio $f_\propto(x)/g(x)$ must be upper-bounded by a \emph{known} constant $C \geq M$ for all $x$.
\end{itemize}

The first assumption does not look restrictive, since it is usually the case for most practical applications. On the other hand, the second one shows an important limitation of the method. In cases where this constant $C$ is cumbersome to calculate analytically, Rejection Sampling cannot be used as a sampling procedure. If one is only able to compute a loose upper-bound of the likelihood ratio, the sampling procedure will be highly inefficient, since the proposed samples would be rejected with high probability (we explain this more in detail in Remark \ref{rmk:ProbAccRS}).

The algorithm is outlined as follows:
\algrenewcommand\algorithmicrequire{\textbf{Input:}}
\algrenewcommand\algorithmicensure{\textbf{Output:}}
\begin{algorithm}
\caption{Rejection Sampling}
\begin{algorithmic}[1]
\Require{Proposal pdf $g$, constant $C$ such that $f_\propto(x)/g(x)\leq C$ for all $x$.}
\Ensure{Random variable $Y$ distributed according to target pdf $f$}
\Repeat
    \State Simulate $X \sim g$ and $U\sim\mathcal{U}(0,1)$ independently.
    \State $W \gets f_\propto(X)/g(X)$
\Until{$W\geq CU$}
\State \textbf{return} $Y=X$
\end{algorithmic}
\end{algorithm}

We prove that the algorithm generates a random variable with the correct desired distribution $f$:
\begin{proposition}
    The random variable $Y$ generated by the Rejection Sampling algorithm has the correct pdf $f(x)=M^{-1} f_\propto (x)$.
\end{proposition}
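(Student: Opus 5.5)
The plan is to compute the distribution of a single proposal conditioned on acceptance, and then use the i.i.d. structure of the repeat loop to show that the returned $Y$ has exactly this conditional law. Each iteration draws an independent pair $(X,U)$ with joint density $g(x)\mathbb{I}_{(0,1)}(u)$. Call the iteration a success when $W=f_\propto(X)/g(X)\geq CU$, i.e. $U\leq f_\propto(X)/(Cg(X))$. Since $f_\propto/g\leq C$, this ratio lies in $[0,1]$ $g$-a.e., so conditionally on $X=x$ the success probability is exactly $f_\propto(x)/(Cg(x))$.

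First I compute the acceptance probability $p$ by integrating this conditional probability against $g$:
\[
p=\int \frac{f_\propto(x)}{Cg(x)}\,g(x)\,\text{d}x=\frac{M}{C}.
\]
This is positive and at most $1$ because $C\geq M$. The number of iterations is therefore $Geom(p)$ and finite almost surely, so $Y$ is well defined.

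Next, fix a Borel set $A$. By the same computation,
\[
\mathbb{P}(X\in A,\ \text{success})=\int_A \frac{f_\propto(x)}{C}\,\text{d}x .
\]
The iterations are i.i.d., and $Y$ is the $X$ of the first successful one. Summing over the index $n$ of that first success gives
\[
\mathbb{P}(Y\in A)=\sum_{n\geq1}(1-p)^{n-1}\,\mathbb{P}(X\in A,\ \text{success})=\frac{1}{p}\int_A \frac{f_\propto(x)}{C}\,\text{d}x=\int_A \frac{f_\propto(x)}{M}\,\text{d}x .
\]
Hence $Y$ has density $M^{-1}f_\propto=f$.

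The only delicate point is the set where $g=0$ while $f_\propto>0$. There the ratio is undefined and the bound $f_\propto\leq Cg$ would fail, and the calculation of $p$ would lose mass. I will assume, as the hypothesis implicitly requires, that $f_\propto\leq Cg$ everywhere. Then $f_\propto$ vanishes wherever $g$ does, and the integrals over $\{g>0\}$ coincide with those over the whole space. Everything else is a direct Fubini and geometric-series computation.
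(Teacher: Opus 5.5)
Your proof is correct and follows the same route as the paper. Both compute $\mathbb{P}\bigl(X\in A,\,U\le f_\propto(X)/(Cg(X))\bigr)=C^{-1}\int_A f_\propto$ and the acceptance probability $M/C$ by integrating out $U$, then take the ratio. The differences are that you justify, via the geometric series over the index of the first success, the step the paper simply asserts (that $Y$ has the law of $X$ conditioned on acceptance), and that you work with general Borel sets rather than the one-dimensional CDF.
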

\begin{proof}
    First, the distribution of $Y$ is equal to the distribution of $X$, conditional on $W\geq CU$. Hence
    \[
    \mathbb{P}(Y\leq x) = \mathbb{P}\left( X\leq x \,\Bigg|\, CU\leq\frac{f_\propto(X)}{g(X)}\right) = \frac{\mathbb{P}\left( X\leq x,U\leq\frac{f_\propto(X)}{Cg(X)} \right)}{\mathbb{P}\left( U\leq\frac{f_\propto(X)}{Cg(X)} \right)}.
    \]
    Hence, we have:
    \[
    \begin{split}
        \mathbb{P}\left( X\leq x,U\leq\frac{f_\propto(X)}{Cg(X)} \right) &= \int_{-\infty}^x\int_0^{f_\propto(y)/Cg(y)} \text{d}u \ g(y)\text{d}y\\
        &= \int_{-\infty}^x\frac{f_\propto(y)}{Cg(y)} g(y) \text{d}y\\
        &=\frac{1}{C}\int_{-\infty}^x f_\propto(y)\text{d}y,
    \end{split}
    \]
    and
    \[
    \begin{split}
        \mathbb{P}\left( U\leq\frac{f_\propto(X)}{Cg(X)} \right) &= \int_\mathbb{R}\int_0^{f_\propto(y)/Cg(y)} \text{d}u \ g(y)\text{d}y\\
        &= \int_\mathbb{R}\frac{f_\propto(y)}{Cg(y)} g(y) \text{d}y\\
        &=\frac{1}{C}\int_\mathbb{R} f_\propto(y)\text{d}y.
    \end{split}
    \]
    To conclude,
    \[
    \mathbb{P}(Y\leq x) = \frac{\int_{-\infty}^x f_\propto(y)\text{d}y}{\int_\mathbb{R} f_\propto(y)\text{d}y} = \int_{-\infty}^x f(y)\text{d}y.
    \]
\end{proof}

The algorithm has an interesting geometric/graphical interpretation \cite{Martino:Sampling}: first, uniform points $(X,U)$ are sampled in the region $\mathcal{A}_{Cg} = \{(x,u)\in \mathbb{R}^2: 0\leq u\leq Cg(x)\}$, and then only the samples $(Y,U)$ that lie in the region $\mathcal{A}_{f_\propto} = \{(y,u)\in\mathbb{R}^2: 0\leq u \leq f_\propto(y)\}$ are retained. These points have the correct distribution $f(x) = f_\propto(x)/\int f_\propto(x)$. This is shown in the following lemma, based on \cite[\emph{Fundamental theorem of simulation}]{RobertCasella:MC} and \cite{Devroye:RVG}.

\begin{lemma}
\label{lemma:geomInterp}
    Consider a non-negative integrable function $\tilde{h}\colon \mathbb{R}\rightarrow\mathbb{R}_+$, such that $\int_\mathbb{R} \tilde{h}\,\text{\emph{d}}y\neq 0$. Consider also the region $\mathcal{A}_{\tilde{h}}=\{(y,u)\in\mathbb{R}^2:0\leq u\leq \tilde{h}(y)\}$, and the normalized probability density function $h(y) = \tilde{h}(y)/\int_\mathbb{R} \tilde{h}(y)\text{\emph{d}}y$. Then, a pair of random variables $(Y,U)$ is uniformly distributed in $\mathcal{A}_{\tilde{h}}$ if and only if $Y\sim h$ and $U|Y\sim\mathcal{U}_{[0,\tilde{h}(Y)]}$.
\end{lemma}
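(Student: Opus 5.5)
The plan is to compute the joint law of $(Y,U)$ on $\mathbb{R}^2$ in both directions, writing $Z=\int_\mathbb{R}\tilde h(y)\,\text{d}y\in(0,\infty)$.

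First I will compute the Lebesgue measure of $\mathcal{A}_{\tilde h}$. By Tonelli's theorem, applied to the indicator $\mathbb{I}(0\le u\le \tilde h(y))$, we get
\[
|\mathcal{A}_{\tilde h}|=\int_\mathbb{R}\int_0^{\tilde h(y)}\text{d}u\,\text{d}y=Z.
\]
Since $0<Z<\infty$, the uniform distribution on $\mathcal{A}_{\tilde h}$ is well defined: it is the law with joint density $p(y,u)=Z^{-1}\mathbb{I}_{\mathcal{A}_{\tilde h}}(y,u)$.

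For the direction ($\Leftarrow$), assume $Y\sim h$ and $U\mid Y\sim\mathcal{U}_{[0,\tilde h(Y)]}$. I will write the joint law as the marginal times the conditional. On the set $\{\tilde h(y)>0\}$ this gives the density
\[
h(y)\cdot\frac{\mathbb{I}(0\le u\le\tilde h(y))}{\tilde h(y)}=\frac{\tilde h(y)}{Z}\cdot\frac{\mathbb{I}(0\le u\le \tilde h(y))}{\tilde h(y)}=Z^{-1}\mathbb{I}_{\mathcal{A}_{\tilde h}}(y,u).
\]
To be rigorous, I will verify $\mathbb{P}((Y,U)\in B)=Z^{-1}|B\cap\mathcal{A}_{\tilde h}|$ for Borel sets $B$ via Fubini. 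The set $\{\tilde h=0\}$ needs a remark. It has $Y$-probability zero, and the corresponding slice $\{(y,0)\}$ of $\mathcal{A}_{\tilde h}$ is a Lebesgue-null set in $\mathbb{R}^2$, so it can be ignored.

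For the direction ($\Rightarrow$), assume $(Y,U)$ has density $Z^{-1}\mathbb{I}_{\mathcal{A}_{\tilde h}}$. I will integrate out $u$ to obtain the marginal density of $Y$, namely $Z^{-1}\tilde h(y)=h(y)$. Dividing the joint density by this marginal, wherever $h(y)>0$, gives the conditional density $\mathbb{I}(0\le u\le\tilde h(y))/\tilde h(y)$, which is the density of $\mathcal{U}_{[0,\tilde h(y)]}$. Again the null set $\{h=0\}$ does not matter.

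The only delicate point is measure-theoretic. I need to check that $\mathcal{A}_{\tilde h}$ is Borel, which holds because it is defined by inequalities between measurable functions of $(y,u)$. I also need the null-set conventions for the conditional distribution, which are specified only $h$-almost everywhere. Beyond that, the proof is a direct application of Fubini's theorem together with the factorization of a joint density into marginal and conditional.
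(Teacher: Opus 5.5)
Your proposal is correct and follows the paper's proof essentially step for step. Both compute $|\mathcal{A}_{\tilde h}|=\int_\mathbb{R}\tilde h$, marginalize the uniform density $\mathbb{I}_{\mathcal{A}_{\tilde h}}/\int_\mathbb{R}\tilde h$ to obtain $h$, divide to get the conditional law $\mathcal{U}_{[0,\tilde h(y)]}$, and for the converse multiply the marginal and conditional densities. Your extra care about the null set $\{\tilde h=0\}$, where the paper divides by $\tilde h(y)$ without comment, and about the measurability of $\mathcal{A}_{\tilde h}$ tightens the argument without changing the route.
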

The proof of this result can be seen in Appendix \ref{sec:ProofGeometricInterpretationRS}.

Usually, the algorithm will reject some samples. By following the geometric interpretation, the rejected points are the ones that lie in the region $\mathcal{A}_{Cg-f_\propto}=\{(x,u)\in \mathbb{R}^2:f_\propto(x)\leq u \leq Cg(x)\}$. We can easily characterize the probability that a sample will be accepted/rejected.
\begin{remark}
\label{rmk:ProbAccRS}
    The probability of acceptance of the algorithm is
    \[
    \mathbb{P}\left( U\leq\frac{f_\propto(X)}{Cg(X)} \right) = \frac{M}{C}.
    \]
    Since each point trial $(X,U)$ is generated independently, the number of trials that we have to generate before getting a success $(X,U)$ has $Geom(M/C)$ distribution, which has mean equal to $C/M$. Thus, $C$ should be as close as possible to $M$ for the algorithm to be efficient, which implies that the choice of the auxiliary distribution $g$ can significantly change the performance of the method.
\end{remark}

To show the functioning of the algorithm and its graphical representation, we provide an example.
\begin{example}
\label{ex:RSGammaExp}
    Suppose we want to generate a random variable from the $Gamma(\alpha,\lambda)$ distribution, with pdf
    \[
    f(x) = \frac{\lambda^\alpha x^{\alpha-1}e^{-\lambda x}}{\Gamma(\alpha)},\quad x\geq 0.
    \]
    For this example, consider the case $\alpha=2,\lambda=1$. Such pdf lies under the graph of $Cg(x)$, where $C=1.6$ and $g(x)=0.4e^{-0.4x},\:x\geq 0$. Thus, we can use as proposal distribution an $Exp(0.4)$.
    
    Figure \ref{fig:RejectionSampling} shows $250$ independent samples from the $Gamma(2,1)$ distribution (in blue), which we can clearly see are lying under the graph of 
    \[
    f_\propto(x)=x^{\alpha-1}e^{-\lambda x}.
    \]In orange we see all the rejected samples, that lie in the region between the graphs of $f_\propto$ and $Cg$.
    For this particular example, the theoretical probability of acceptance is $M/C=(\Gamma(\alpha)/\lambda^\alpha)/C = 1/1.6=0.625$, while the empirical acceptance rate was $0.611$.
    \begin{figure}[h]
        \centering
        \includegraphics[width=0.55\linewidth]{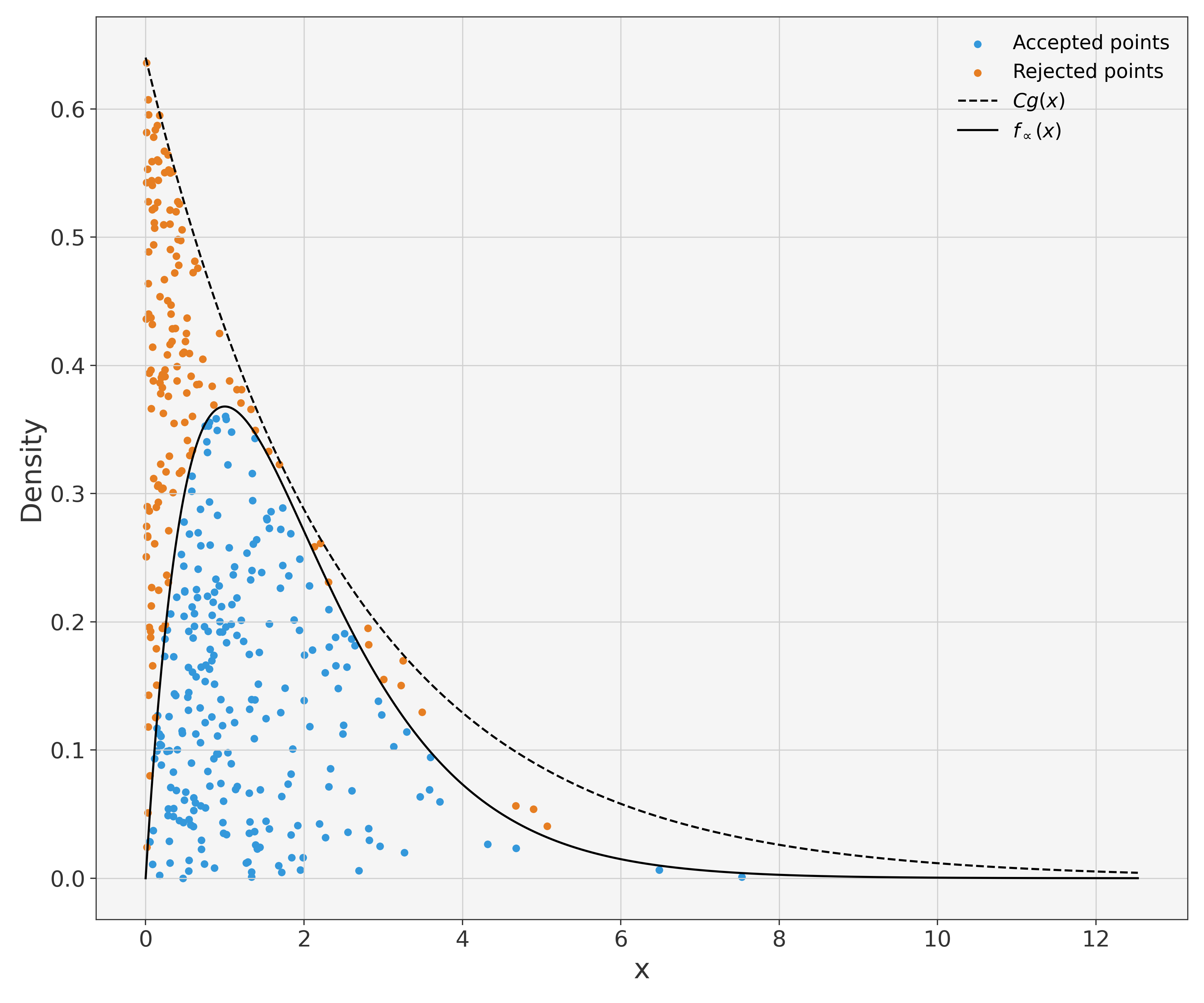}
        \caption{Rejection Sampling}
        \label{fig:RejectionSampling}
    \end{figure}
\end{example}

Even if the previous discussion was formulated for one-dimensional distributions, the Rejection Sampling method can be used also in the multi-dimensional case. However, it is known that the method suffers from the curse of dimensionality \cite{Kroese:HandbookMC}.

\section{Regenerative Rejection Sampling}
In the cases where the constant $C$ is too cumbersome to compute, or does not even exist (i.e. the ratio $f_\propto/g$ is unbounded), we cannot use the Rejection Sampling method to obtain a sample from the target distribution. Instead, one could resort to using the MCMC Independence Sampler, with the same proposal distribution, because this method doesn't require knowledge, or existence, of the bounding constant $C$ to work. However, as we shall see later on, when $C$ does not exist, the MCMC Independence Sampler cannot converge geometrically fast in total variation.

In this section we propose an alternative and simpler method, called \emph{Regenerative Rejection Sampling} (RRS). It expands on the concept of Rejection Sampling and can be used as an alternative to Markov Chain Monte Carlo methods. It is based on the same setting of the Rejection Sampling and Independence Sampler methods: the goal is to generate a sample from a target distribution with density $f(x)$, known only up to a multiplicative constant (we know $f_\propto(x)=M f(x)$), and we only rely on samples from a proposal distribution with density $g(x)$. In this instance, however, we do not need to know the value of the upper bound  $\sup_x f_\propto(x)/g(x)$, and do not even require that it is finite.

Unfortunately, The RRS method does not return exact samples from the target distribution, but only approximate ones. This is due to its iterative structure based on the construction of a regenerative process. In this sense, it is similar to existing MCMC methods. 

We provide the pseudo-code of the Regenerative Rejection Sampling algorithm in Algorithm \ref{algo:RRS}, and a memory-efficient version in Algorithm \ref{algo:MERRS}. Even if the pseudo-codes are specified for targeting a one-dimensional distribution, everything works identically for the multidimensional case.
\begin{algorithm}
\caption{Regenerative Rejection Sampling}
\label{algo:RRS}
\begin{algorithmic}[1]
\Require{Proposal pdf $g$, time $t$}
\Ensure{Random variable $X$ approximately distributed according to target pdf $f$}
\State $N \gets 0$
\Repeat
    \State $N \gets N+1$
    \State Simulate $X_N \sim g$
    \State $W_N \gets f_\propto(X_N)/g(X_N)$
\Until{$W_1 + \dots + W_N > t$}
\State \textbf{return} $X \gets X_N$
\end{algorithmic}
\end{algorithm}

\begin{algorithm}
\caption{Memory-Efficient Regenerative Rejection Sampling}
\label{algo:MERRS}
\begin{algorithmic}[1]
\Require{Proposal pdf $g$, time $t$}
\Ensure{Random variable $X$ approximately distributed according to target pdf $f$}
\State $S \gets 0$
\Repeat
    \State Simulate $X \sim g$
    \State $W \gets f_\propto(X)/g(X)$
    \State $S \gets S+W$
\Until{$S > t$}
\State \textbf{return} $X$
\end{algorithmic}
\end{algorithm}

The method might be most beneficial in situations where $C$ is difficult to compute or even infinite. On the downside, since it only returns random variables approximately distributed as the target distribution, we may need to run the algorithm for a long time to obtain a high quality sample, potentially resulting in a large number of simulations from the proposal $g$.

In Theorem \ref{thm:RRSConvergence}, we prove the convergence of the algorithm to the desired target distribution, but first let us point out the construction of the underlying regenerative process.
\begin{remark}
\label{rmk:RRSRegProcConstr}
    The random variables $X_1,X_2,\dots$ are an i.i.d. sequence drawn from the proposal distribution with density $g$. Now consider the random variables $W_i=f_\propto (X_i)/g(X_i), \: i=1,2,\dots$. Since the $X_i$'s are i.i.d., so are the $W_i$'s.

    Assuming that $\mathbb P[W>0]=1$, Algorithms \ref{algo:RRS} and \ref{algo:MERRS} construct a zero-delayed regenerative process $\{Y(t)\}_{t\geq0}$ with regeneration times $T_0=0,T_n = W_1+\dots+W_n, \: n\geq 1$, which takes the constant value $X_i$ on the cycle $[T_{i-1},T_i)$. Hence, the $W_i$'s represent the cycle lengths of the regenerative process.
\end{remark}

The algorithm converges to the correct target distribution, as is shown in Theorem \ref{thm:RRSConvergence}. Its proof is based on the theory of regenerative processes, see Chapter \ref{chap:3}.
\begin{theorem}[Convergence of Regenerative Rejection Sampling]
\label{thm:RRSConvergence}
    If the distribution of the $W_i$'s has finite mean $\mu<\infty$ and is spread out, then the output of Algorithm \ref{algo:RRS} converges in total variation sense to its limiting distribution with pdf $f(x)$ as $t\rightarrow\infty$.
\end{theorem}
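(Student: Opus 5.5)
The plan is to identify the output of Algorithm \ref{algo:RRS} with $Y(t)$, where $\{Y(t)\}_{t\geq0}$ is the zero-delayed regenerative process of Remark \ref{rmk:RRSRegProcConstr}. We then apply Theorem \ref{thm:RegProcTVConv} and compute the limiting law explicitly via formula \eqref{eq:RegProcLimitingDist}.

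\emph{Identification.} The algorithm stops at the first $N$ with $W_1+\dots+W_N>t$. This is exactly $N=N(t)$, the index such that $T_{N-1}\leq t<T_N$. Hence the returned value is $X_{N(t)}=Y(t)$, the constant value of $Y$ on the cycle containing $t$. Almost-sure termination follows from $T_n\to\infty$ a.s., which holds since the $W_i$ are i.i.d. and positive. Here I use $\mathbb{P}(W>0)=1$, which should be assumed, or arranged by restricting to $\{f_\propto>0\}$ and assuming the support of $g$ covers it.

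\emph{Hypotheses of Theorem \ref{thm:RegProcTVConv}.} Each cycle $\{Y(T_{n}+s)\}_{0\le s<W_{n+1}}$ is a deterministic function of the pair $(X_{n+1},W_{n+1})$, and these pairs are i.i.d. Therefore the post-$T_n$ process is independent of $T_0,\dots,T_n$ and has the same law for every $n$, so $Y$ is regenerative in the sense of Definition ``Regenerative Process 2.0''. The cycle length distribution is the law of $W$, which is spread-out with mean $\mu<\infty$ by assumption. For joint measurability in $(t,\omega)$, note that $Y(t,\omega)=\sum_{i\ge1}X_i(\omega)\,\mathbb{I}(T_{i-1}(\omega)\le t<T_i(\omega))$. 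This is a countable sum of jointly measurable functions, and the paths are right-continuous. Theorem \ref{thm:RegProcTVConv} then gives total variation convergence of the law of $Y(t)$ to $\mathbb{P}_e$.

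\emph{Limit.} For a bounded measurable $h$, the first cycle gives $\int_0^{X}h(Y(s))\,\text{d}s=W_1h(X_1)$, where $X=W_1$ is the cycle length. Thus
\[
\mathbb{E}_e[h(Y)]=\frac{\mathbb{E}[W_1h(X_1)]}{\mathbb{E}[W_1]}=\frac{\int h(x)\frac{f_\propto(x)}{g(x)}g(x)\,\text{d}x}{\int \frac{f_\propto(x)}{g(x)}g(x)\,\text{d}x}=\frac{\int h\,f_\propto}{M}=\int h(x)f(x)\,\text{d}x .
\]
In particular $\mu=M$, so finiteness of $\mu$ is automatic, provided $g>0$ wherever $f_\propto>0$. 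This shows the limiting law has density $f$.

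\emph{Main obstacle.} The computation itself is routine. The delicate part is verifying the regenerative structure and the measurability rigorously, together with the support condition on $g$ that makes $\mu=M$ and $W>0$ a.s. If instead $W=0$ occurs with positive probability, I would handle it by discarding the zero-weight proposals. This does not change the process $Y$, since zero-length cycles occupy no time. It does, however, require restating the cycle length law as the law of $W$ conditioned on $\{W>0\}$.
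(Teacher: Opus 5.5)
Your proposal is correct and matches the paper's proof: you identify the output with $Y(t)=X_{N(t)}$ of the regenerative process in Remark~\ref{rmk:RRSRegProcConstr}, invoke Theorem~\ref{thm:RegProcTVConv}, and evaluate the limiting law as $\mathbb{E}_g[h(X_1)W_1]/\mathbb{E}_g[W_1]=\int h(x)f(x)\,\text{d}x$. Your extra checks are correct and fill in details the paper leaves implicit: the identification $N=N(t)$, joint measurability, and the support condition $g>0$ on $\{f_\propto>0\}$, which is needed for $\mathbb{E}_g[W]=M$ and $W>0$ almost surely.
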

\begin{proof}
    We have shown the construction of the underlying regenerative process in Remark \ref{rmk:RRSRegProcConstr}. Hence, by Theorem \ref{thm:RegProcTVConv}, the limiting distribution of the regenerative process is given by, for any function $h$
    \[
    \begin{split}
    \mathbb{E}_e[h(Y(t))] = \frac{1}{\mu} \mathbb{E}_0\left[ \int_0^{W_1} h(Y(s))\text{d}s \right] &= \frac{\mathbb{E}_g\left[h(X_1)\int_0^{W_1} 1 \,\text{d}s\right]}{\mathbb{E}_g[W_1]}\\
    &= \frac{\mathbb{E}_g[h(X_1)W_1]}{\mathbb{E}_g[W_1]}\\
    &= \frac{\int h(x) \frac{f_\propto(x)}{g(x)}g(x)\,\text{d}x}{\int \frac{f_\propto(x)}{g(x)}g(x) \,\text{d}x}\\
    &= \int h(x)f(x) \, \text{d}x\\
    &= \mathbb{E}_f[h(Y)],
    \end{split}
    \]
    where $Y\sim f$. Hence, $Y(t)$ converges in total variation sense to the distribution with density $f(x)$. 
\end{proof}

Let us now show an application of this method, also seen in \cite{Botev:MachineLearning}, to the setting of Example \ref{ex:RSGammaExp}.
\begin{example}
\label{ex:RRSGammaExp1}
    As in the previous example, we aim to simulate from the target pdf $f_\propto(x)=f(x) = x e^{-x},\: x\in[0,\infty)$, i.e. a Gamma(2,1) distribution. Suppose that the proposal pdf is $g(x) = e^{-x},\:x\in[0,\infty)$. It is clear that the likelihood ratio $f(x)/g(x)$ is \emph{not} bounded, hence we cannot use the classical Rejection Sampling method. But we can use the regenerative rejection sampling procedure to get approximate samples.
    \begin{figure}[h]
        \centering
        \includegraphics[width=0.55\linewidth]{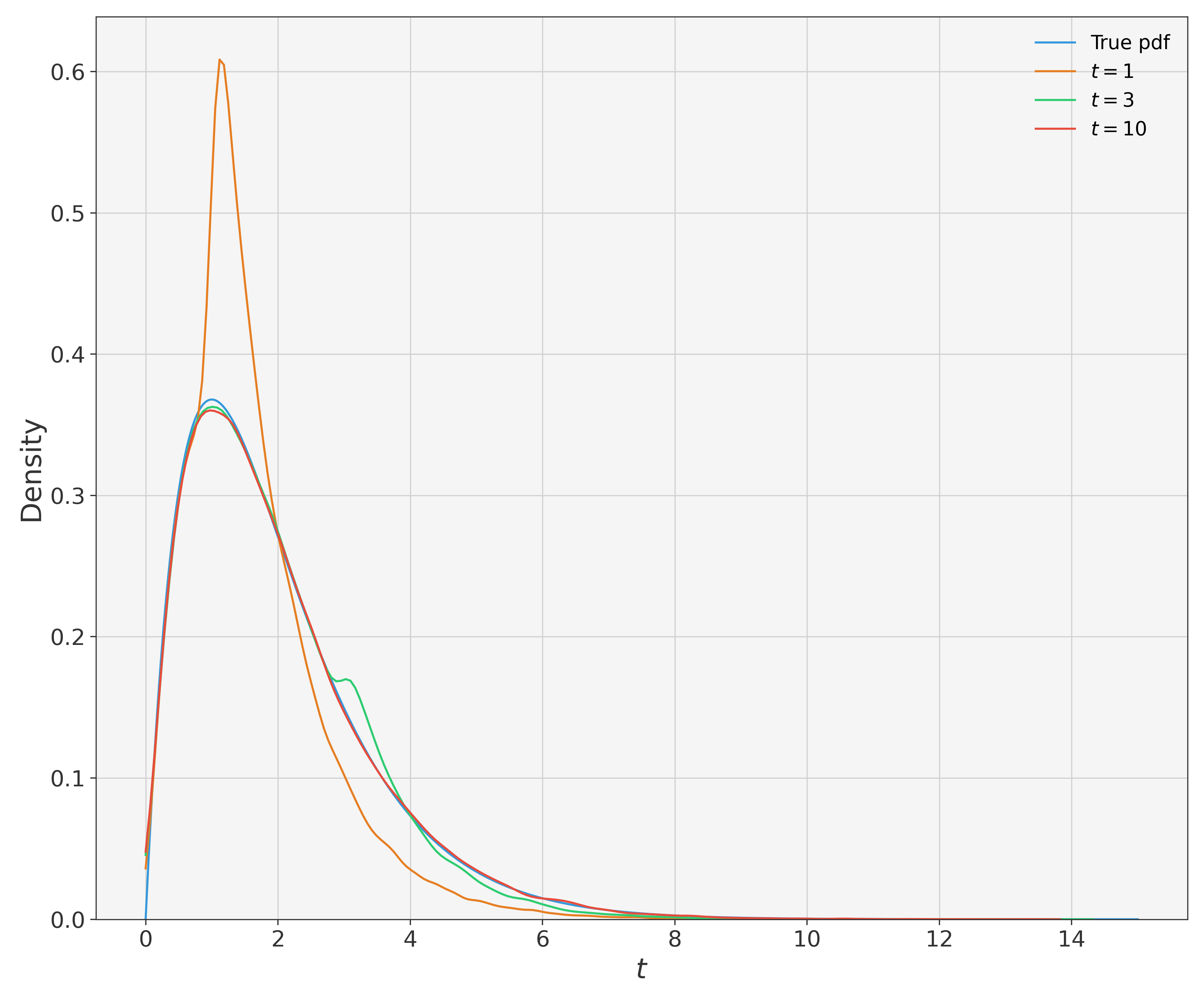}
        \caption{Regenerative Rejection Sampling}
        \label{fig:RRS}
    \end{figure}
    Figure \ref{fig:RRS} shows the performance of the Regenerative Rejection Sampling method. For each value of $t\in\{1,3,10\}$, we naively ran the RRS method $10^6$ times independently, and then plotted the KDE of the samples (using functions from the python package \texttt{seaborn}). It is straightforward to see that, as $t$ increases, the difference between the KDE and the true pdf becomes negligible (for $t=10$ the two are almost indistinguishable). Hence, we clearly appreciate the convergence to the correct target distribution.
\end{example}

\subsection{Rate of convergence of the Regenerative Rejection Sampling method}
The goal of this Section is to provide a thorough explanation of the convergence behavior of the Regenerative Rejection Sampling method, as is done for many Markov Chain Monte Carlo algorithms (see, for example, \cite{MergensenTweedie:MHConvergence,Roberts:Langevin}). In our case, we have at our disposal the full power of the theory of regenerative processes, and the related results on their convergence \cite{Asmussen:Applied,Lindvall:Coupling}. 

Since the RRS method only involves the construction of the underlying regenerative process, the results of Chapter \ref{chap:4} are directly applicable to this situation. Hence, we can formulate the following Theorem concerning the Regenerative Rejection Sampling method.
\begin{theorem}[Rate of Convergence of Regenerative Rejection Sampling method]
\label{thm:RateConvergenceRRS}
    If the cycle length distribution of the underlying regenerative process (i.e. the distribution of $W$) is spread-out, has finite mean and satisfies, for some $\eta>0$
    \begin{equation}
    \label{eq:RRSExpMoments}
    \mathbb{E}_g[e^{\eta W}] = \int_0^\infty e^{\eta f_\propto(x)/g(x)} g(x) \,\text{\emph{d}}x <\infty,
    \end{equation}
    then the RRS method has exponential convergence
    \[
    \|\mathbb{P}(Y(t)\in \cdot)-\mathbb{P}(Y(\infty)\in\cdot)\|=O(e^{-\varepsilon t}),
    \]
    for some $\varepsilon>0$.

    If, instead of condition \eqref{eq:RRSExpMoments} the cycle length distribution only verifies, for some $p>0$
    \begin{equation}
        \label{eq:RRSPolyMoments}
        \mathbb{E}_g[W^{p+1}] = \int_0^\infty \left( \frac{f_\propto(x)}{g(x)} \right)^{p+1} g(x)\,\text{\emph{d}}x<\infty,
    \end{equation}
    then the RRS method has polynomial convergence
    \[
    \|\mathbb{P}(Y(t)\in\cdot)-\mathbb{P}(Y(\infty)\in\cdot)\| = o(t^{-p}).
    \]
\end{theorem}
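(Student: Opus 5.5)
The plan is to reduce both claims to the general results of Chapter \ref{chap:4}, using the regenerative structure identified in Remark \ref{rmk:RRSRegProcConstr}. First, recall that the process $\{Y(t)\}_{t\geq0}$ built by Algorithm \ref{algo:RRS} is a zero-delayed regenerative process. Its regeneration epochs are $T_n=W_1+\dots+W_n$, and it is piecewise constant, equal to $X_i$ on $[T_{i-1},T_i)$. Consequently the paths are right-continuous and $Y(t,\omega)$ is jointly measurable in $(t,\omega)$. The cycle length distribution $F$ is the law of $W=f_\propto(X)/g(X)$ with $X\sim g$. By hypothesis $F$ is spread-out with finite mean $\mu=\mathbb{E}_g[W]=M$. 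So Theorem \ref{thm:RegProcTVConv} applies, and the limiting law $\mathbb{P}_e$ exists. Exactly as in the proof of Theorem \ref{thm:RRSConvergence}, $\mathbb{P}_e$ is the distribution with density $f$, which identifies $\mathbb{P}(Y(\infty)\in\cdot)$.

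For the exponential statement, I would observe that \eqref{eq:RRSExpMoments} is precisely the condition $\int_0^\infty e^{\eta x}F(\text{d}x)<\infty$ in Corollary \ref{cor:ExpConvRegProc}. This follows from the change of variables $\mathbb{E}[e^{\eta W}]=\int e^{\eta f_\propto(x)/g(x)}g(x)\,\text{d}x$. Since the process is zero-delayed, no condition on $X_0$ is needed. Corollary \ref{cor:ExpConvRegProc} then gives $\|\mathbb{P}(Y(t)\in\cdot)-\mathbb{P}_e\|=O(e^{-\varepsilon t})$ for some $\varepsilon>0$. If one wants to be explicit, the underlying mechanism is the following. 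Lemma \ref{lemma:ExpMomentsT} gives $\mathbb{E}[e^{\varepsilon T}]<\infty$ for the coupling time of Theorem \ref{thm:CouplingTheorem}. Statement 3 of Theorem \ref{thm:exponentialRateRt} is then applied to $z_A(t)=\mathbb{P}_0(Y(t)\in A,\,t<W_1)\leq\mathbb{P}(W>t)\leq e^{-\eta t}\mathbb{E}[e^{\eta W}]$, uniformly in $A$.

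For the polynomial statement, I would similarly note that \eqref{eq:RRSPolyMoments} is the moment condition $\int_0^\infty x^{p+1}F(\text{d}x)<\infty$ required by the Polynomial Convergence theorem for regenerative processes. That theorem, applied in the zero-delayed case, yields the $o(t^{-p})$ rate directly.

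The only genuinely delicate point is making sure the hypotheses of the quoted results hold verbatim. This means checking that $\mathbb{P}(W>0)=1$, so that cycles are nondegenerate and $F(0)=0$; this needs $f_\propto>0$ $g$-a.e., or else a restriction to the support. It also means checking the joint measurability of $Y$, and that the finite-mean and spread-out hypotheses are taken from the statement rather than derived. The uniformity in $A$ of the $O(e^{-\varepsilon t})$ bound, which is needed for the total variation norm, is inherited from Corollary \ref{cor:ExpConvRegProc}, since the bound on $z_A$ does not depend on $A$.
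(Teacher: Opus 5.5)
Your proposal is correct and takes the paper's approach. The paper gives no separate proof: it notes that the RRS process of Remark~\ref{rmk:RRSRegProcConstr} is zero-delayed regenerative with cycle lengths $W_i$, so \eqref{eq:RRSExpMoments} and \eqref{eq:RRSPolyMoments} are exactly the hypotheses of Corollary~\ref{cor:ExpConvRegProc} and of the polynomial convergence theorem, with the limit identified as in Theorem~\ref{thm:RRSConvergence}. Your explicit checks spell out what the paper leaves implicit: $\mathbb{P}(W>0)=1$, joint measurability, and the $A$-uniform Chernoff bound on $z_A$, which is admissible in statement 3 of Theorem~\ref{thm:exponentialRateRt} because $\varepsilon=\eta/p<\eta$ in Lemma~\ref{lemma:ExpMomentsT}.
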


Theorem \ref{thm:RateConvergenceRRS} summarizes the assumptions needed for exponential and polynomial convergence of the Regenerative Rejection Sampling method. Since the cumulative distribution function of the $W$'s is given by
\[
F_W(y)=\mathbb{P}(W\leq y) = \mathbb{P}\left( \frac{f_\propto(X)}{g(X)}\leq y \right) = \int_0^\infty \mathbb{I}\left( \frac{f_\propto(x)}{g(x)} \leq y \right) g(x) \,\text{d}x,
\]
we do not have access to a general closed form, apart from restricting ourselves to cases in which $w(x)=f_\propto(x)/g(x)$ is monotonic (or bijective and differentiable if $x\in\mathbb{R}^n,\:n\geq2$). In such situations we can apply the renowned \emph{change of variables formula} and obtain a closed form for the density of the distribution of the $W$'s.\\
But this does not represent the general case. Thus, the theoretical convergence analysis has to be mainly done on a case-by-case basis, to be able to understand the rate at which the method converges.

At this point, we present two theoretical examples that thoroughly show the convergence behaviour of the algorithm under different assumptions.
\begin{example}
\label{ex:RRSGammaExp2}
    Assume we are in the same setting of Example \ref{ex:RRSGammaExp1}. In this case,
    \[
    \frac{f_\propto(X_i)}{g(X_i)} = \frac{X_i e^{-X_i}}{e^{-X_i}} = X_i,\quad i=1,2,\dots.
    \]
    Hence, the cycle length distribution is equal to the proposal distribution $Exp(1)$. Since this distribution is light-tailed, we expect to see an exponential rate of convergence of the method. Let's check this.

    By employing the notation of the theory of renewal processes, we define the regenerative process as 
    \[
    Y(t) = X_{N(t)}, \quad t\geq 0.
    \]
    Given the particularly easy setting of $X_i\sim Exp(1)$, we can actually compute the pdf of the distribution of $X_{N(t)}$. By a classical renewal argument, $Z^y(t) = \mathbb{P}(X_{N(t)}\leq y)$ satisfies a renewal equation with $z^y(t) = \mathbb{P}(X_{N(t)}\leq y,X_1>t) = \mathbb{P}(X_1\leq y,X_1>t) = (F(y)-F(t))\mathbb{I}(y>t)$. Then (since $U(\text{\emph{d}}x)=\delta_0(\text{\emph{d}}x)+\text{\emph{d}}x$, by Example \ref{ex:PP2}), we have
    \[
    \begin{split}
        Z^y(t) = U\ast z^y(t) &= \int_0^t z^y(t-x)U(\text{d}x)\\
        &= \int_0^t(F(y)-F(t-x))\mathbb{I}(y>t-x)\delta_0(\text{d}x)\\
        &\quad + \int_0^t(F(y)-F(t-x))\mathbb{I}(y>t-x)\text{d}x\\
        &= (F(y)-F(t))\mathbb{I}(y>t) + \int_0^t (F(y)-F(t-x))\mathbb{I}(x>t-y)\text{d}x\\
        &= \mathbb{I}(y>t)(e^{-t}-e^{-y}) + \mathbb{I}(y>t)\int_0^t (e^{-(t-x)}-e^{-y})\text{d}x\\
        &\quad + \mathbb{I}(y\leq t) \int_{t-y}^t (e^{-(t-x)}-e^{-y})\text{d}x\\
        &= \begin{cases}
            1-(1+t)e^{-y}, \quad &y>t\\
            1-(1+y)e^{-y}, \quad&y\leq t
        \end{cases}
    \end{split}
    \]
    By piecewise differentiation of this continuous cdf, we obtain the pdf $f_{X_{N(t)}}$, that has a discontinuity in $y=t$ (note that the discontinuity is clearly seen in Figure \ref{fig:RRS} for low values of $t$).
    \[
    f_{X_{N(t)}} (y)= \begin{cases}
        ye^{-y}, \quad &y\leq t\\
        (1+t)e^{-y},\quad &y>t
    \end{cases}
    \]
    For this specific case we see that, when $y\leq t$, the pdf of the regenerative process $Y(t)$ corresponds to the target $Gamma(2,1)$ pdf.

    To conclude, apply \cite[Lemma 2.1]{Tsybakov:NonparametricEstimation} $(*)$ to get
    
    \begin{equation}
    \label{eq:RRSGammaExpTV}
    \begin{split}
    \|\mathbb{P}(X_{N(t)}\in \cdot)-\mathbb{P}(Y(\infty)\in \cdot)\| &\stackrel{(*)}{=} \frac{1}{2} \int_0^\infty |f_{X_{N(t)}}(x)-f(x)|\text{\emph{d}}x\\
    &= \frac{1}{2}\int_t^{\infty} e^{-x}(x-(1+t))\text{\emph{d}}x\\
    &= e^{-t}(1+t)\\
    &= O(e^{-t}).
    \end{split}
    \end{equation}
    Thus, we have shown that the method converges exponentially, and, by using the closed form for the total variation distance \eqref{eq:RRSGammaExpTV}, we have a practical way of determining how large $t$ has to be to attain a certain error bound.
\end{example}




To conclude the Section, we present a generalization of Examples \ref{ex:RRSGammaExp1} and \ref{ex:RRSGammaExp2}, and analyze the rate of convergence of the Regenerative Rejection Sampling method.
\begin{example}
    Let us generalize the $Gamma-Exp$ framework of the previous examples. Suppose that we want to sample from a $Gamma(\alpha,1)$ target distribution, with density $f(x) = x^{\alpha-1} e^{-x}/\Gamma(\alpha),\: x\in[0,\infty)$ (assume we only know $f_\propto(x) = x^{\alpha-1} e^{-x}$), and we choose an $Exp(\beta)$ proposal distribution, with density $g(x)=\beta e^{-\beta x},\:x\in[0,\infty)$.

    Hence, for $X\sim g$, we have $W=X^{\alpha -1}e^{-(1-\beta)x}$. To obtain exponential convergence, we need, for some $\eta>0$
    \[
    \mathbb{E}_g\left[e^{\eta X^{\alpha-1}e^{-(1-\beta)X}}\right] <\infty.
    \]
    For all $\beta < 1$, $x^{\alpha-1}e^{-(1-\beta)x}\leq C< \infty$ for all $x$ and all $\alpha$. Hence, we trivially obtain exponential convergence, since $\mathbb{E}_g[e^{\eta W}]<\infty$ for all $\eta>0$.

    When $\beta = 1$, we obtain
    \[
    \mathbb{E}_g[e^{\eta W}] = \mathbb{E}_g[e^{\eta X^{\alpha -1}}] = \int_0^\infty e^{-x+\eta x^{\alpha-1}} \text{\emph{d}}x,
    \]
    which is finite (for some $\eta>0$) only when $\alpha \leq 2$.

    If $\beta >1$, we can only aim to get polynomial convergence. We have
    \[
    \begin{split}
    \mathbb{E}_g[W^p] &= \mathbb{E}_g\left[X^{p(\alpha-1)} e^{-p(1-\beta)X}\right]\\
    &= \int_0^\infty x^{p(\alpha-1)}e^{-[p(1-\beta)+\beta]x} \text{\emph{d}}x,
    \end{split}
    \]
    which is finite if $p(1-\beta)+\beta>0 \Leftrightarrow\beta<1+\frac{1}{p-1}$. For instance, if we choose $\beta=1+\frac{1}{9}$, then $\mathbb{E}_g[W^p]$ is finite for all $p\leq9$. Hence, by Theorem \ref{thm:RRSConvergence}, the Regenerative Rejection Sampling method converges polynomially with rate $o(t^{-8})$.

    Unfortunately, if we choose $\beta \geq 2$, the method is not even guaranteed to converge at polynomial rate. 
\end{example}

\section{Comparison with the Independence Sampler and Limitations of the Method}
We conclude the Chapter by comparing the Regenerative Rejection Sampling method with the popular Independent Metropolis Hastings Algorithm, which we will call \emph{Independence Sampler} (IS). Additionally, we explain which are the intrinsic limitations of the RRS method. Let us start with a description of the IS algorithm.

The Independent Metropolis Hastings algorithm is a special case of the general Metropolis Hastings algorithm introduced in \cite{Hastings:MCMC}. It wasn't until with \cite{Tierney:IndependenceSampler} that it was considered as a distinct method with its specific theoretical properties.

Its peculiarity is that the proposal distribution does not change at each step, i.e. it is \emph{independent} of the current state of the chain. Hence, if the target density is $f_\propto(x)$ and the proposal density is $q(x,y) = g(y)$, the acceptance probability takes the form (here $x$ is the current state of the chain and $y$ is the proposed state)
\[
\alpha(x,y) = \min\left\{1,\frac{w(y)}{w(x)}\right\},
\]
where $w(x) = f_\propto(x)/g(x)$ is the likelihood ratio, or the importance weight that would be used in importance sampling if we were to sample exactly from $\pi$ using the proposal distribution $f$ \cite{MergensenTweedie:MHConvergence}.

Due to its particular structure, we notice similarities with the Regenerative Rejection Sampling method. In both algorithms we always sample from the same proposal distribution, and construct an underlying stochastic process, regenerative in our case, and a Markov Chain for the Independence Sampler. For this reason, we compare, from a theoretical point of view, the convergence properties of the two algorithms.

The main result on the rate of convergence of the Independence Sampler is due to \cite{MergensenTweedie:MHConvergence}:
\begin{theorem}[Rate of Convergence of Independence Sampler]
    If we have that, for $C>0$ 
    \begin{equation}
    \label{eq:IndepSamplerBounded}
    w^{-1}(x) = \frac{g(x)}{f_\propto(x)} \geq \frac{1}{C}, \quad \text{for all $x$}
    \end{equation}
    then the Independence Sampler converges geometrically, i.e.
    \begin{equation}
        \|P^n(x,\cdot)-\pi(\cdot)\|\leq\left(1-\frac{1}{C}\right)^n,
    \end{equation}
    where $P^n$ is the $n$-step transition kernel of the underlying Markov Chain.

    Conversely, if condition \eqref{eq:IndepSamplerBounded} is not verified, the method does not converge geometrically.
\end{theorem}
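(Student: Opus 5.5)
The plan has two halves: a one-step minorization argument for the geometric bound, and a lower bound on the total variation distance for the converse.

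\textbf{Sufficiency.} The key observation is a Doeblin-type minorization of the one-step kernel. For the Independence Sampler, started at $x$ and moving to $A$,
\[
P(x,A)\geq \int_A g(y)\min\Big\{1,\tfrac{w(y)}{w(x)}\Big\}\,\text{d}y
= \int_A \min\Big\{g(y),\tfrac{f_\propto(y)}{w(x)}\Big\}\,\text{d}y .
\]
Since $g(y)\geq f_\propto(y)/C$ and $f_\propto(y)/w(x)\geq$ something only if $w(x)\leq C$, I would work in normalized form. Write $\pi=f=f_\propto/M$ and use the normalized bound $g(y)\geq \pi(y)/\beta$ with $\beta=C/M$. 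This gives $w(x)\leq C$ for all $x$, hence $f_\propto(y)/w(x)\geq f_\propto(y)/C$. Therefore $P(x,A)\geq \frac{1}{C}\int_A f_\propto = \frac{M}{C}\pi(A)$, a minorization with the stationary law itself as small measure, valid for every $x$ (the whole space is small). Up to the normalization convention, in which the statement's $1/C$ is read as $M/C$, or one assumes $M=1$, this is the $\varepsilon$ of the minorization.

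Next, apply the classical coupling/splitting argument, which is exactly the enlarged-chain construction of Section \ref{sec:CouplingProofDetailed}. Write $P(x,\cdot)=\varepsilon\pi+(1-\varepsilon)H(x,\cdot)$. Run one chain from $x$ and one from $\pi$ with a common $Ber(\varepsilon)$ coin, so that they coalesce at the first success. The coupling inequality \eqref{eq:couplingIneq1} then gives $\|P^n(x,\cdot)-\pi\|\leq(1-\varepsilon)^n$. Invariance of $\pi$ under $P$ is standard detailed balance and is assumed.

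\textbf{Necessity.} This is the main obstacle. Suppose $\operatorname{ess\,sup} w=\infty$. From a state $x$, the probability of leaving in one step is
\[
a(x)=\int g(y)\min\{1,w(y)/w(x)\}\,\text{d}y\leq \int \frac{f_\propto(y)}{w(x)}\,\text{d}y=\frac{M}{w(x)} .
\]
Hence $P^n(x,\{x\})\geq(1-M/w(x))^n$, assuming no atom issue, i.e.\ the rejection mass keeps the chain at $x$, and $\pi(\{x\})=0$ when $\pi$ has a density. It follows that $\|P^n(x,\cdot)-\pi\|\geq(1-M/w(x))^n$.

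Geometric ergodicity would require $\|P^n(x,\cdot)-\pi\|\leq R(x)\rho^n$ with a fixed $\rho<1$ for $\pi$-a.e.\ $x$. Choose a set of positive $\pi$-measure on which $w(x)>2M/(1-\rho)$; there one gets $1-M/w(x)>\rho$, which contradicts that bound as $n\to\infty$.

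The delicate point is making this rigorous against the Mengersen–Tweedie definition of geometric ergodicity. The cleanest route is via the spectral gap or the conductance of sets where $w$ is large: the sets $\{w>K\}$ have positive $\pi$-mass but escape probability at most $M/K$, so the conductance is $0$ and no geometric rate holds. I would cite \cite{MergensenTweedie:MHConvergence} for the equivalence between geometric ergodicity and a positive conductance/spectral gap for reversible chains.
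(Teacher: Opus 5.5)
The paper does not prove this theorem itself; it quotes it from \cite{MergensenTweedie:MHConvergence}. Your argument is correct and is essentially the standard proof behind that citation. For sufficiency you use the Doeblin minorization $P(x,\cdot)\geq (M/C)\,\pi(\cdot)$, valid for every $x$, followed by coupling. More directly, the identity $\delta_xP^n-\pi=(1-\varepsilon)(\delta_xP^{n-1}-\pi)H$ gives the same bound without any coupling; it holds because the minorizing measure is the invariant law $\pi$ itself. For the converse you use the rejection-probability lower bound $\|P^n(x,\cdot)-\pi\|\geq P^n(x,\{x\})\geq (1-M/w(x))^n$.

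Two of your side remarks are worth confirming.

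\emph{Normalization.} The issue is genuine. Integrating the hypothesis gives $C\geq M$, and the correct rate is $(1-M/C)^n$. The stated $(1-1/C)^n$ follows from it only when $M\geq 1$. For $M<1$ the statement as written can fail: take $g=f$ and $C=M$, so the sampler is exact after one step while $(1-1/C)^n<0$ for odd $n$. Similarly, the converse must be read with the $\pi$-essential supremum of $w$, as you did, since violating the bound on a $\pi$-null set does not affect geometric ergodicity.

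\emph{Necessity.} The detour through conductance and spectral gaps is unnecessary. With the usual definition of geometric ergodicity (a single $\rho<1$ and $R(x)<\infty$ for $\pi$-a.e.\ $x$), your direct argument is already complete. The set $\{w>M/(1-\rho)\}$ has positive $\pi$-mass, so it contains a point with $R(x)<\infty$. At that point $(1-M/w(x))^n\leq R(x)\rho^n$ fails for large $n$. If one starts from $x$-dependent rates instead, the geometric ergodic theorem for $\psi$-irreducible aperiodic chains \cite{MeynTweedie:MCSS} makes the rate uniform, so the same argument applies.
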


It is clear that condition \eqref{eq:IndepSamplerBounded} is equivalent to 
\[
w(x) = \frac{f_\propto(x)}{g(x)} \leq C,\quad \text{almost everywhere}.
\]
which we recognize as the boundedness condition of the likelihood ratio in the Rejection Sampling context.

Now, assume that \eqref{eq:IndepSamplerBounded} is satisfied, and consider the Regenerative Rejection Sampling setting. Then:
\[
\mathbb{E}[e^{\eta W}] = \mathbb{E}\left[e^{\eta (f_\propto(X)/g(X))}\right] \leq \mathbb{E}\left[e^{\eta C}\right] < \infty,
\]
for all $\eta>0$. Hence, due to Theorem \ref{thm:RRSConvergence}, the underlying regenerative process converges to its stationary distribution $f$ at the exponential rate of $O(e^{-\varepsilon t})$, for some $\varepsilon > 0$.
In general, we do not have a practical way to calculate $\varepsilon$, so it is infeasible to measure if $1-1/C$ is larger or smaller than $e^{-\varepsilon}$. Hence, we cannot know if the Independence Sampler converges at a higher rate compared to the Regenerative Rejection Sampling method. Nonetheless, they are both in exponential regime, if the likelihood ratio is bounded.\\ 
Thus, our method converges exponentially whenever the Independence Sampler does (at a possibly lower rate, however). In addition, from Theorem \ref{thm:RRSConvergence} we know that we only need the cycle length distribution (i.e. the distribution of $f_\propto(X)/g(X),\: X\sim g$) to have exponential moments for the Regenerative Rejection Sampling algorithm to converge exponentially. This implies that the RRS method can have geometric convergence even when the likelihood ratio $f_\propto/g$ is unbounded, as shown in Example \ref{ex:RRSGammaExp2}. 
The comparison of the convergence properties of the two methods is summarized in Table \ref{tab:ConvergenceComparison}.
\begin{table}[htbp]
\centering
\caption{Convergence of RRS vs Indpendence Sampler}
\label{tab:ConvergenceComparison}
\begin{tabular}{p{8cm} p{3cm} p{3cm}}
\toprule
 & \textit{RRS} & \textit{Independence Sampler} \\
\midrule
\textit{Bounded likelihood ratio} & Geometric & Geometric \\
\addlinespace
\textit{Unbounded likelihood ratio; finite exponential moments of cycle length distribution} & Geometric & Sub-geometric\\
\addlinespace
\textit{Unbounded likelihood ratio; finite $p$-moments of cycle length distribution} & Polynomial &  Sub-geometric\\
\bottomrule
\end{tabular}
\end{table}

In the literature, sub-geometric rates include logarithmic, polynomial and super-polynomial (called sub-exponential) rates. Even though, to our knowledge, there is no work on sub-geometric convergence specifically of the Independence Sampler, one could consult \cite{Nummelin:RateOfConvergenceOrey,Tuominen:Subgeometric,Douc:Subgeometric} for results on sub-geometric rates in general, and \cite{Jarner:Polynomial} for more specific results on polynomial rates.

Even if the RRS method has exponential convergence for a larger class of instances compared to the Independence Sampler, it still inherits its limitations, together with those of the Rejection Sampling algorithm.
\begin{remark}
    First, to use the RRS method, one must be able to evaluate the (possibly unnormalized) densities of the target and proposal distributions, but this is usually the case in most applications.
    
    Moreover, the RRS method heavily depends on the choice of proposal distribution. Selecting an unsuitable proposal distribution may even preclude proper convergence of the method, since the cycle length distribution (i.e. the likelihood ratio distribution) must have at least polynomial moments to ensure a mere polynomial convergence. However, the RRS method does not need the likelihood ratio to be bounded, as it is the case for the Rejection Sampling algorithm. Hence, in cases where it is unclear whether Rejection Sampling can be efficiently used, it would be possible to opt for the RRS method, without worrying about the boundedness of the likelihood ratio.

    Additionally, the RRS method's performance is highly influenced by the choice of the time threshold $t$. Depending on the application (e.g. obtaining \emph{good} samples from the target distribution) one could be tempted to run the process until a very large time $T\gg 1$. If the convergence is exponential, the method would indeed return a high-quality sample, but that would entail a very large computational cost, given by the generation of numerous proposal samples. On the other hand, $t$ cannot be chosen too low, because the process may not have converged to its stationary distribution yet.\\
    Furthermore, what \emph{large} actually means in the context of the time threshold $t$, is also heavily dependent on the particular form of the cycle length distribution, and cannot be prescribed \emph{a priori}.

    Lastly,  as happens with both the IS and the Rejection Sampling algorithms, the performance of the RRS method decreases as the number of dimensions increase. 
\end{remark}

\chapter{Time-Average Estimator Results}
\label{chap:6}
Up to this point, we have studied the convergence properties of the Regenerative Rejection Sampling method towards its stationary distribution. This type of analysis provides a characterization of how close the law of the underlying regenerative process (see Remark \ref{rmk:RRSRegProcConstr}) is, at each time step $t$, to the limiting distribution.

Such information is mostly relevant when the goal is to get a "good" (i.e. almost exact) sample from the stationary distribution, and hence, we decide to run the RRS method for a long time, to ensure that the process' law is basically undistinguishable from the target distribution.

However, if we wish to compute a quantity of the type $q = \mathbb{E}_\infty[h(Y(\infty))]$, where $\mathbb{E}_\infty[\cdot]$ represents an expectation under the stationary distribution of the regenerative process, and $h$ is a given Borel function such that $\mathbb{E}_\infty\left[|h(Y(\infty))|\right]<\infty$, we could use all the samples generated from one run of the RRS method and produce an estimate $\hat{q}$ of the desired quantity $q$. Since the underlying process is regenerative (in the sense of Definition \ref{def:RegProcClassic}), we can exploit the i.i.d. cycle structure to construct the estimator. This method falls under the name of \emph{Regenerative method of simulation}, of which one can find a thorough introduction in \cite{CraneLemoine:RegenerativeMethodSimulation}.

It works as follows. Let us use the notation from Remark \ref{rmk:RRSRegProcConstr} and define
\[
V_n \colon= \int_{[T_{n-1},T_n)}h(Y(s))\text{d}s.
\]
Since the cycles are i.i.d. by definition of regenerative process, we deduce that the random vectors $\{(V_n, W_n)\}_{n\in\mathbb{N}}$ are i.i.d. (we do not require independence between the $V_n$'s and the $W_n$'s) and that \cite[Proposition A.3]{CraneIglehart:SimulatingStableStochSystIII}
\[
q = \frac{\mathbb{E}[V_n]}{\mathbb{E}[W_n]}.
\]
The most natural estimator of this quantity that we can construct is then defined as
\[
\tilde{q}(N) = \frac{\sum_{n=1}^N V_n}{\sum_{n=1}^N W_n},
\]
which can be shown to converge almost surely to $q$ as $N\rightarrow\infty$ \cite{CraneIglehart:SimulatingStableStochSystIII, MeketonHeidelberger:BiasReduction}. These type of estimators, called \emph{ratio estimators}, have been studied extensively in the literature, and we will use known results to proceed with the analysis of our estimator. For additional information and a more general discussion (i.e. not related to just the specific form of the estimator that originates from the RRS method), one could look at \cite{MeketonHeidelberger:BiasReduction}, \cite{CraneIglehart:SimulatingStableStochSystIII}, \cite{AwadGlynn:SteadyStateEstimators}, \cite{HendersonGlynn:RegenerativeSimulationDiscreteEvent}, \cite{BrownSolomon:SecondOrderApproximationVariance}, \cite{GlynnHeidelberger:BiasBudgetConstrainedSimulations}, \cite{HeidelbergerLewis:RegressionAdjust} and references therein.

\begin{remark}
    For our specific case, the Regenerative Rejection Sampling method is run until the partial sum of the cycle lengths exceeds a given time threshold $t$. By analyzing how the method is specified (see Algorithm \ref{algo:RRS}), it is clear that in each independent run of the process, we generate a different random number of samples from the proposal distribution, equal to the specific realization of the random variable $N(t) \colon = \inf\{n: W_1+\cdots + W_n >t\}$. Hence, the associated time-average estimator is $\tilde{q}(N(t))$.

    Additionally, since the RRS method simulates a regenerative process that is constant along each cycle, the form of the estimator simplifies. Notice that
    \[
    V_n = \int_{[T_{n-1},T_n)} h(Y(s))\text{d}s = \int_{[T_{n-1},T_n)} h(X_n)\text{d}s = h(X_n)(T_{n}-T_{n-1}) = h(X_n)W_n,
    \]
    where $X_n$ is the $n$-th sample generated from the proposal distribution.

    Hence, in the rest of the Chapter, we analyze the following time-average ratio estimator: 
    \[
    \hat{q}(t) \colon = \tilde{q}(N(t)) = \frac{\sum_{n=1}^{N(t)}h(X_n)W_n}{\sum_{n=1}^{N(t)} W_n}.
    \]
\end{remark}

\section{Bias}
\subsection{Known Results}
As previously remarked, ratio estimators have been studied in numerous occasions, also in the form of $\hat{q}(t)$. It has been shown in \cite{CraneIglehart:SimulatingStableStochSystIII} that $\tilde{q}(N(t)-1)$ converges almost surely to $q$ as $t\rightarrow\infty$, and the result can be extended for $\hat{q}(t)$.

One of the main references for the study of the bias properties of ratio estimators based on a random number of samples is \cite{MeketonHeidelberger:BiasReduction}. The authors showed that the bias of the estimator $\tilde{q}(N(t)-1)$ is of order $1/t$, while that of $\tilde{q}(N(t))$ is of order $1/t^2$. This difference is due to the inspection paradox of renewal theory (see Remark \ref{rmk:InspectionParadox}), which states that the mean of the stationary distribution of the current life process $C(t)$, i.e. $\mathbb{E}[W_{N(t)}]$, is (usually much) higher than the mean of the cycle length distribution \cite{AwadGlynn:SteadyStateEstimators}. As a consequence, the last cycle contains more information than the rest, and including it in the ratio estimator brings a significant bias reduction \cite{MeketonHeidelberger:BiasReduction}. 

The results above are summarized in the following Theorem, which we state for the specific case of our estimator $\hat{q}(t)$ \cite{MeketonHeidelberger:BiasReduction}. We refer the reader to the reference for a proof.
\begin{theorem}
\label{thm:BiasReduction}
    If the cycle length distribution has an absolutely continuous component, $\mathbb{P}(W>0)=1$, $\mathbb{E}[W^3]<\infty$, and $\mathbb{P}(|Wh(X)|\leq KW)=1$ (i.e. $h$ is an almost surely bounded function) for some constant $K$, then
    \begin{enumerate}
        \item $\mathbb{E}\left[\hat{q}(t)\right] = q + O(1/t^2)$;
        \item $\mathbb{E}\left[\tilde{q}(N(t)-1)\right] = q + c/t + O(1/t^2)$,
    \end{enumerate}
    where $c = \mathbb{E}[Wh(X)]\left(\mathbb{V}ar(W)/\mathbb{E}[W]^2 - \mathbb{C}ov(Wh(X),W)/\mathbb{E}[Wh(X)]\mathbb{E}[W]\right)$.
\end{theorem}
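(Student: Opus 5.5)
The plan is to center the estimator and reduce the bias to the behaviour of a few renewal-type functionals of the forward recurrence time $R(t)$. Set $Z_n := W_n\,(h(X_n)-q)$. Since $q = \mathbb{E}[Wh(X)]/\mathbb{E}[W]$, the $Z_n$ are i.i.d. with $\mathbb{E}[Z_n]=0$. Moreover $|Z_n|\le 2KW_n$, because $|h|\le K$ and hence $|q|\le K$. Write $S(t):=\sum_{n=1}^{N(t)}Z_n$. Because $T_{N(t)} = t+R(t)$, we have
\[
\hat q(t)-q=\frac{S(t)}{t+R(t)} .
\]
Apply the algebraic identity $\frac{1}{t+R}=\frac1t-\frac{R}{t^2}+\frac{R^2}{t^2(t+R)}$ twice, to obtain
\[
\mathbb{E}[\hat q(t)]-q=\frac{\mathbb{E}[S(t)]}{t}-\frac{\mathbb{E}[S(t)R(t)]}{t^2}+\frac{1}{t^2}\,\mathbb{E}\!\left[\frac{S(t)}{t+R(t)}\,R(t)^2\right].
\]
Three steps then finish part 1.

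First, $N(t)=\inf\{n:T_n>t\}$ is a stopping time for the i.i.d. sequence $(X_n,W_n)$, and $\mathbb{E}[N(t)]=U(t)<\infty$ by Theorem \ref{thm:RenFunction}. Wald's identity therefore gives $\mathbb{E}[S(t)]=0$.

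Second, $|S(t)|/(t+R(t)) = |\hat q(t)-q|\le 2K$. So the last term is at most $2K\,\sup_t\mathbb{E}[R(t)^2]/t^2$. It suffices to show $\sup_t \mathbb{E}[R(t)^2]<\infty$. This follows from the renewal argument of Example \ref{ex:TimeProcessesRenEq}. The function $\mathbb{E}[R(t)^2]$ solves $Z=z+F\ast Z$ with $z(t)=\mathbb{E}[(W-t)^2;W>t]$. This $z$ is nonincreasing and integrable exactly when $\mathbb{E}[W^3]<\infty$, so Theorem \ref{thm:keyRen} gives a finite limit. This is where the third moment enters.

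Third, and this is the main step, show that $A(t):=\mathbb{E}[S(t)R(t)]$ is bounded in $t$. I would derive a renewal equation for $A$ by conditioning on $(X_1,W_1)$. If $W_1>t$, then $S(t)=Z_1$ and $R(t)=W_1-t$. If $W_1=u\le t$, the post-$u$ process is a fresh copy, so $S(t)=Z_1+S'(t-u)$ and $R(t)=R'(t-u)$. This yields $A=a+F\ast A$ with
\[
a(t)=\mathbb{E}[Z_1(W_1-t);W_1>t]+\mathbb{E}\big[Z_1\,G(t-W_1);W_1\le t\big],\qquad G(s):=\mathbb{E}[R(s)].
\]
To show that $a$ is integrable, I use $\mathbb{E}[Z_1]=0$ to rewrite the second term as
\[
\mathbb{E}\big[Z_1\,(G(t-W_1)-G_\infty);W_1\le t\big]-G_\infty\,\mathbb{E}[Z_1;W_1>t],
\]
where $G_\infty$ is the finite limit of $G$. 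Each tail term is integrable in $t$ with integral at most a multiple of $\mathbb{E}[W^2]$ or $\mathbb{E}[W^3]$, using $|Z_1|\le 2KW_1$. For the difference term, I need $\int_0^\infty|G(s)-G_\infty|\,\mathrm{d}s<\infty$. By Theorem \ref{thm:exponentialRateRt}-type reasoning, $G$ itself solves a renewal equation whose forcing term $\mathbb{E}[W-s;W>s]$ has tails controlled by $\mathbb{E}[W^3]$. The cycle distribution is spread-out, because it has an absolutely continuous component. So Stone's decomposition (Theorem \ref{thm:StoneDec}) gives the needed integrable rate.

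The hard part is exactly this last integrability estimate with only three moments. If it resists, I would fall back on the polynomial-rate convergence theorem with $p=1$, which only requires $\mathbb{E}[W^{2}]<\infty$. Together with the bound $\|\mathbb{P}(R(s)\in\cdot)-F_0\|=o(s^{-1})$ and uniform integrability of $R(s)$, this should give it. Once $a$ is integrable and bounded, the Key Renewal Theorem 2.0 gives $A(t)\to\mu^{-1}\int_0^\infty a$. In particular $A$ is bounded, and part 1 follows.

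For part 2, I would run the same scheme with the denominator $T_{N(t)-1}=t-E(t)$ and numerator $S(t)-Z_{N(t)}$. Expanding $1/(t-E)$ now leaves a first-order term $-\mathbb{E}[Z_{N(t)}]/t$, which no longer vanishes by Wald's identity. Its limit is identified by the renewal argument and the inspection paradox (Remark \ref{rmk:InspectionParadox}) as $\mathbb{E}[ZW]/\mu$. Rearranging this limit into covariances produces the stated constant $c$. The remaining terms are $O(1/t^2)$ by the same estimates as above.
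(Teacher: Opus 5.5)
The paper gives no proof of this theorem; it defers to \cite{MeketonHeidelberger:BiasReduction}. Your part 1 is the natural sharpening of the proof of Theorem \ref{thm:BiasBound}. That proof bounds $\mathbb{E}[R(t)\bar Z_t]$ by Cauchy--Schwarz and therefore stops at $t^{-3/2}$, whereas you aim for $A(t)=\mathbb{E}[S(t)R(t)]=O(1)$. The algebraic identity, the Wald step, the bound on $\sup_t\mathbb{E}[R(t)^2]$ and the renewal equation $A=a+F\ast A$ are all fine. The whole $O(t^{-2})$ claim therefore rests on $\int_0^\infty|G(s)-G_\infty|\,\mathrm{d}s<\infty$, and that is exactly the step you do not establish. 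Theorem \ref{thm:StoneDec} carries no rate, and the rates in Theorem \ref{thm:exponentialRateRt} require exponential moments. The fallback cannot work either. An $o(s^{-1})$ total-variation rate is not integrable. Moreover, $R(s)$ is unbounded, so converting even the $p=2$ rate $o(s^{-2})$ into a bound on $|\mathbb{E}[R(s)]-\mathbb{E}_{F_0}[R]|$ by truncating at level $M$ costs a factor $M$. Balancing $M$ against the uniform-integrability tail of $R(s)^2$ then leaves only an $o(1/s)$ bound, which need not be integrable.

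The estimate is true, but it needs a polynomial-rate ingredient the paper does not contain. One option is Stone's decomposition with polynomial rates: under $\mathbb{E}[W^3]<\infty$ one has $\int_0^\infty|u_1(x)-1/\mu|\,\mathrm{d}x<\infty$, and then the three-term split from the proof of item 3 of Theorem \ref{thm:exponentialRateRt}, integrated in $s$, gives the result. The option closer to Chapter \ref{chap:4} is a coupling time with $\mathbb{E}[T^2]<\infty$. The construction of Theorem \ref{thm:CouplingTheorem} yields this if, instead of H\"older as in Lemma \ref{lemma:ExpMomentsT}, you condition on the coins $U_k$ and use $\sup_s\mathbb{E}[R(s)^2]<\infty$ and $\mathbb{E}_{F_0}[R^2]=\mu_3/(3\mu)$. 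Wald then gives $G(s)-G_\infty=\mu\bigl(U(s)-s/\mu-\mu_2/(2\mu^2)\bigr)=-\mu\sum_{k\ge0}\bigl(U(s+k+1)-U(s+k)-1/\mu\bigr)$. Each summand compares the laws of $R(s+k)$ and $F_0$ on a function bounded by $U(1)$, so it is at most $U(1)\,\mathbb{P}(T>s+k)$ in absolute value. Hence $\int_0^\infty|G-G_\infty|\,\mathrm{d}s\le\mu\,U(1)\sum_{k\ge0}\mathbb{E}[(T-k)^+]<\infty$.

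Part 2 has the same hole a second time. You identify only the limit of $\mathbb{E}[Z_{N(t)}]$, but the statement needs $\mathbb{E}[Z_{N(t)}]=\mathbb{E}[ZW]/\mu+O(1/t)$. Write $\mathbb{E}[Z_{N(t)}]=\mathbb{E}[m(C(t))]$ with $m(w)=\mathbb{E}[Z\mid W=w]$ and $|m(w)|\le 2Kw$. The same coupling against the stationary version $C'$, together with Cauchy--Schwarz, bounds the error by $2K\|C(t)+C'(t)\|_2\,\mathbb{P}(T>t)^{1/2}=O(1/t)$. The term $\mathbb{E}[S(t)E(t)]$ needs the analogous integrability of $\mathbb{E}[E(s)]-\mathbb{E}_{F_0}[E]$, which the same coupling gives. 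You should also fix a convention for $\tilde q(N(t)-1)$ on $\{N(t)=1\}$, an event of probability at most $\mu_3/t^3$. Your constant is correct: $c=-\mathbb{E}[ZW]/\mu$.
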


In \cite{AwadGlynn:SteadyStateEstimators}, the authors provide a higher-order expansion of the bias of $\hat{q}(t)$ under different assumptions. This confirms the result of \cite{MeketonHeidelberger:BiasReduction}, and gives more information on the other terms of the bias expansion. In particular, it is shown that the coefficient of $1/t$ is zero, while all the others are generally non-zero \cite[Theorem 4.1]{AwadGlynn:SteadyStateEstimators}.

\begin{remark}
    In the context of Markov Chain Monte Carlo methods, it is easy to show that (see Appendix \ref{sec:BiasMCMC}) if the method is geometrically ergodic, then the bias of estimators of the type $(1/N)\sum_{n=1}^Nh(X_n)$, where the $X_n$'s are the steps of the Markov chain, is of order 1/N.
    Hence by running the RRS method until time $t$, and a geometrically ergodic MCMC for $N=M$ steps, where $M$ is chosen in order to make, on average, the same simulation effort (here we could take for example $M=\lfloor t/\mathbb{E}[W]\rfloor$, see the Renewal LLN \ref{thm:LLNCountingProcess}), we obtain a lower bias by using the samples generated from the RRS method.
\end{remark}

\subsection{A Non-Asymptotic Bound}
The previous section presented a very powerful result concerning bias reduction in the case of the ratio estimator $\hat{q}(t)$. However, Theorem \ref{thm:BiasReduction}, only gives an order result, which might not be extremely useful for practical simulation purposes. The solution to this issue would be to compute a non-asymptotic error bound for the bias of the estimator. 

The following Theorem shows that such a bound exists, and gives its specific form.
\begin{theorem}
\label{thm:BiasBound}
    Under the same assumptions of Theorem \ref{thm:BiasReduction}, denote $\mu = \mathbb{E}[W], \mu_2 = \mathbb{E}[W^2], \mu_3 = \mathbb{E}[W^3]$. Then
    \[
    \left| \mathbb{E}\left[\hat{q}(t)\right] - q\right| \leq \frac{\sqrt{\frac{16}{3}K^2\mu_3 \mu_2(\frac{\mu_2}{t} + \mu)\frac{1}{\mu^3}}}{t^{3/2}}.
    \]
\end{theorem}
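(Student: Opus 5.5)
The plan is to centre the estimator and exploit the fact that $N(t)=\inf\{n: W_1+\cdots+W_n>t\}$ is a stopping time for the i.i.d. sequence $\{(X_n,W_n)\}$. Set $Z_n \colon= (h(X_n)-q)W_n$, so that the $Z_n$ are i.i.d. with $\mathbb{E}[Z_n]=\mathbb{E}[h(X)W]-q\,\mathbb{E}[W]=0$, using $q=\mathbb{E}[V_n]/\mathbb{E}[W_n]$. Write $S_n=Z_1+\cdots+Z_n$ and $T_n=W_1+\cdots+W_n$. Then
\[
\hat{q}(t)-q=\frac{S_{N(t)}}{T_{N(t)}}.
\]
Since $|h|\le K$ almost surely, we also have $|q|\le K$ and hence $|Z_n|\le 2KW_n$. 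In particular $\mathbb{E}[Z^2]\le 4K^2\mu_2<\infty$.

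The key step is to subtract a quantity with zero mean. By Wald's identity (valid because $\mathbb{E}[N(t)]<\infty$, which holds by Theorem \ref{thm:RenFunction}) we have $\mathbb{E}[S_{N(t)}]=0$. Therefore
\[
\mathbb{E}[\hat{q}(t)]-q=\mathbb{E}\left[S_{N(t)}\left(\frac{1}{T_{N(t)}}-\frac{1}{t}\right)\right].
\]
Since $T_{N(t)}>t$, the overshoot $R(t)=T_{N(t)}-t$ satisfies
\[
\left|\frac{1}{T_{N(t)}}-\frac1t\right|=\frac{R(t)}{t\,T_{N(t)}}\le \frac{R(t)}{t^2}.
\]
Applying Cauchy--Schwarz then gives
\[
|\mathbb{E}[\hat{q}(t)]-q|\le \frac{1}{t^2}\sqrt{\mathbb{E}[S_{N(t)}^2]}\,\sqrt{\mathbb{E}[R(t)^2]}.
\]
This reduction is what produces the $t^{-2}$ factor. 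The remaining work is to show that the two square roots together grow only like $t^{1/2}$.

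For the first factor, I would use Wald's second identity, $\mathbb{E}[S_{N(t)}^2]=\mathbb{E}[Z^2]\,\mathbb{E}[N(t)]\le 4K^2\mu_2\,\mathbb{E}[N(t)]$. Wald's first identity gives $\mu\,\mathbb{E}[N(t)]=\mathbb{E}[T_{N(t)}]=t+\mathbb{E}[R(t)]$. Combined with Lorden's inequality $\mathbb{E}[R(t)]\le \mu_2/\mu$, this yields $\mathbb{E}[N(t)]\le t/\mu+\mu_2/\mu^2$. For the second factor, I would use Lorden's moment bound for the overshoot, $\mathbb{E}[R(t)^r]\le \frac{r+2}{r+1}\,\mu_{r+1}/\mu$, with $r=2$, which gives $\mathbb{E}[R(t)^2]\le \frac{4}{3}\mu_3/\mu$. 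Multiplying the two bounds,
\[
\mathbb{E}[S_{N(t)}^2]\,\mathbb{E}[R(t)^2]\le \tfrac{16}{3}K^2\mu_2\mu_3\,\frac{t\mu+\mu_2}{\mu^3}.
\]
Dividing by $t^4$ inside the square root gives exactly $\frac{16}{3}K^2\mu_3\mu_2\bigl(\frac{\mu_2}{t}+\mu\bigr)\mu^{-3}\,t^{-3}$, which is the stated bound.

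The main obstacle is the overshoot moment bounds. The bound must hold uniformly in $t$, with the explicit constants of Lorden's inequality, and for a zero-delayed process whose counting convention starts at $T_0=0$. I would cite Lorden's result rather than reprove it, and check that the convention $N(t)=\inf\{n:T_n>t\}$ matches Lorden's first-passage setting; it does, because $T_{N(t)}$ is the first partial sum to exceed $t$. A secondary check is that Wald's second identity applies. This requires $\mathbb{E}[Z]=0$, $\mathbb{E}[Z^2]<\infty$ and $\mathbb{E}[N(t)]<\infty$, all of which follow from the assumptions of Theorem \ref{thm:BiasReduction}. The assumptions $\mathbb{P}(W>0)=1$ and $\mu_3<\infty$ are used exactly in these places.
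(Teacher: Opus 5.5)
Your proof is correct and follows the paper's argument essentially step by step: the same centred variables $Z_n=(h(X_n)-q)W_n$, Wald's first identity to subtract a mean-zero term so that the overshoot $R(t)$ appears, then Cauchy--Schwarz, Wald's second identity for $\mathbb{E}[S_{N(t)}^2]$, and Lorden's bounds $\mathbb{E}[R(t)]\le \mu_2/\mu$ and $\mathbb{E}[R(t)^2]\le \frac{4}{3}\mu_3/\mu$. Your pointwise bound $|1/T_{N(t)}-1/t|\le R(t)/t^2$, applied under $\mathbb{E}[|\cdot|]$, is slightly cleaner than the paper's corresponding step, which drops the factor $1/(1+R(t)/t)$ directly inside $|\mathbb{E}[\cdot]|$ without first passing to $\mathbb{E}[|\cdot|]$.
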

\begin{proof}
    Recall the definition of the forward recurrence time process $R(t) = T_{N(t)}-t$. Since $N(t)$ is a stopping time, we can apply Wald's first identity \cite{Wald:Indentity1} to get $r(t) \colon=\mathbb{E}[R(t)] = \mu \mathbb{E}[N(t)] - t$.

    Now, define $Z_n \colon= W_n h(X_n) - qW_n$. We have
    \[
    \hat{q}(t) = \frac{\sum_{n=1}^{N(t)}W_n h(X_n)}{\sum_{n=1}^{N(t)}W_n} = \frac{\sum_{n=1}^{N(t)}W_n h(X_n)}{T_{N(t)}} = q + \frac{\sum_{n=1}^{N(t)}Z_n}{T_{N(t)}}.
    \]

    Then,
    \[
    \begin{split}
    \mathbb{E}\left[\hat{q}(t)\right] - q &= \mathbb{E}\left[\frac{\sum_{n=1}^{N(t)}Z_n}{T_{N(t)}}\right]\\
    &= \mathbb{E}\left[\frac{\frac{1}{t}\sum_{n=1}^{N(t)}Z_n}{1+R(t)/t}\right]\\
    &\stackrel{(i)}{=} \mathbb{E}\left[\left(\frac{1}{1+R(t)/t}-1\right) \bar{Z}_t\right],
    \end{split}
    \]
    where $\bar{Z}_t = \frac{1}{t}\sum_{n=1}^{N(t)}Z_n$, and $(i)$ comes from applying Wald's first identity \cite{Wald:Indentity1} to get $\mathbb{E}\left[\sum_{n=1}^{N(t)}Z_n\right] = \mathbb{E}[N(t)]\mathbb{E}[Z_n] = 0$.

    Since $\frac{1}{1+R(t)/t}\leq 1$, we obtain
    \[
    \begin{split}
    \left|\mathbb{E}\left[\hat{q}(t)\right] - q \right| &= \frac{1}{t}\left| \mathbb{E}\left[ \frac{R(t)}{1+R(t)/t}\bar{Z}_t \right] \right|\\
    &\leq \frac{1}{t}\left| \mathbb{E}\left[ R(t)\bar{Z}_t \right] \right|\\
    &\stackrel{(ii)}{\leq} \frac{\sqrt{\mathbb{E}\left[ R(t)^2 \right] \mathbb{E}\left[ \bar{Z}_t^2 \right]}}{t}\\
    &\stackrel{(iii)}{=} \frac{\sqrt{\mathbb{E}\left[ R(t)^2 \right]}}{t}\sqrt{\frac{\mathbb{E}[N(t)]\mathbb{E}\left[ Z^2 \right]}{t^2}}\\
    &\stackrel{(\dagger)}{=} \frac{\sqrt{\mathbb{E}\left[ R(t)^2 \right]}}{t^{3/2}}\sqrt{\frac{\mathbb{E}[N(t)]}{t}\mathbb{E}\left[ Z^2 \right]},
    \end{split}
    \]
    where $(ii)$ the Cauchy-Schwarz inequality, and $(iii)$ is an application of Wald's second moment identity \cite{Wald:Indentity1, BlackwellGirshick:WaldIndentity2}.

    Since $\mathbb{P}(|Wh(X)|\leq KW)=1$, we can deduce that $q \in [-K,K]$, and that
    \[
    \begin{split}
    Z^2 &= (Wh(X)-qW)^2\\
    &\leq (|Wh(X)|+|q|W)^2\\
    &\leq (KW + KW)^2\\
    &= 4K^2W^2.
    \end{split}
    \]

    By applying Lorden's inequalitites \cite{Lorden:Inequalities} to bound the moments of $R(t)$, and by using that $\mathbb{E}[N(t)] = (r(t) + t)/\mu$ we can conclude that
    \[
    \begin{split}
        \left|\mathbb{E}\left[\hat{q}(t)\right] - q \right| &\stackrel{(\dagger)}{=} \frac{\sqrt{\mathbb{E}\left[ R(t)^2 \right]}}{t^{3/2}}\sqrt{\frac{\mathbb{E}[N(t)]}{t}\mathbb{E}\left[ Z^2 \right]}\\
        &\leq \frac{\sqrt{\frac{4}{3}\frac{\mu_3}{\mu}}}{t^{3/2}}\sqrt{4K^2\frac{\mu_2}{\mu}\left( \frac{\mu_2}{t\mu} + 1 \right)}\\
        &= \frac{\sqrt{\frac{16}{3} K^2 \mu_3 \mu_2 \left( \frac{\mu_2}{t} + \mu \right)\frac{1}{\mu^3}}}{t^{3/2}}.
    \end{split}
    \]
\end{proof}

This proof takes inspiration from the proof of Theorem 2 in \cite{Botev:GS}. It returns a non-asymptotic upper bound on the bias of the ratio estimator $\hat{q}(t)$, which decays as $t^{-3/2}$. Theorem \ref{thm:BiasReduction}, states that $\left|\mathbb{E}\left[\hat{q}(t)\right]-q\right|$ is $O(t^{-2})$, hence the upper bound gives a lower order than the actual one. However, the bound is non-asymptotic, which means that it holds for all $t > 0$, and all its components (mainly moments of $W$) are easily estimable by simulation, which enables us to meaningfully control the bias of $\hat{q}(t)$ during each run of the process.

\subsection{Numerical Experiment}
At this point, we perform a numerical experiment to verify the bias results for the time-average estimator $\hat{q}(t)$. Let us recall the simple setting of Example \ref{ex:RRSGammaExp1}. We want to simulate random variables from a $Gamma(2,1)$ distribution, using samples from an $Exp(1)$ distribution. To do it, we run the Regenerative Rejection Sampling algorithm until a given time threshold $t$. This procedure will return a process whose states are random variables approximately distributed as $Gamma(2,1)$, say, $\{X_n\}_{n=1}^{N(t)}$, as well as the same number of cycle lengths $\{W_n\}_{n=1}^{N(t)}$.

With these random variables, one can construct the time-average estimator
\[
\hat{q}(t) = \frac{\sum_{n=1}^{N(t)}W_nh(X_n)}{\sum_{n=1}^{N(t)}W_n},
\]
for a given bounded Borel function $h$. In our experiments we concentrated on three different functions:
\begin{itemize}
    \item The hyperbolic tangent $h(x)=\tanh(x)$, which satisfies $|\tanh(x)|\leq 1$ for all $x$;
    \item The logistic function $h(x) = 1/(1+e^{-x})$, which satisfies $1/(1+e^{-x})\leq 1$ for all $x$;
    \item And $h(x) = \mathbb{I}_{[1,\infty)}(x)$, which corresponds to estimating the tail probability $\mathbb{P}(X>1), X\sim Gamma(2,1)$, and that satisfies $\mathbb{I}_{[1,\infty)}(x)\leq 1$ for all $x$.
\end{itemize}

Since it is very easy to compute good estimates of $\mathbb{E}_{Gamma(2,1)}[h(X)]$ for each of these functions $h$ (e.g. by using well known python libraries as \texttt{scipy} and \texttt{numpy}), we have a practical way to check the bias results on the ratio estimator $\hat{q}(t)$.

To this end, we ran $M=10^6$ independent RRS and computed the bias of $\hat{q}(t)$ for all $t\in\{0.1,0.5,1,5,10,15,20,30,50,75,100,150,200,300,500,750,1000\}$, and we repeated this procedure for the three different choices of $h(x)$. To confirm the theoretical results, we plotted the biases together with the computable upper bound (recall that $K=1$ for all the three $h$ functions) and the $O(t^{-2})$ order. Additionally, for the sake of comparison, we also plotted the bias of the estimator $\tilde{q}(N(t)-1)$, i.e. constructed only with the generated samples up to $N(t)-1$, together with its theoretical order of decay $O(t^{-1})$.

\begin{figure}[htpb]
    \centering
    \includegraphics[width=0.55\linewidth]{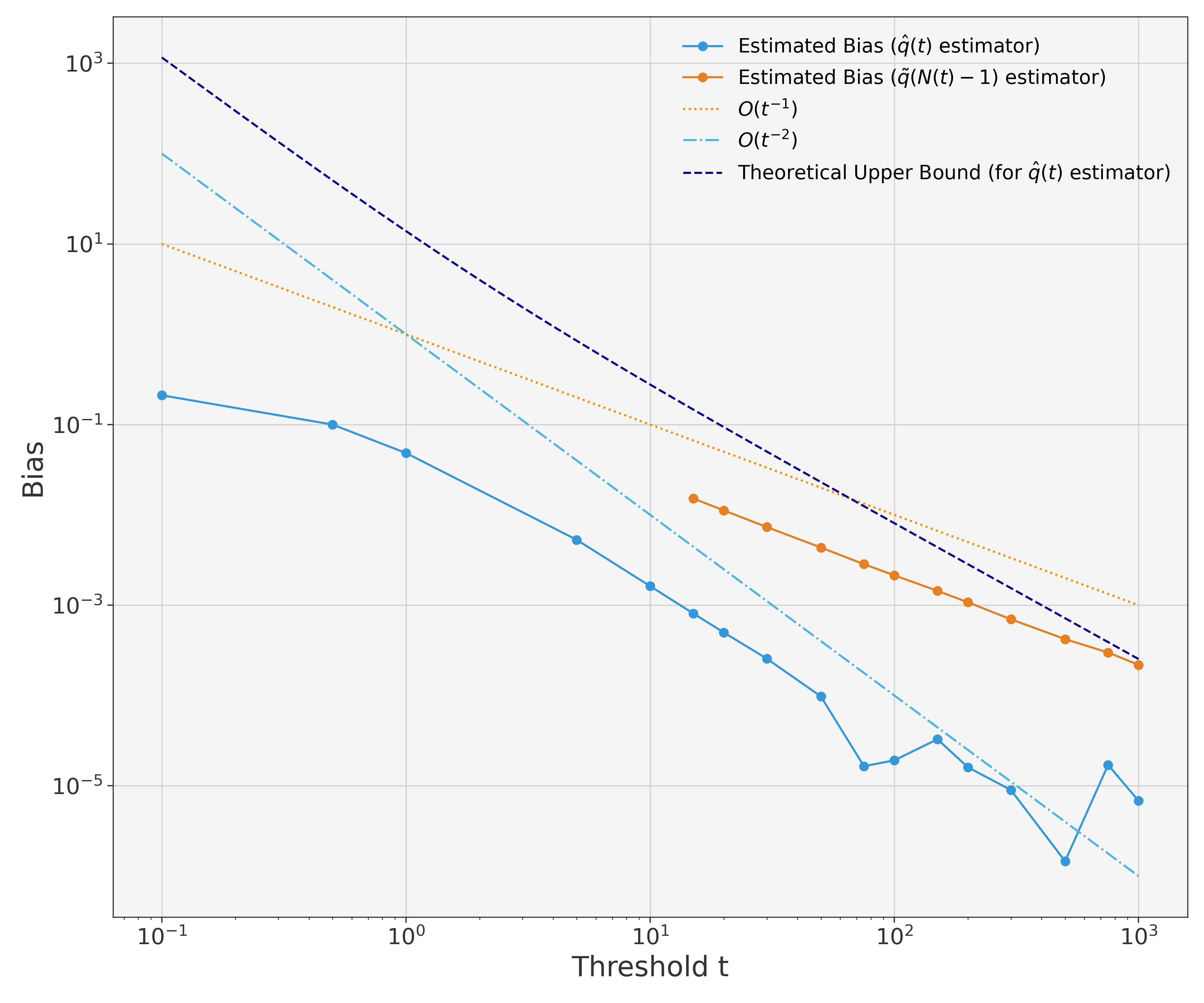}
    \caption{Bias of $\hat{q}(t)$ - $h(x) = \tanh(x)$}
    \label{fig:BiasTanh}
\end{figure}
\begin{figure}[htpb]
    \centering
    \includegraphics[width=0.55\linewidth]{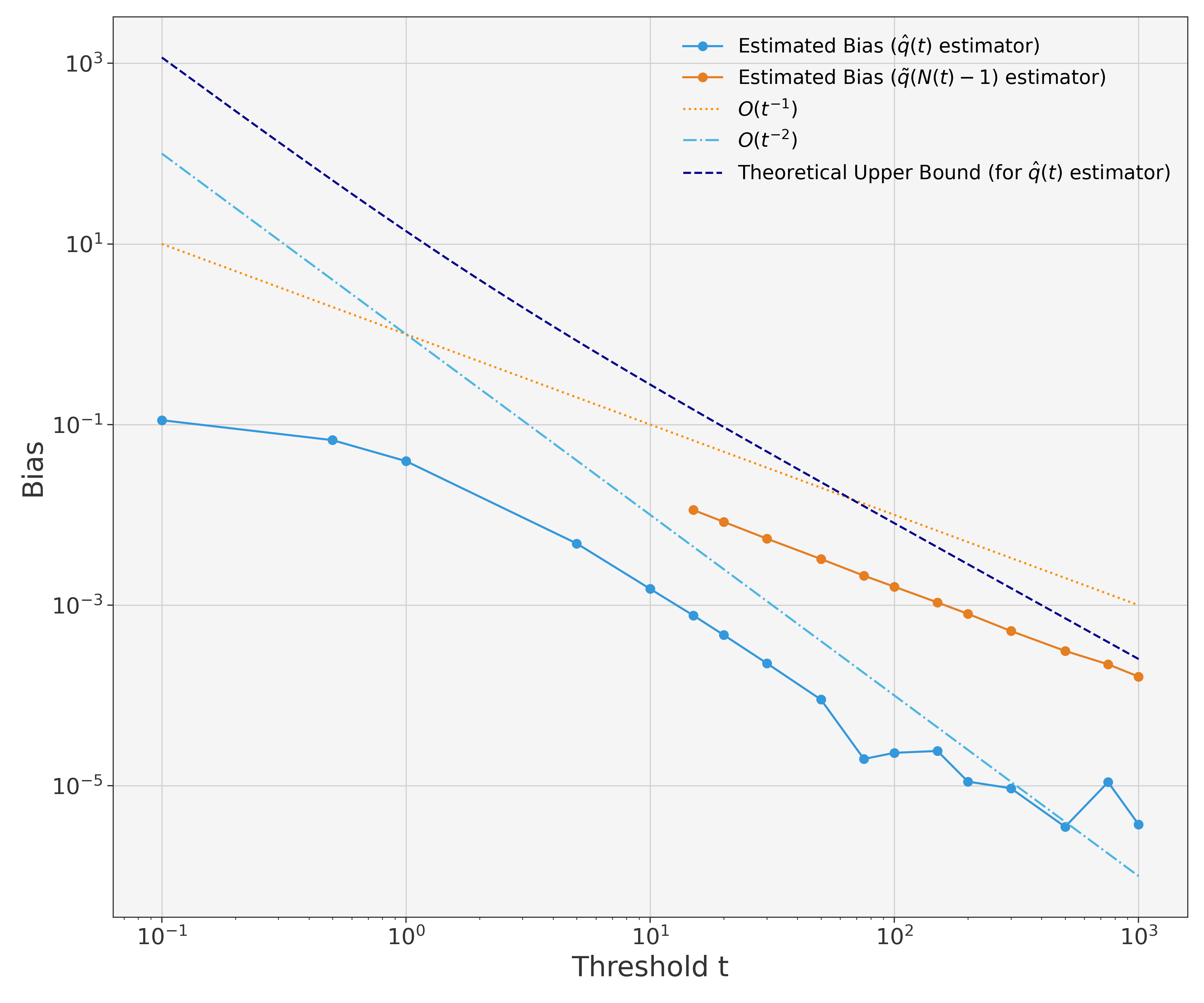}
    \caption{Bias of $\hat{q}(t)$ - $h(x) = 1/(1+e^{-x})$}
    \label{fig:BiasLogistic}
\end{figure}
\begin{figure}[htpb]
    \centering
    \includegraphics[width=0.55\linewidth]{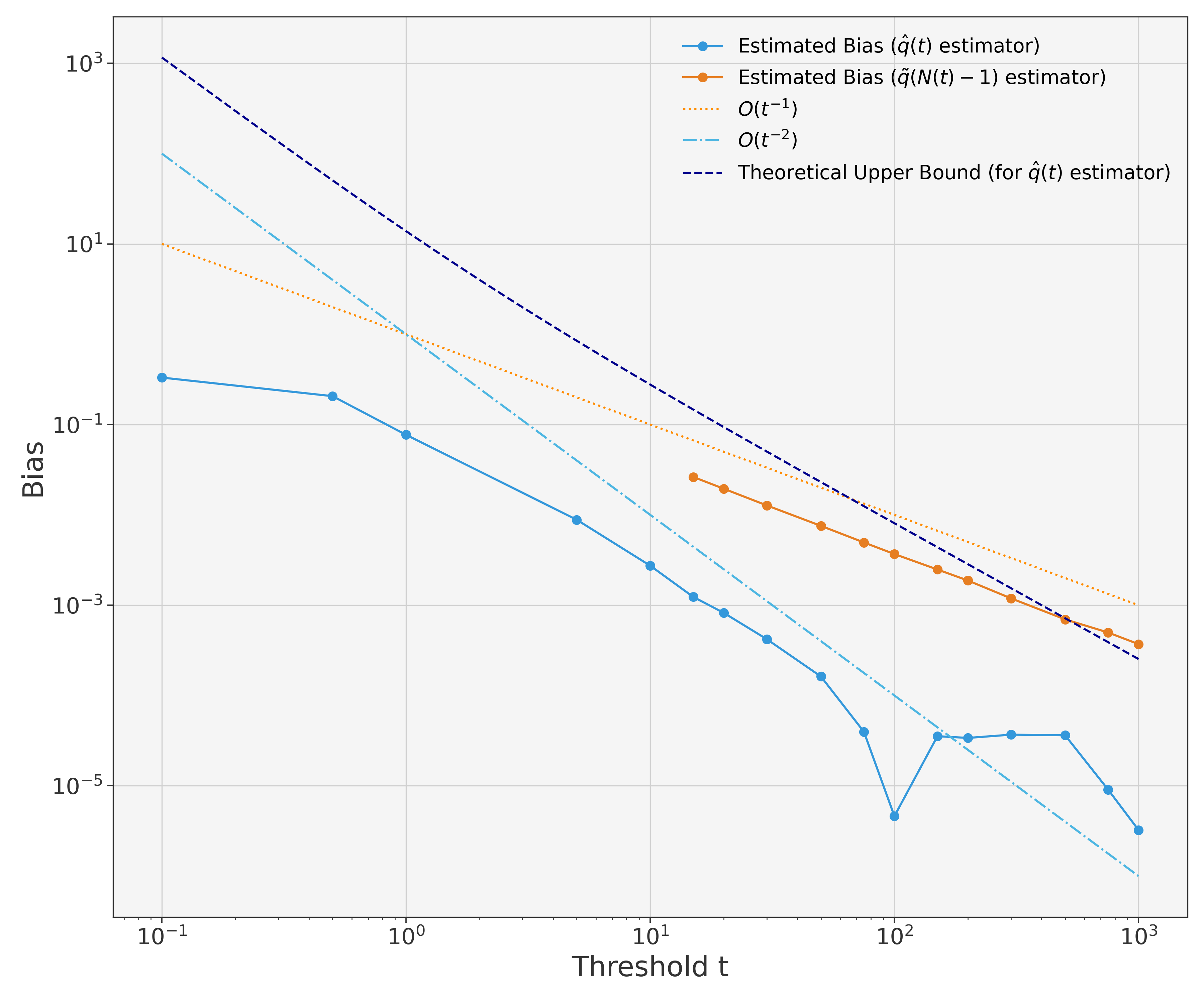}
    \caption{Bias of $\hat{q}(t)$ - $h(x) = \mathbb{I}_{[1,\infty)}(x)$}
    \label{fig:BiasTail}
\end{figure}

In Figures \ref{fig:BiasTanh}, \ref{fig:BiasLogistic}, and \ref{fig:BiasTail}, we clearly show that the theoretical results are respected. The estimates of the biases of $\hat{q}(t)$ lay below the theoretical upper bound, and from around $t=10$, they seem to decay as $1/t^2$. The noise in the estimated bias of $\hat{q}(t)$ for high values of $t$ is due to the fact that we are plotting a Monte Carlo estimate of the bias, and for those $t$'s, the actual bias is lower than the Monte Carlo error. Nonetheless, we are still able to meaningfully understand the decay.

Moreover, we can also remark that the $O(1/t)$ order result for the decay of the bias of $\tilde{q}(N(t)-1)$ is experimentally confirmed (for this choices of $h$). In the plots we see less data points for this estimator because, in some runs for low values of $t$, the RRS method only returned one sample.

\section{Variance}
\subsection{Known Results}
Numerous references (e.g. \cite{CraneIglehart:SimulatingStableStochSystIII},\cite[Chapter IV]{AsmussenGlynn:StochSim}) report CLT-type results for the ratio estimator based on a fixed number $N$ of cycles, i.e. $\tilde{q}(N)$. Specifically, the following Proposition holds \cite[Proposition IV.4.1]{AsmussenGlynn:StochSim} (here we used the notation presented at the beginning of this Chapter):

\begin{proposition}
    The estimator $\tilde{q}(N)$ satisfies the CLT
    \[
    N^{1/2}(\tilde{q}(N)-q) \stackrel{\mathscr{D}}{\longrightarrow} \mathcal{N}(0,\eta^2)
    \]
    as $N\rightarrow\infty$, where
    \[
    \eta^2 = \frac{\mathbb{E}[Z_n^2]}{\mathbb{E}[W_n]^2},\quad Z_n = V_n - qW_n.
    \]
\end{proposition}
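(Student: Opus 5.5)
The plan is the standard delta-method argument for ratio estimators. Set $Z_n = V_n - qW_n$. The pairs $(V_n,W_n)$ are i.i.d. across cycles, so the $Z_n$ are i.i.d. as well. Because $q=\mathbb{E}[V_n]/\mathbb{E}[W_n]$, they are centred:
\[
\mathbb{E}[Z_n]=\mathbb{E}[V_n]-q\,\mathbb{E}[W_n]=0 .
\]
I assume, as the statement implicitly does through $\eta^2$, that $\mathbb{E}[Z_n^2]<\infty$ and $0<\mathbb{E}[W_n]<\infty$. Subtracting $q$ from the ratio gives the exact identity
\[
\tilde{q}(N)-q=\frac{\sum_{n=1}^N (V_n-qW_n)}{\sum_{n=1}^N W_n}
=\frac{N^{-1}\sum_{n=1}^N Z_n}{N^{-1}\sum_{n=1}^N W_n}.
\]
This is the same algebraic manipulation used in the proof of Theorem~\ref{thm:BiasBound}, but with a deterministic $N$ in place of $N(t)$.

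Next I multiply by $N^{1/2}$ and treat the numerator and denominator separately. For the numerator, the classical Lindeberg--L\'evy CLT for i.i.d. centred variables with finite variance gives
\[
N^{-1/2}\sum_{n=1}^N Z_n\stackrel{\mathscr{D}}{\longrightarrow}\mathcal{N}\bigl(0,\mathbb{E}[Z_n^2]\bigr).
\]
For the denominator, the strong law of large numbers gives $N^{-1}\sum_{n=1}^N W_n\to\mathbb{E}[W_n]$ almost surely, hence also in probability. Slutsky's theorem then yields
\[
N^{1/2}(\tilde{q}(N)-q)\stackrel{\mathscr{D}}{\longrightarrow}\frac{1}{\mathbb{E}[W_n]}\,\mathcal{N}\bigl(0,\mathbb{E}[Z_n^2]\bigr)=\mathcal{N}(0,\eta^2),
\]
with $\eta^2=\mathbb{E}[Z_n^2]/\mathbb{E}[W_n]^2$. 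Since $\mathbb{P}(W>0)=1$, the denominator is strictly positive, so the ratio is always well defined and no division-by-zero event needs to be excluded.

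The argument is short, and the only step needing real care is confirming the hypotheses. The centring $\mathbb{E}[Z_n]=0$ depends on the identity $q=\mathbb{E}[V_n]/\mathbb{E}[W_n]$, which I take from the cited regenerative result. Finite variance of $Z_n$ has to be either assumed or derived; in the setting of Theorem~\ref{thm:BiasReduction} it follows from $|Z_n|\le 2KW_n$ together with $\mu_2<\infty$. In the RRS specialisation $V_n=h(X_n)W_n$, so that $Z_n=W_n(h(X_n)-q)$ and
\[
\eta^2=\frac{\mathbb{E}_g\bigl[W^2(h(X)-q)^2\bigr]}{\mu^2}.
\]
This explicit form is worth recording as a remark.
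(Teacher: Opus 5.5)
Your argument is correct and is essentially the standard proof of the result the paper cites without proof (Asmussen--Glynn, Proposition IV.4.1): write $\tilde{q}(N)-q=\bigl(N^{-1}\sum_{n=1}^N Z_n\bigr)/\bigl(N^{-1}\sum_{n=1}^N W_n\bigr)$, apply the CLT to the numerator and the LLN to the denominator, and conclude by Slutsky (equivalently, the delta method for $(v,w)\mapsto v/w$). Your hypotheses $\mathbb{E}[Z_n^2]<\infty$ and $0<\mathbb{E}[W_n]<\infty$ are exactly what the argument needs. Your check that they hold under Theorem~\ref{thm:BiasReduction}, via $|Z_n|\le 2KW_n$ and $\mu_2<\infty$, is also right.
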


In \cite{CraneIglehart:SimulatingStableStochSystIII}, the authors explain a practical method for constructing confidence intervals (CIs), i.e. a way to estimate the \emph{Time Average Variance Constant} (TAVC) $\eta^2$, which is not known \emph{a priori}. Let $\bm{U}_n = (V_n, W_n)$ be the i.i.d vectors outputted by the simulation. Additionally, denote by
\[
\bar{\bm{U}}(N) = \begin{pmatrix}
\bar{V}(N)\\
\bar{W}(N)
\end{pmatrix} = \frac{1}{N}\sum_{n=1}^N\bm{U}_n
\]
the sample mean of $N$ vectors ($\bar{V}(N)$ and $\bar{W}(N)$ are the sample means of the $V$'s and the $W$'s, respectively), and by
\[
\bm{S}(N) = \begin{pmatrix}
s_{11}(N) & s_{12}(N)\\
s_{21}(N) & s_{22}(N)
\end{pmatrix} = \frac{1}{N-1}\sum_{n=1}^{N}(\bm{U}_n-\bar{\bm{U}}(N))(\bm{U}_n-\bar{\bm{U}}(N))^T
\]
their sample covariance matrix. Then, it is possible to show that $s^2(N) = s_{11}(N) - 2qs_{12}(N) + q^2s_{22}(N)$ converges to $\mathbb{E}[Z_n^2]$ with probability one as $N\rightarrow\infty$ \cite{CraneIglehart:SimulatingStableStochSystIII}. Hence, a possible estimator for the TAVC $\eta^2$ is given by (\cite{AsmussenGlynn:StochSim})
\[
\tilde{\eta}^2(N) \colon = \frac{s^2(N)}{(\bar{W}(N))^2}.
\]
It is then possible to use this estimator to construct confidence intervals, using the CLT provided above.

The ratio estimator that is computable after a run of the RRS method (i.e. $\hat{q}(t)$) is, however, based on a fixed \emph{simulation time}, not on a fixed number of regenerative cycles. Fortunately, the results and methods presented above hold for the estimators based on a random number of cycles, i.e. both $\hat{q}(t)$ and $\tilde{q}(N(t)-1)$. The only difference lies in the value of the time average variance constant, which is equal to 
\[
\sigma^2 = \frac{\mathbb{E}[Z_n^2]}{\mathbb{E}[W_n]}.
\]
Nonetheless, this difference is consistent with the previous results, as it only represents a change of time scale. By Theorem \ref{thm:LLNCountingProcess}, $N(t)/t\rightarrow1/\mathbb{E}[W_n]$ as $t\rightarrow\infty$. As a consequence, we can write, heuristically, 
\[
\mathbb{V}ar(\tilde{q}(N(t)-1)) \approx \frac{1}{N(t)}\frac{\mathbb{E}[Z_n^2]}{\mathbb{E}[W_n]^2} \approx \frac{\mathbb{E}[W_n]}{t}\frac{\mathbb{E}[Z_n^2]}{\mathbb{E}[W_n]^2} = \frac{1}{t}\frac{\mathbb{E}[Z_n^2]}{\mathbb{E}[W_n]}.
\]

The literature also provides order results for the variance of $\hat{q}(t)$ (and also $\tilde{q}(N(t)-1)$), see for example \cite{MeketonHeidelberger:BiasReduction}. The following Proposition summarizes what is known \cite{MeketonHeidelberger:BiasReduction}:
\begin{proposition}
    Under the same assumptions of Theorem \ref{thm:BiasReduction} and if $\mathbb{E}[W^5]<\infty$, we have
    \begin{enumerate}
        \item $\mathbb{V}ar(\hat{q}(t)) = \mathbb{E}[Z^2]/(t\mathbb{E}[W]) + O(1/t^2)$;
        \item $\mathbb{V}ar(\tilde{q}(N(t)-1)) = \mathbb{E}[Z^2]/(t\mathbb{E}[W]) + O(1/t^2)$.
    \end{enumerate}
\end{proposition}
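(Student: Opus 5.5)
We have to prove the variance expansions for $\hat q(t)$ and $\tilde q(N(t)-1)$. The plan is to start from the same decomposition used in the proof of Theorem \ref{thm:BiasBound}. With $Z_n = W_n h(X_n) - qW_n$ and $S(t)=\sum_{n=1}^{N(t)}Z_n$, we write
\[
\hat q(t)-q=\frac{S(t)}{t+R(t)}=\frac{S(t)}{t}-\frac{R(t)S(t)}{t(t+R(t))}.
\]
Since $\mathbb{V}ar(\hat q(t))=\mathbb{E}[(\hat q(t)-q)^2]-(\mathbb{E}[\hat q(t)]-q)^2$, and the squared bias is $O(t^{-4})$ by Theorem \ref{thm:BiasReduction}, it suffices to expand the second moment $\mathbb{E}[(\hat q(t)-q)^2]$ up to $O(t^{-2})$.

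\textbf{Leading term.} Wald's second identity, as used in step $(iii)$ of Theorem \ref{thm:BiasBound}, gives $\mathbb{E}[S(t)^2]=\mathbb{E}[N(t)]\,\mathbb{E}[Z^2]$. Since $\mathbb{E}[N(t)]=(t+r(t))/\mu$, we get
\[
\mathbb{E}[S(t)^2]/t^2=\mathbb{E}[Z^2]/(t\mu)+r(t)\mathbb{E}[Z^2]/(\mu t^2).
\]
Lorden's inequality gives $r(t)\le \mu_2/\mu$, so the second piece is $O(t^{-2})$.

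\textbf{Remainder.} The cross term is
\[
-2\,\mathbb{E}\!\left[\frac{R(t)S(t)^2}{t^2(t+R(t))}\right],
\]
which is bounded in absolute value by $2\,\mathbb{E}[R(t)S(t)^2]/t^3$. The square of the correction term is bounded by $\mathbb{E}[R(t)^2S(t)^2]/t^4$. We need $\mathbb{E}[R(t)S(t)^2]=O(t)$ and $\mathbb{E}[R(t)^2S(t)^2]=O(t^2)$. Using Cauchy–Schwarz or Hölder would lose powers of $t$, so the approach is to split off the last cycle: $S(t)=S_{N(t)-1}+Z_{N(t)}$, where $|Z_{N(t)}|\le 2KW_{N(t)}$.

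For the part $S_{N(t)-1}$, condition on $\mathcal{F}_{N(t)-1}$, i.e.\ on the history up to the last renewal $T_{N(t)-1}=t-E(t)$. Given this history, $R(t)$ depends only on the overshoot of the final cycle past $t-T_{N(t)-1}$. A renewal argument then yields
\[
\mathbb{E}\big[R(t)S_{N(t)-1}^2\big]=\int_0^t \mathbb{E}\big[S^2_{\cdot}\mid T=s\big]\,\mathbb{E}\big[(W-(t-s))^+\big]\,U(\mathrm{d}s),
\]
with $\mathbb{E}[S^2\mid T_k=s]$ bounded by $C(1+s)$ from Wald-type bounds (or via Doob/Burkholder applied to the martingale $S_k$). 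Since $\int_0^\infty \mathbb{E}[(W-u)^+]\,\mathrm{d}u=\mu_2/2<\infty$, this is $O(t)$. The same computation with $\mathbb{E}[((W-u)^+)^2]$, whose integral is $\mu_3/3$, handles $\mathbb{E}[R(t)^2S_{N(t)-1}^2]=O(t)$; this is exactly where $\mathbb{E}[W^3]$ is used.

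The last-cycle contributions involve moments of the inspected cycle such as $\mathbb{E}[R(t)^2W_{N(t)}^2]\le \mathbb{E}[W_{N(t)}^4]$. By Theorem \ref{thm:RenFunction}(iii) or the length-biasing of Remark \ref{rmk:InspectionParadox}, these are bounded uniformly in $t$ by $C\mu_5/\mu$. This is where $\mathbb{E}[W^5]<\infty$ enters.

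\textbf{Second estimator.} For $\tilde q(N(t)-1)$ the argument is analogous, with denominator $T_{N(t)-1}=t-E(t)$ and numerator $S_{N(t)-1}$. The leading term is the same. The expansion in $E(t)/t$ is controlled by the same renewal integrals, now using $\mathbb{E}[E(t)^k]$, and its stationary moments are again finite under $\mu_3$. The squared bias here is $c^2/t^2=O(t^{-2})$, which is absorbed in the remainder.

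\textbf{Main obstacle.} The hardest step is the joint moment estimates $\mathbb{E}[R(t)^jS(t)^2]$, because $R(t)$ and $S(t)$ are dependent through the final cycle. The plan is the decomposition at $T_{N(t)-1}$ together with the key renewal bound $U(\mathrm{d}s)\le$ (const)$\,\mathrm{d}s+$ bounded mass, from Stone's decomposition (Theorem \ref{thm:StoneDec}). The spread-out component assumption of Theorem \ref{thm:BiasReduction} is what makes this available.
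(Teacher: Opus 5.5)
The paper does not prove this proposition; it only cites Meketon and Heidelberger, so your argument has to stand on its own. For part 1 it essentially does. Since $0\le S(t)^2/t^2-S(t)^2/(t+R(t))^2\le 2R(t)S(t)^2/t^3$, everything reduces to the exact Wald computation plus $\mathbb{E}[R(t)S(t)^2]=O(t)$, and your last-cycle split gives the latter.

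The genuine gap is in part 2, at the sentence ``the leading term is the same''. For $\hat q(t)$ the leading term came from Wald's second identity at the stopping time $N(t)$. But $N(t)-1$ is not a stopping time, so nothing in your argument computes $\mathbb{E}[S_{N(t)-1}^2]$. Writing $S_{N(t)-1}=S(t)-Z_{N(t)}$ gives
\[
\mathbb{E}[S_{N(t)-1}^2]=\mathbb{E}[N(t)]\,\mathbb{E}[Z^2]-2\,\mathbb{E}[S_{N(t)-1}Z_{N(t)}]-\mathbb{E}[Z_{N(t)}^2].
\]
This sits in the $t^{-2}$-weighted leading piece, not in a remainder carrying an extra $1/t$ as the $R(t)$-terms of part 1 do. So you need $\mathbb{E}[S_{N(t)-1}Z_{N(t)}]=O(1)$. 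Your tools only give $|\mathbb{E}[S_{N(t)-1}Z_{N(t)}]|\le 2K\sqrt{\mathbb{E}[S_{N(t)-1}^2]\,\mathbb{E}[C(t)^2]}=O(\sqrt t)$, i.e.\ an $O(t^{-3/2})$ remainder.

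The term does not vanish. By the last-renewal decomposition it equals $\sum_{n\ge 0}\mathbb{E}[S_n\,\chi(t-T_n);\,T_n\le t]$ with $\chi(u)=\mathbb{E}[Z;\,W>u]$. In RRS $\chi$ is nonzero because $W=w(X)$ and $Z=W(h(X)-q)$ are dependent, and $S_n$ is correlated with $T_n$; boundedness comes only from a cancellation. One way to get it is a renewal equation. Conditioning on $(Z_1,W_1)$, $m(t)=\mathbb{E}[S_{N(t)-1}^2]$ satisfies a renewal equation with forcing term $\mathbb{E}[Z^2;\,W\le t]-2B(t)$, where $B(x)=\mathbb{E}[Z\,V(x-W);\,W\le x]$ and $V(x)=\mathbb{E}[Z_{N(x)}]=U*\chi(x)\to\mathbb{E}[ZW]/\mu$. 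You then need $B\in L^1$. Using $\mathbb{E}[Z]=0$, this follows only if $\int_0^\infty|V(y)-\mathbb{E}[ZW]/\mu|\,\mathrm{d}y<\infty$. That is a rate-of-convergence (second-order) renewal estimate, and it is where spread-outness and the higher moments are genuinely needed. The local bound on $U$ that you invoke does not suffice.

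Three smaller repairs are also needed.

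First, in part 1 the bound $\mathbb{E}[S_k^2\mid T_k=s]\le C(1+s)$ uniformly in $k$ is false in general. Conditioning on an atypically small $T_k$ tilts the law of the $(X_j,W_j)$. When $h(X)$ and $W$ are dependent this makes $\mathbb{E}[Z_j\mid T_k=s]\ne 0$, and the conditional second moment can be of order $s^2$. Wald, Doob and Burkholder control only unconditional moments. What your integral actually needs is $\sum_{k\ge 0}\mathbb{E}[S_k^2;\,T_k\in(a,a+1]]=O(1+a)$. You get this by restarting at the stopping time $N(a)$: for the relevant $k$, $S_k^2\le 2S_{N(a)}^2+8K^2$ since $|Z_i|\le 2KW_i$, and given the history up to $N(a)$ the expected number of such $k$ is at most $U(1)$. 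Then sum against the nonincreasing $\phi(u)=\mathbb{E}[(W-u)^+]$ over unit blocks.

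Second, in part 2 the expansion in $E(t)/t$ is not valid as written, because $T_{N(t)-1}=t-E(t)$ can be arbitrarily close to $0$. Split on $\{E(t)\le t/2\}$. On the complement use $|S_{N(t)-1}|\le 2K\,T_{N(t)-1}$ together with $\mathbb{P}(C(t)>t/2)=O(t^{-2})$.

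Third, $\tilde q(N(t)-1)$ is $0/0$ on $\{W_1>t\}$, so you need a convention for it. That event has probability $O(t^{-5})$, so any bounded convention is harmless.
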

Hence, from a mean square error (MSE) point of view, the two estimators are equivalent, because for both $\hat{q}(t)$ and $\tilde{q}(N(t)-1)$, the variance is the leading term in the MSE expansion. This means that the MSE is insensitive to the bias. Nonetheless, it is still favorable to use the low-bias estimator $\hat{q}(t)$:
\begin{remark}
    As pointed out in \cite{MeketonHeidelberger:BiasReduction}, for small values of $t$, $\hat{q}(t)$ and the estimator for its variance will be highly correlated, which implies a reduced confidence interval coverage. This can be solved by employing multiple independent replications of the simulation procedure, and computing the variance across replications. However, a low-bias estimate and a proper selection of the replications length are still needed for obtaining satisfactory CI coverage.
\end{remark}
What this means, in practice, is that for any fixed \emph{total simulation budget} $T$, we can choose $R$ replications of length $t=T/R$, such that:
\begin{itemize}
    \item $R$ is high enough, so that the variance across replications gives proper CI coverage;
    \item $t$ is not too low, so that the bias of the single-replication ratio estimators is small.
\end{itemize}

\chapter{Some Applications}
\label{chap:7}
This chapter is devoted to the study of two practical applications of the Regenerative Rejection Sampling algorithm, together with empirical comparisons with well-known Markov Chain Monte Carlo methods. In the first, we sample from a given \emph{synthetic} bi-dimensional target distribution, and compare the performance of the Regenerative Rejection Sampling method with that of the Random Walk Metropolis Algorithm. The second example is centered on a real dataset containing data of patients with the \emph{Lupus} disease, on which we perform a probit regression. We compare the performance of the Gibbs sampler with that of our Regenerative Rejection Sampling method.

\section{A Synthetic Example}
\label{sec:SyntheticExample}
Consider the problem of sampling from the distribution defined on the bounded domain $[-2\pi,2\pi]^2\in\mathbb{R}^2$, with density
\begin{equation}
    f(x_1,x_2) = \frac{e^{-\frac{1}{4}\sqrt{x_1^2+x_2^2}}\,\left(\sin\left( 2\sqrt{x_1^2+x_2^2} \right)+1\right)}{\mathcal{Z}}, \quad x_1\in[-2\pi,2\pi], x_2\in[-2\pi,2\pi],
\end{equation}
where $\mathcal{Z}$ is an \emph{unknown} normalizing constant. We call the unnormalized density $f_\propto(x_1,x_2)$. Figure \ref{fig:SyntheticBoundedSurface} shows the plot of $f_\propto(x_1,x_2)$.
\begin{figure}[h]
    \centering
    \includegraphics[width=0.55\linewidth]{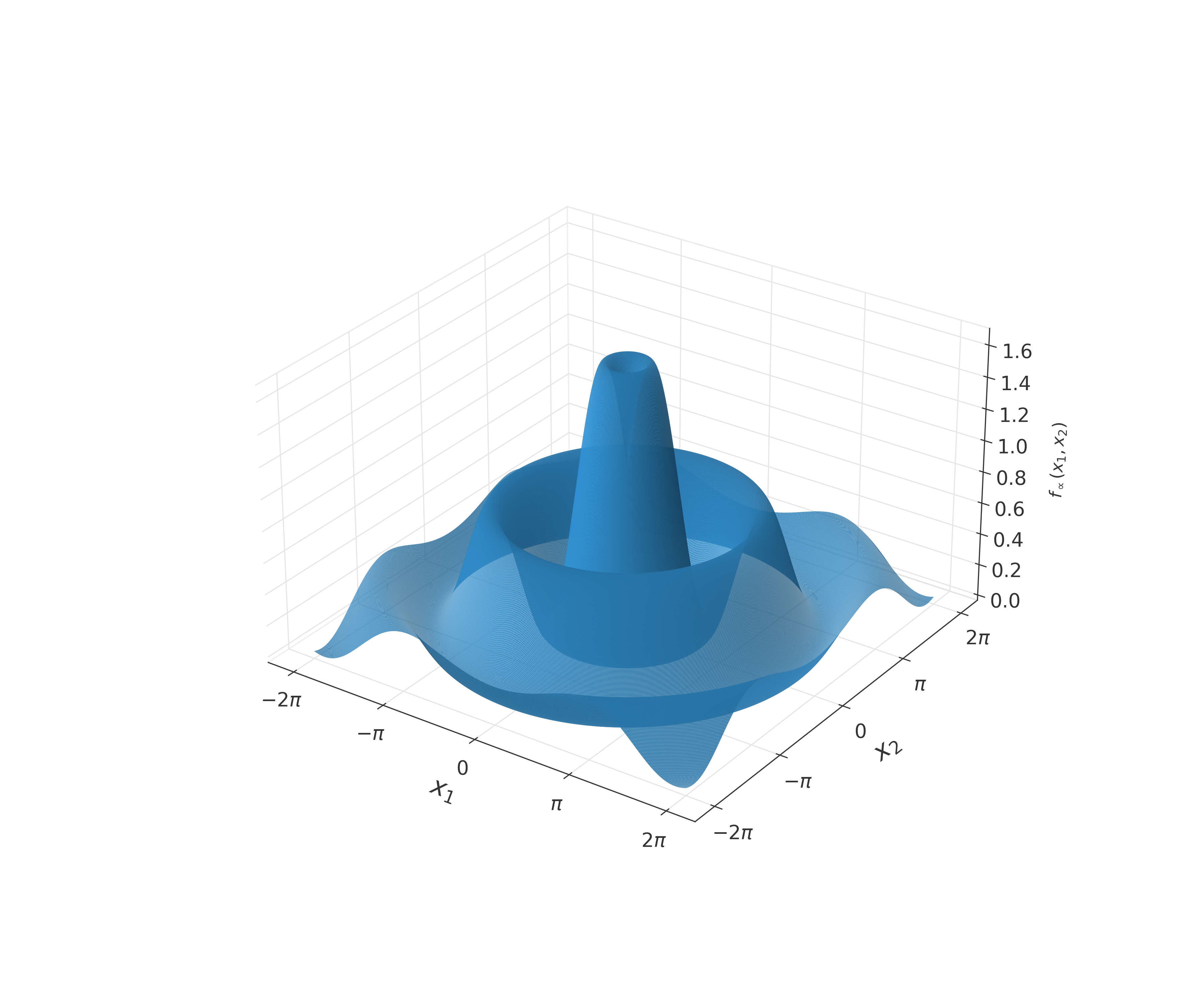}
    \caption{Surface plot of $f_\propto(x_1,x_2)$}
    \label{fig:SyntheticBoundedSurface}
\end{figure}

We start the analysis by showing the performance of the Random Walk Metropolis Algorithm. First, let us recall how the method works.

The Random Walk Metropolis (RWM) Algorithm \cite{Metropolis:RWM} was the first Metropolis-type algorithm to be introduced. As always happens with such algorithms, it constructs a Markov Chain that has as stationary distribution the desired target distribution (in this case, with density $f$). At each step of the algorithm, we generate a sample from a \emph{symmetric} proposal distribution $g$, meaning that $g(\bm{x}|\bm{y}) = g(\bm{y}|\bm{x})$ (in classical MCMC notation). This yields a simplified formula for the acceptance probability $\alpha_{RWM}(\bm{x},\bm{y})$:
\[
\alpha_{RWM}(\bm{x},\bm{y}) = \min\left\{1, \frac{f_\propto(y_1,y_2)}{f_\propto(x_1,x_2)}\right\},
\]
where $\bm{y}=(y_1,y_2)$ is the proposed state, and $\bm{x}=(x_1,x_2)$ is the current state of the chain. For this particular example, we chose a  proposal distribution of the type
\[
\tilde{g}(x_1,x_2) = \frac{1}{64}e^{-\frac{|x_1|}{4}-\frac{|x_2|}{4}},\quad x\in\mathbb{R}^2.
\]
This corresponds to sampling two independent $Laplace(0,4)$ random variables, one for each component. I.i.d. sampling from the $Laplace(0,4)$ distribution can be achieved via the Inverse-Transform method, i.e. sampling i.i.d. random variables $U\sim\mathcal{U}\left(-\frac{1}{2},\frac{1}{2}\right)$, and then computing $4\text{ sign}(U)\log(1-2|U|)$. We can easily center the proposed samples in the current step of chain, say $X^n$, by computing $X^n_i + 4\text{ sign}(U)\log(1-2|U|)$, for $i=1,2$, where $i$ represents the component of the random vector. In this way, the symmetry of the proposal distribution is ensured.

\subsection{Random Walk Metropolis - Bounded domain}
Before showing the sampling results, let us make a remark concerning the supports of the target and of the proposal distribution.

Since the unnormalized target density $f_\propto$ is defined only on the square $[-2\pi,2\pi]^2$, one may wonder why we sample from a proposal distribution defined on the whole real plane $\mathbb{R}^2$. However, doing this does not represent a problem. If we propose a step that lies outside of the support of the target distribution, the probability of acceptance $\alpha(\bm{x},\bm{y})$ will be equal to $0$. Hence, the step will be rejected. This means that using a proposal distribution defined on $\mathbb{R}^2$ does not invalidate the produced samples, but only decreases the efficiency of the procedure, compared to a proposal defined on the same support of the target, and that handles well the behavior around the border of the support.

We now move to the sampling results. To asses the performance of the Random Walk sampler, we plotted $10^4$ \emph{consecutive} samples taken from the chain after a burn-in period of $1000$. This was done to reduce the initialization bias arising from starting the chain away from its stationary distribution.
\begin{figure}[h]
    \centering
    \includegraphics[width=0.55\linewidth]{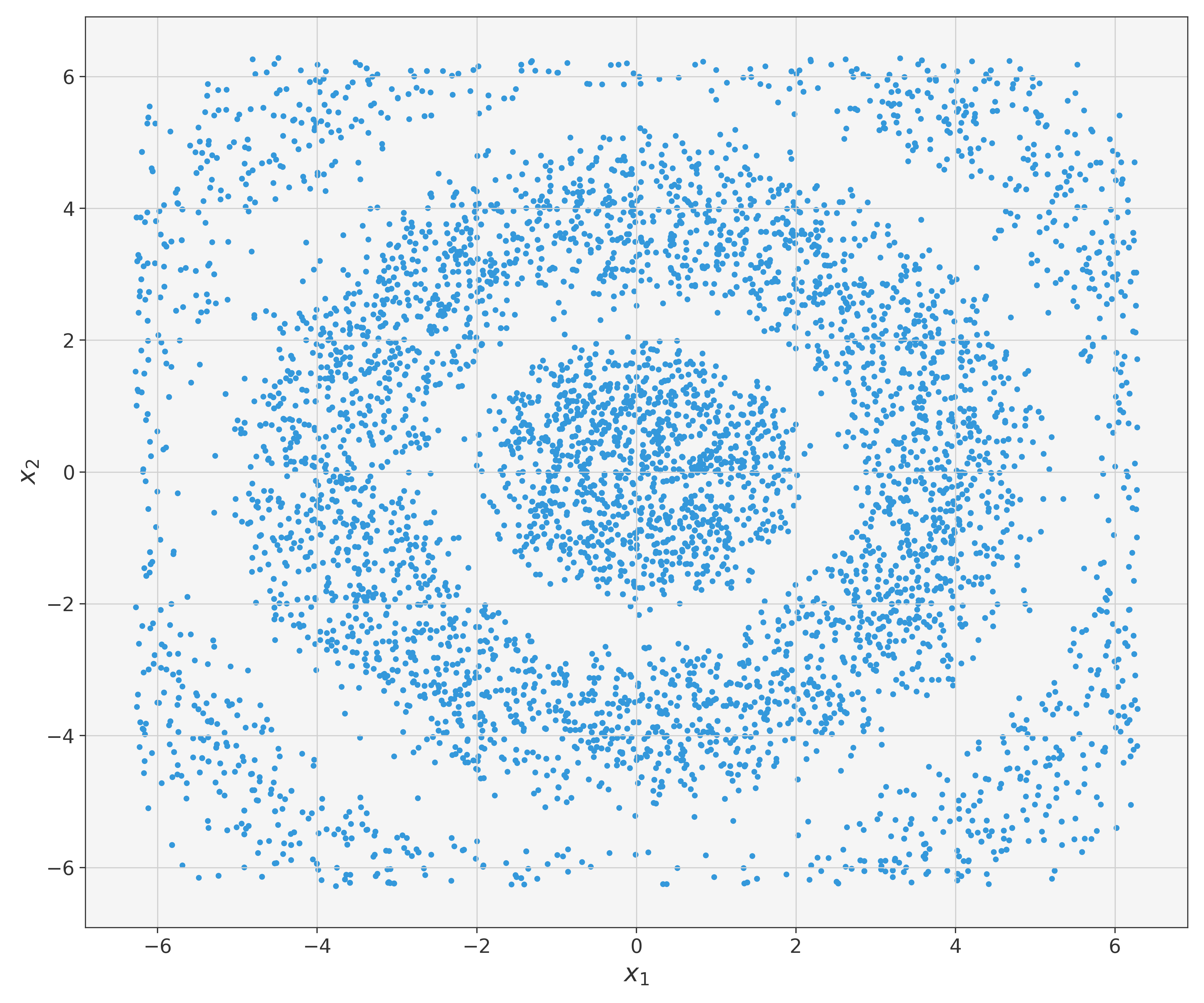}
    \caption{$10^4$ consecutive samples from RWM Algorithm}
    \label{fig:BoundedRWMSamples}
\end{figure}
As we see in Figure \ref{fig:BoundedRWMSamples}, the RW sampler is able to explore the whole domain quite well, and the returned samples correctly concentrate in the areas where the target density presents more mass. To further asses the quality of the samples, we plot the autocorrelation function (ACF) for the each of two components of the samples, and analyze their decay.
\begin{figure}[htbp]
    \centering
    \begin{minipage}{0.48\textwidth}
        \centering
        \includegraphics[width=\linewidth]{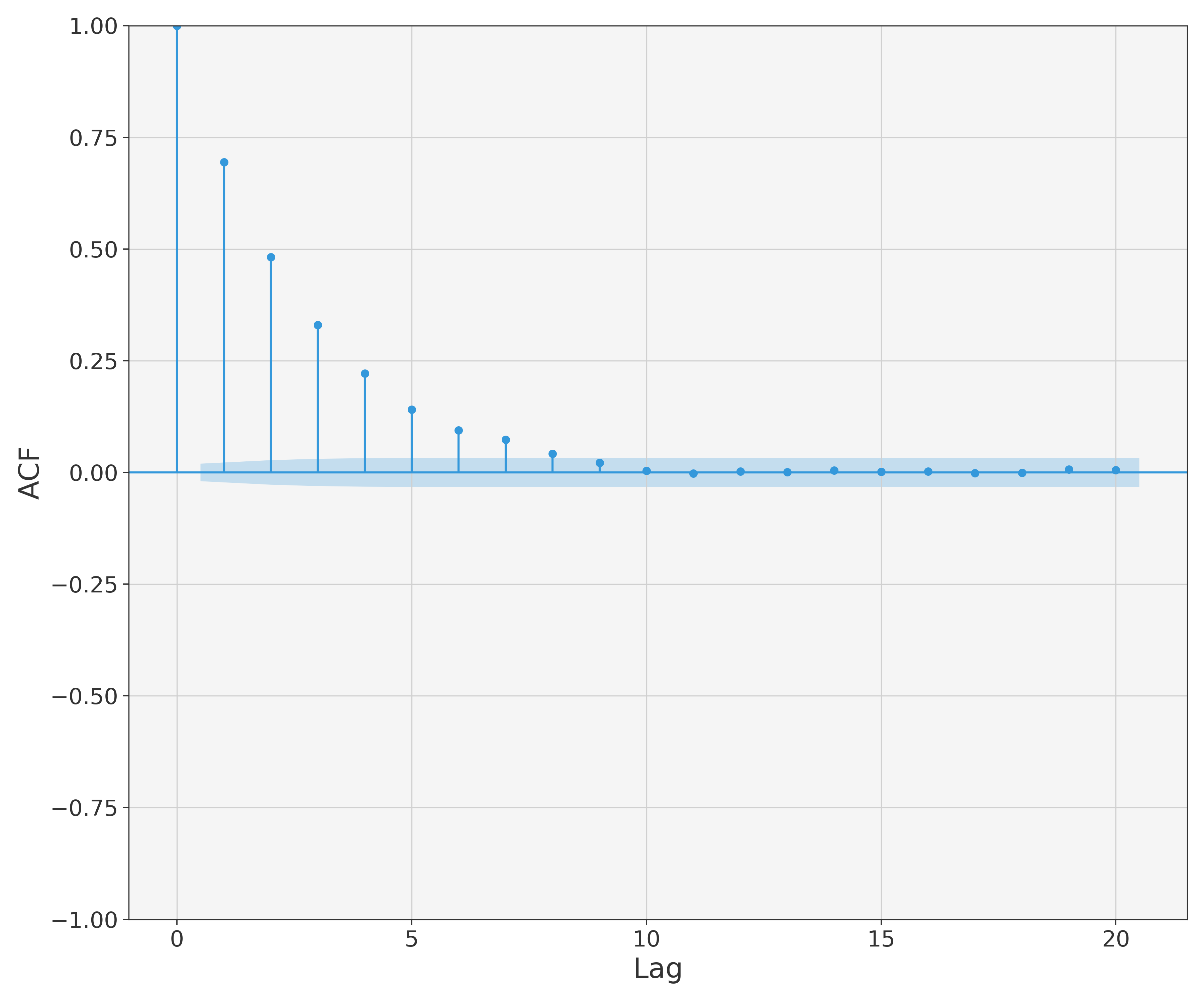}
        \caption{RWM - ACF plot of $1^{st}$ component}
        \label{fig:SyntheticBoundedRWMACF1}
    \end{minipage}
    \hfill
    \begin{minipage}{0.48\textwidth}
        \centering
        \includegraphics[width=\linewidth]{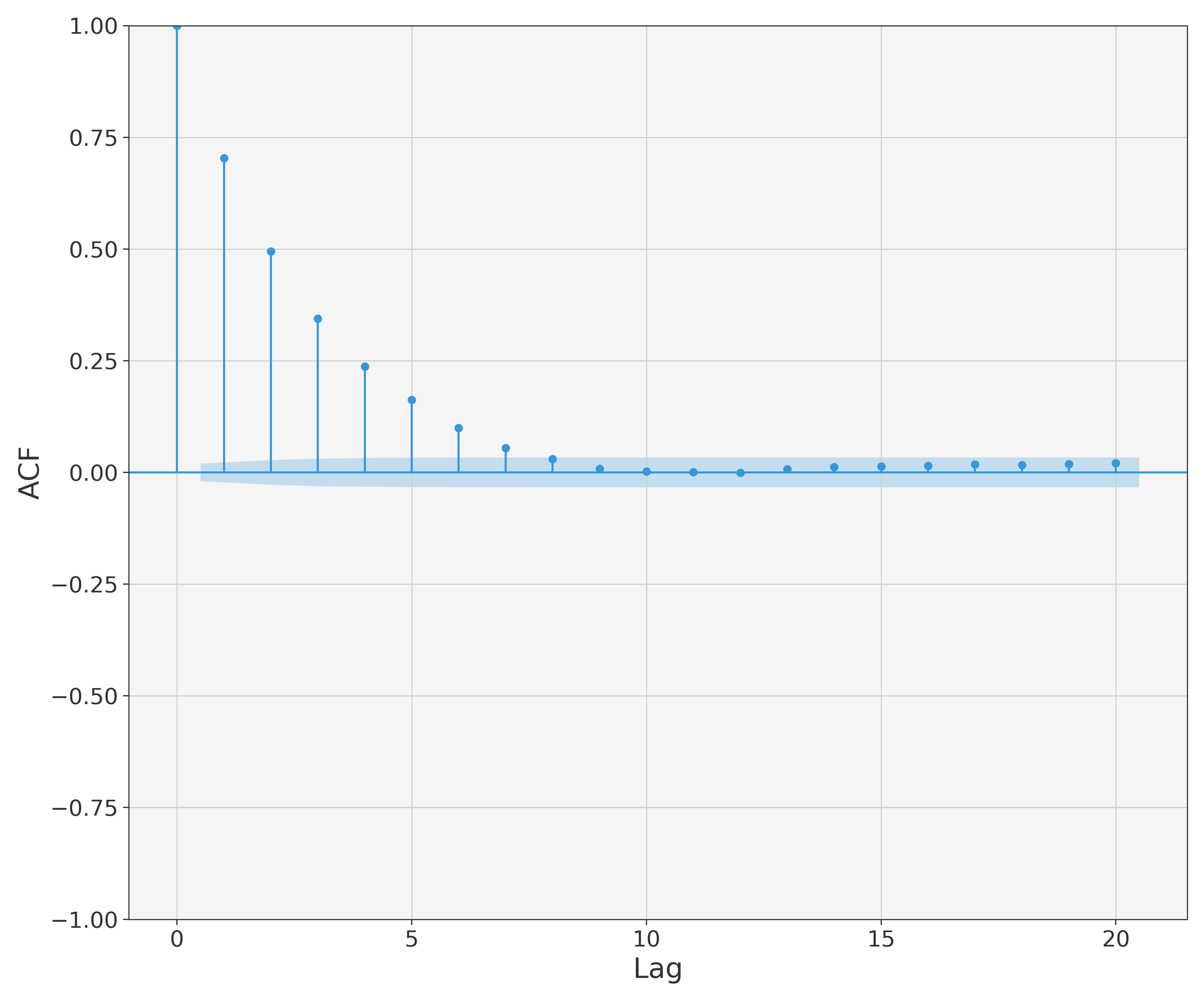}
        \caption{RWM - ACF plot of $2^{nd}$ component}
        \label{fig:SyntheticBoundedRWMACF2}
    \end{minipage}
\end{figure}
From Figures \ref{fig:SyntheticBoundedRWMACF1} and \ref{fig:SyntheticBoundedRWMACF2}, it is possible to see that the RWM decorrelates quite fast, indicating that the returned samples are close to being independent. However, this is a reasonably simple setting, where the RWM is expected to behave well.

\subsection{Regenerative Rejection Sampling - Bounded domain}
Now we turn to the analysis of the performance of the Regenerative Rejection Sampling method applied to the previously described setting.

First of all, to meaningfully compare the two methods, we chose the same $Laplace(0,4)$-type proposal distribution that we used for the Random Walk Metropolis sampler.

However, the proposal density needs to have zero mass outside the support of the target density, i.e. $f(x_1,x_2)=0\implies \tilde{g}(x_1,x_2)=0$. This is required because otherwise we could have a cycle length of 0. Thus, in our case we need to restrict the proposal to the square $[-2\pi,2\pi]^2$. Sampling from the truncated proposal can be done via naive Rejection Sampling, i.e. sample from the non-truncated proposal until we obtain a point inside the square. This yields i.i.d. samples from the correct truncated proposal.

In this particular setting, we can easily calculate the truncated proposal density $g$. The normalization constant is $\int_{[-2\pi,2\pi]^2}\tilde{g}(x_1,x_2)\,\text{d}x_1\,\text{d}x_2 = (1-e^{-\pi/2})^2$, hence the closed form for $g$ is
\begin{equation}
\label{eq:truncatedLaplace}
g(x_1,x_2) = \frac{1}{(1-e^{-\pi/2})^2}\frac{1}{64}e^{-\frac{|x_1|}{4}-\frac{|x_2|}{4}}\mathbb{I}_{[-2\pi,2\pi]^2}(x_1,x_2).
\end{equation}
Note that using the naive Rejection Sampling strategy to sample from the truncated proposal may not be computationally efficient, and one could use better methods to achieve the same result. However, in this specific case, the probability of acceptance is $(1-e^{-\pi/2})^2\approx0.63$, which is sufficiently high. To improve efficiency in this procedure one could tailor a Rejection Sampling proposal distribution concentrated on the square $[-2\pi,2\pi]^2$, but we chose not to do it here.

Moreover, we must make a remark on the method. In Example \ref{ex:RRSGammaExp1}, to generate the $N$ samples, we naively run the algorithm up to time $t$ independently for $N$ times. However, this method is not the most time-efficient. Hence, we slightly modified the scheme, by running only one long regenerative process, and taking as the $n^{th}$ sample ($n=1,2,\dots$) the sample for which the embedded renewal process $T_{N(nt)}$ exceeds $nt$, for a fixed $t>0$ (in a fashion similar to thinning in MCMC). This change introduces correlation among the samples, but it also ensures that the regenerative process is closer to stationarity as more samples are taken. The algorithm is outlined in Algorithm \ref{algo:SequentialRRS}.
\begin{algorithm}
\caption{Sub-Sampled Regenerative Rejection Sampling}
\label{algo:SequentialRRS}
\begin{algorithmic}[1]
\Require{Proposal pdf $g$, time $t$, number of samples $N$}
\State $S \gets 0$
\For{$i=1,\dots,N$}
\Repeat
    \State Simulate $X \sim g$
    \State $W \gets f_\propto(X)/g(X)$
    \State $S \gets S+W$
\Until{$S > i\cdot t$}
\State $X_i \gets X$
\EndFor
\State \textbf{return} $X_1,\dots,X_N$
\end{algorithmic}
\end{algorithm}

The fixed threshold $t$ could have been chosen in various ways: by inspecting the cycle length 's pdf, or by relying on the LLN for the counting process $N(t)$ (cf. Theorem \ref{thm:LLNCountingProcess}), for example. For the sake of completeness, we generated $10^6$ i.i.d. samples from the cycle length distribution (as it only requires sampling from the proposal distribution $g$) and plotted the KDE.
\begin{figure}[h!]
    \centering
    \includegraphics[width=0.55\linewidth]{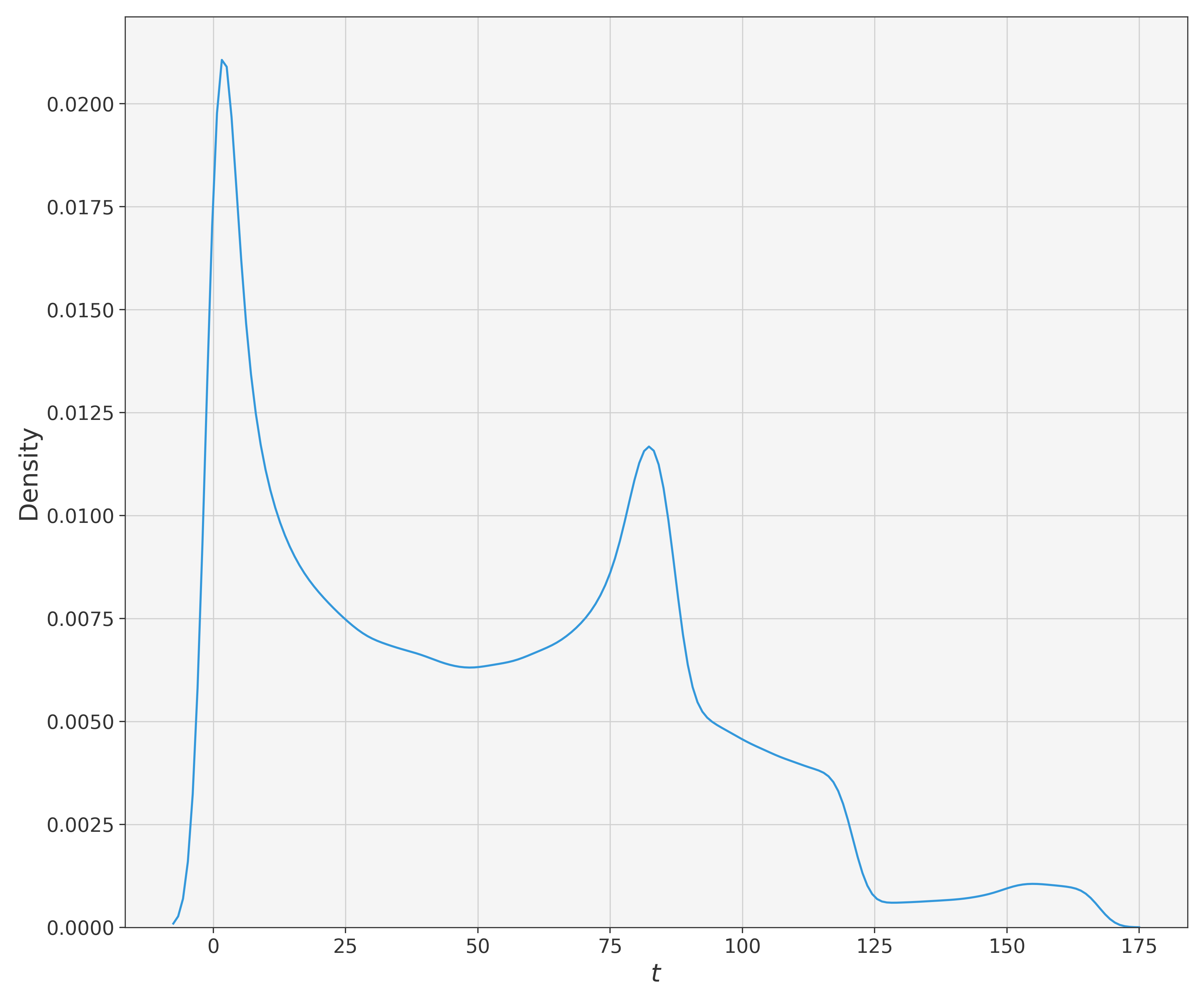}
    \caption{KDE of cycle length distribution}
    \label{fig:SyntheticBoundedRRSCycleLength}
\end{figure}

As is seen in Figure \ref{fig:SyntheticBoundedRRSCycleLength}, the cycle length distribution has most of the mass in $(0,125]$, hence taking a value of $t$ too close to 0, would result in an inefficient sampling procedure. To avoid an excessive increase of computational cost, we could have chosen $t=50$.

However, since a \emph{random} number of samples from the proposal is generated at each run of the RRS method, by fixing an arbitrary value of $t$ we would not necessarily obtain a similar number of samples to those generated by the Random Walk Metropolis (i.e. $N_{RWM}=10^4$ plus a burn-in of 1000). Hence, the comparison between the two methods could be misleading, if not unfair.

To avoid this problem, we can resort to using the LLN for the counting process $N(t)$: as $t\rightarrow\infty$, $N(t)/t\rightarrow1/\mathbb{E}[W]$, where $\mathbb{E}[W]$ is the expected value of the cycle length distribution. In practice, since the goal is to generate (on average and for large values of the time at which the process is stopped) $N_{RWM}+burnin = 11000$ samples, we should set $T = (N_{RWM}+burnin)\mathbb{E}[W]$. $T$ is the time at which the last sample is selected, and since we aim to sub-sample the regenerative process $N_{RRS}=10^4$ times, we select $t=T/N_{RRS} = \frac{N_{RWM}+burnin}{N_{RRS}}\mathbb{E}[W]$. For this specific example, the computed value is $t=56.91$, which is close to what could have been chosen by inspecting the KDE of the cycle length density. In practice, the expected value of the cycle length distribution is computed as the sample mean of the $10^6$ i.i.d samples generated for the KDE.

At this point, we analyze the sampling results from the Regenerative Rejection Sampling method.
\begin{figure}[h!]
    \centering
    \includegraphics[width=0.55\linewidth]{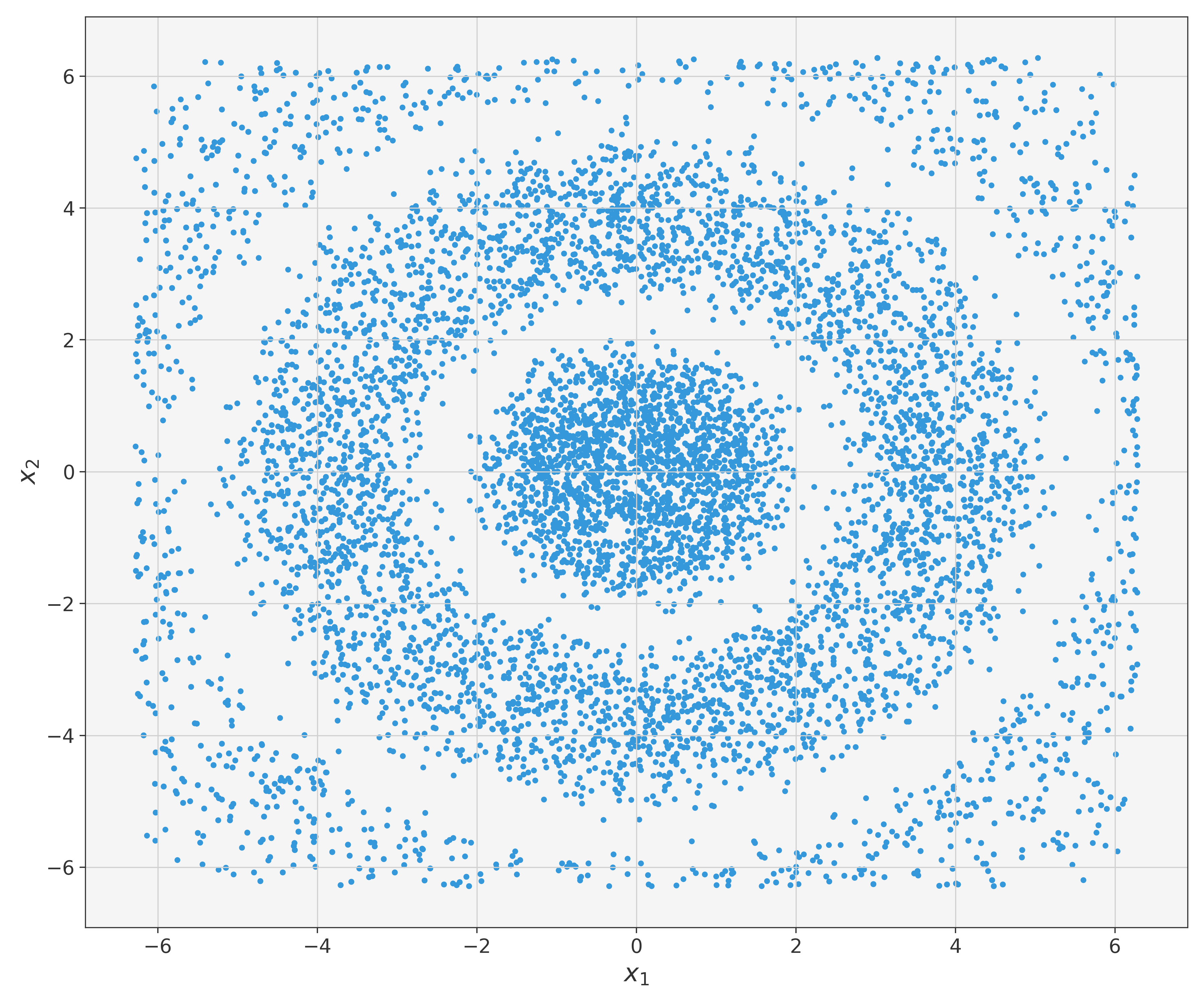}
    \caption{$10^4$ samples from Regenerative Rejection Sampling}
    \label{fig:SyntheticBoundedRRSSamples}
\end{figure}
Figure \ref{fig:SyntheticBoundedRRSSamples} shows that the Regenerative Rejection sampler is able to explore the whole space following the correct distribution of the points, and the performance seems comparable to that of the Random Walk Metropolis algorithm. However, we can also plot the autocorrelation functions for each component to see their decay (Figures \ref{fig:SyntheticBoundedRRSACF1} and \ref{fig:SyntheticBoundedRRSACF2}).
\begin{figure}[htbp]
    \centering
    \begin{minipage}{0.48\textwidth}
        \centering
        \includegraphics[width=\linewidth]{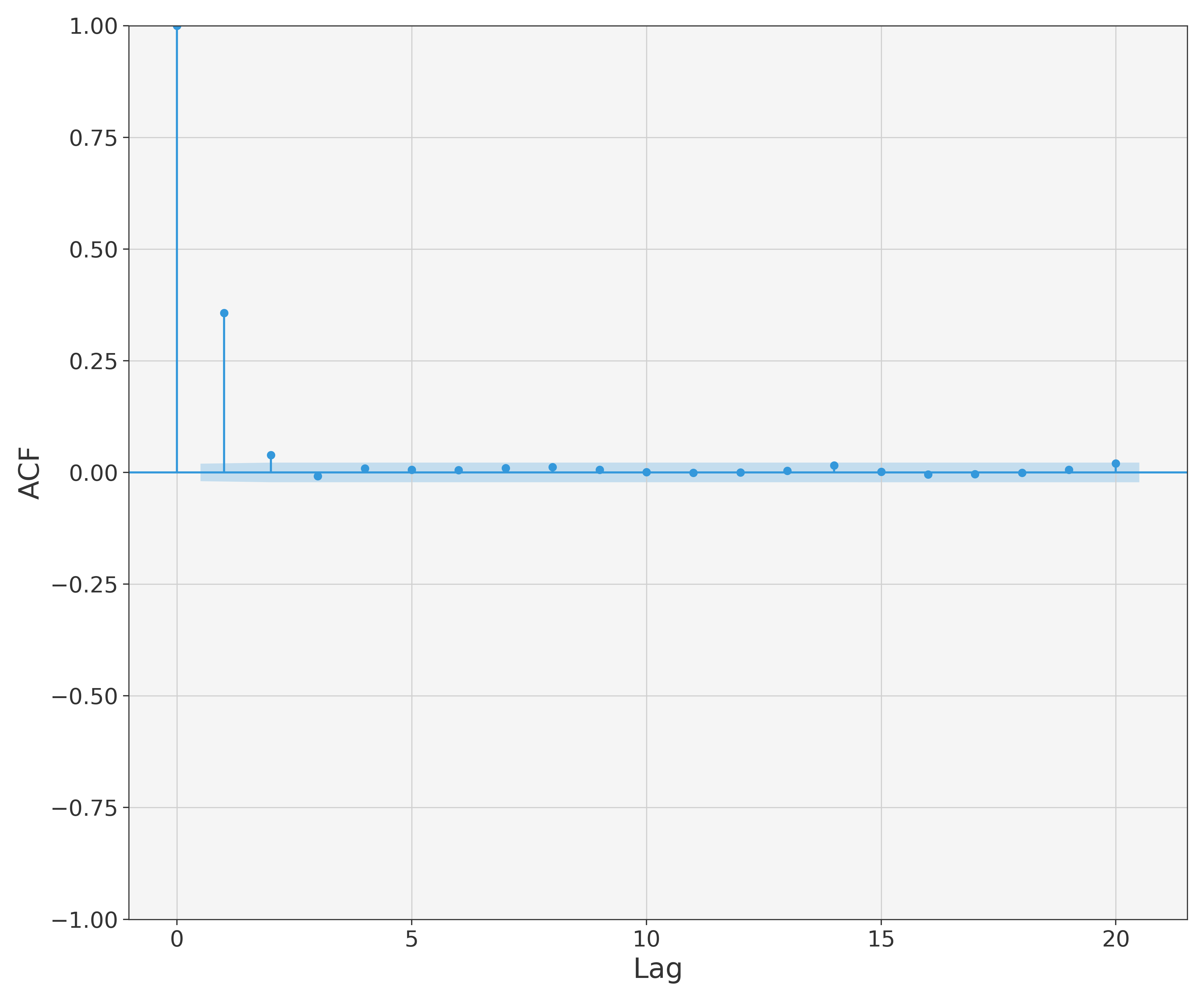}
        \caption{RRS - ACF plot of $1^{st}$ component}
        \label{fig:SyntheticBoundedRRSACF1}
    \end{minipage}
    \hfill
    \begin{minipage}{0.48\textwidth}
        \centering
        \includegraphics[width=\linewidth]{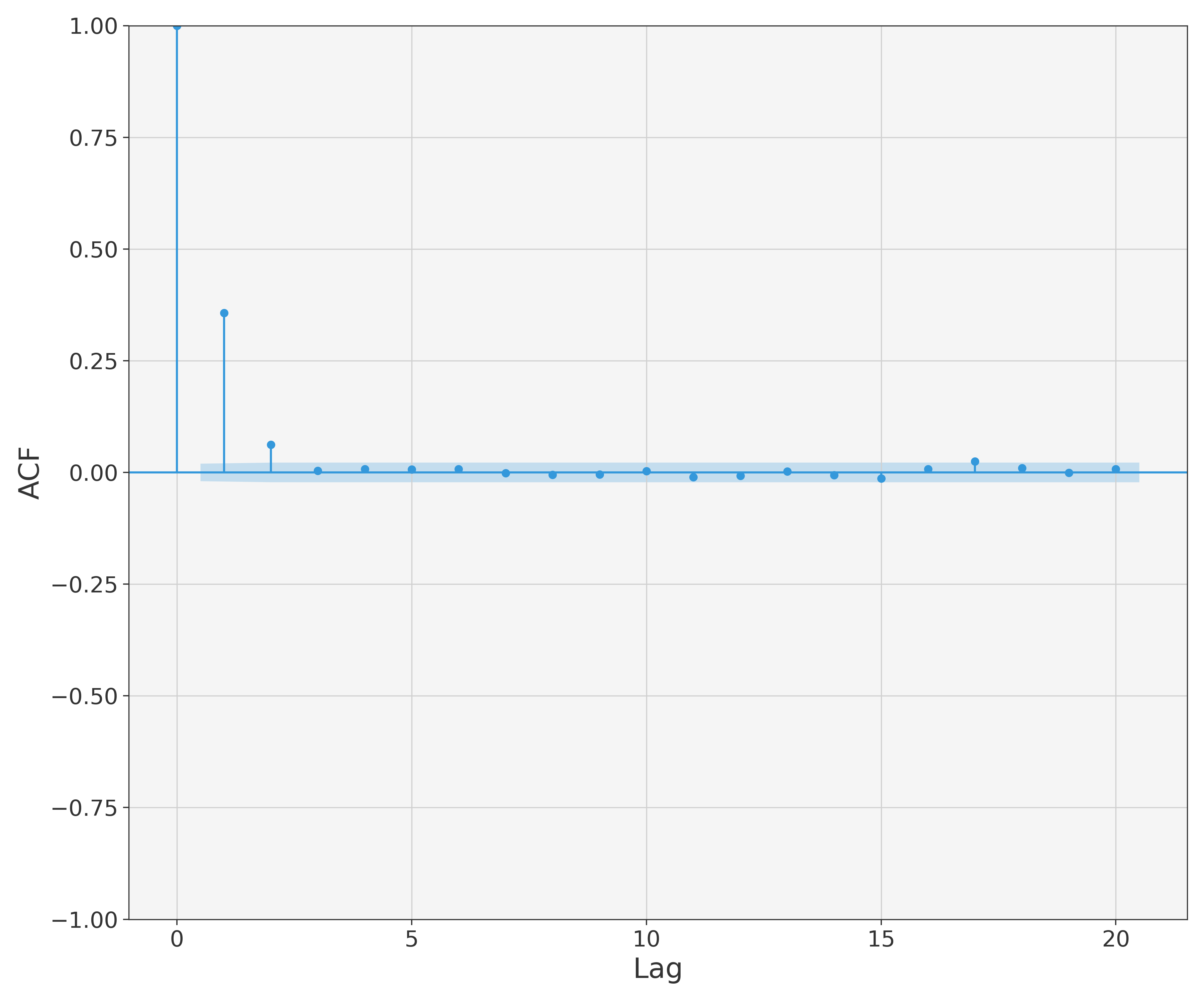}
        \caption{RRS - ACF plot of $2^{nd}$ component}
        \label{fig:SyntheticBoundedRRSACF2}
    \end{minipage}
\end{figure}
In this case we see that, even for a relatively small time-threshold $t=56.91$, the sampler returns random variables that are almost independent, suggesting that the method converges slightly faster compared to the Random Walk Metropolis algorithm.

Note that we expected the Regenerative Rejection Sampling method to perform well in this setting. As seen in Theorem \ref{thm:RRSConvergence}, the process converges geometrically in total variation distance if the cycle length distribution has exponential moments. For this example, since the likelihood ratio $f_\propto(x_1,x_2)/g(x_1,x_2)$ is a continuous function defined on a compact support, it is bounded by a constant, say, $c$, and hence
\[
\mathbb{E}[e^{\eta W}] = \mathbb{E}[e^{\eta f_\propto(X_1,X_2)/g(X_1,X_2)}]\leq\mathbb{E}[e^{\eta c}]<\infty,
\]
which guarantees geometric convergence.

\subsection{Random Walk Metropolis - Unbounded domain}

What happens if instead we want to sample from the same target density, but defined on the whole real plane $\mathbb{R}^2$?
In the following paragraphs we analyze this situation using the Random Walk Metropolis algorithm, as well as the Regenerative Rejection Sampling method, with the same Laplace-type proposal, but defined on the unbounded domain $\mathbb{R}^2$. In Figure \ref{fig:SyntheticUnboundedSurface}, we can see a surface plot of the unnormalized target density on the unbounded domain.
\begin{figure}[h]
    \centering
    \includegraphics[width=0.55\linewidth]{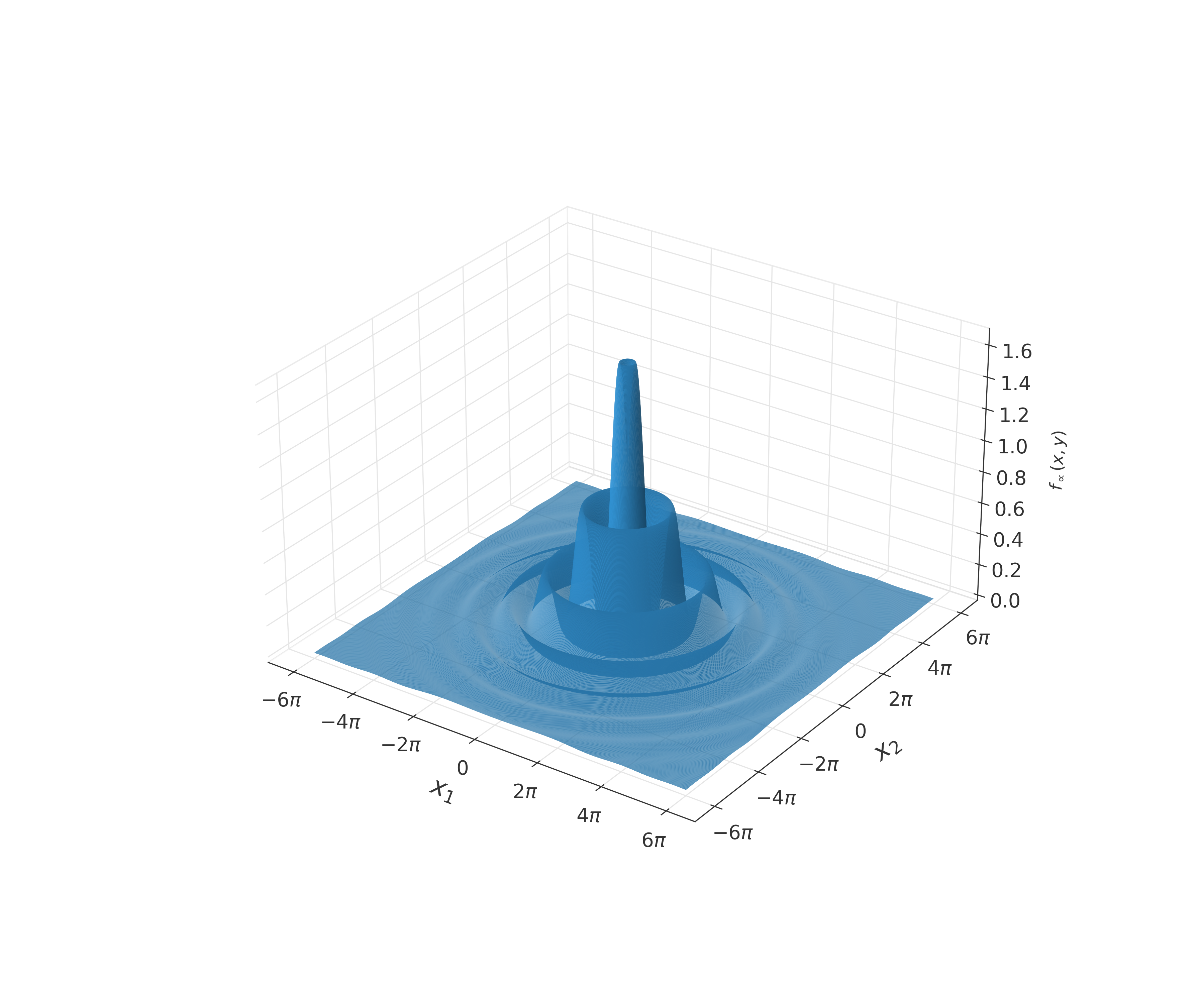}
    \caption{Surface plot of unnormalized $f_\propto(x_1,x_2)$ defined on $\mathbb{R}^2$}
    \label{fig:SyntheticUnboundedSurface}
\end{figure}

As was done for the previous analysis, we start by plotting $10^4$ \emph{consecutive} samples generated with the Random walk Metropolis algorithm after a burn-in period of 1000. From Figure \ref{fig:UnboundedRWMSamples}, it is clear that the RWM sampler is still able to explore the space sufficiently well, as the Markov Chain also moves in the remote parts of the support.
\begin{figure}[h]
    \centering
    \includegraphics[width=0.55\linewidth]{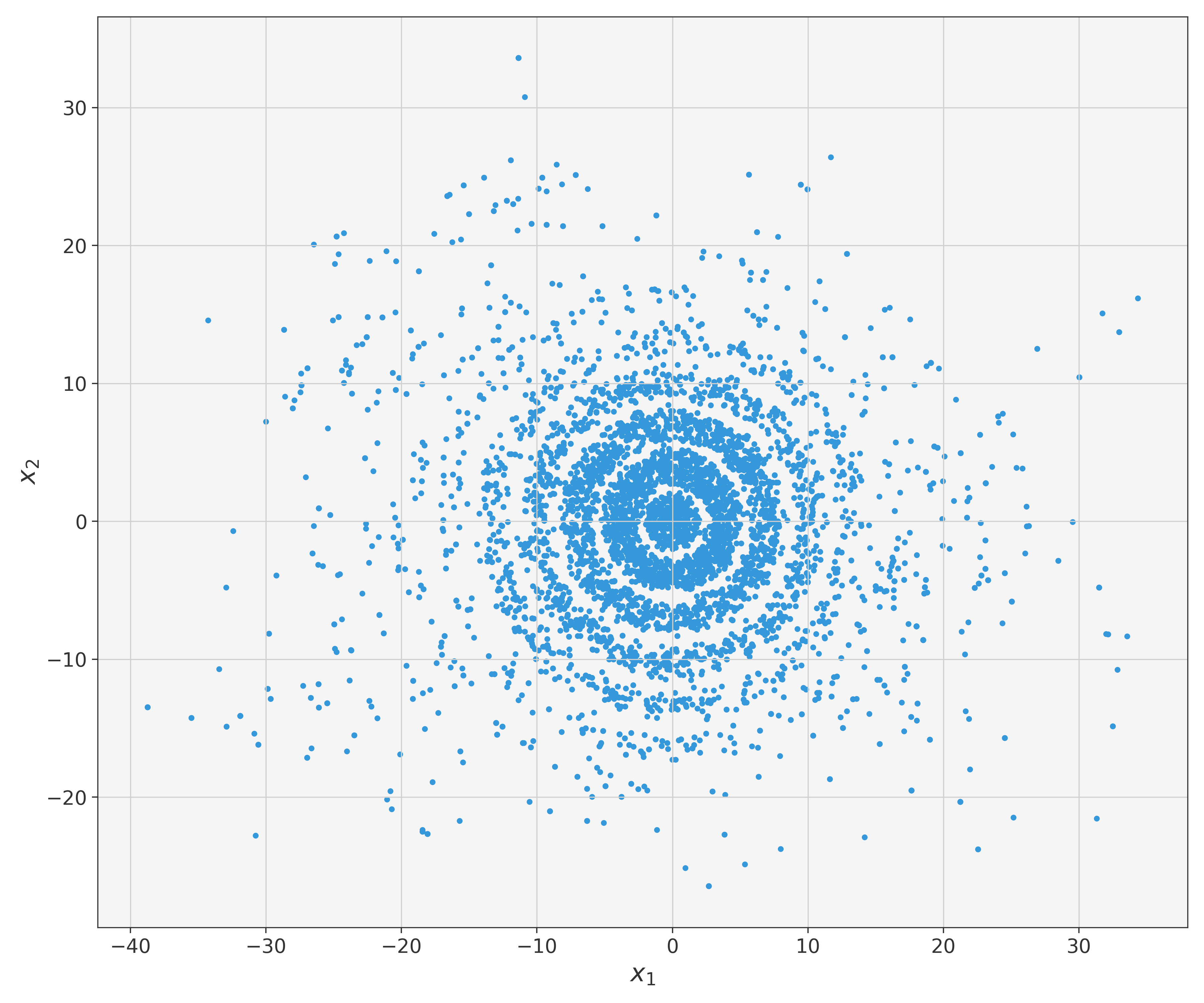}
    \caption{$10^4$ consecutive samples from RWM Algorithm}
    \label{fig:UnboundedRWMSamples}
\end{figure}

However, by plotting the autocorrelation function for each component of the samples (see Figures \ref{fig:SyntheticUnboundedRWMACF1} and \ref{fig:SyntheticUnboundedRWMACF2}), we notice that the samples decorrelate slowly. For this reason, the Random Walk Metropolis algorithm is not the best choice for sampling from the target distribution defined on the unbounded domain $\mathbb{R}^2$. 
\begin{figure}[htbp]
    \centering
    \begin{minipage}{0.48\textwidth}
        \centering
        \includegraphics[width=\linewidth]{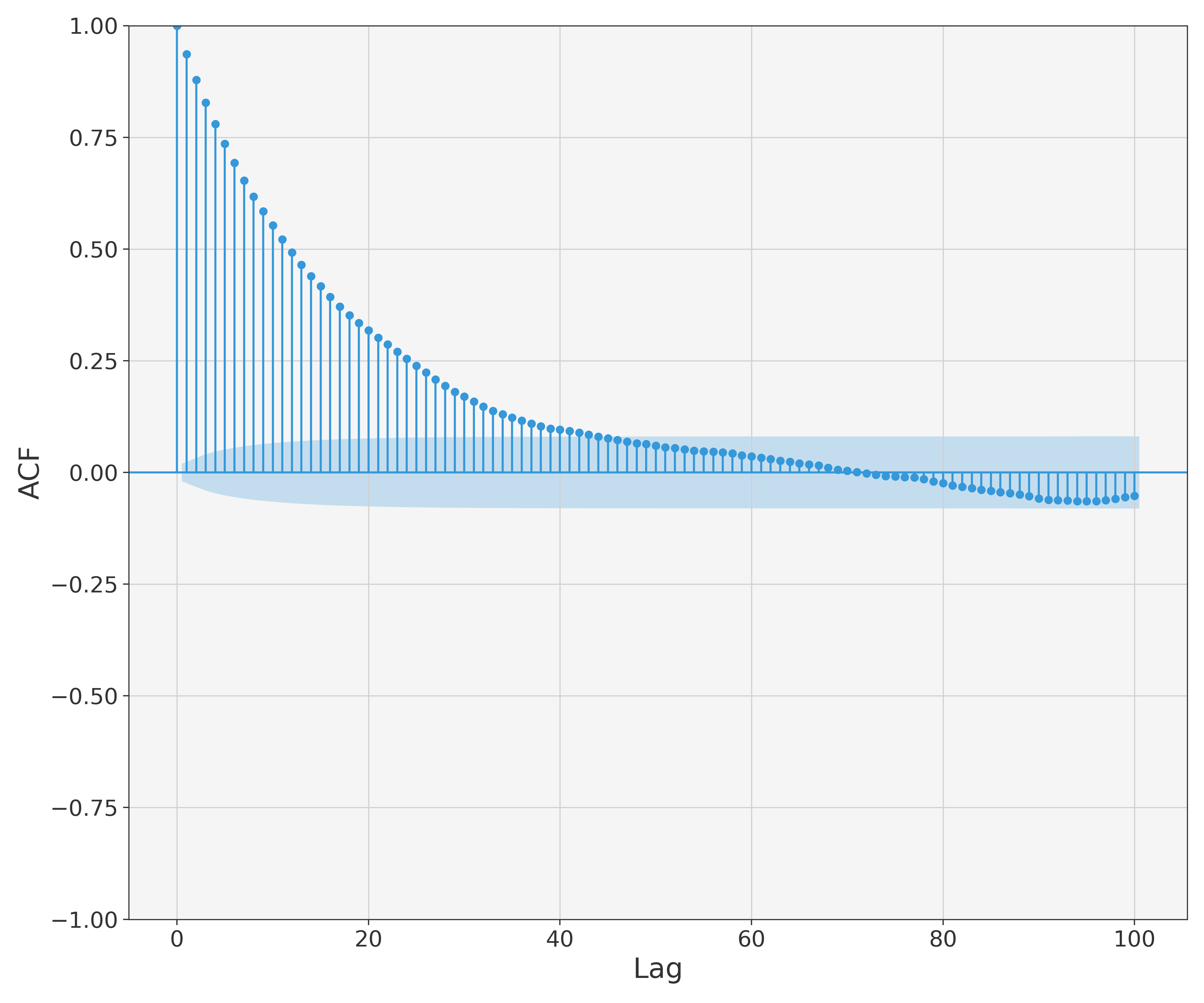}
        \caption{RWM - ACF plot of $1^{st}$ component}
        \label{fig:SyntheticUnboundedRWMACF1}
    \end{minipage}
    \hfill
    \begin{minipage}{0.48\textwidth}
        \centering
        \includegraphics[width=\linewidth]{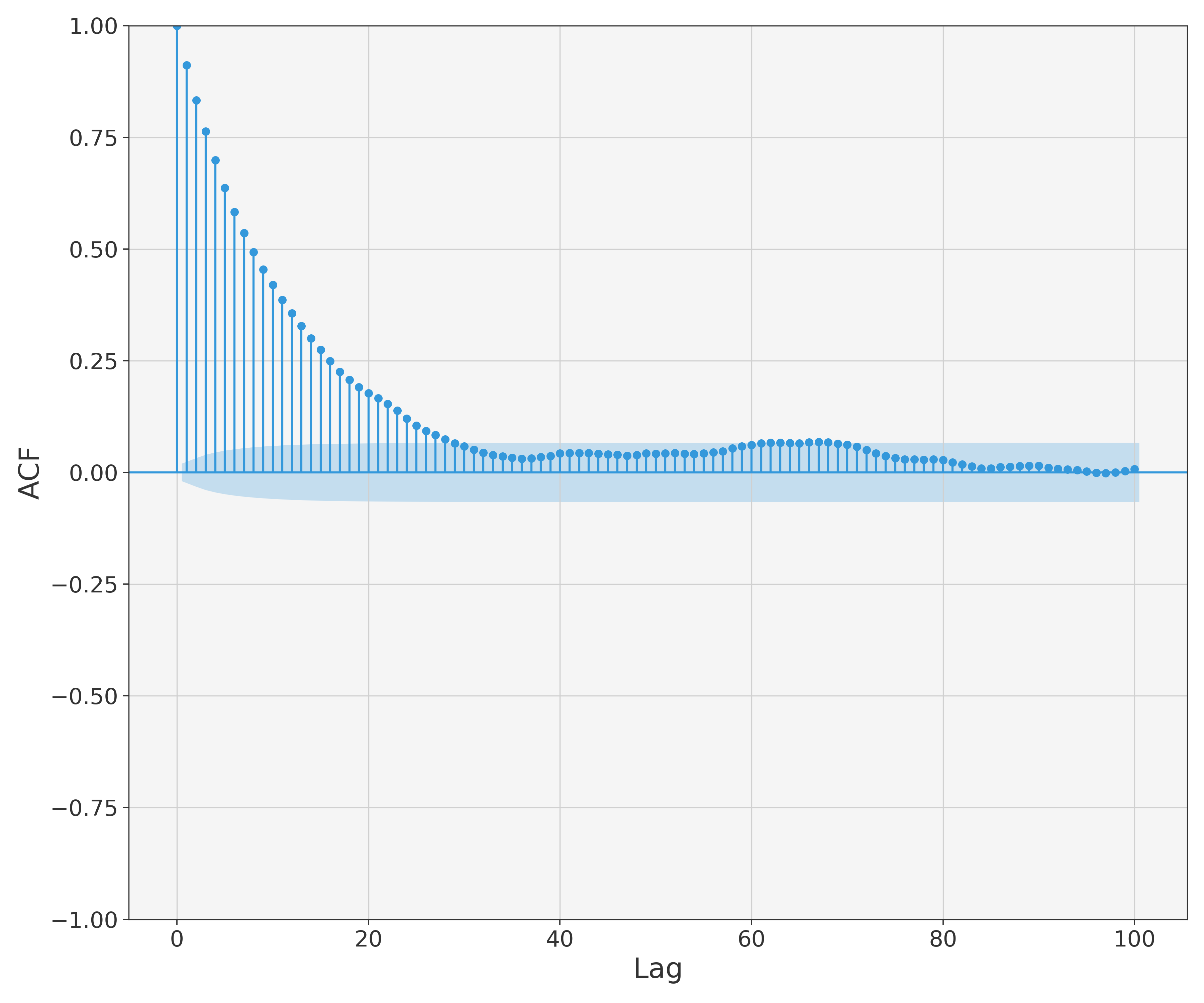}
        \caption{RWM - ACF plot of $2^{nd}$ component}
        \label{fig:SyntheticUnboundedRWMACF2}
    \end{minipage}
\end{figure}

As a positive remark, the theory suggested that we could not expect the RWM algorithm to perform well in this situation, because of the unboundedness of the domain \cite[Theorem 3.1]{MergensenTweedie:MHConvergence}. Nonetheless, the plots show that the method has a reasonably good performance.

\subsection{Regenerative Rejection Sampling - Unbounded domain}
Since $\text{supp}\,(f_\propto)=\mathbb{R}^2$, we should not sample from the truncated proposal distribution \eqref{eq:truncatedLaplace}. For this reason, we need to re-compute the time-threshold $t$. For the sake of completeness, we also inspected the KDE of the cycle length distribution, which is shown in Figure \ref{fig:SyntheticUnboundedRRSCycleLength}.
\begin{figure}[h!]
    \centering
    \includegraphics[width=0.55\linewidth]{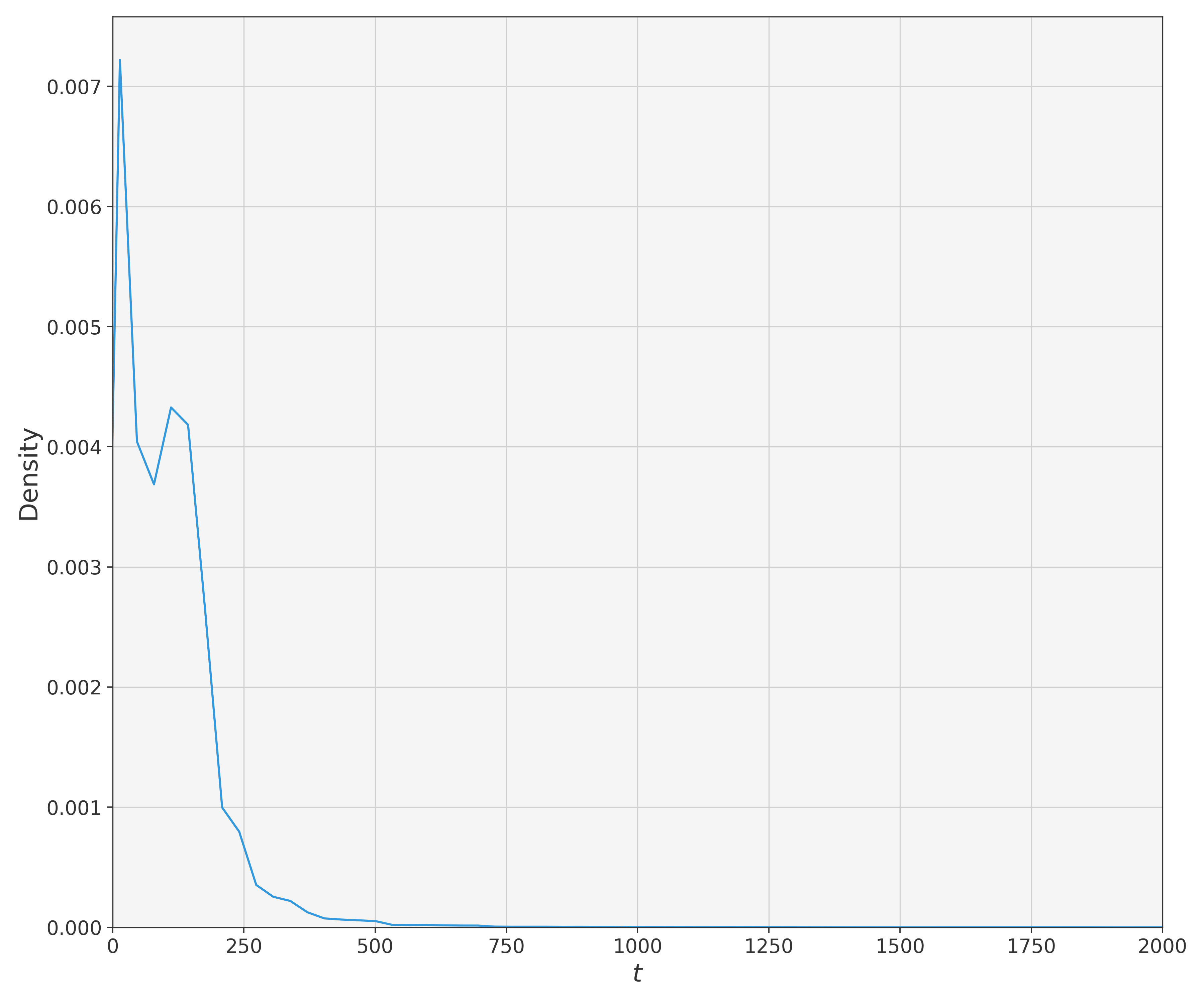}
    \caption{KDE of cycle length distribution}
    \label{fig:SyntheticUnboundedRRSCycleLength}
\end{figure}
As most of the mass is in $(0,250]$, a possible value for the time threshold could be $t=125$, to balance between efficiency and accuracy.

However, as was previously explained, we should set $t=\frac{N_{RWM}+burnin}{N_{RRS}}\mathbb{E}[W]$. In this instance the value is $t=111.1$, which is, again, close to the one that could have been chosen via a visual inspection of the cycle length density.

Let us turn to the simulation results. Similarly to the Random Walk Metropolis, the sampler explores well the support of the target distribution (see Figure \ref{fig:SyntheticUnboundedRRSSamples}), but the procedure is less effective than how it was for the RWM. 
\begin{figure}[htpb]
    \centering
    \includegraphics[width=0.55\linewidth]{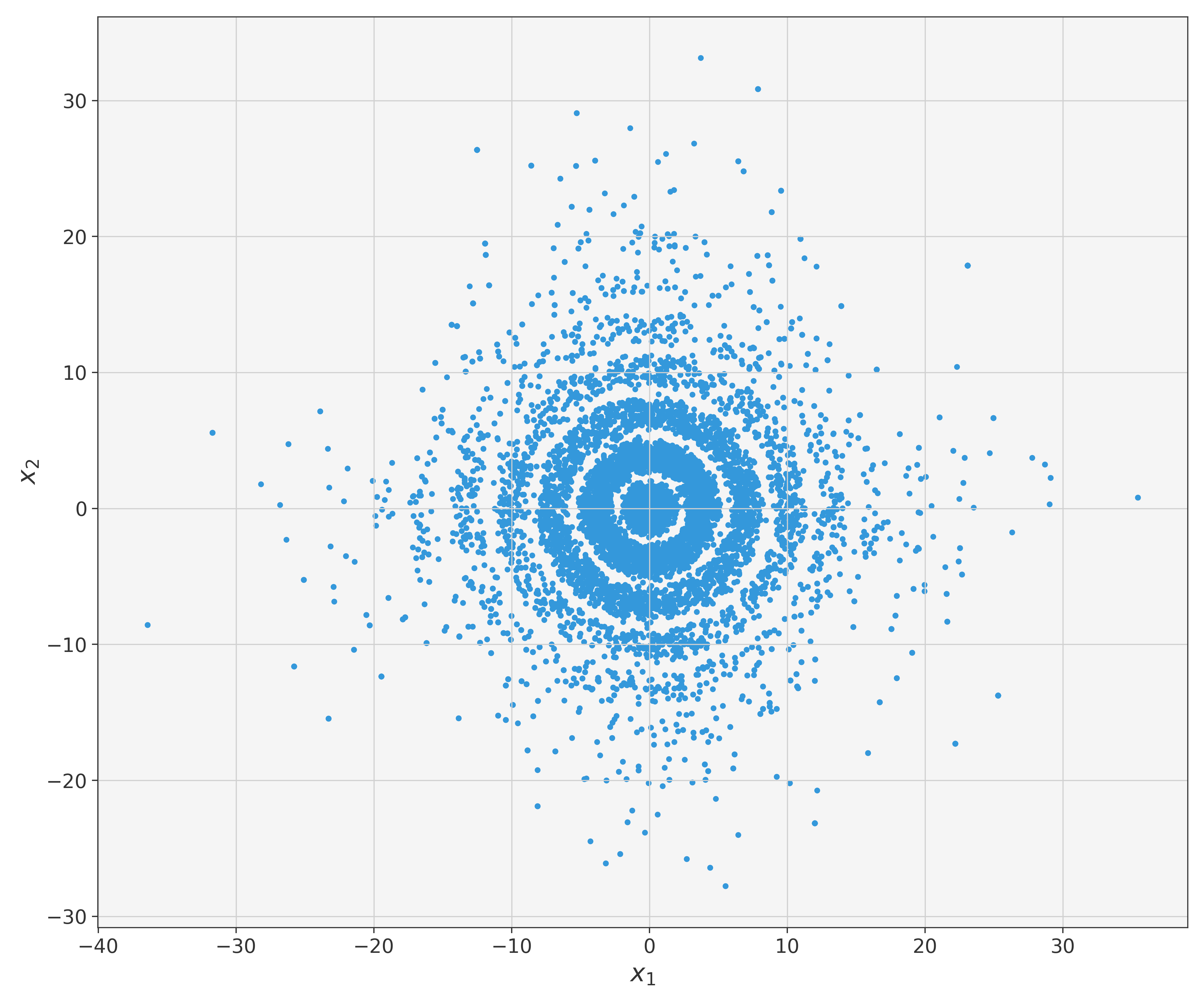}
    \caption{$10^4$ samples from Regenerative Rejection Sampling}
    \label{fig:SyntheticUnboundedRRSSamples}
\end{figure}
Here we notice that the sampler captures the major oscillations of the target density, and the process explores a part of the remote areas of the support of the target, i.e. the ones close to the two axes $\{x=0\}$ and $\{y=0\}$. This is a clear example of one of the limitations of the RRS method: it is highly dependent on the choice of proposal distribution (the one we use here has more mass along the axes). In this case we see that, even if the sampling procedure is still satisfactory, the method does not explore effectively the tails of the target distributions.

Additionally, we also inspect the ACF plots (Figures \ref{fig:SyntheticUnboundedRRSACF1} and \ref{fig:SyntheticUnboundedRRSACF2}), as in the previous examples.
\begin{figure}[htbp]
    \centering
    \begin{minipage}{0.48\textwidth}
        \centering
        \includegraphics[width=\linewidth]{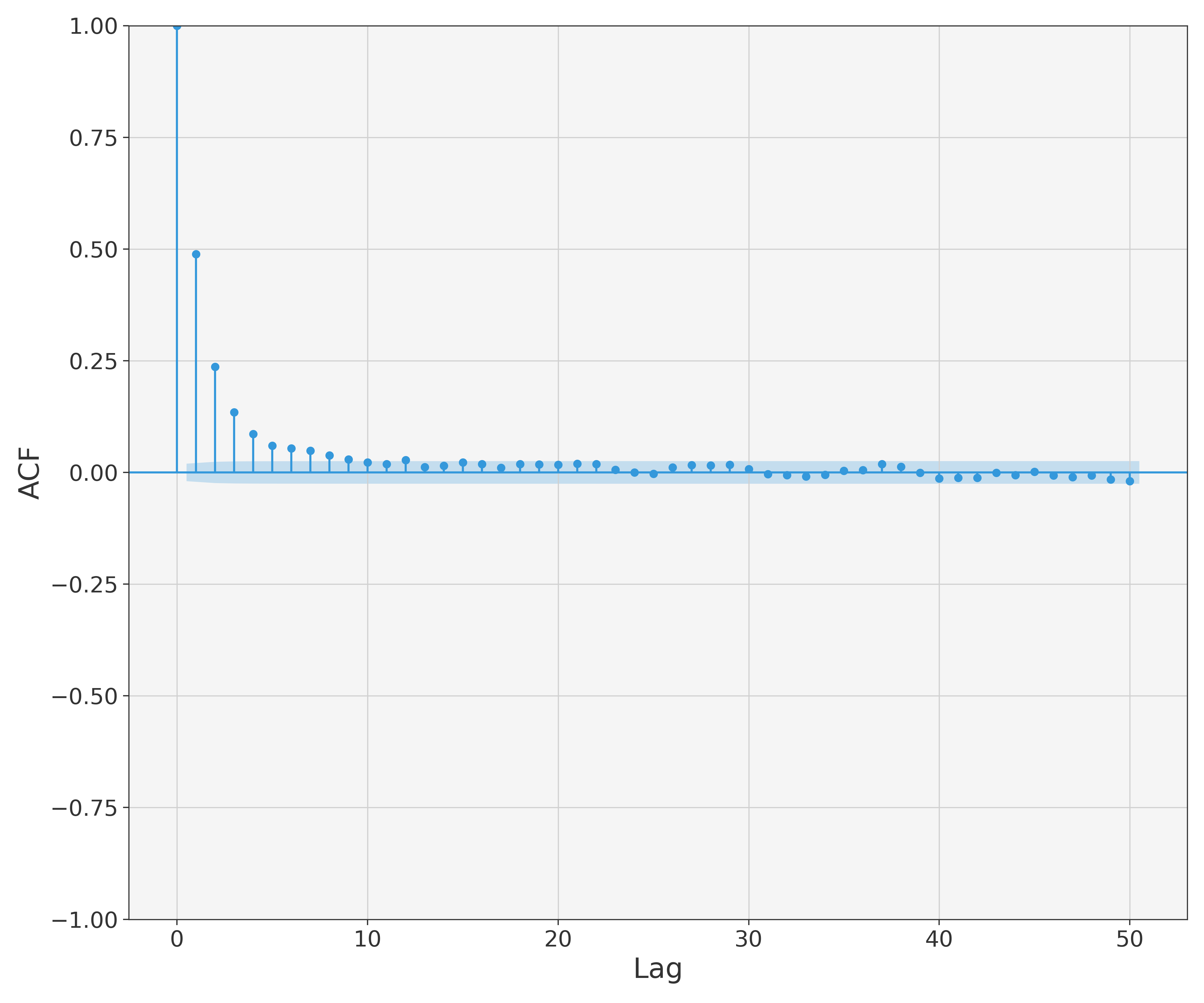}
        \caption{RRS - ACF plot of $1^{st}$ component}
        \label{fig:SyntheticUnboundedRRSACF1}
    \end{minipage}
    \hfill
    \begin{minipage}{0.48\textwidth}
        \centering
        \includegraphics[width=\linewidth]{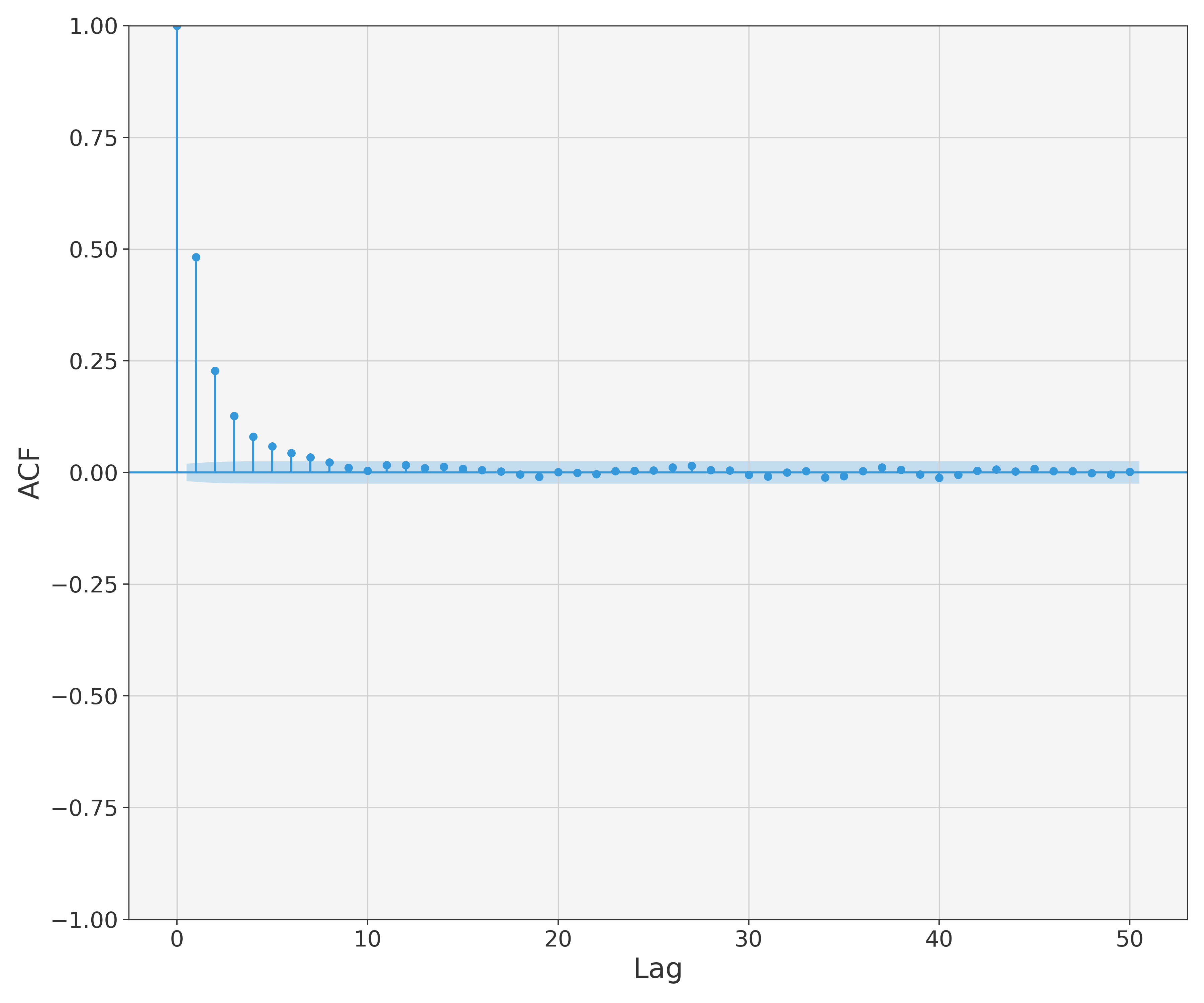}
        \caption{RRS - ACF plot of $2^{nd}$ component}
        \label{fig:SyntheticUnboundedRRSACF2}
    \end{minipage}
\end{figure}
There is some correlation in the first lags, but it is not worrying, as the autocorrelation decays quite fast (in less than ten lags). This, together with the scatter plot of the samples, suggests a good performance of the RRS method even in this more complex setting, even if the extreme tails of the target distribution are not explored effectively.

\subsection{A Note on Efficiency}
In the previous sections, we showed that the Regenerative Rejection Sampling method had, generally, a better performance than the Random Walk Metropolis algorithm on the specific synthetic applications that were considered. Nonetheless, one could still wonder whether the RRS method is more computationally expensive than the Random Walk Metropolis sampler.

A heuristic way to compare the cost of the two methods is to compute the average number of samples generated each second by each algorithm, and the algorithm that who returns more samples per second has the lower computational cost. Clearly, for the RRS method, there is a double source of randomness, coming from both the CPU time (which varies in each run), and the number of generated samples. On the other hand, the RWM algorithm generates a fixed number samples, meaning that the only source of randomness comes from the CPU time.

To compute this quantity, we ran the two methods for $100$ independent times. For each run, we calculated the number of generated samples (fixed to $11000$ for the RWM and random for the RRS) and the elapsed CPU time. With these two values, we were able to compute the number of samples per second for each run. The desired quantity could then be obtained by calculating their sample mean. The results can be seen in Table \ref{tab:SamplesSecondSynthetic}.

\begin{table}[htbp]
\centering
\caption{Average number of samples per second -- RRS vs RWM}
\label{tab:SamplesSecondSynthetic}
\begin{tabular}{p{4cm} p{2cm} p{2cm}}
\toprule
 & \textit{RRS} & \textit{RWM} \\
\midrule
\textit{Avg. Samples/Second, Bounded Domain} & 20882 & 18720 \\
\addlinespace
\textit{Avg. Samples/Second, Unbounded Domain} & 43745 & 28264 \\
\bottomrule
\end{tabular}
\end{table}

From this analysis, it is possible to see that, especially in the unbounded domain case, the RRS method is less computationally expensive than the RWM algorithm. Hence, since it also shows better performances, in our opinion, it should be the preferred method, at least in this synthetic setting.

\section{Bayesian Probit Regression}
Consider a sequence of $n$ binary random variables $y_1,\dots,y_n$. Suppose that the $y_i$'s are observed and follow a Bernoulli distribution with parameter $p_i = \Phi(\bm{x}_i^T\bm{\beta})$, where $\bm{x_i} = (x_i^1,\dots,x_i^k)^T$ is a vector of known covariates, $\bm{\beta}$ is an unknown $k\times 1$ vector of parameters, and $\Phi(\cdot)$ is the cumulative distribution function of a standard normal random variable. The likelihood of the data is
\begin{equation}
    \mathcal{L}(\bm{y}|\bm{\beta}) = \prod_{i=1}^n (\Phi(\bm{x}_i^T\bm{\beta}))^{y_i} (1-\Phi(\bm{x}_i^T\bm{\beta}))^{1-y_i} = \prod_{i=1}^n \Phi((2y_i-1)\bm{x}_i^T\bm{\beta}).
\end{equation}
The classical approach to fitting this \emph{Probit regression model} is to use maximum likelihood estimation, hence inferences will be based on the linked asymptotic theory. This procedure is shown to be significantly biased for small sample sizes \cite{Griffiths:SmallSampleProperties}.

However, it is possible to employ Bayesian methods to obtain samples from the posterior distribution of the parameters
\[
\pi(\bm{\beta}|\bm{y})\propto\mathcal{L}(\bm{y}|\bm{\beta})\pi(\bm{\beta}),
\]
where $\pi(\bm{\beta})$ is the prior distribution of the parameters, which represents the information that is known on the parameters before observing the data.

Sampling from the posterior distribution can be done by
\begin{itemize}
    \item Exploiting a latent-variable structure of the model, and using the Gibbs Sampler MCMC method;
    \item Using the Regenerative Rejection Sampling method.
\end{itemize}
In the following pages we describe the two approaches and compare their performance on a specific dataset.

\subsection{Data Augmentation and Gibbs Sampling}
The data augmentation approach has been first presented by \cite{TannerWong:DataAugmentation}, and has been applied to the context of Bayesian Analysis of binary data by \cite{AlbertChib:BayesBinary}. It consists of introducing $n$ latent variables, $z_1,\dots,z_n$, with $z_i\sim\mathcal{N}(\bm{x}_i^T\bm{\beta},1)$ distribution, of which we only observe the sign, i.e. $y_i = 1$ if $z_i>0$ and $y_i=0$ if $z_i\leq 0$. It can be easily shown that the $y_i$'s follow a Bernoulli distribution with parameter $\Phi(\bm{x}_i^T\bm{\beta})$ \cite{AlbertChib:BayesBinary}.

In practice, notice that
\[
\Phi((2y_i-1)\bm{x}_i^T\bm{\beta}) = \mathbb{E}\left[\mathbb{I}\left(z\leq(2y_i-1)\bm{x}_i^T\bm{\beta}\right)\right],
\]
where $z\sim\mathcal{N}(0,1)$. Hence, we can think of $\mathcal{L}(\bm{y}|\bm{\beta})$ as the marginalization of the following joint pdf (here $\varphi(\cdot)$ is the pdf of a $\mathcal{N}(0,1)$ random variable):

\[
\pi(\bm{y},\bm{z}|\bm{\beta}) = \prod_{i=1}^n \varphi(z_i)\mathbb{I}\left(z\leq(2y_i-1)\bm{x}_i^T\bm{\beta}\right),
\]
This is easily shown, since
\[
\begin{split}
\mathcal{L}(\bm{y}|\bm{\beta}) &= \int_{\mathbb{R}^n} \prod_{i=1}^n \varphi(z_i)\mathbb{I}\left(z\leq(2y_i-1)\bm{x}_i^T\bm{\beta}\right) \text{d}\bm{z}\\
&=\prod_{i=1}^n \int_{\mathbb{R}} \varphi(z_i)\mathbb{I}\left(z\leq(2y_i-1)\bm{x}_i^T\bm{\beta}\right) \text{d}z_i\\
&= \prod_{i=1}^n \Phi((2y_i-1)\bm{x}_i^T\bm{\beta})).
\end{split}
\]
Similarly, the posterior $\pi(\bm{\beta}|\bm{y})=\mathcal{L}(\bm{y}|\bm{\beta})\pi(\bm{\beta})$ can also be thought as the marginal of the joint pdf
\begin{equation}
\pi(\bm{\beta},\bm{z}|\bm{y}) \propto \prod_{i=1}^n \varphi(z_i-(2y_i-1)\bm{x}_i^T\bm{\beta})\mathbb{I}(z_i\geq 0)\,\cdot\, \pi(\bm{\beta}),
\end{equation}
since
\[
\begin{split}
    \pi(\bm{\beta}|\bm{y}) &= \int_{\mathbb{R}^n}\prod_{i=1}^n \varphi(z_i-(2y_i-1)\bm{x}_i^T\bm{\beta})\mathbb{I}(z_i\geq 0)\,\cdot\, \pi(\bm{\beta}) \text{d}\bm{z}\\
    &= \pi(\bm{\beta})\prod_{i=1}^n\int_{\mathbb{R}} \varphi(t_i)\,\mathbb{I}(t_i\geq - (2y_i-1)\bm{x}_i^T\bm{\beta})\text{d}t_i\\
    &= \pi(\bm{\beta})\prod_{i=1}^n (1-\Phi(-(2y_i-1)\bm{x}_i^T\bm{\beta})\\
    &= \pi(\bm{\beta})\prod_{i=1}^n \Phi((2y_i-1)\bm{x}_i^T\bm{\beta}).
\end{split}
\]

The idea of \cite{AlbertChib:BayesBinary} is then to sample from the augmented posterior $\pi(\bm{\beta},\bm{z}|\bm{y})$, instead of the marginal $\pi(\bm{\beta}|\bm{y})$. The natural way to do it is to sample from the two conditional distributions in an iterative fashion, i.e. via the Gibbs Sampler MCMC method. In other words, at the $i^{th}$ iteration, we sample $\bm{z}_i\sim\pi(\bm{z}|\bm{y},\bm{\beta}_{i-1})$, and $\bm{\beta}_i\sim\pi(\bm{\beta}|\bm{y},\bm{z}_i)$, always accepting the proposed steps of the chain.

However, to be able to do it, we first need to compute the two conditional distributions. The pdf of the distribution of $\bm{z}$ given $\bm{\beta}$ is straightforwardly defined as 
\[
\pi(\bm{z}|\bm{y},\bm{\beta}) \propto \prod_{i=1}^n \varphi(z_i-(2y_i-1)\bm{x}_i^T\bm{\beta})\mathbb{I}(z_i\geq 0),
\]
which can be written in vector form as
\[
\pi(\bm{z}|\bm{X},\bm{\beta}) \propto \varphi(\bm{z}-\bm{X}\bm{\beta})\mathbb{I}(\bm{z}\geq\bm{0}),
\]
where
\[
\bm{X}\colon\!\!\!=((2y_1-1)\bm{x}_1,\dots,(2y_n-1)\bm{x}_n)^T.
\]
From the above formulas, we easily deduce that $\bm{z}|\bm{X},\bm{\beta}$ follows a $n$-dimensional truncated normal distribution of mean $\bm{X}\bm{\beta}$ and covariance $I_n$.

On the other hand, the conditional distribution of $\bm{\beta}$ given $\bm{X},\bm{z}$ is found by setting a specific prior distribution on the parameters $\bm{\beta}$, and by using standard linear model results \cite{AlbertChib:BayesBinary}. Let us set $\pi(\bm{\beta})= \mathcal{N}_k(\bm{0}, \sigma^2I_k)$ to be a multivariate centered normal prior. Hence,
\[
\bm{\beta}|\bm{X},\bm{z} \sim \mathcal{N}(\Sigma\bm{X}^T\bm{z}, \Sigma),
\]
where $\Sigma^{-1}\colon\!\!\!= \bm{X}^T\bm{X} + \sigma^{-2}I_k$.

The Gibbs Sampling scheme is outlined in Algorithm \ref{algo:Gibbs}.
\begin{algorithm}
\caption{Bayesian Probit Regression Gibbs Sampler}
\label{algo:Gibbs}
\begin{algorithmic}[1]
\Require{Initial $\bm{\beta}_0$, prior scale $\sigma$, data $\{ (y_i,\bm{x}_i) \}_{i=1}^m$, chain length $N$}
\State $\bm{X} \gets ((2y_1-1)\bm{x}_1,\dots,(2y_n-1)\bm{x}_n)^T$
\State $\Sigma \gets (\bm{X}^T\bm{X} + \sigma^{-2}I_k)^{-1}$
\For{$j=1,\dots,N$}
    \State Sample $\bm{z}|\bm{X},\bm{\beta}_{j-1}\sim \mathcal{TN}_{(\bm{0},\infty)}(\bm{X}\bm{\beta}_{j-1},I_n)$
    \State Sample $\bm{\beta}_j|\bm{X},\bm{z}\sim\mathcal{N}_k(\Sigma\bm{X}^T\bm{z}, \Sigma)$
\EndFor\\
\Return $\bm{\beta}_1,\dots,\bm{\beta}_N$
\end{algorithmic}
\end{algorithm}

Note that, since we only need to retain the $\bm{\beta}$'s, we do not need to store all the values of $\bm{z}$.

\subsection{Regenerative Rejection Sampling for Bayesian Inference}
The Regenerative Rejection Sampling method, in the same form of Algorithm \ref{algo:SequentialRRS}, does not require any modification to be applied to this specific case. However, we need to choose a suitable proposal distribution $g$.

Taking inspiration from what commonly happens in Bayesian Inference \cite{TierneyKadane:LaplaceApproximation,KassTierneyKadane:LaplaceApproximation}, in approximate Bayesian computations \cite{RueMartinoChopin:ABCINLA, GomezRubioRue:MCMCINLA}, and less often in MCMC contexts \cite{LewisRaftery:LaplaceMCMC, GomezRubioRue:MCMCINLA}, a promising candidate for proposal distribution is the Laplace Approximation of the posterior distribution $\pi(\bm{\beta}|\bm{X})$. It consists of a multivariate normal random variable, centered in the mode of the posterior distribution, i.e. the Maximum A Posteriori (MAP) $\mu_{Laplace}$, with covariance matrix given by $-\alpha^2 H^{-1}$, where $H$ is the Hessian matrix of the log-posterior density evaluated at the MAP and $\alpha$ is a given constant. The purpose of $\alpha$ is that of rescaling the proposal distribution, to obtain different step sizes.

For Probit regression, we have closed forms for both the gradient and the Hessian of the log-posterior density. Let us compute them explicitly. First, in a straightforward notation,
\[
\log \pi(\bm{\beta}|\bm{X}) = \sum_{i=1}^n \log\Phi(\bm{X}_i\bm{\beta}).
\]
From this, we easily compute
\[
\frac{\partial}{\partial\beta_j} \log\pi(\bm{\beta}|\bm{X}) = \sum_{i=1}^n \frac{\varphi(\bm{X}_i\bm{\beta})}{\Phi(\bm{X}_i\bm{\beta})}X_{ij}, \quad j=1,\dots, k.
\]
Hence, the gradient of the log-posterior density is
\begin{equation}
    \begin{split}
        \nabla\log\pi(\bm{\beta}|\bm{X}) &= \left( \sum_{i=1}^n \frac{\varphi(\bm{X}_i\bm{\beta})}{\Phi(\bm{X}_i\bm{\beta})}X_{i1}\,, \dots, \sum_{i=1}^n \frac{\varphi(\bm{X}_i\bm{\beta})}{\Phi(\bm{X}_i\bm{\beta})}X_{ik}\right)\\
        &=\eta(\bm{\beta})^T\bm{X},
    \end{split}
\end{equation}
where $\eta(\bm{\beta})$ is an $n$-dimensional vector such that $\eta(\bm{\beta})_i = \varphi(\bm{X}_i\bm{\beta})/\Phi(\bm{X}_i\bm{\beta})$. Additionally, since
\[
\frac{\partial}{\partial\beta_j}(\nabla\log\pi(\bm{\beta}|\bm{X}))_i = \sum_{m=1}^n \left[ -\bm{X}_m\bm{\beta}\frac{\varphi(\bm{X}_m\bm{\beta})}{\Phi(\bm{X}_m\bm{\beta})} - \left( \frac{\varphi(\bm{X}_m\bm{\beta})}{\Phi(\bm{X}_m\bm{\beta})} \right)^2 \right]X_{jm}X_{mi},
\]
the Hessian of the log-posterior density is
\begin{equation}
    \nabla^2\log\pi(\bm{\beta}|\bm{X}) = \bm{X}^TD\bm{X},
\end{equation}
where $D$ is an $n\times n$ diagonal matrix with $D_{ii} = -\bm{X}_i\bm{\beta}\frac{\varphi(\bm{X}_i\bm{\beta})}{\Phi(\bm{X}_i\bm{\beta})} - \left(\frac{\varphi(\bm{X}_i\bm{\beta})}{\Phi(\bm{X}_i\bm{\beta})}\right)^2$.
These closed formulas can be used for the maximization of the log-posterior density (to obtain the MAP), and to compute exactly the covariance matrix of the proposal distribution.

In this framework, the Regenerative Rejection Sampling's likelihood ratio takes the form
\begin{equation}
    W(\bm{\beta}) \propto e^{\xi - \frac{1}{2\alpha^2}(\bm{\beta}-\mu_{MAP})^TH(\bm{\beta}-\mu_{MAP})}\pi(\bm{\beta})\prod_{i=1}^n \Phi((2y_i-1)\bm{x}_i^T\bm{\beta}),
\end{equation}
where $\xi$ is a scaling constant that can be tuned to avoid having too small or too big cycle lengths without altering the proposal distribution. Since we only need to know the target distribution up to a multiplicative constant, scaling the likelihood ratio does not invalidate the procedure.

At this point, let us see a practical application of these two methods on a real dataset.

\subsection{The Latent Membranous Lupus Nephritis Dataset}
In this section, we will analyze a dataset taken from \cite{VanDykMeng:DataAugmentation}, that can be seen in Table \ref{tab:Lupus}.
\begin{table}[htbp]
\centering
\caption{The Latent Membranous Lupus Nephritis Dataset}
\label{tab:Lupus}
\begin{tabular}{cccccc}
\toprule
 & \multicolumn{5}{c}{\textit{IgA}} \\
\cmidrule(lr){2-6}
\textit{IgG3 -- IgG4} & 0 & .5 & 1 & 1.5 & 2 \\
\midrule
$-3.0$ & 0/1 & --- & --- & --- & --- \\
$-2.5$ & 0/3 & --- & --- & --- & --- \\
$-2.0$ & 0/7 & --- & --- & --- & 0/1 \\
$-1.5$ & 0/6 & 0/1 & --- & --- & --- \\
$-1.0$ & 0/6 & 0/1 & 0/1 & --- & 0/1 \\
$-0.5$ & 0/4 & --- & --- & 1/1 & --- \\
$0.0$  & 0/3 & --- & 0/1 & 1/1 & --- \\
$0.5$  & 3/4 & --- & 1/1 & 1/1 & 1/1 \\
$1.0$  & 1/1 & --- & 1/1 & 1/1 & 4/4 \\
$1.5$  & 1/1 & --- & --- & 2/2 & --- \\
\bottomrule
\end{tabular}
\end{table}
It displays the data of two covariates, which represent specific clinical measurements, that are used to predict the occurrence of \emph{Latent membranous lupus nephritis} (also called \emph{Lupus}) \cite{VanDykMeng:DataAugmentation}. The two covariates are the difference between IgG3 and IgG4, and IgA, where the acronyms stand for immunoglobulin G and A. The datasets comprises measurements from $n=55$ patients, 18 of which have been diagnosed with Latent membranous lupus nephritis. In Table \ref{tab:Lupus}, we see the number of Lupus cases (numerators) and the total number of cases (denominators) for any given combination of the two covariates. We perform the regression by adding an intercept term, bringing the number of parameters to $k=3$. For both Gibbs Sampling and RRS, we simulate $N=10^4$ samples from the posterior distribution $\pi(\bm{\beta}|\bm{X})$. Moreover, we use a flat improper prior, i.e. $\pi(\bm{\beta})\propto 1$.

Let us begin the analysis with Gibbs Sampling. Since we employ an improper prior, which corresponds to taking $\pi(\bm{\beta})= \mathcal{N}_k(\bm{0}, \sigma^2I_k)$ with $\sigma^2=\infty$, the parameters $\bm{\beta}$, given $\bm{z}$ and $\bm{X}$, follow a $\mathcal{N}(\Sigma \bm{X}^T\bm{z},\Sigma)$, where $\Sigma^{-1}=\bm{X}^T\bm{X}$. First of all, we show the boxplots of the marginal distribution of each parameter (Figure \ref{fig:LupusGibbsBoxplot}), together with a scatter-plot of the two covariates IgG3-IgG4 and IgA (Figure \ref{fig:LupusGibbsScatter}).
\begin{figure}[htbp]
    \centering
    \begin{minipage}{0.48\textwidth}
        \centering
        \includegraphics[width=\linewidth]{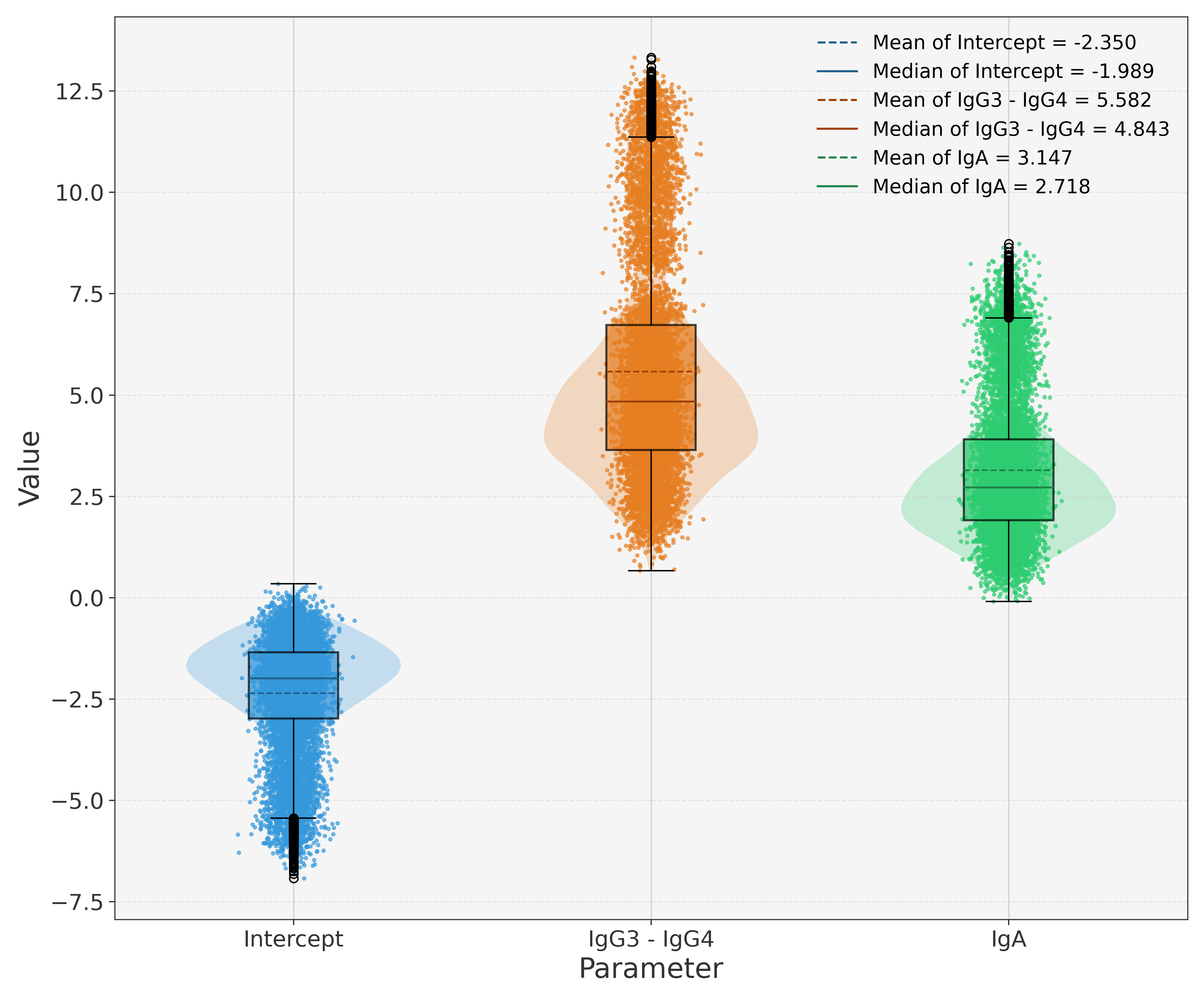}
        \caption{Gibbs - Boxplots of marginal distributions}
        \label{fig:LupusGibbsBoxplot}
    \end{minipage}
    \hfill
    \begin{minipage}{0.48\textwidth}
        \centering
        \includegraphics[width=\linewidth]{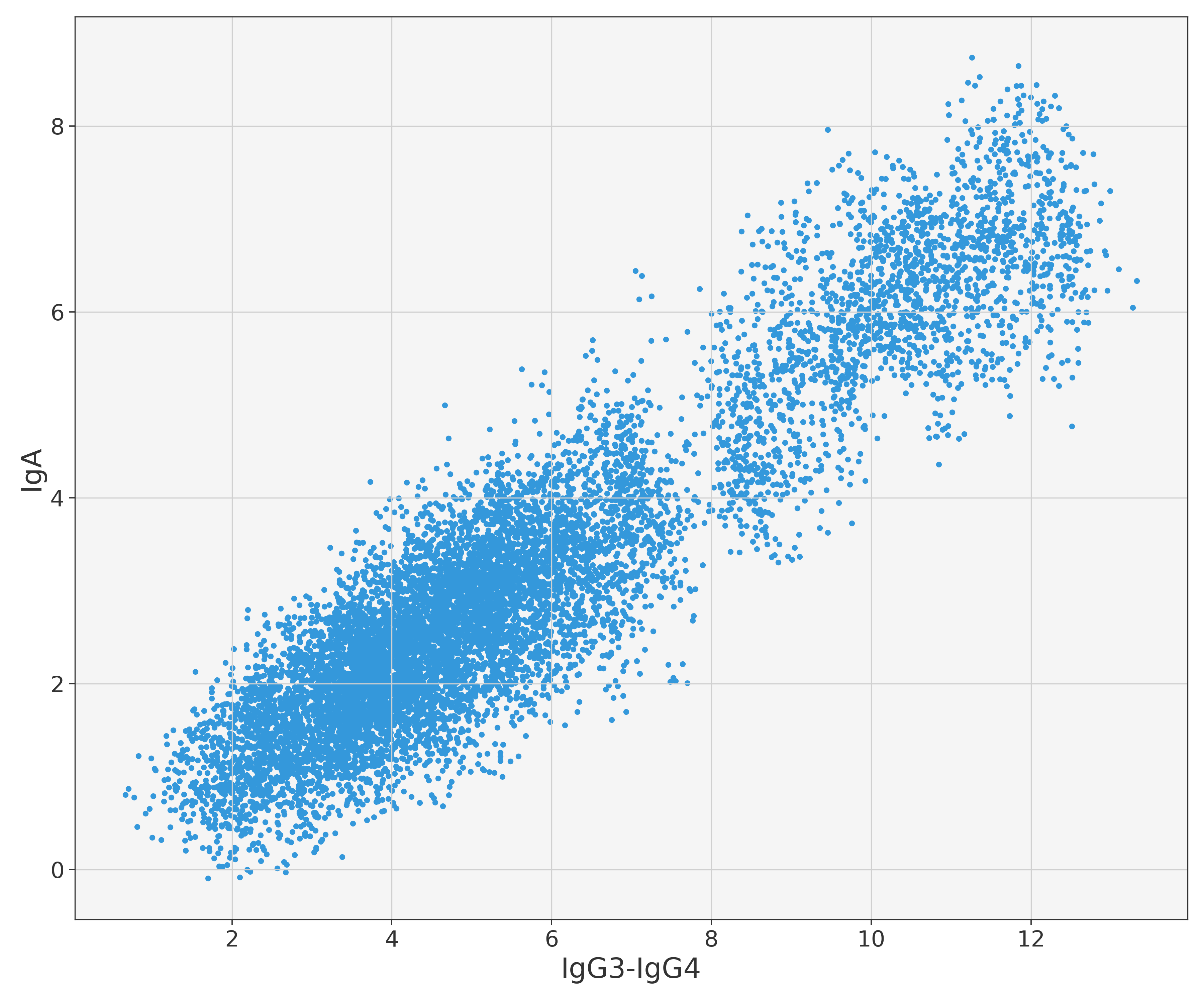}
        \caption{Gibbs - Scatter-plot of covariates}
        \label{fig:LupusGibbsScatter}
    \end{minipage}
\end{figure}

From the two Figures, it is clear that the performance of the Gibbs Sampler is not optimal. The scatter-plot highlights that some areas of the bi-dimensional distribution of the two covariates are not explored, and a similar behavior is also seen in the boxplots. Despite the presence of some outliers, they are not too extreme, meaning that the exploration does not move towards the more remote areas of the support of the marginals. The sub-optimal performance of Gibbs Sampling on this specific dataset is also confirmed by the autocorrelation functions of each component. Here, in Figure \ref{fig:LupusGibbsACF2}, we only show the one relative to the first covariate.
\begin{figure}[htpb]
    \centering
    \includegraphics[width=0.55\linewidth]{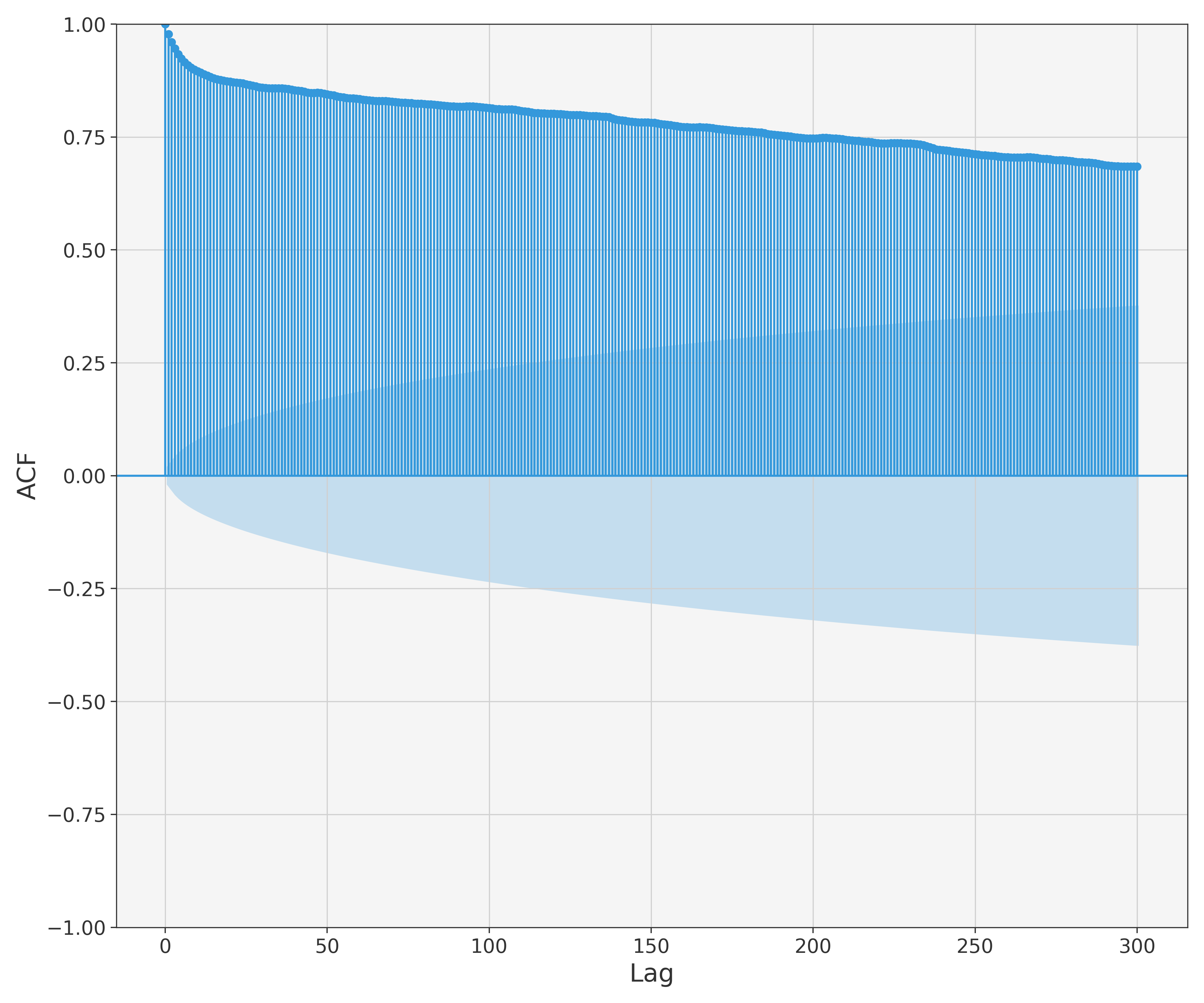}
    \caption{Gibbs - ACF plot of IgG3-IgG4}
    \label{fig:LupusGibbsACF2}
\end{figure}
It is clear that the sampling is particularly inefficient, as even after 300 lags the autocorrelation in the chain is still significantly high. Hence, we can conclude that, on this specific Lupus dataset, the Gibbs Sampling MCMC method does not perform well. 

At this point, we perform the same analysis using the Regenerative Rejection Sampling method, as detailed in the previous section. The maximization of the posterior pdf was done via the \texttt{scipy.optimize.minimize} function, using the Trust-Region Newton method of \cite{MoreSorensen:TrustRegion}, which employed the exact Hessian that was computed explicitly in the previous section.

Moreover, for this application we set $\xi=2$ and $\alpha^2=5$, obtaining the likelihood ratio (recall that $\pi(\bm{\beta})\propto 1$)
\[
    W(\bm{\beta}) \propto e^{2 - \frac{1}{2\cdot5}(\bm{\beta}-\mu_{MAP})^TH(\bm{\beta}-\mu_{MAP})}\prod_{i=1}^n \Phi((2y_i-1)\bm{x}_i^T\bm{\beta}).
\]
The next hyperparameter to set was the time-threshold $t$. Once again for the sake of completeness, we decided to simulate $10^6$ independent samples from the cycle length distribution (as they only require sampling from the proposal distribution), and plot their KDE, which can be seen in Figure \ref{fig:LupusRRSCycleLength}. A visual inspection of the empirical density suggests we could choose $t=1$, since the majority of the mass of the distribution is close to zero.
\begin{figure}[htpb]
    \centering
    \includegraphics[width=0.55\linewidth]{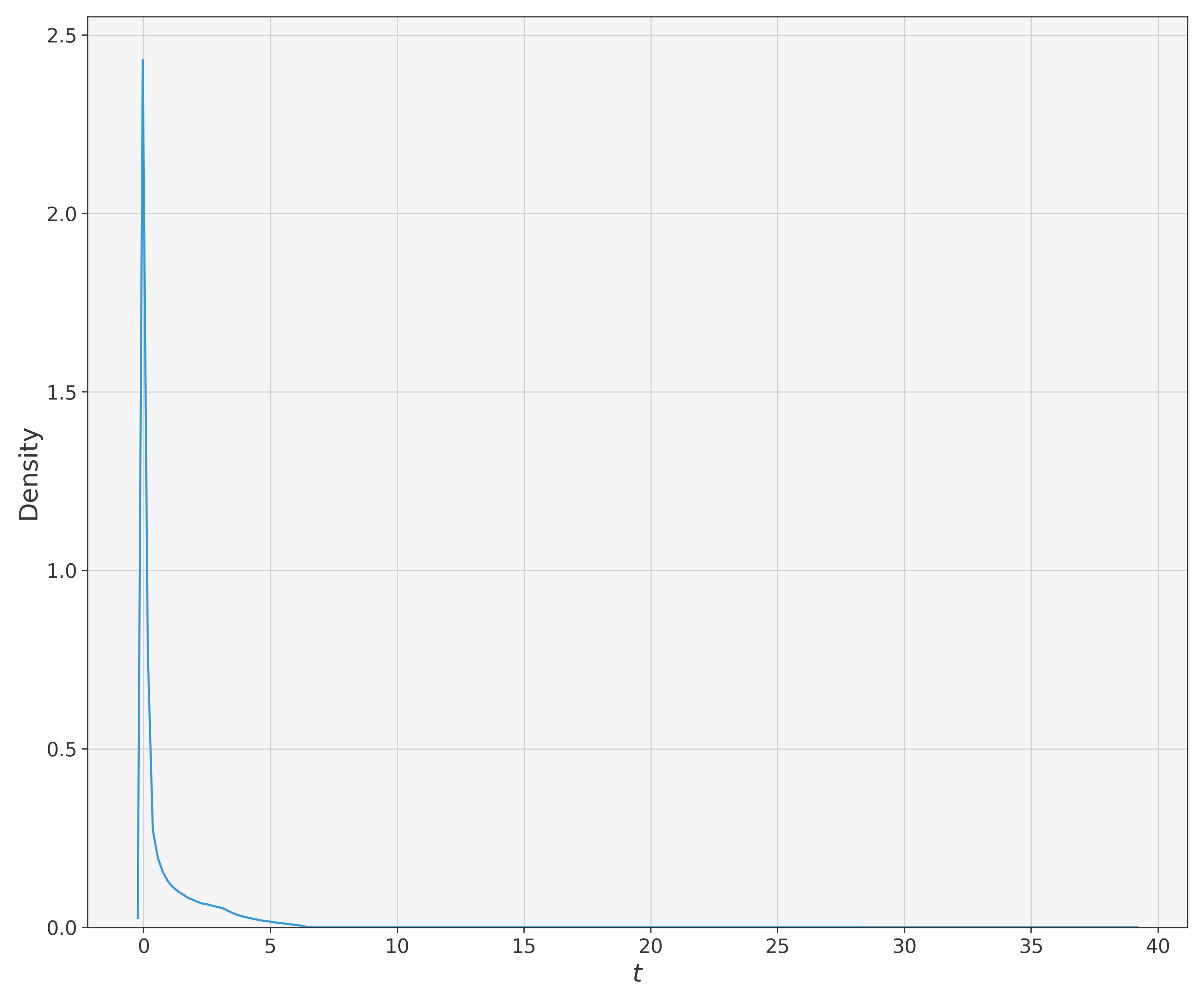}
    \caption{KDE of cycle length distribution}
    \label{fig:LupusRRSCycleLength}
\end{figure}

However, as in the previous synthetic examples, we choose $t=\frac{N_{RWM}+burnin}{N_{RRS}}\mathbb{E}[W]$, where $\mathbb{E}[W]$ can be estimated from the $10^6$ samples that were generated to compute the KDE. In this specific example, $t=0.7780$.

As was previously done for the Gibbs Sampler, we now show the boxplots of the marginal distributions of the three parameters (Figure \ref{fig:LupusRRSBoxplot}), and the scatter-plot of the two covariates (Figure \ref{fig:LupusRRSScatter}). Clearly, the algorithm now explores the sample space much more effectively, as seen in the scatter-plot. And the boxplots highlight that the sampling procedure is able to inspect the extreme tails of the marginal distributions of the parameters, as shown by the numerous outliers.
\begin{figure}[htbp]
    \centering
    \begin{minipage}{0.48\textwidth}
        \centering
        \includegraphics[width=\linewidth]{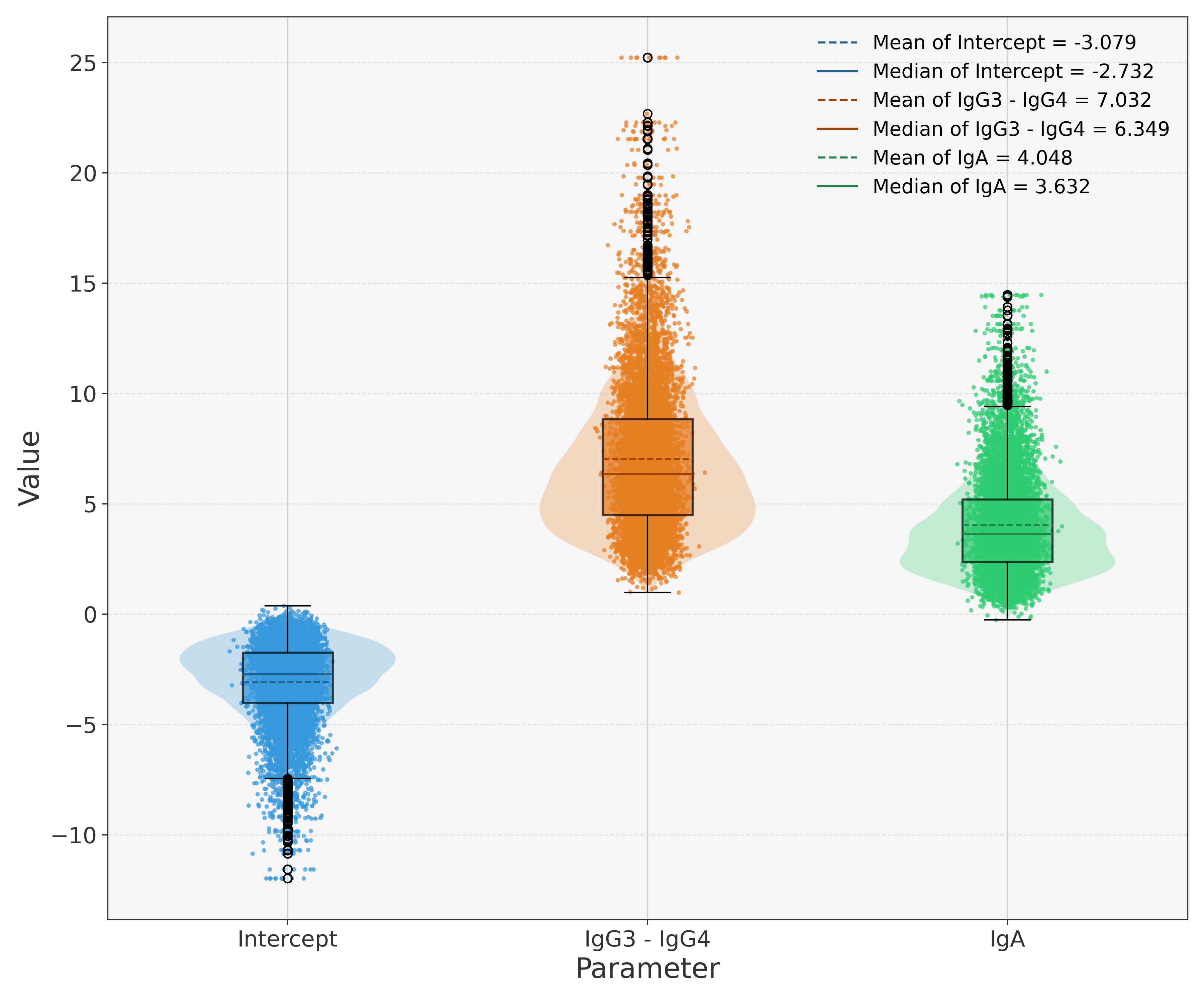}
        \caption{RRS - Boxplots of marginal distributions}
        \label{fig:LupusRRSBoxplot}
    \end{minipage}
    \hfill
    \begin{minipage}{0.48\textwidth}
        \centering
        \includegraphics[width=\linewidth]{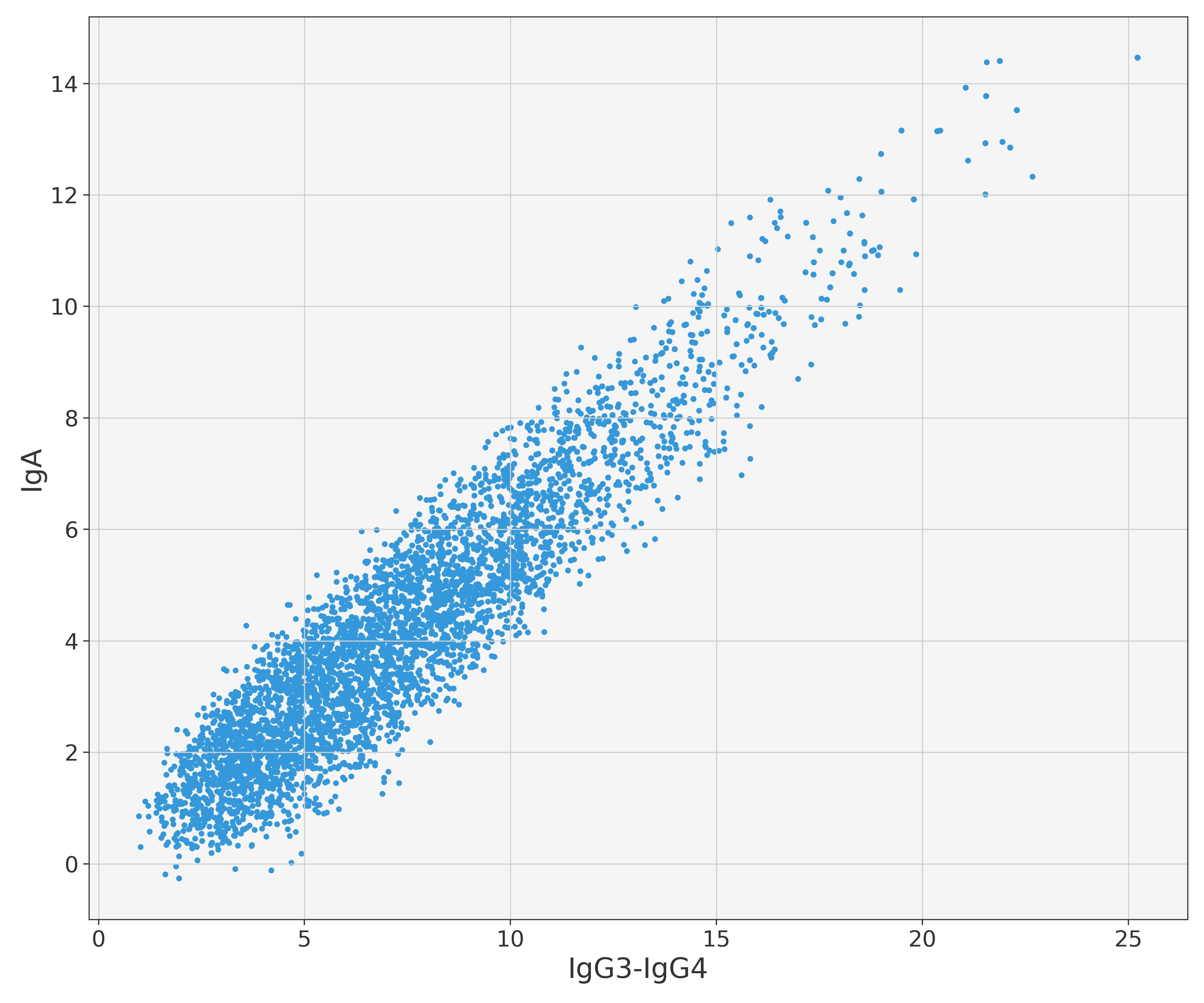}
        \caption{RRS - Scatter-plot of covariates}
        \label{fig:LupusRRSScatter}
    \end{minipage}
\end{figure}

To conclude, we also explore what happens at the autocorrelation level. 
\begin{figure}[htpb]
    \centering
    \includegraphics[width=0.55\linewidth]{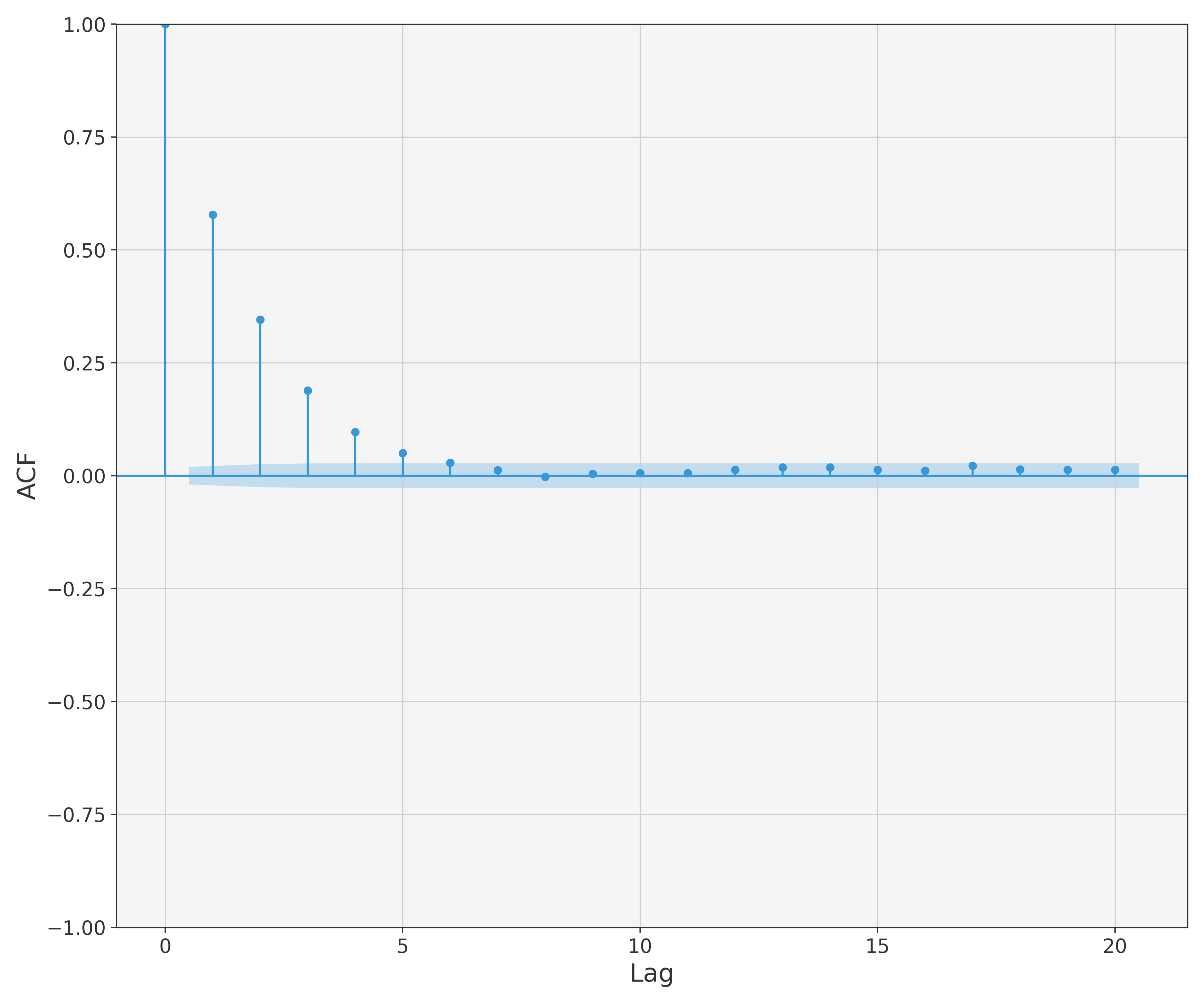}
    \caption{RRS - ACF plot of IgG3-IgG4}
    \label{fig:LupusRRSACF2}
\end{figure}
This time, as is clearly seen in Figure \ref{fig:LupusRRSACF2}, the autocorrelation of the IgG3-IgG4 chain decays very quickly. Even if we omitted the plots, we observe the same sharp autocorrelation decay also for the two other marginals, suggesting that the underlying regenerative process rapidly reaches its stationary distribution. Hence, we are able to obtain close-to-independence samples, which makes the procedure much more efficient compared to the Gibbs Sampler. To conclude, in this specific case, the Regenerative Rejection Sampling method behaves well in a multidimensional setting where a renowned MCMC method shows a sub-optimal performance.

\subsection{A Note on Efficiency}
As was done for the synthetic example, we compare the RRS method and the Gibbs sampler from an efficiency point of view, by computing the average number of samples generated in one second by the two algorithms. Similarly, we performed 100 independent runs of the two methods, calculating for each run the number of samples generated per second. In Table \ref{tab:SamplesSecondLupus} we report the average of the 100 values.

\begin{table}[htbp]
\centering
\caption{Average number of samples per second -- RRS vs Gibbs Sampler}
\label{tab:SamplesSecondLupus}
\begin{tabular}{p{4cm} p{3cm} p{3cm}}
\toprule
 & \textit{RRS} & \textit{Gibbs Sampler} \\
\midrule
\textit{Avg. Samples/Second} & 15893 & 8900 \\
\bottomrule
\end{tabular}
\end{table}

In this more complex setting, where each step of the methods is more expensive compared to the synthetic example, the RRS scheme shows a significantly lower computational cost, and, also for this reason, it should be the preferred method for this type of analyses.

\chapter{Conclusions}
\label{chap:8}
The main goal of this thesis was to present a novel approximate sampling method, recently introduced in \cite{Botev:MachineLearning}. It is based on regenerative processes, and it is called \emph{Regenerative Rejection Sampling}. The algorithm constructs a regenerative process with cycle lengths $W_i$ distributed as $W_i\sim f_\propto(X_i)/g(X_i),\: X_i\stackrel{iid}{\sim}g$, where $f_\propto$ is the target distribution, known up to a multiplicative constant, and $g$ is the proposal distribution. Such process converges to the required normalized distribution $f = f_\propto/\int f_\propto$. The method can be seen as an alternative to Rejection Sampling, to be used whenever the likelihood ratio $f_\propto/g$ is unbounded, or the bounding constant is too cumbersome to compute in practice.

After presenting the background theory on renewal and regenerative processes in Chapters \ref{chap:2} and \ref{chap:3}, in Chapter \ref{chap:4} we analyzed the coupling proof of the theorem concerning the exponential convergence of regenerative processes. As a result, we showed that the sufficient conditions for a regenerative process to converge exponentially fast to its stationary distribution mainly concern the cycle-length distribution: it must be spread-out, and it must have finite mean and exponential moments. A downside of this proof is that it does not give a practical recipe to compute the actual convergence rate.

Later, in Chapter \ref{chap:5}, we presented the RRS algorithm, and remarked that the theorem for regenerative processes can be straightforwardly applied to RRS to prove its convergence properties. However, since there is no way to obtain a general closed formula for the cycle-length distribution of RRS, the convergence analysis has to be performed on a case-to-case basis. Additionally, we compared the convergence rates of our RRS method to that of the Independent Metropolis-Hastings algorithm, and showed that whenever the likelihood ratio is bounded, both methods converge exponentially fast, even if the bounding constant is unknown. Furthermore, RRS converges exponentially fast also when the likelihood ratio is unbounded, provided that the cycle-length distribution satisfies the assumptions mentioned above. Under these conditions, the Independent Metropolis-Hastings algorithm can only have sub-geometric convergence. Even if the RRS method converges exponentially in more situations, it still inherits the Independence Sampler's limitations, together with those of Rejection Sampling. In fact, its performance is heavily dependent on the choice of proposal distribution \emph{and} time threshold $t$.

Moreover, we also showed, in Chapter \ref{chap:6}, that one can use the samples generated with RRS method to obtain estimates of quantities of the type $q = \mathbb{E}_F[h(Y)]$, where $F$ is the stationary distribution of the regenerative process. To this end, one builds the time-average \emph{ratio estimator} $\hat{q}(t)$. We have shown that the bias of $\hat{q}(t)$ decays with order $O(1/t^2)$, which is one order higher compared to the usual MCMC time-average estimator. However, we also remarked that the mean square error decay of the two estimators is of the same order, since the variance is the leading term of the expansion. Nonetheless, it is still favorable to use the low-bias estimator, if we want proper confidence interval coverage.\\
The novelty of Chapter \ref{chap:6} comes from the introduction of a computable non-asymptotic upper bound of $\hat{q}(t)$'s bias. Even if the bound decays with order $O(1/t^{3/2})$, it holds for all $t >0$, and it is completely estimable by simulation. Hence, it gives a practical way to control the bias during the simulation.

Lastly, we applied the RRS algorithm to one toy example and to one real dataset, and showed that RRS outperformed renowned MCMC methods.\\
In the former example, the goal was to sample from a synthetic bi-dimensional distribution first defined on a bounded domain, and then on the whole real plane $\mathbb{R}^2$. In both cases, the RRS method outperformed the Random Walk Metropolis algorithm, especially in terms of mixing/convergence.\\
In the latter, we performed a three-parameter Bayesian Probit regression on a real dataset containing data from $55$ patients, $18$ of which had been diagnosed with \emph{Latent membranous lupus nephritis}. For this example, we compared the performance of RRS and of the Gibbs Sampler, which exploited a latent variable structure of the model. In essence, by an inspection of the marginal distributions of the parameter, we showed that the Gibbs Sampler could not explore the space adequately, and that the chains mixing was significantly slow. On the other hand, the RRS method was able to move even in the tails of the marginal distributions, and to mix rapidly.\\
For both examples, we also showed, heuristically, that the RRS method's computational cost is lower compared to that of the MCMC methods that we analyzed. This implies that, for the same fixed amount of CPU time, the RRS method is able to compute, and hence return, more samples. 

The RRS algorithm showed good performances across different types of applications, hence it could be used together with Rejection Sampling to cover a larger basis of practical sampling situations. Furthermore, we theoretically proved that it converges exponentially fast for a wider class of situations compared to the Independent Metropolis-Hastings algorithm, which implies RRS could serve as a valuable alternative to the renowned MCMC method. For these reasons, and thanks to its simple and computationally-efficient implementation, we believe that the RRS method can be a very effective sampling method in a wide range of situations.

This introductory work paves the way for more advanced analyses of the RRS method, and for the development of modified (and hopefully better) versions of the algorithm. An interesting avenue could be to study the effect of dimensionality: in Chapter \ref{chap:7} we showed that, even in multi-dimensional settings, the RRS method seems to performs well. However, the number of dimensions was still limited to three. Hence, we suppose that with more complex applications in higher dimension, one could notice a decrease in performance.\\
Another promising direction could be to develop variants of the method that exploit in a more meaningful way the regenerative structure. For example, one could try to run i.i.d. Markov Chains, that have the same target distribution of the regenerative process, in each i.i.d. cycle, to further explore the sample space. This would obviously introduce more complexity, especially in the time-average ratio estimator setting, as it could not be possible to simplify the form of the general ratio estimator to obtain $\hat{q}(t)$. Furthermore, running a Markov Chain in each cycle would necessarily increase the computational cost of the algorithm, but if the correct type of transition kernel is chosen for the Markov Chain, we could observe a sharp improvement in the speed of convergence, meaning that the process could be ran for less time to obtain sufficiently good samples.\\
Lastly, we could apply the RRS method in the applicative situations where Rejection Sampling is employed. An example is that of \emph{Generative Adversarial Networks} (GANs), where Rejection Sampling has been used to improve their performance (see \cite{Azadi:DRS}). Since one of the practical problems that the authors of \cite{Azadi:DRS} encounter is the difficulty in computing the bounding constant of the likelihood ratio, RRS could serve as a valuable alternative.

\appendix
{
\pagestyle{appendixchapter}
\chapter{A Non-Trivial Example of Spread-Out Distribution}
\label{chap:NonTrivialSpreadOut}
Let us begin with a definition \cite{PeresSchlagSolomyak:BernoulliConvolutions,PeresSolomyak:BernoulliConvolutions}:
\begin{definition}[Infinite Bernoulli Convolution]
    We define the measure $\nu_\lambda$, for $\lambda\in(0,1)$, to be an \emph{Infinite Bernoulli Convolution} if it is the distribution of the random series
    \[
    \sum_{n=0}^\infty \pm \lambda^n,
    \]
    where the signs are given by $Ber(1/2)$ random variables.
\end{definition}

It is known that the measure $\nu_\lambda$ has compact support, and, for $\lambda<1/2$, is of Cantor-Lebesgue type, meaning that it is singular with respect to the Lebesgue measure \cite{PeresSchlagSolomyak:BernoulliConvolutions}. For this reason, it could serve as a non-trivial example of spread-out distribution, if we show that one of its convolution powers has an absolutely continuous component.

We report here a useful result from \cite[Corollary 1.6]{PeresSolomyak:BernoulliConvolutions}:
\begin{proposition}
    Let $n\geq1$, and denote $\tilde{\lambda}_n = \tbinom{2n}{n}2^{-2n}$. Then, the Fourier transform of $\nu_{\lambda}$ is in $L^{2n}(\mathbb{R})$ for almost every $\lambda\in(\tilde{\lambda}_n,1)$, and is not in $L^{2n}(\mathbb{R})$ for all $\lambda <\tilde{\lambda}_n$.
\end{proposition}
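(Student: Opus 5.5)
The plan is to translate both assertions into statements about the self-convolution $\nu_\lambda^{*n}$. The main tool is Plancherel: $\widehat{\nu_\lambda}\in L^{2n}(\mathbb{R})$ if and only if $(\widehat{\nu_\lambda})^n=\widehat{\nu_\lambda^{*n}}\in L^2$, that is, if and only if $\nu_\lambda^{*n}$ has an $L^2$ density $f$. I will work with the symmetrized measure
\[
\eta_\lambda=\nu_\lambda^{*n}\ast\widetilde{\nu_\lambda^{*n}} ,
\]
where the tilde denotes reflection. Then $\eta_\lambda$ is the law of $\sum_{k\ge0}a_k\lambda^k$, where the digits $a_k$ are i.i.d. sums of $2n$ independent $\pm1$ signs. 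The key observation is
\[
\mathbb{P}(a_k=0)=\tbinom{2n}{n}2^{-2n}=\tilde\lambda_n .
\]
Moreover, if $f\in L^2$ then $\eta_\lambda$ has the continuous density $f\ast\tilde f$, which is bounded by $\|f\|_2^2$.

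\emph{Negative part} ($\lambda<\tilde\lambda_n$). Argue by contradiction and assume $f\in L^2$. Since $|\sum_{k\ge N}a_k\lambda^k|\le 2n\lambda^N/(1-\lambda)=:C\lambda^N$, the event $\{a_0=\dots=a_{N-1}=0\}$ forces the random sum into $[-C\lambda^N,C\lambda^N]$. This gives
\[
\tilde\lambda_n^N\le \eta_\lambda\bigl([-C\lambda^N,C\lambda^N]\bigr)\le 2C\lambda^N\|f\|_2^2 .
\]
Letting $N\to\infty$ is impossible when $\tilde\lambda_n>\lambda$. This part is elementary.

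\emph{Positive part} (a.e. $\lambda>\tilde\lambda_n$). I would use the Peres--Solomyak transversality argument. It suffices to show, for each compact interval $J\subset(\tilde\lambda_n,1)$ on which transversality holds, that
\[
\int_J \liminf_{r\to0}\frac{\eta_\lambda([-r,r])}{2r}\,\text{d}\lambda<\infty .
\]
Indeed, finiteness of the lower density of $\eta_\lambda$ at $0$ is equivalent to $\int\underline{D}(\nu_\lambda^{*n},x)\,\nu_\lambda^{*n}(\text{d}x)<\infty$, and this is equivalent to $\nu_\lambda^{*n}$ having an $L^2$ density. By Fatou and Fubini, the integral is bounded by
\[
\liminf_{r\to 0}\frac{1}{2r}\,\mathbb{E}\Bigl[\mathrm{Leb}\{\lambda\in J:|g_a(\lambda)|<r\}\Bigr],\qquad g_a(\lambda)=\sum_k a_k\lambda^k .
\]
To estimate the expectation, condition on the index $m$ of the first nonzero digit and write $g_a=\lambda^m h$, where $h$ is a power series with coefficients in $[-2n,2n]$ and $h(0)\neq0$. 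Transversality of this class on $J$ means that $|h(\lambda)|<\delta$ implies $|h'(\lambda)|>\delta$. It yields $\mathrm{Leb}\{|h|<r\lambda_J^{-m}\}\le C r\lambda_J^{-m}$, where $\lambda_J=\min J$. Summing over $m$ with weights $\tilde\lambda_n^m$ gives a bound of order $\sum_m(\tilde\lambda_n/\lambda_J)^m$, which is finite precisely because $\lambda_J>\tilde\lambda_n$.

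\emph{Main obstacle.} The hard step is transversality itself, which holds only on a limited range of $\lambda$ (roughly $\lambda<0.6$ for $\{-1,0,1\}$-type coefficients, and a smaller range for larger coefficient bounds). I expect to deal with it in two ways:
\begin{itemize}
\item[(a)] Near $\tilde\lambda_n$, where it is needed, check it via the $(*)$-function criterion of Peres--Solomyak for the relevant coefficient class.
\item[(b)] For larger $\lambda$, use the decomposition $\nu_\lambda=\nu_{\lambda^2}\ast S_\lambda\nu_{\lambda^2}$, with $S_\lambda$ the scaling by $\lambda$. This reduces to smaller parameters, since a convolution with an $L^2$-density measure retains an $L^2$ density. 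This reduction also shows that the set of good $\lambda$ is stable under $\lambda\mapsto\lambda^{1/k}$, and so covers the whole interval $(\tilde\lambda_n,1)$ up to a null set.
\end{itemize}
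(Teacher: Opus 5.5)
The thesis does not prove this proposition; it only quotes it from Peres--Solomyak, so there is no in-paper argument to compare against. Your reduction to $L^2$ densities of $\nu_\lambda^{*n}$, the negative part, and the transversality estimate on a fixed interval $J$ are all correct.

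\textbf{The gap is step (b).} Stability of the good set under $\lambda\mapsto\lambda^{1/k}$ only carries a good interval $(\tilde\lambda_n,b)$ to $\bigcup_{k\ge1}(\tilde\lambda_n^{1/k},b^{1/k})$. Since $\tilde\lambda_n^{1/k}\ge\sqrt{\tilde\lambda_n}$ for every $k\ge2$, the window $(b,\sqrt{\tilde\lambda_n})$ is never reached unless $b\ge\sqrt{\tilde\lambda_n}$. Transversality of your coefficient class never extends that far:
\begin{itemize}
\item For $n=1$, the $(*)$-function method stops at $0.649\ldots<2^{-1/2}$, because the real class with coefficients in $[-1,1]$ has a double zero there. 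This is exactly the classical window $(0.649\ldots,2^{-1/2})$ of Solomyak's theorem.
\item For $n=2$, the series $1-2x-2x^2+2x^3+2x^4+\cdots=(1+x)(1-2x)^2/(1-x)$ lies in your class and has a double zero at $1/2<\sqrt{3/8}\approx0.61$.
\item For $n\ge4$, the class contains $(1-kx)^2$ with $k=\lfloor\sqrt n\rfloor$. So transversality ends by $O(n^{-1/2})$, while $\sqrt{\tilde\lambda_n}\asymp n^{-1/4}$.
\end{itemize}
Hence (a) and (b) together leave an interval of parameters uncovered. The claim ``covers the whole interval $(\tilde\lambda_n,1)$ up to a null set'' is false as argued.

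\textbf{How to repair it.} Apply the same convolution principle in the $n$-direction rather than only in $\lambda$. Since $\nu_\lambda^{*n}=\nu_\lambda^{*(n-1)}\ast\nu_\lambda$, and convolving with a probability measure preserves an $L^2$ density, the good set for $n-1$ is contained in the good set for $n$. So for $n\ge2$ it suffices to treat a.e.\ $\lambda\in(\tilde\lambda_n,\tilde\lambda_{n-1})$.

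This still requires transversality of the $\{-n,\dots,n\}$ class on every compact subinterval of that whole window, not just near $\tilde\lambda_n$. The $n=2$ example shows this requirement is sharp at the right endpoint $\tilde\lambda_1=1/2$.

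The base case $n=1$ on $(1/2,1)$ is Solomyak's theorem. There the window $(0.649\ldots,2^{-1/2})$ is reachable neither by transversality of the $\{-1,0,1\}$ class nor by the squaring trick. You must either cite that theorem or supply the additional argument it requires.
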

Let us take, for example, $n=2$. The Proposition implies that the Fourier transform of $\nu_\lambda$, $\widehat{\nu_{\lambda}}$, is in $L^4(\mathbb{R})$ for almost every $\lambda\in(3/8,1)$.

Moreover, we know that the Fourier transform of the $2^{nd}$ convolution power of $\nu_\lambda$ (i.e. $\widehat{\nu_{\lambda}\ast \nu_\lambda}$) is equal to $(\widehat{\nu_{\lambda}})^2$. Since $\widehat{\nu_\lambda}$ is in $L^4(\mathbb{R})$ for almost every $\lambda \in(3/8,1)$, we can conclude that $\widehat{\nu_{\lambda}\ast \nu_\lambda}=(\widehat{\nu_\lambda})^2\in L^2(\mathbb{R})$ for almost every $\lambda\in(3/8,1)$.

We can then use a result from Fourier Analysis \cite{Mattila:Fourier}, which states that if the Fourier transform of a finite Borel measure with compact support is in $L^2(\mathbb{R})$, then the measure is absolutely continuous with density in $L^2(\mathbb{R})$.
Hence, $\nu_\lambda\ast\nu_\lambda$, for $\lambda\in(3/8,1)$, is absolutely continuous.

Let us summarize everything in the following Proposition:
\begin{proposition}
    The measure $\nu_\lambda$ is \emph{singular} and \emph{spread-out} for almost every $\lambda\in(3/8,1/2)$, since $\nu_\lambda\ast\nu_\lambda$ is absolutely continuous.
\end{proposition}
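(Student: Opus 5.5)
The plan is to assemble the statement from the two ingredients already recorded just before it: the $L^{2n}$-integrability result of \cite[Corollary 1.6]{PeresSolomyak:BernoulliConvolutions} with $n=2$, and the Fourier-analytic criterion from \cite{Mattila:Fourier}. Singularity for $\lambda<1/2$ is taken from \cite{PeresSchlagSolomyak:BernoulliConvolutions}. For $\lambda<1/2$ the support of $\nu_\lambda$ is a Cantor-type set of Lebesgue measure zero. At stage $N$ it is covered by $2^{N+1}$ intervals of length of order $\lambda^{N+1}/(1-\lambda)$, so the total length $\lesssim (2\lambda)^{N}\to 0$. Hence $\nu_\lambda$ is concentrated on a null set and is singular for every $\lambda\in(0,1/2)$, in particular on $(3/8,1/2)$.

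For the spread-out property, I will show that $\nu_\lambda^{*2}$ is itself absolutely continuous. This gives Definition \ref{def:SpreadOut} with exponent $2$ and component $G=\nu_\lambda^{*2}$.

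\begin{itemize}
\item[1.] Since $\tilde\lambda_2=\binom{4}{2}2^{-4}=3/8$, the proposition gives $\widehat{\nu_\lambda}\in L^4(\mathbb{R})$ for a.e.\ $\lambda\in(3/8,1)$, hence for a.e.\ $\lambda\in(3/8,1/2)$.
\item[2.] By the convolution theorem for finite measures, $\widehat{\nu_\lambda*\nu_\lambda}=\widehat{\nu_\lambda}^{\,2}$. This lies in $L^2$ because $\int|\widehat{\nu_\lambda}|^4<\infty$.
\item[3.] The convolution $\nu_\lambda*\nu_\lambda$ is a finite Borel probability measure with compact support. Its support lies in $[-2/(1-\lambda),2/(1-\lambda)]$, being the Minkowski sum of two compact supports.
\item[4.] Applying the criterion of \cite{Mattila:Fourier}, $\nu_\lambda^{*2}$ has a density in $L^2(\mathbb{R})$.
\end{itemize}

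This density is nonzero, since it integrates to $1$. Therefore $\nu_\lambda^{*2}$ has a non-trivially absolutely continuous component, and $\nu_\lambda$ is spread-out. Intersecting the full-measure set from step 1 with the interval $(3/8,1/2)$ on which singularity holds gives the claim for almost every $\lambda$ in that interval.

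The only delicate point is step 4. The $L^2$ criterion requires care: an $L^2$ Fourier transform identifies the measure with an $L^2$ function via Plancherel applied to test functions. I will argue this way. Pair $\nu=\nu_\lambda^{*2}$ with a Schwartz function $\phi$ to get $\int\phi\,d\nu=\frac{1}{2\pi}\int\hat\phi(\xi)\overline{\hat\nu(\xi)}\,d\xi$. Let $f$ be the $L^2$ function with $\hat f=\hat\nu$. By Parseval the right-hand side equals $\int\phi f$, so $\nu=f\,dx$ as distributions, and hence as measures. Positivity of $\nu$ forces $f\ge 0$ a.e.

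I expect no further obstacle beyond correctly quoting the almost-everywhere nature of the Peres–Solomyak result. The statement can only hold for almost every $\lambda$, not every $\lambda$; indeed, for some algebraic $\lambda$, the reciprocals of Pisot numbers, $\widehat{\nu_\lambda}$ does not decay at all.
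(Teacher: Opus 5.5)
Your proposal is correct and follows the paper's argument essentially step for step. You use the Peres--Solomyak Corollary 1.6 with $n=2$ (so $\tilde\lambda_2=3/8$) to get $\widehat{\nu_\lambda}\in L^4$, hence $\widehat{\nu_\lambda\ast\nu_\lambda}=\widehat{\nu_\lambda}^{\,2}\in L^2$, and then the Mattila criterion for compactly supported measures gives an $L^2$ density for $\nu_\lambda\ast\nu_\lambda$.

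The paper only cites singularity for $\lambda<1/2$ and the $L^2$ criterion; your covering estimate $2^{N+1}\cdot 2\lambda^{N+1}/(1-\lambda)\to 0$ and your Parseval sketch of the criterion are both sound. One small slip: the Parseval pairing gives $\int\phi\,\bar f$ rather than $\int\phi f$, which is harmless because $f$ is real, since $\hat\nu(-\xi)=\overline{\hat\nu(\xi)}$.
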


\chapter{Nummelin Splitting}
\label{chap:NummelinSplitting}
This Appendix is devoted to the presentation of the technique of the \emph{Split Chain}, introduced independently and simultaneously by \cite{AthreyaNey:Splitting} and \cite{Nummelin:Splitting}, from two slightly different perspectives. The construction is used to embed a regenerative/renewal structure into a general state space Harris recurrent Markov Chain.
In the following pages we will concentrate on the derivation proposed by Nummelin \cite{Nummelin:Splitting}.

\section{Nummelin Splitting Technique}
\label{sec:NummelinSplittingTechinque}
When dealing with (recurrent) Markov Chains on a countable state space, a common strategy is to investigate the properties of the chain by fixing a state $i$ and using the independence of the paths between visits to $i$ \cite{Nummelin:Splitting}. However, if the state space is a general measurable space $(X,\mathcal{B}(X))$, this reasoning fails: in general, the chain does not visit with positive probability a single point $i\in X$. This is where the Splitting Technique acts: it introduces an artificial atom, which is visited with positive probability by the chain.

Let us assume that $\{\Phi_n\}_{n\in\mathbb{N}}$ is a Markov Chain on the general state space $(X,\mathcal{B}(X))$, with transition kernel $P(x,A)$, for $x\in X$ and $A\in\mathcal{B}(X)$. We also assume the chain to be $\varphi$-irreducible. To create the splitting, we need the following \emph{minorizing condition} \cite{MeynTweedie:MCSS}:
\begin{assumption}[Minorizing Condition]
\label{ass:Minorizing}
    For some $\delta>0$, some $C\in\mathcal{B}(X)$, and some probability measure $\nu$ concentrated on $C$, we have
    \[
    P(x,A)\geq \delta \nu(A),\quad A\in\mathcal{B}(X), x\in X.
    \]
\end{assumption}
The first step in the construction of the split chain is to split the state space and all the measures on $\mathcal{B}(X)$. The space is split by writing $\check{X}=X\times\{0,1\}$. $X_0=X\times\{0\}$ and $X_1=X\times\{1\}$ are copies of $X$, endowed with copies $\mathcal{B}(X_0),\mathcal{B}(X_1)$ of the $\sigma$-algebra $\mathcal{B}(X)$. The split state is then equipped with a split $\sigma$-algebra $\mathcal{B}(\check{X})$, generated by $\mathcal{B}(X_0),\mathcal{B}(X_1)$. In the rest of the discussion we use the following notation:
\begin{notation}
    For each $A\in\mathcal{B}(X)$, we write $A_0=A\times\{0\}$ and $A_1=A\times\{1\}$. Similarly, we write $x_0$ for the elements of $\check{X}$ living in the 0-level $X_0$, and $x_1$ for those in the 1-level $X_1$.
\end{notation}
Next, if $\lambda$ is a measure on $\mathcal{B}(X)$, we split it into two measures on $\mathcal{B}(X_0)$ and $\mathcal{B}(X_1)$, by defining a measure $\lambda^\ast$ on $\mathcal{B}(\check{X})$:
\begin{empheq}[left=\empheqlbrace]{align}
\lambda^\ast(A_0) &= \lambda(A\cap C)(1-\delta) + \lambda(A \cap C^c) \label{eq:lambdaA0}\\
\lambda^\ast(A_1) &= \lambda(A \cap C)\delta                                   \label{eq:lambdaA1}
\end{empheq}
where $\delta,C$ are defined in Assumption \ref{ass:Minorizing}. This definition is consistent, since $\lambda$ is the marginal measure induced by $\lambda^\ast$ \cite{MeynTweedie:MCSS}:
\[
\lambda^\ast(A_0\cup A_1) = \lambda(A), \quad A \in \mathcal{B}(X).
\]
Moreover, only subsets of $C$ are split by this construction, because $\lambda^\ast(A_0) = \lambda(A)$ for all $A\subseteq C^c$.

The last step is to create the split chain $\{\check{\Phi}_n\}_{n\in\mathbb{N}}=\{(\Phi_n,Y_n):\Phi_n\in X,Y_n\in \{0,1\}\}_{n\in\mathbb{N}}$, which is defined on $(\check{X}, \mathcal{B}(\check{X}))$. To do this, we construct the split transition kernel $\check{P}(x_i,A)$, for $x_i\in\check{X}$ and $A \in \mathcal{B}(\check{X})$:
\begin{empheq}[left=\empheqlbrace]{align}
\check{P}(x_0,\cdot) &= P(x,\cdot)^\ast, &\quad &x_0\in X_0\setminus C_0
  \label{eq:Px0-outC0}\\
\check{P}(x_0,\cdot) &= \dfrac{P(x,\cdot)^\ast-\delta \nu(\cdot)^\ast}{1-\delta} =: H(x,\cdot)^\ast, &\quad &x_0\in C_0
  \label{eq:Px0-inC0}\\
\check{P}(x_1,\cdot) &= \nu(\cdot)^\ast, &\quad &x_1\in X_1
  \label{eq:Px1}
\end{empheq}
where $\delta,C,\nu$ are defined in Assumption \ref{ass:Minorizing}. This is the central part of the construction, as the kernel defines how the chain moves in the state space. Let us explain more in detail how the movement works.\\
To this end, consider the set $C$. If the chain $\{\check{\Phi}_n\}_{n\in\mathbb{N}}$ is in $C^c$, it behaves exactly as $\{\Phi_n\}_{n\in\mathbb{N}}$, moving inside the 0-level $X_0$. This is because, by definition of the split kernel and of the split ($\ast$-) measure (cf. \eqref{eq:lambdaA0}, \eqref{eq:lambdaA1}), the chain moves to $C^c_1$ with probability 0. On the other hand, once the chain enters $C$, the situation becomes less straightforward. To ease the understanding, we compute explicitly the split kernel for a specific case: 
\begin{align}
        \check{P}(x_0,C_1) = \delta P(x,C), & \quad \check{P}(x_0,C_0) = (1-\delta)P(x,C), \quad& x_0 \in X_0\setminus C_0\\
        \check{P}(x_0,C_1) = \delta H(x,C), & \quad \check{P}(x_0,C_0) = (1-\delta)H(x,C), \quad& x_0 \in C_0\\
        \check{P}(x_0,C_1) = \delta, & \quad \check{P}(x_0,C_0) = (1-\delta), \quad& x_1 \in X_1
\end{align}
As we clearly see from these equations, as soon as the chain $\{\check{\Phi}_n\}_{n\in \mathbb{N}}$ enters $C$, it might jump to the 1-level with probability $\delta$, or remain in the 0-level with probability $1-\delta$, as if the level switch was controlled by a $Ber(s(x))$ random variable (independently), where $s(x) = \delta\mathbb{I}_C(x)$ \cite{Nummelin:IrreducibleMC}. When the chain moves to the 1-level, its next step will have the law $\nu$, while if it stays on the 0-level, the next step will follow the \emph{residual law} $H(x,A)$, which is non-negative due to the minorizing condition \ref{ass:Minorizing}.

The whole point of this construction was the creation of an atom. From the discussion above, it should be clear that such atom is the set $C_1$, which is reached with probability $\varphi^\ast(C_1) = \delta\varphi(C)>0$, whenever $\{\Phi_n\}_{n\in\mathbb{N}}$ is $\varphi$-irreducible (in reality, the atom is $X_1$, but, as was previously remarked, the chain can only reach the set $C_1$ in the 1-level).

To conclude, we mention the following theorem, which underlines the coherence and the importance of the splitting construction. We refer the reader to \cite[Theorem 5.1.3]{MeynTweedie:MCSS} for the details:
\begin{theorem}
    Consider the chain $\{\Phi_n\}_{n\in\mathbb{N}}$, and the associated split chain $\{\check{\Phi}_n\}_{n\in\mathbb{N}}$. We have:
    \begin{itemize}
        \item The chain $\{\Phi_n\}_{n\in\mathbb{N}}$ is the marginal chain of $\{\check{\Phi}_n\}_{n\in\mathbb{N}}$.
        \item The chain $\{\Phi_n\}_{n\in\mathbb{N}}$ is $\varphi$-irreducible if $\{\check{\Phi}_n\}_{n\in\mathbb{N}}$ is $\varphi^\ast$-irreducible, and if $\{\Phi_n\}_{n\in\mathbb{N}}$ is $\varphi$-irreducible with $\varphi(C)>0$ then $\{\check{\Phi}_n\}_{n\in\mathbb{N}}$ is $\nu^\ast$-irreducible, and $C_1$ is an accessible atom for the split chain.
    \end{itemize}
\end{theorem}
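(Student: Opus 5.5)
The plan is to prove both bullet points by working directly with the split kernel \eqref{eq:Px0-outC0}--\eqref{eq:Px1} and the splitting operation $\lambda\mapsto\lambda^\ast$ of \eqref{eq:lambdaA0}--\eqref{eq:lambdaA1}. The central identity will be
\[
\int_{\check{X}} \lambda^\ast(\text{d}x_i)\,\check{P}(x_i,\cdot) = (\lambda P)^\ast(\cdot)
\]
for every measure $\lambda$ on $\mathcal{B}(X)$. To prove it, split the integral into three pieces.
\begin{itemize}
\item On $X_0\setminus C_0$, $\lambda^\ast$ coincides with $\lambda$ restricted to $C^c$, and the kernel is $P(x,\cdot)^\ast$.
\item On $C_0$, the mass is $(1-\delta)\lambda(\text{d}x)$ and the kernel is $H(x,\cdot)^\ast=(P(x,\cdot)^\ast-\delta\nu^\ast)/(1-\delta)$.
\item On $C_1$, the mass is $\delta\lambda(\text{d}x)$ and the kernel is $\nu^\ast$.
\end{itemize}
The two $\nu^\ast$ contributions cancel, and what remains is $\int_X\lambda(\text{d}x)P(x,\cdot)^\ast$. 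This equals $(\lambda P)^\ast$, because $\lambda\mapsto\lambda^\ast$ is linear and commutes with integration: it acts setwise through $A\cap C$ and $A\cap C^c$. The minorization in Assumption~\ref{ass:Minorizing} is what makes $H$ a nonnegative kernel, so $\check P$ is a genuine transition kernel.

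For the first bullet (marginal chain), I would strengthen the identity to a conditional statement. Claim: for every $n$, conditionally on $\Phi_0,\dots,\Phi_n$, the level $Y_n$ is $Ber(s(\Phi_n))$ with $s=\delta\mathbb{I}_C$. Equivalently, the conditional law of $\check\Phi_n$ is $(\delta_{\Phi_n})^\ast$. This holds at $n=0$ by taking the initial law of the form $\lambda^\ast$. For the inductive step, apply the identity above to $\lambda=\delta_{\Phi_n}$. It gives
\[
\mathbb{P}(\Phi_{n+1}\in\cdot\mid\Phi_0,\dots,\Phi_n)=P(\Phi_n,\cdot),
\]
and it shows the joint law of $(\Phi_{n+1},Y_{n+1})$ is $P(\Phi_n,\cdot)^\ast$, so $Y_{n+1}$ again has the required conditional split form. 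Iterating over finite-dimensional distributions then shows that $\{\Phi_n\}$ is a Markov chain with kernel $P$, i.e. the marginal of $\{\check\Phi_n\}$.

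For the second bullet I would first dispose of the easy implication. Suppose the split chain is $\varphi^\ast$-irreducible and $\varphi(A)>0$. Then $\varphi^\ast(A_0\cup A_1)=\varphi(A)>0$, so from every $x_i$ the split chain hits $A_0\cup A_1$ with positive probability. Projecting with the first bullet gives $\sum_n P^n(x,A)>0$.

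For the converse, assume $\Phi$ is $\varphi$-irreducible with $\varphi(C)>0$. That $C_1$ is an atom is immediate, since $\check P(x_1,\cdot)=\nu^\ast$ does not depend on $x$. Accessibility of $C_1$ then follows in three steps:
\begin{itemize}
\item from any start $x_i$, the marginal chain reaches $C$ with positive probability, by the first bullet and $\varphi$-irreducibility;
\item at that visit the level equals $1$ with conditional probability $\delta>0$, by the conditional Bernoulli claim;
\item so $C_1$ is reached with positive probability.
\end{itemize}
For $\nu^\ast$-irreducibility, take $\check B$ with $\nu^\ast(\check B)>0$. From any start, go to $C_1$ with positive probability; one further step then has law exactly $\nu^\ast$, so $\check B$ is hit with positive probability.

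I expect the main obstacle to be the bookkeeping in the first bullet. The identity $\lambda^\ast\check P=(\lambda P)^\ast$ alone only gives equality of one-dimensional marginals. Obtaining the full Markov property of the projection requires the conditional Bernoulli description of $Y_n$, with some care about measurability and about starting from initial laws not of split form (for instance $\delta_{x_1}$). For those I would condition on the first step, after which the law is of split form.
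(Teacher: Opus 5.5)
Your proposal is correct. The paper gives no proof of its own here; it defers to Meyn--Tweedie, Theorem 5.1.3, and your argument is essentially that standard proof. Your three-piece computation of $\int\lambda^\ast(\text{d}x_i)\check{P}(x_i,\cdot)=(\lambda P)^\ast$, with the two $\nu^\ast$ terms cancelling, is their key identity. Your treatment of the second bullet is also the same as theirs: $C_1$ is an atom because $\check{P}(x_1,\cdot)=\nu^\ast$, it is accessible via the factor $\delta$, and one further step has law $\nu^\ast$.

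The only real difference is in the first bullet. Meyn--Tweedie simply iterate the identity to get $\lambda^\ast\check{P}^k=(\lambda P^k)^\ast$, i.e.\ equality of the $k$-step marginals. That measure-level statement already covers everything in the second bullet without any conditioning. For instance, every row of $\check{P}$ has split form $\mu^\ast$, so $\check{P}^k(y_i,C_1)=\delta\,\mu P^{k-1}(C)$.

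Your conditional-Bernoulli description of $Y_n$ is exactly the informal picture the paper gives just before the theorem. It buys the stronger conclusion that the projected process is itself a Markov chain with kernel $P$, at the cost of the conditional-law bookkeeping you anticipated.

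One small point to state explicitly: $H$ is only defined for $\delta<1$. If $\delta=1$, the level $C_0$ carries no mass under any split measure, so the kernel there is immaterial.
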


Any recurrent Markov Chain will then visit the atom infinitely often, and restart \emph{independently of all the past} after each visit with distribution $\nu$. This construction enables us to embed a regenerative structure (i.e. the subsequent visits to the atom) into the Markov Chain.

\chapter{Additional Proofs}
\label{chap:Proofs}
\section{Geometric Interpretation of Rejection Sampling}
\label{sec:ProofGeometricInterpretationRS}
In this Section, we present the proof of the Lemma on the geometric interpretation of the Rejection Sampling method (cf. Lemma \ref{lemma:geomInterp}). First, let us recall the statement:
\begin{lemma}
    Consider a non-negative integrable function $\tilde{h}\colon \mathbb{R}\rightarrow\mathbb{R}_+$, such that $\int_\mathbb{R} \tilde{h}\,\text{\emph{d}}y\neq 0$. Consider also the region $\mathcal{A}_{\tilde{h}}=\{(y,u)\in\mathbb{R}^2:0\leq u\leq \tilde{h}(y)\}$, and the normalized probability density function $h(y) = \tilde{h}(y)/\int_\mathbb{R} \tilde{h}(y)\text{\emph{d}}y$. Then, a pair of random variables $(Y,U)$ is uniformly distributed in $\mathcal{A}_{\tilde{h}}$ if and only if $Y\sim h$ and $U|Y\sim\mathcal{U}_{[0,\tilde{h}(Y)]}$.
\end{lemma}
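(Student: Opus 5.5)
The plan is to compare joint densities on $\mathbb{R}^2$ with respect to two-dimensional Lebesgue measure. Write $Z=\int_\mathbb{R}\tilde{h}(y)\,\text{d}y\in(0,\infty)$. By Fubini (or Tonelli, since everything is non-negative), the Lebesgue measure of the region is
\[
|\mathcal{A}_{\tilde{h}}| = \int_\mathbb{R}\int_0^{\tilde{h}(y)}\text{d}u\,\text{d}y = Z.
\]
Hence "$(Y,U)$ uniform on $\mathcal{A}_{\tilde{h}}$" means precisely that $(Y,U)$ has joint density
\[
p(y,u)=\frac{1}{Z}\,\mathbb{I}\left(0\leq u\leq \tilde{h}(y)\right).
\]
Both implications will be read off from this formula. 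Measurability of $\mathcal{A}_{\tilde{h}}$ is immediate, since $\tilde{h}$ is measurable.

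\emph{($\Rightarrow$)} Assume the joint density is $p$. Integrating out $u$ gives the marginal density of $Y$:
\[
\int_0^{\tilde{h}(y)}\frac{\text{d}u}{Z} = \frac{\tilde{h}(y)}{Z} = h(y),
\]
so $Y\sim h$. On the set $\{\tilde{h}(y)>0\}$, which carries full $h$-mass, the conditional density of $U$ given $Y=y$ is
\[
\frac{p(y,u)}{h(y)} = \frac{\mathbb{I}\left(0\leq u\leq\tilde{h}(y)\right)}{\tilde{h}(y)},
\]
which is exactly the $\mathcal{U}_{[0,\tilde{h}(y)]}$ density.

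\emph{($\Leftarrow$)} Assume $Y\sim h$ and $U\mid Y\sim\mathcal{U}_{[0,\tilde{h}(Y)]}$. The joint density is then the product of the marginal and the conditional:
\[
h(y)\,\frac{\mathbb{I}\left(0\leq u\leq\tilde{h}(y)\right)}{\tilde{h}(y)} = \frac{1}{Z}\,\mathbb{I}\left(0\leq u\leq\tilde{h}(y)\right) = p(y,u),
\]
so $(Y,U)$ is uniform on $\mathcal{A}_{\tilde{h}}$. To be fully rigorous without relying on conditional densities, I would instead verify the identity $\mathbb{P}\left((Y,U)\in B\right)=|B\cap\mathcal{A}_{\tilde{h}}|/Z$ for all Borel sets $B$, using the disintegration and Fubini.

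The only delicate point, and it is mild, is the null set $\{\tilde{h}=0\}$. There the conditional law $\mathcal{U}_{[0,0]}$ is degenerate and the corresponding slice of $\mathcal{A}_{\tilde{h}}$ has zero area. Since this set has $h$-probability zero and contributes no mass to $\mathcal{A}_{\tilde{h}}$, it can be discarded in both directions, and the statement holds up to almost-sure equality, as is usual for conditional distributions.
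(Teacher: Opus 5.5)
Your proof is correct and follows the paper's argument essentially step for step: compute $|\mathcal{A}_{\tilde{h}}|=\int_\mathbb{R}\tilde{h}$, write down the uniform joint density, integrate out $u$ to get $h$, and divide to get the $\mathcal{U}_{[0,\tilde{h}(y)]}$ conditional; for the converse, multiply the marginal and the conditional back together. Your treatment of the $h$-null set $\{\tilde{h}=0\}$, where the conditional law degenerates, adds a bit of care that the paper's proof passes over silently.
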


Let us proceed with the proof.
\begin{proof}
    First, assume that $(Y,U)\sim\mathcal{U}_{\mathcal{A}_{\tilde{h}}}$. Note that $|\mathcal{A}_{\tilde{h}}|=\int_{\mathbb{R}}\tilde{h}(y)\,\text{d}y$. Hence, the density of the distribution of $(Y,U)$ is
    \[
    f_{(Y,U)}(y,u) = \frac{\mathbb{I}_{\mathcal{A}_{\tilde{h}}}(y,u)}{\int_{\mathbb{R}}\tilde{h}(y)\,\text{d}y}.
    \]
    Marginalize this density to obtain $f_Y(y)$:
    \[
    \begin{split}
    f_Y(y) = \int_{\mathbb{R}}f_{(Y,U)}(y,u) \, \text{d}u &= \frac{1}{\int_{\mathbb{R}}\tilde{h}(y)\,\text{d}y} \int_0^{\tilde{h}(y)} \mathbb{I}_{\mathbb{R}}(y)\,\text{d}u\\
    &= \frac{\tilde{h}(y)\mathbb{I}_{\mathbb{R}}(y)}{\int_{\mathbb{R}}\tilde{h}(y)\,\text{d}y}\\
    &= h(y)\mathbb{I}_{\mathbb{R}}(y),
    \end{split}
    \]
    which gives us $Y\sim h$.
    Moreover, for a given $y\in\mathbb{R}$, we also have
    \[
    f_{U|Y}(u|y) = \frac{f_{(Y,U)}(y,u)}{f_Y(y)} = \frac{\mathbb{I}_{\mathcal{A}_{\tilde{h}}}(y,u)}{\tilde{h}(y)\mathbb{I}_\mathbb{R}(y)} = \frac{\mathbb{I}_{[0,\tilde{h}(y)]}(u)}{\tilde{h}(y)},
    \]
    which proves $U|Y\sim\mathcal{U}_{[0,\tilde{h}(Y)]}$.

    Conversely, if $Y\sim h$ and $U|Y\sim \mathcal{U}_{[0,\tilde{h}(Y)]}$, we have
    \[
    f_{(Y,U)}(y,u) = f_{U|Y}(u|y) \, f_Y(y) = \frac{\mathbb{I}_{[0,\tilde{h}(y)]}(u)}{\tilde{h}(y)}h(y)\mathbb{I}_{\mathbb{R}}(y) = \frac{\mathbb{I}_{\mathcal{A}_{\tilde{h}}}(y,u)}{\int_{\mathbb{R}}\tilde{h}(y)\,\text{d}y},
    \]
    which concludes the proof.
\end{proof}

\section{Bias of Time-Average Estimator for Geometrically Ergodic MCMC}
\label{sec:BiasMCMC}

Given a Borel function $h$ such that $\mathbb{E}_F[|h(X)|]< \infty$, where $F$ is the target distribution of an MCMC method, the goal is to compute $\mu=\mathbb{E}_F[h(X)]$. Define the time-average estimator
\[
\hat{\mu}_N \colon = \frac{1}{N}\sum_{n=1}^N h(X_n),
\]
where the $X_n$'s are the samples generated with a run of the MCMC method.

The bias of this estimator satisfies the following Theorem.

\begin{theorem}
    Let $\{X_n\}_{n\in\mathbb{N}}$ be a Markov Chain (on the state space $\mathcal{X}$) with Metropolis-Hastings transition kernel $P$ and with initial distribution $\delta_x$, for a given $x\in\mathcal{X}$. Denote its invariant distribution by $\pi$. Assume that $\{X_n\}$ is geometrically ergodic, i.e. there exists $\gamma>0$ and $\varphi:\mathcal{X}\rightarrow\mathbb{R}_+$ such that $\|\pi^{n,\delta_x} - \pi\|_{TV} \leq \varphi(x)e^{-\gamma n}$, where $\pi^{n,\delta_x}$ represents the distribution of the chain with initial distribution $\delta_x$ at the $n$-th step. Then, for any bounded function $h:\mathcal{X}\rightarrow \mathbb{R}$, there exists $C_h$ such that 
    \[
    \left|\mathbb{E}\left[\hat{\mu}_N\right] - \mu\right| \leq \frac{C_h}{N}.
    \]
\end{theorem}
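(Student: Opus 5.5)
The plan is to write the bias as an average of the per-step errors, bound each term using geometric ergodicity in total variation, and then sum a geometric series. Since the chain starts from $\delta_x$, linearity of expectation gives
\[
\mathbb{E}\left[\hat{\mu}_N\right] - \mu = \frac{1}{N}\sum_{n=1}^N \left( \mathbb{E}_x[h(X_n)] - \mathbb{E}_\pi[h(X)] \right) = \frac{1}{N}\sum_{n=1}^N \int_{\mathcal{X}} h(y)\,\left(\pi^{n,\delta_x} - \pi\right)(\text{d}y).
\]
Here the quantity $\mathbb{E}_\pi[h(X)]$ coincides with $\mu=\mathbb{E}_F[h(X)]$, because $\pi=F$ is the invariant (target) distribution of the Metropolis-Hastings kernel.

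Next, each summand is controlled by the total variation distance. With the convention $\|\nu_1-\nu_2\|_{TV}=\sup_A|\nu_1(A)-\nu_2(A)|$ used in Chapter \ref{chap:4}, any bounded measurable $h$ satisfies
\[
\left|\int h\,\text{d}(\nu_1-\nu_2)\right| \leq 2\|h\|_\infty \|\nu_1-\nu_2\|_{TV}.
\]
One way to see this is to write the signed measure $\nu_1-\nu_2$ via its Hahn decomposition: each of its positive and negative parts has mass at most the total variation distance. Applying this inequality together with the geometric ergodicity assumption gives
\[
\left|\mathbb{E}_x[h(X_n)] - \mu\right| \leq 2\|h\|_\infty \varphi(x) e^{-\gamma n}.
\]

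Summing over $n$ with the triangle inequality and bounding the finite geometric sum by the full series gives
\[
\left|\mathbb{E}\left[\hat{\mu}_N\right] - \mu\right| \leq \frac{2\|h\|_\infty\varphi(x)}{N}\sum_{n=1}^\infty e^{-\gamma n} = \frac{1}{N}\cdot\frac{2\|h\|_\infty\varphi(x)}{e^{\gamma}-1}.
\]
So we can take $C_h = 2\|h\|_\infty\varphi(x)/(e^\gamma-1)$, which depends on $h$ and on the fixed starting point $x$, but not on $N$.

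The key observation is that the geometric decay makes $\sum_n \|\pi^{n,\delta_x}-\pi\|$ finite, so the accumulated initialization bias stays bounded while the averaging divides it by $N$. There is no real obstacle in this argument. The only step needing care is the passage from total variation to expectations of bounded functions, where the factor $2$ (or $1$, depending on the normalization of $\|\cdot\|_{TV}$) must be tracked consistently with the definition in Chapter \ref{chap:4}. We should also note that $\varphi(x)<\infty$ for the given starting point, which is implicit in the assumption.
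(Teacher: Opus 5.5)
Your proof is correct and follows the same route as the paper. You split the bias into the per-step errors $\mathbb{E}_x[h(X_n)]-\mu$, bound each by the sup-norm of $h$ times the total variation distance, and sum the geometric series. The differences are minor and in your favour: the factor $2$ you carry is what the $\sup_A$ convention of Chapter 4 requires (the paper drops it, which does not affect the existence of $C_h$), and summing from $n=1$ gives the slightly sharper constant $1/(e^{\gamma}-1)$ instead of the paper's $1/(1-e^{-\gamma})$.
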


\begin{proof}
We have
\[
\begin{split}
    \left|\mathbb{E}\left[\hat{\mu}_N\right] - \mu\right| &= \left| \frac{1}{N} \sum_{n=1}^N \mathbb{E}[h(X_n)-\mu]\right|\\
    &\leq \frac{1}{N} \sum_{n=1}^N \left| \int_{\mathcal{X}} h(y) \left( \pi^{n,\delta_x}(\text{d}y)- \pi(\text{d}y) \right)\right|\\
    &\leq \frac{1}{N} \sum_{n=1}^N \sup_{y\in\mathcal{X}}|h(y)|\|\pi^{n,\delta_x} - \pi\|_{TV}\\
    &\leq \frac{1}{N} \sup_{y\in\mathcal{X}}|h(y)| \varphi(x)\frac{1}{1-e^{-\gamma}}.
\end{split}
\]
\end{proof}

}

\clearpage
\pagestyle{numberonly}
\printbibliography

\end{document}